   \newcommand{\ccases}[1]{\begin{cases}#1\end{cases}}
   \newcommand{\PARENS}[1]{\left(#1\right)}
   \renewcommand{\BRACES}[1]{\left\{#1\right\}}
   \newcommand{\undercbrace}[1]{\underbrace{#1}}
   \newcommand{\LEFTRIGHT}[3]{\left#1{#3}\right#2}
   \newcommand{\rcbrace}{\}}
   \newcommand{\SQRT}[1]{\sqrt{#1}}
   \newcommand{\g}{g}
\renewcommand{\mathcal}[1]{\mathscr{#1}}
\DeclareMathOperator{\Sign}{\widetilde{sgn}}
\renewcommand{\g}{g}
\newcommand{\op}{\text{\rm KP}}
\begin{document}
\sloppy

\title{The Intersection of Two Halfspaces Has \\
High Threshold Degree}

\date{}

\author{
{\sc Alexander~A.~Sherstov}\thanks{Department of Computer Sciences, 
University of Texas at Austin, TX 78757 USA.\newline
{\Large \Letter}~$\mathtt{sherstov@cs.utexas.edu}$}
}
\setcounter{page}{-1}
\maketitle
\thispagestyle{empty}
\pagestyle{plain}

\hyphenpenalty=200
\clubpenalty=100000
\widowpenalty=100000

\begin{abstract}
The \emph{threshold degree} of a Boolean function $f\colon\zoon\to\moo$
is the least degree of a real polynomial $p$ such that $f(x)\equiv\sign
p(x).$  We construct two halfspaces on $\zoon$ whose intersection
has threshold degree $\Theta(\sqrt n),$ an exponential improvement
on previous lower bounds. This solves an open problem due to
Klivans (2002) and rules out the use of perceptron-based
techniques for PAC learning the intersection of two halfspaces, a
central unresolved challenge in computational learning.  We also
prove that the intersection of two majority functions has
threshold degree $\Omega(\log n),$ which is tight and settles a
conjecture of O'Donnell and Servedio (2003).

Our proof consists of two parts. First, we show that for any
nonconstant Boolean functions $f$ and $g,$ the intersection $f(x)\wedge g(y)$
has threshold degree $O(d)$ if and only if $\|f-F\|_\infty +
\|g-G\|_\infty < 1$ for some rational functions $F,$ $G$ of
degree $O(d).$ Second, we settle the least degree
required for approximating a halfspace and a majority function to
any given accuracy by rational functions.

Our technique further allows us to make progress on
Aaronson's challenge (2008) and contribute strong direct product
theorems for polynomial representations of composed Boolean functions
of the form $F(f_1,...,f_n).$ In particular, we give an improved
lower bound on the approximate degree of the AND-OR tree.
\end{abstract}


\newpage
\thispagestyle{empty}
\tableofcontents
\newpage


\section{Introduction}
\label{sec:intro}

Representations of Boolean functions by real polynomials play
an important role in theoretical computer science, with applications
ranging from complexity theory to quantum computing and 
learning theory. The surveys
in~\mbox{\cite{beigel93polynomial-method,saks93slicing,buhrman-dewolf02DT-survey,dual-survey}} offer
a glimpse into the diversity of these results and techniques.
We study one such representation scheme known as
\emph{sign-representation}.  Specifically, fix a Boolean
function $f\colon X\to\moo$  for some finite set $X\subset \Re^n,$ such as
the hypercube $X=\moon.$
The \emph{threshold degree} of $f,$ denoted $\degthr(f),$ is the
least degree of a polynomial $p(x_1,\dots,x_n)$ such that
\[ f(x) = \sign p(x) \]
for each $x\in X.$ In other words, the threshold degree of $f$ is
the least degree of a real polynomial that represents $f$ in sign.

The formal study of this complexity measure and of sign-representations in
general began in 1969 with the seminal work of Minsky and
Papert~\cite{minsky88perceptrons}, who examined the threshold degree of several
common functions.  Since then, sign-representations have
found a variety of applications in theoretical computer science.
Paturi and Saks~\cite{paturi-saks94rational} and later Siu et
al.~\cite{siu-roy-kailath94rational} used Boolean functions with
high threshold degree to obtain size-depth trade-offs for threshold
circuits.  The well-known result, due to Beigel et
al.~\cite{beigel91rational}, that $\PP$ is closed under intersection
is also naturally interpreted in terms of threshold degree.  In
another development, Aspnes et al.~\cite{aspnes91voting} used the
notion of threshold degree and its relaxations to obtain oracle
separations for $\mathsf{PP}$ and to give an insightful new proof
of classical lower bounds for $\mathsf{AC}^0.$ Krause and
Pudl{\'a}k~\cite{krause94depth2mod,KP98threshold} used random
restrictions to show that the threshold degree gives lower
bounds on the \emph{weight} and \emph{density} of perceptrons and
their generalizations, which are well-studied computational models.

Learning theory is another area in which the threshold degree of
Boolean functions is of considerable interest.  Specifically,
functions with low threshold degree can be efficiently PAC
learned under arbitrary distributions via linear programming.
The current fastest algorithm for PAC learning polynomial-size
DNF formulas, due to Klivans and Servedio~\cite{KS01dnf}, is an
illustrative example: it is based precisely on an upper bound on
the threshold degree of this concept class.

The threshold degree has recently become a versatile
tool in communication complexity.  The starting point in this line
of work is the Degree/Discrepancy
Theorem~\cite{sherstov07ac-majmaj,sherstov07quantum}, which states
that any Boolean function with high threshold degree induces a
communication problem with low \emph{discrepancy} and thus high
communication complexity in almost all models.  This result was used 
in~\cite{sherstov07ac-majmaj} to show the optimality of Allender's
simulation of $\AC^0$ by majority
circuits~\cite{allender89ac0tc0}, thus solving an open
problem of Krause and Pudl{\'ak}~\cite{krause94depth2mod}.  
Known lower bounds on the threshold degree have 
played an important role in recent
progress~\cite{sherstov07symm-sign-rank,RS07dc-dnf} on
\emph{unbounded-error} communication complexity, which is
considerably more powerful than the models above.

%

In summary, the threshold degree has a variety of applications in
circuit complexity, learning theory, and communication
complexity. 
Nevertheless, analyzing the threshold degree has remained a difficult
task, and Minsky and Papert's \emph{symmetrization} technique
from 1969 has been essentially the only method available.
Unfortunately, symmetrization only applies to
symmetric Boolean functions and certain derivations thereof.  In
a recent tutorial presented at the FOCS'08
conference, Aaronson~\cite{aaronson08tutorial} re-posed the
challenge of developing new analytic techniques for
multivariate real polynomials that represent Boolean functions.
We make significant progress on this challenge in the context of
sign-representation, contributing a number of strong direct
product theorems for the threshold degree.  As an application, we
construct two halfspaces on $\zoon$ whose intersection has
threshold degree $\Omega(\sqrt n),$ which solves an open problem
due to Klivans~\cite{klivans-thesis} and rules out the use of
perceptron-based techniques for PAC learning the intersection of
even two halfspaces (a central unresolved challenge in
computational learning theory). We give a detailed description of
our results in
Sections~\ref{sec:general-compositions}--\ref{sec:results-hshs},
followed by a discussion of our techniques in
Section~\ref{sec:techniques}.

\subsection{Results for general compositions}
\label{sec:general-compositions}

Our first result is a general direct product theorem for the threshold
degree of composed functions.

\begin{theorem}[Threshold degree]
\label{thm:thrdeg-dp}
Consider functions
$f\colon X\to\moo$ and $F\colon \mook\to\moo,$ where
$X\subset\Re^n$ is a finite set. Then
\begin{align*}
\degthr(F(f,\dots,f)) \geq \degthr(F)\degthr(f).
\end{align*}
\end{theorem}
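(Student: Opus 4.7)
The plan is to argue by contradiction, using LP duality to extract a pair of ``dual distributions'' for $f$ and then symmetrize any low-degree sign-representation of $F(f,\dots,f)$ down to one of $F$ alone. Set $d_F=\degthr(F)$ and $d_f=\degthr(f).$

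By linear-programming duality, $\degthr(f)=d_f$ produces a nonzero signed measure $\psi$ on $X$ with $\psi(x)f(x)\ge 0$ for all $x$ and $\sum_x \psi(x)q(x)=0$ for every polynomial $q$ of degree less than $d_f.$ Splitting $\psi$ according to the sign of $f$ and renormalizing yields probability distributions $\mu_{+1}$ and $\mu_{-1}$ supported on $f^{-1}(+1)$ and $f^{-1}(-1)$ respectively, with the moment-matching identity
\[
\mathbb{E}_{x\sim\mu_{+1}}[q(x)]\;=\;\mathbb{E}_{x\sim\mu_{-1}}[q(x)] \qquad\text{for every polynomial $q$ with $\deg q<d_f$.}
\]
Now suppose toward a contradiction that $P(x_1,\dots,x_k)$ is a polynomial of degree strictly less than $d_F d_f$ with $\sign P(x_1,\dots,x_k)=F(f(x_1),\dots,f(x_k))$ everywhere on $X^k,$ and define the symmetrization
\[
\tilde P(z_1,\dots,z_k)\;=\;\mathbb{E}\bigl[\,P(x_1,\dots,x_k)\,\bigr], \qquad x_i\sim\mu_{z_i}\ \text{independently}, \quad z\in\mook.
\]
Every $x_i$ in the support of $\mu_{z_i}$ satisfies $f(x_i)=z_i,$ so each value of $P$ being averaged has sign $F(z);$ as a convex combination of nonzero values of a common sign, $\tilde P(z)$ inherits that sign, giving $\sign\tilde P=F.$

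The decisive step is to bound $\deg\tilde P.$ Expand $P=\sum_\alpha c_\alpha\prod_i x_i^{\alpha_i}$ (each $\alpha_i$ an $n$-dimensional multi-index), so that $\tilde P(z)=\sum_\alpha c_\alpha\prod_i g_i^{(\alpha)}(z_i)$ with $g_i^{(\alpha)}(z_i)=\mathbb{E}_{\mu_{z_i}}[x_i^{\alpha_i}].$ On the two-point set $\{-1,+1\}$ each $g_i^{(\alpha)}$ is automatically a polynomial in $z_i$ of degree at most $1,$ but by the moment-matching identity it is in fact \emph{constant} whenever $|\alpha_i|<d_f.$ Since $\sum_i|\alpha_i|<d_F d_f,$ no monomial can have as many as $d_F$ indices with $|\alpha_i|\ge d_f;$ hence each monomial contributes a polynomial of degree less than $d_F$ in $z,$ and so does $\tilde P.$ Thus $\tilde P$ sign-represents $F$ with degree less than $d_F,$ contradicting $\degthr(F)=d_F.$

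The main obstacle is conceptual rather than computational: one must recognize that the correct object to transport from $f$ up to the composition is not a single dual witness but a pair of moment-matching distributions, so that products of single-coordinate moments convert the additive budget $d_F d_f$ into a multiplicative degree bound via a one-line pigeonhole count. Everything after that is linearity of expectation.
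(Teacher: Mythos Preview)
Your proof is correct. Both your argument and the paper's hinge on the same dual witness for $f$---a distribution $\mu$ on $X$ (equivalently, your pair $\mu_{+1},\mu_{-1}$) whose moments of order $<d_f$ are insensitive to the value of $f$---together with the same pigeonhole count bounding how many blocks can carry degree $\ge d_f$. The organization differs: the paper works entirely on the dual side, tensoring a dual witness $\Psi$ for $F$ with $\mu^{\otimes k}$ to build an explicit dual witness $\zeta(x_1,\dots,x_k)=2^k\Psi(f(x_1),\dots,f(x_k))\prod_i\mu(x_i)$ for the composition, and in fact proves the stronger inequality $\deg_\epsilon(F(f,\dots,f))\ge\deg_\epsilon(F)\,\degthr(f)$ for all $\epsilon\in(0,1)$, recovering the threshold-degree statement as $\epsilon\nearrow1$. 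You instead stay on the primal side, using $\mu_{\pm1}$ to average a hypothetical low-degree sign-representation of $F(f,\dots,f)$ down to one of $F$. Your route is more elementary and self-contained for the threshold-degree statement alone and avoids ever invoking a dual object for $F$; the paper's route buys the extension to approximate degree for free. One small point worth making explicit: you should note at the outset that $d_f\ge1$ may be assumed (the bound is trivial otherwise), so that $\psi$ is orthogonal to constants and hence both $\mu_{+1}$ and $\mu_{-1}$ are well-defined probability distributions with equal total mass.
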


Theorem~\ref{thm:thrdeg-dp} gives the best possible
lower bound that depends on $\degthr(F)$ and $\degthr(f)$ alone. In
particular, the bound is tight whenever $F=\PARITY$ or
$f=\PARITY.$ To our knowledge, the only previous direct product
theorem of any kind for the threshold degree was the XOR lemma
in~\cite{odonnell03degree}, which states that the XOR of $k$
copies of a given function $f\colon X\to\moo$ has threshold
degree $k\degthr(f).$

We are able to generalize Theorem~\ref{thm:thrdeg-dp} to the notion
of \emph{$\epsilon$-approximate degree} $\degeps(F),$ which is the
least degree of a real polynomial $p$ with $\|F-p\|_\infty\leq\epsilon.$
This notion plays a fundamental role in complexity theory, learning
theory, and quantum computing
and was also re-posed as an analytic challenge in Aaronson's
tutorial~\cite{aaronson08tutorial}.  We have:

\begin{theorem}[Approximate degree]
\label{thm:main-approx-dp}
Fix functions
$f\colon X\to\moo$ and $F\colon \mook\to\moo,$ where
$X\subset\Re^n$ is a finite set. Then for 
$0<\epsilon<1,$  
\begin{align*}
\deg_{\epsilon}(F(f,\dots,f)) \geq \deg_{\epsilon}(F)\degthr(f).
\end{align*}
\end{theorem}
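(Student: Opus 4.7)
The plan is to prove this via \emph{dual block composition}, passing through linear-programming duality. By Gordan's theorem (or Farkas' lemma) applied to the appropriate polynomial cones, the hypothesis $\deg_\epsilon(F) \geq D$ is equivalent to the existence of $\mu\colon\mook\to\Re$ with $\|\mu\|_1 = 1$, $\sum_z \mu(z) F(z) > \epsilon$, and $\sum_z \mu(z) q(z) = 0$ for every polynomial $q$ on $\mook$ of degree less than $D$; similarly, $\degthr(f) \geq d$ (with $d = \degthr(f)$) is equivalent to the existence of $\phi\colon X\to\Re$ with $\|\phi\|_1 = 1$, $\phi(x)f(x) \geq 0$ for every $x$, and $\sum_x \phi(x) p(x) = 0$ for every polynomial $p$ on $X$ of degree less than $d$. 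I would then introduce the candidate composite witness
\[
\psi(x_1,\dots,x_k) \;=\; 2^k \,\mu(f(x_1),\dots,f(x_k)) \prod_{i=1}^{k} |\phi(x_i)|
\]
on $X^k$ and aim to show that $\psi$ certifies $\deg_\epsilon(F(f,\dots,f)) \geq Dd$.

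Verifying normalization and correlation is a short calculation built on the sign-representation identity $|\phi(x)| = \phi(x) f(x)$, which follows from $\phi(x)f(x) \geq 0$ and $f(x)\in\moo$. Combined with orthogonality of $\phi$ to the constant $1$, this yields $\sum_{x\colon f(x)=+1}|\phi(x)| = \sum_{x\colon f(x)=-1}|\phi(x)| = 1/2$. Folding this into the definition of $\psi$ gives $\|\psi\|_1 = \sum_z |\mu(z)| = 1$ and $\sum_x \psi(x) F(f(x_1),\dots,f(x_k)) = \sum_z \mu(z) F(z) > \epsilon$, exactly the first two requirements of a dual witness at error $\epsilon$.

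The crux is orthogonality: $\sum_x \psi(x) p(x) = 0$ for every polynomial $p$ on $X^k$ of degree less than $Dd$. To show this, I would expand $\mu$ in its multilinear (character) representation $\mu(z) = \sum_{|S| \geq D} \hat\mu(S) \prod_{i \in S} z_i$, where the support is forced into $|S| \geq D$ by $\mu$'s orthogonality to low-degree polynomials. Using the further identity $f(x)|\phi(x)| = \phi(x)$ (from $|\phi(x)|=\phi(x)f(x)$ and $f(x)^2=1$), the witness rewrites as
\[
\psi(x) \;=\; 2^k \sum_{|S| \geq D} \hat\mu(S) \prod_{i \in S} \phi(x_i) \prod_{i \notin S} |\phi(x_i)|.
\]
For any monomial $\prod_{j=1}^{k} x_j^{\alpha_j}$ appearing in $p$, one has $\sum_j |\alpha_j| < Dd$, so for every $S$ with $|S| \geq D$, pigeonhole forces some $i \in S$ with $|\alpha_i| < d$; the factor $\sum_{x_i} \phi(x_i) x_i^{\alpha_i}$ then vanishes by orthogonality of $\phi$, killing the entire term. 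Summing over $S$ and over monomials gives $\sum_x \psi(x) p(x) = 0$, and LP duality delivers the desired lower bound $\deg_\epsilon(F(f,\dots,f)) \geq Dd$.

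The main obstacle is engineering the construction so that a single identity---$f\cdot |\phi| = \phi$, which encodes precisely the sign-representation property of $f$---suffices to transport the degree-$D$ orthogonality of $\mu$ into degree-$Dd$ orthogonality of $\psi$. It is this reliance on the \emph{sign} (rather than an approximation) of $f$ on each block that makes $\degthr(f)$ rather than $\deg_{\epsilon'}(f)$ appear as the second factor; substituting an approximation witness for $\phi$ would introduce error cross-terms across the $k$ blocks that would need a much larger budget to absorb and would weaken the conclusion.
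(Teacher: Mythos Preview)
Your proof is correct and essentially identical to the paper's own argument: the paper likewise passes to the dual, takes an outer witness $\Psi$ for $\deg_\epsilon(F)$ and an inner (Gordan) witness $\mu$ for $\degthr(f)$, forms the composite $\zeta(x_1,\dots,x_k)=2^k\Psi(f(x_1),\dots,f(x_k))\prod_i\mu(x_i)$, and verifies normalization, correlation, and orthogonality via the Fourier expansion of $\Psi$ plus pigeonhole---exactly your dual block composition. The only cosmetic difference is that the paper works with the probability-distribution form of Gordan's witness (its $\mu$ is your $|\phi|$, and its $\mu\cdot f$ is your $\phi$); you should also note explicitly, as the paper does, that the case $\degthr(f)=0$ (i.e., $f$ constant) is trivial, since your use of ``orthogonality of $\phi$ to the constant $1$'' presupposes $d\geq 1$.
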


Again, Theorem~\ref{thm:main-approx-dp} gives the best lower
bound that depends on $\deg_\eps(F)$ and $\degthr(f)$ alone. For
example, the stated bound is tight for any function $F$ when $f=\PARITY.$
In Section~\ref{sec:dp}, we prove various other results involving
bounded-error and small-bias approximation, as well as
compositions of the form $F(f_1,\dots,f_k)$ where $f_1,\dots,f_k$
may all be distinct. 

We use Theorem~\ref{thm:main-approx-dp} to
obtain an improved lower bound on the approximate degree
of the well-studied AND-OR tree, given by 
\begin{align}
f(x)=\bigvee_{i=1}^n
\bigwedge_{j=1}^n x_{ij}.
\label{eqn:and-or-def}
\end{align}
Prior to this work, the best lower bound was
$\Omega(n^{0.66\dots}),$ due to
Ambainis~\cite{ambainis05collision}.  Preceding it were lower
bounds of $\Omega(\sqrt n)$ due to Nisan and
Szegedy~\cite{nisan-szegedy94degree} and $\Omega(\sqrt{n\log n})$
due to Shi~\cite{shi-linear}. We improve the standing lower bound
from $\Omega(n^{0.66\dots})$ to $\Omega(n^{0.75}),$ the best
upper bound being $O(n)$
due to H{\o}yer et al.~\cite{hoyer-mosca-dewolf03and-or-tree}.

\begin{theorem}[AND-OR Tree]
\label{thm:main-and-or}
Define $f\colon \moo^{n^2}\to\moo$ by
\textup{(\ref{eqn:and-or-def})}.
Then 
\begin{align*}
\deg_{1/3}(f) = \Omega(n^{0.75}).
\end{align*}
\end{theorem}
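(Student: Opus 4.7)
The plan is to apply Theorem~\ref{thm:main-approx-dp} after regrouping the gates of $f$ so that the inner block carries substantial threshold degree. The naive decomposition, with $F=\mathrm{OR}_n$ as the outer function and $\mathrm{AND}_n$ as the inner function, yields only $\Omega(\sqrt n),$ because $\degthr(\mathrm{AND}_n)=1.$ To do better, I will peel one level of OR gates from the outer disjunction, turning what remains into a Minsky--Papert-style function whose threshold degree is far from trivial.

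Assume for convenience that $n$ is a perfect square (the general case follows by padding) and set $m=\sqrt n.$ Exploiting the identity $\mathrm{OR}_{m^2}=\mathrm{OR}_m\circ\mathrm{OR}_m,$ I rewrite
\begin{align*}
f \;=\; \mathrm{OR}_n\circ\mathrm{AND}_n \;=\; \mathrm{OR}_{m^2}\circ\mathrm{AND}_{m^2} \;=\; \mathrm{OR}_m(g,\dots,g),
\end{align*}
where $g=\mathrm{OR}_m\circ\mathrm{AND}_{m^2}$ is the Minsky--Papert function on $m^3$ variables. Theorem~\ref{thm:main-approx-dp}, applied with $F=\mathrm{OR}_m$ and inner function $g,$ then gives
\begin{align*}
\deg_{1/3}(f) \;\geq\; \deg_{1/3}(\mathrm{OR}_m)\cdot\degthr(g).
\end{align*}
The Nisan--Szegedy bound contributes $\deg_{1/3}(\mathrm{OR}_m)=\Omega(\sqrt m),$ and the classical Minsky--Papert theorem contributes $\degthr(g)=\Omega(m).$ Multiplying yields $\deg_{1/3}(f)=\Omega(m^{3/2})=\Omega(n^{3/4}),$ as desired.

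Since Theorem~\ref{thm:main-approx-dp} is the real workhorse, essentially no additional analytic work is required once it is in hand. The only nontrivial step is recognizing the right regrouping: by transferring one layer of OR gates to the outer function, the inner block acquires the Minsky--Papert threshold degree $\Omega(\sqrt n),$ rather than the trivial $\degthr(\mathrm{AND}_n)=1,$ and it is precisely this $n^{1/4}$ gain that pushes the final bound from $\Omega(\sqrt n)$ up to $\Omega(n^{3/4}).$
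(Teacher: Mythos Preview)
Your proposal is correct and is essentially identical to the paper's own proof: both split the outer $\mathrm{OR}_n$ as $\mathrm{OR}_{\Theta(m)}\circ\mathrm{OR}_m$ with $m=\Theta(\sqrt n)$, take the inner block to be the Minsky--Papert function $\mathrm{OR}_m\circ\mathrm{AND}_{\Theta(m^2)}$ with threshold degree $\Theta(m)$, and invoke Theorem~\ref{thm:main-approx-dp} together with the Nisan--Szegedy bound $\deg_{1/3}(\mathrm{OR}_{\Theta(m)})=\Theta(\sqrt m)$. The only cosmetic difference is that the paper sets $n=4m^2$ (so that the inner AND has fan-in $4m^2$, making Minsky--Papert give $\degthr(g)=m$ exactly), whereas you take $n=m^2$ and state $\degthr(g)=\Omega(m)$.
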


\noindent
Furthermore, the proof of Theorem~\ref{thm:main-and-or} is simpler 
and more modular than the previous lower bound~\cite{ambainis05collision},
which was based on the collision and element distinctness problems.

\subsection{Results for specific compositions}
\label{sec:results-conj}

While Theorems~\ref{thm:thrdeg-dp}
and~\ref{thm:main-approx-dp} give the best lower bounds that depend on
$\degthr(F),$ $\degthr(f),$ and $\degeps(F)$ alone, much stronger
lower bounds can be derived in some cases by exploiting additional
structure of $F$ and $f.$
%
%
%
Consider the special but illustrative
case of the conjunction of two functions.
In other words, we are given functions $f\colon X\to\moo$ and
$\g\colon Y\to\moo$ for some finite sets $X,Y\subset\Re^n$ and would like
to determine the threshold degree of their conjunction, $(f\wedge
\g)(x,y) = f(x)\wedge \g(y).$ A simple and elegant method for
sign-representing $f\wedge \g,$ due to Beigel et
al.~\cite{beigel91rational}, is to use rational approximation.
Specifically, let $p_1(x)/q_1(x)$ and $p_2(y)/q_2(y)$ be rational
functions of degree $d$ that approximate $f$ and $\g,$ respectively,
in the following sense:
\begin{align}
\label{eqn:approx-f-g}
   \max_{x\in X} \left| f(x) - \frac{p_1(x)}{q_1(x)}\right| \,+ \,
   \max_{y\in Y} \left| \g(y) - \frac{p_2(y)}{q_2(y)}\right| \,<\, 1. 
\end{align}
Letting $-1$ and $+1$ correspond to ``true'' and ``false,'' respectively,
we obtain:
\begin{align}
\label{eqn:pq}
 f(x)\wedge \g(y) &\equiv \sign\{1 + f(x)+\g(y)\}
 \rule{0mm}{7mm}\equiv \sign\BRACES{ 1 + \frac{p_1(x)}{q_1(x)} +
\frac{p_2(y)}{q_2(y)} }.
\end{align}
Multiplying the last expression in braces by the positive quantity
$q_1(x)^2q_2(y)^2$ gives
\begin{multline*}
\quad f(x)\wedge \g(y) \equiv \sign\left\{ q_1(x)^2q_2(y)^2 \right.
\\\left.+p_1(x)q_1(x)q_2(y)^2 + p_2(y)q_1(x)^2q_2(y)\right\},\quad
\end{multline*}
whence $\degthr(f\wedge \g) \leq 4d.$ In summary, if $f$ and $\g$ can
be approximated as in (\ref{eqn:approx-f-g}) by rational functions of
degree at most $d,$ then the conjunction $f\wedge \g$ has
threshold degree at most $4d.$

It is natural to ask whether there exists a better 
construction.  After all, given a sign-representing polynomial
$p(x,y)$ for $f(x)\wedge \g(y),$ there is no reason to expect
that $p$ arises from the sum of two independent rational functions
as in~(\ref{eqn:pq}).  Indeed, $x$ and $y$ can be tightly coupled
inside $p(x,y)$ and can interact in complicated ways. Our next
result is that, surprisingly, no such interactions
can beat the simple construction above.  In other words, the 
sign-representation based on rational functions always achieves
the optimal degree, up to a small constant factor.

\begin{theorem}[Conjunctions of functions]
\label{thm:main-two}
Let $f\colon X\to\moo$ and $\g\colon Y\to\moo$ be given functions, where
$X,Y\subset\Re^n$ are arbitrary finite sets.
Assume that $f$ and $\g$ are not identically false. Let 
$d=\degthr(f\wedge \g).$ Then there exist degree-$4d$ rational
functions
\[ \frac{p_1(x)}{q_1(x)}, \quad 
   \frac{p_2(y)}{q_2(y)} \]
that satisfy \textup{(\ref{eqn:approx-f-g})}.
\end{theorem}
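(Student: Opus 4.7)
The plan is to prove the contrapositive via LP duality. Suppose no rational functions $p_1/q_1,p_2/q_2$ of degree $4d$ satisfy (\ref{eqn:approx-f-g}); I will construct a signed measure $\Lambda$ on $X\times Y$ witnessing $\degthr(f\wedge g)>d$, contradicting the hypothesis $d=\degthr(f\wedge g)$. The first step is to linearize the rational-approximation constraint: for a Boolean function $h$ and a polynomial $q$ positive on the relevant domain, the bound $\|h-p/q\|_\infty\leq\alpha$ is equivalent to the homogeneous linear inequalities $(1-\alpha)q(z)\leq h(z)p(z)\leq(1+\alpha)q(z)$. The failure of (\ref{eqn:approx-f-g}) then becomes infeasibility of a linear program in the coefficients of $p_1,q_1,p_2,q_2$.

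Applying Farkas' lemma to this infeasibility, I would extract nonnegative Lagrange multipliers on $X$ and on $Y$ and repackage them as signed functions $\sigma_X\colon X\to\Re$ and $\sigma_Y\colon Y\to\Re$. The freedom in the coefficients of $p_i$ forces $f\cdot\sigma_X$ to be orthogonal to all polynomials of degree $\leq 4d$ on $X$ (and analogously $g\cdot\sigma_Y$ on $Y$), while the freedom in the coefficients of $q_i$ forces a companion orthogonality on $\sigma_X$ and $\sigma_Y$ directly. The strict inequality $\alpha_f+\alpha_g<1$ in (\ref{eqn:approx-f-g}) survives into the dual as a strict sign-mass condition.

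The heart of the argument is then to combine $\sigma_X$ and $\sigma_Y$ into a single signed measure $\Lambda$ on $X\times Y$ certifying $\degthr(f\wedge g)>d$: namely, $\Lambda\not\equiv 0$, $\Lambda(x,y)\,(f(x)\wedge g(y))\geq 0$ pointwise, and $\langle\Lambda,\phi\rangle=0$ for every polynomial $\phi$ in $(x,y)$ of total degree $\leq d$. The natural candidate is a bilinear combination of the one-variable witnesses patterned on the Beigel--Reingold--Spielman sign-representing polynomial $q_1^2q_2^2+p_1q_1q_2^2+p_2q_1^2q_2$ recalled before the theorem: each summand factors into an $x$-piece and a $y$-piece, and the orthogonalities of $\sigma_X,\sigma_Y$ are calibrated so that the pairing of $\Lambda$ against any polynomial of total degree $\leq d$ in $(x,y)$ reduces, term by term, to those single-variable orthogonalities. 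The factor $4$ in $4d$ is precisely the room needed to match the squaring $q_1^2,q_2^2$ that appears in the sign-representation.

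The main obstacle will be the pointwise sign condition $\Lambda(x,y)\,(f(x)\wedge g(y))\geq 0$ together with $\Lambda\not\equiv 0$. Orthogonality follows relatively cleanly from the product structure; sign alignment, however, requires exploiting both the Boolean-valued nature of $f,g$ and the strict bound $\alpha_f+\alpha_g<1$ --- in direct analogy with the forward construction (\ref{eqn:pq}), which itself used strict inequality to turn a near-miss approximation into an exact sign representation of $f\wedge g$. Tracking this across the four sign patterns of $(f,g)\in\{-1,+1\}^2$ while ensuring $\Lambda$ remains nontrivial is the central technical step.
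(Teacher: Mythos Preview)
Your proposal reverses the direction of the paper's argument, and the reversal creates a gap you have not closed. The paper does \emph{not} argue by contrapositive: it takes the degree-$d$ sign-representing polynomial $\phi(x,y)$ for $f\wedge g$ as given and uses $\phi$ itself as the bridge between the $x$- and $y$-worlds. Concretely, the paper first treats the case where $f$ is odd: it fixes $c$ just below the threshold dictated by $R^+(g,d)$, observes that no nonnegative combination $\sum_{x\in X^-}a_x\,\phi(x,y)$ and $\sum_{x\in X^-}a_{-x}\,\phi(-x,y)$ can furnish a degree-$d$ rational approximant of $g$ with parameter $c$, and then applies a Farkas-type alternative (Theorem~\ref{thm:zero-sum-game}) to that pair of matrices. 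The resulting nonnegative weights $\lambda_y,\mu_y$ are fed back through $\phi$ to build polynomials $\alpha(x),\beta(x)$ whose square $\gamma=(\alpha+\beta)^2$ yields a degree-$2d$ rational approximant of $f$. The general case (Theorem~\ref{thm:main-finite}) is then reduced to the odd case by a one-bit extension, which accounts for the final factor of~$2$ in~$4d$.

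Your plan, by contrast, never touches $\phi$. You dualize the rational-approximation side to obtain one-variable witnesses $\sigma_X,\sigma_Y$, and then you must manufacture a Gordan witness $\Lambda$ on $X\times Y$ from them. This is exactly the step that carries the content of the theorem, and you give no construction. Two concrete obstacles: (i) the feasibility problem you write down is \emph{not} linear --- the terms $\alpha_f q_1(x)$ and $\alpha_g q_2(y)$ are bilinear --- and once you fix $\alpha_f,\alpha_g$ the system decouples completely into an $X$-problem and a $Y$-problem, so Farkas hands you genuinely independent $\sigma_X$ and $\sigma_Y$ with no built-in link; (ii) the sign condition $\Lambda(x,y)\cdot(f(x)\wedge g(y))\ge 0$ does not factor through $f(x)$ and $g(y)$ separately (the AND is asymmetric across the four sign patterns), so no tensor-product ansatz $\sigma_X\otimes\sigma_Y$ or finite sum thereof will satisfy it without a further idea. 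The analogy with the Beigel--Reingold--Spielman polynomial is in the primal direction and does not dualize in any obvious way: the BRS trick multiplies by $q_1^2q_2^2$ to clear denominators, whereas a Gordan witness must be a signed measure whose sign pattern matches $f\wedge g$ pointwise. In short, the paper's use of $\phi$ is not a convenience but the mechanism that couples $x$ to $y$; your outline removes that mechanism and does not replace it.
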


Via repeated applications of Theorem~\ref{thm:main-two}, we are
able to obtain analogous results for conjunctions $f_1\wedge f_2\wedge
\cdots \wedge f_k$ for any Boolean functions $f_1,f_2,\dots,f_k$
and any $k.$ Our results further extend to compositions $F(f_1,\dots,f_k)$
for various $F$ other than $F=\AND,$ such as 
halfspaces and read-once AND/OR/NOT formulas.
We defer a more detailed description of these extensions
to Section~\ref{sec:h}, limiting this overview to the following
representative special case.

\begin{theorem}[Extension to multiple functions]
\label{thm:main-h}
Let $f_1,f_2,\dots,f_k$ be nonconstant Boolean functions on finite 
sets $X_1,X_2,\dots,X_k\subset\Re^n,$ respectively.  Let
$F\colon \mook\to\moo$ be a halfspace or a read-once AND/OR/NOT 
formula.  Assume that $F$ depends on
all of its $k$ inputs and that the composition $F(f_1,f_2,\dots,f_k)$
has threshold degree $d.$ Then there is a degree-$D$ rational
function $p_i/q_i$ on $X_i,$ $i=1,2,\dots,k,$ such that
\[ \sum_{i=1}^k \;
   \max_{x_i\in X_i} \left| f_i(x_i) -
   \frac{p_i(x_i)}{q_i(x_i)}\right|<1,\]
where $D=8d\log2k.$
\end{theorem}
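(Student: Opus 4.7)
The plan is to iterate Theorem~\ref{thm:main-two} along the structure of $F$, combined with a monotonicity observation for sub-compositions and a Newman-style amplification at the leaves.

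For the read-once formula case, I would first apply De~Morgan to push all NOTs down to the leaves of $F$, absorbing each into the corresponding $f_i$. This preserves both $\degthr(f_i)$ and the existence/degree of rational approximations of $f_i$, so we may assume $F$ is an AND/OR tree. Next I would establish a key monotonicity claim: every sub-formula $F'$ of $F$, regarded as a composition $F'(\vec{f'})$ of the $f_i$'s attached to its leaves, satisfies $\degthr(F'(\vec{f'})) \leq d$. This holds because each $f_j$ outside $F'$ is nonconstant and, by the read-once structure together with the hypothesis that $F$ depends on all of its inputs, we can restrict those $f_j$'s so that the portion of $F$ above $F'$ collapses and passes through $F'$'s value up to sign; threshold degree does not increase under such restrictions, so the claim follows.

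Now I would descend the tree. At each internal node with children sub-formulas $F_L, F_R$ (and with the top gate converted to $\wedge$ by De~Morgan if needed), the composition $F_L(\vec{f_L}) \wedge F_R(\vec{f_R})$ has threshold degree at most $d$ by the monotonicity claim, so Theorem~\ref{thm:main-two} gives rational approximations of $F_L(\vec{f_L})$ and $F_R(\vec{f_R})$ of degree $4d$ with errors summing to less than $1$. Iterating down to the leaves delivers, for each $i$, a rational approximation $p_i/q_i$ of $f_i$ of degree $4d$. Crucially, the degree does not compound across levels, because at each level we re-apply Theorem~\ref{thm:main-two} to a sub-formula whose threshold degree is still bounded by $d$.

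The degree is already nearly optimal, but the total error bound is not: each sibling pair's errors sum to less than $1$, and there are $\Theta(k)$ pairs, so the cumulative error could be as large as $\Theta(k)$. I would repair this by Newman-type amplification---composing each $p_i/q_i$ with a low-degree rational approximation of $\sign$ on a gap-bounded interval---to sharpen the per-leaf error from $<1$ to $<1/(2k)$. This multiplies the degree by $O(\log k)$ and keeps the total error below $1/2$, giving a final degree of $O(d \log k)$ that matches the claimed $8d \log 2k$. The main obstacle I expect is ensuring that the errors returned by Theorem~\ref{thm:main-two} are bounded away from $1$ by a constant, since otherwise the Newman amplification factor depends badly on the residual gap; addressing this cleanly likely requires either a refined form of Theorem~\ref{thm:main-two} producing more symmetric error splits, or a small preamplification per leaf before the final sharpening. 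For the halfspace case, the same iterative strategy applies, but with the Beigel--Reingold--Spielman rational approximation of halfspaces taking over the combinator role that the formula tree plays in the read-once case.
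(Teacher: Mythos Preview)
Your obstacle is real and is exactly where the proposal stalls. Descending the tree and applying Theorem~\ref{thm:main-two} at each node only yields, for each leaf $i$, that $R^+(f_i,4d)<1$ (from the application at $i$'s parent), together with a sibling-pair constraint. A leaf whose sibling is an entire subtree gets no uniform gap from $1$, and then Newman amplification to error $1/k$ costs a degree factor governed by $1/(1-R^+(f_i,4d))$, not just by $k$. Neither of your two suggested fixes addresses this: Theorem~\ref{thm:main-two} cannot be sharpened to symmetric error splits in general, and ``preamplification'' runs into the same missing gap.

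The paper's resolution is latent in your own monotonicity claim, but you did not push it far enough. Instead of restricting only the leaves outside a given subtree, restrict all leaves except a chosen pair $i,j$: since $F$ depends on every input and each $f_\ell$ is nonconstant, this specializes $F(f_1,\dots,f_k)$ to one of $\pm f_i\wedge\pm f_j$ or $\pm f_i\vee\pm f_j$, still of threshold degree at most $d$. Theorem~\ref{thm:main-two} then gives $R^+(f_i,4d)+R^+(f_j,4d)<1$ for \emph{every} pair, not only siblings. Taking $\epsilon=\max_i R^+(f_i,4d)$, say attained at $i=1$, every other $f_i$ satisfies $R^+(f_i,4d)<\min\{1-\epsilon,\tfrac12\}$; a single $O(\log k)$ amplification then drives $\sum_{i\geq 2}R^+(f_i,D)$ below $1-\epsilon$, and adding $f_1$'s contribution $\epsilon$ finishes. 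The paper packages the pairwise-restriction step as ``AND-reducibility'' (Definition~\ref{def:and-reducible}, Theorem~\ref{thm:and-reducible}) and observes that it holds for any halfspace or read-once AND/OR/NOT formula depending on all inputs, so both cases are handled by one uniform argument; your halfspace sketch via Beigel--Reingold--Spielman is headed in the wrong direction, since that construction turns rational approximations into sign-representations rather than the reverse.
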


\noindent
Theorem~\ref{thm:main-h} is close to
optimal. For example, when $F=\AND,$ the upper bound on
$D$ is tight up to a factor of $\Theta(k\log k)$; for all $F$
in the statement of the theorem, it is tight up to a polynomial
in $k.$ See Remark~\ref{rem:tightness} for details.

Theorems~\ref{thm:main-two} and~\ref{thm:main-h} contribute a
strong technique for proving lower bounds on the threshold degree,
via rational approximation.  Prior to this paper, it was a substantial
challenge to analyze the threshold degree even for
compositions of the form $f\wedge \g.$ Indeed, we are only aware of
the work in~\cite{minsky88perceptrons,odonnell03degree}, where
the threshold degree of $f\wedge \g$ was studied for the special
case $f=\g=\text{\sc majority}.$ The main difficulty in those previous
works was analyzing the unintuitive interactions between $f$ and
$\g.$ Our results remove this difficulty, even in the general
setting of compositions $F(f_1,f_2,\dots,f_k)$ for arbitrary
$f_1,f_2,\dots,f_k$ and various combining functions $F.$ Specifically,
Theorems~\ref{thm:main-two} and~\ref{thm:main-h} make it possible
to study the base functions $f_1,f_2,\dots,f_k$ individually,
in isolation. Once their rational approximability is understood,
one immediately obtains lower bounds on the threshold degree of
$F(f_1,f_2,\dots,f_k).$

\subsection{Results for intersections of two halfspaces}
\label{sec:results-hshs}

As an application of our direct product theorems in
Section~\ref{sec:results-conj}, we obtain the first strong lower bounds on 
the threshold degree of
intersections of halfspaces, i.e., intersections of functions of
the form 
$f(x)=\sign(\sum \alpha_i x_i-\theta)$ for some reals
$\alpha_1,\dots,\alpha_n,\theta.$ 
In light of Theorem~\ref{thm:main-two}, this task amounts to
proving that rational functions of low degree cannot approximate
a given halfspace. We accomplish this in the following theorem,
where the notation $\rdeg_\eps(f)$ stands for the least degree of a
rational function $A$ with $\|f-A\|_\infty\leq\epsilon.$

\begin{theorem}[Approximation of a halfspace]
\label{thm:main-approx-hs}
Let $f\colon\moo^{n^2}\to\moo$ be given by
\begin{align}
\label{eqn:halfspace-defined}
f(x) = \sign\PARENS{1 + \sum_{i=1}^{\phantom{A}n\phantom{A}}
   \sum_{j=1}^n 2^i x_{ij}}.
\end{align}
Then for $1/3<\epsilon<1,$ 
\begin{align*}
\rdeg_\epsilon(f) = \Theta\PARENS{1 + \frac
{\phantom{A}n\phantom{A}}{\log\{1/(1-\epsilon)\}}}.
\end{align*}
Furthermore, for all $\epsilon>0,$
\begin{align*}
\rdeg_\epsilon(f) \leq 64 n\lceil \log_2 n \rceil +1.
\end{align*}
\end{theorem}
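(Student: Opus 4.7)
Write $f(x) = \sign(L(x))$ with the linear form $L(x) = 1 + \sum_{i=1}^n \sum_{j=1}^n 2^i x_{ij}$. Since each summand $2^i x_{ij}$ is even, $L(x)$ is always an odd integer, and $|L(x)| \leq M := 1 + n(2^{n+1}-2) = \Theta(n 2^n)$. Because composition with the degree-$1$ map $L$ preserves rational degree, the theorem reduces to bounding the least degree of a univariate rational function $R$ satisfying $|R(t) - \sign t| \leq \epsilon$ at every odd integer $t$ with $1 \leq |t| \leq M$: both halves become questions about univariate rational approximation of $\sign$ on this discrete set.

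\textbf{Upper bounds.} For the bound $O(1 + n/\log\{1/(1-\epsilon)\})$ valid on $\epsilon \in (1/3, 1)$, my plan is a piecewise-dyadic construction. Set $k = \Theta(\log\{1/(1-\epsilon)\})$, partition $[1, M]$ into $O(n/k)$ super-blocks of geometric width $2^k$, and build $R$ as a sum of $O(1)$-degree rational ``bumps,'' one per super-block. The geometric spacing combined with the allowed slack $(1-\epsilon)$ permits each bump to decay quickly enough that aggregation produces total error at most $\epsilon$ at total degree $O(n/k)$. For the bound $64 n \lceil \log_2 n \rceil + 1$ valid for every $\epsilon > 0$, I would invoke a sharper rational approximation of $\sign$ on $[-M, -1] \cup [1, M]$ of degree $O(n \log n)$---for instance, a Newman-type construction---which at this degree has enough free parameters to drive the error on the finite discrete range of $L$ below any prescribed $\epsilon > 0$.

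\textbf{Lower bound.} For the lower bound $\rdeg_\epsilon(f) = \Omega(n / \log\{1/(1-\epsilon)\})$ on $\epsilon \in (1/3, 1)$, I would first symmetrize any putative rational approximation $p(x)/q(x)$ within each of the $n$ blocks; this allows us to assume that $p, q$ depend on $x$ only through the block sums $s_i := \sum_j x_{ij} \in \{-n, -n+2, \dots, n\}$. The problem reduces to lower-bounding the rational degree for approximating $\tilde f(s) := \sign(1 + \sum_{i=1}^n 2^i s_i)$ on the lattice $\{-n, -n+2, \dots, n\}^n$. I would then invoke linear programming duality for rational approximation, which characterizes $\rdeg_\epsilon(\tilde f) > d$ via the existence of a pair of non-negative measures $\mu^+, \mu^-$ on the lattice, supported respectively on the $+1$ and $-1$ classes of $\tilde f$, that are indistinguishable by polynomials of degree $\leq d$ in the sense $\sum_x \mu^+(x) P(x) = \sum_x \mu^-(x) P(x)$ for every such $P$, with mass imbalance governed by $\epsilon$. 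The plan is to construct $\mu^\pm$ concentrated on paired inputs straddling each of the $\Theta(n)$ dyadic thresholds of $\tilde f$.

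\textbf{Main obstacle.} The technical heart is this explicit dual witness. A naive restriction---for example, to $s_i \in \{-n, 0, +n\}$ with reduction to the odd-max-bit function---yields only an $\Omega(\log n)$ bound. Obtaining the full $\Omega(n)$ requires exploiting both the dyadic weighting across blocks and the fine majority-lattice structure within each block, so that the witness encodes $\Theta(n)$ essentially independent threshold transitions. The factor $\log\{1/(1-\epsilon)\}$ in the denominator should emerge because each unit of rational degree can be ``spent'' to correctly handle roughly $\log\{1/(1-\epsilon)\}$ such transitions within the allowed slack; calibrating this tight dependence and verifying the orthogonality and imbalance conditions on $\mu^\pm$ simultaneously is the most delicate step of the argument.
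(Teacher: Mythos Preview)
Your setup reduction to a univariate problem is only one-directional: composing a univariate rational approximant of $\sign$ with the linear form $L$ yields an upper bound on $\rdeg_\epsilon(f)$, but a general multivariate rational approximant need not factor through $L$, so this does not reduce the lower bound to a univariate question. You seem to realize this later when you re-introduce symmetrization for the lower bound, but the framing in the Setup paragraph is misleading.

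Your second upper bound has a genuine gap. A Newman-type construction on $[-M,-1]\cup[1,M]$ at degree $k$ gives error $1-M^{-1/k}$, which is a \emph{fixed} positive number once $n$ (hence $M$) and $k$ are fixed; it does not drive the error below every $\epsilon>0$ at degree $O(n\log n)$. The paper's route is quite different: it shows $R^+(f,64n\lceil\log_2 n\rceil+1)=0$ by exhibiting, via a finite-automaton decomposition of the base-$2$ carry structure, a one-parameter family of degree-$O(n\log n)$ rational functions whose error tends to $0$ as the parameter grows. The point is that the infimum over rational functions of fixed degree is $0$, and this requires exploiting the discrete structure rather than continuous approximation of $\sign$.

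For the lower bound, you correctly identify the framework (symmetrize to block sums, then LP duality) and correctly diagnose the obstacle, but you are missing the key idea that overcomes it. The paper does \emph{not} build the dual witness by placing mass near dyadic thresholds. Instead, it constructs random variables $\xbold_1,\dots,\xbold_{n+1}$ in $\{0,\pm1,\dots,\pm(3n+1)\}^{2n+2}$ that (i) deterministically satisfy $\sum 2^{i-1}\xbold_i\equiv\zbold$ for a fixed vector $\zbold$ with geometrically spaced entries, and (ii) in expectation look fully independent to any polynomial of degree up to $2n$. Property (ii) is achieved by a moment-matching trick using binomial distributions. Together these yield a linear, degree-nonincreasing map $p\mapsto\Exp[p(\xbold_1,\dots,\xbold_{n+1})]$ from multivariate polynomials to univariate polynomials in $\zbold$, which converts the multivariate sign-representation problem into a univariate root-counting argument. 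The dual witness for rational inapproximability is then built by multiplying the Gordan-type witness from this threshold-degree bound by an explicit product polynomial whose roots are placed at the geometric midpoints $2^j\sqrt2$. Your proposal contains no analogue of this multivariate-to-univariate mechanism, and without it the ``$\Theta(n)$ independent threshold transitions'' intuition does not translate into orthogonality against all degree-$\Omega(n)$ polynomials.
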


The function (\ref{eqn:halfspace-defined}) is known 
as the \emph{canonical halfspace}.
Thus, Theorem~\ref{thm:main-approx-hs} shows that a rational
function of degree $\Theta(n)$ is necessary and sufficient for
approximating the canonical halfspace within $1/3.$ The upper bound
in this theorem follows readily from classical work by
Newman~\cite{newman64rational}, and it is 
the lower bound that has required of us technical
novelty and effort.  The best previous degree lower bound for
constant-error approximation for any halfspace was $\Omega(\log
n/\log\log n),$ obtained implicitly in~\cite{odonnell03degree}.
We complement Theorem~\ref{thm:main-approx-hs} with a full solution for 
another common halfspace, the majority function.

\begin{theorem}[Approximation of majority]
\label{thm:main-approx-maj}
Let $\MAJ_n\colon \moon\to\moo$ denote the majority function. Then
\begin{align*}
\rdeg_\epsilon(\MAJ_n)= 
\ccases{ 
\displaystyle 
\Theta\PARENS{
\log \BRACES{\frac{2n}{\log(1/\epsilon)}}
\cdot 
\log \frac1\epsilon
},
	&\qquad 2^{-n}<\epsilon<1/3,\\
\rule{0mm}{10mm}
\displaystyle 
\Theta\PARENS{1 + \frac{\log n}{\log\{1/(1-\epsilon)\}}},
	&\qquad 1/3\leq \epsilon<1.
}
\end{align*}
\end{theorem}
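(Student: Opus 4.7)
My plan is to reduce the multivariate problem to univariate rational approximation of $\sign(t)$ on an integer lattice, and then invoke classical approximation-theoretic tools. Since $\MAJ_n$ is permutation-invariant, a rational-function analogue of Minsky--Papert symmetrization applies. Given polynomials $P,Q$ of degree at most $d$ with $Q>0$ on $\moon$ and $|\MAJ_n(x)-P(x)/Q(x)|\leq\epsilon,$ we have $|\MAJ_n(x)Q(x)-P(x)|\leq\epsilon Q(x).$ Averaging both polynomials over all $n!$ coordinate permutations yields symmetric polynomials $\bar P,\bar Q$ of the same degree with $\bar Q>0,$ and the inequality survives the averaging because $\MAJ_n$ is invariant. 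A symmetric polynomial of degree $\leq d$ on $\moon$ is a univariate polynomial of degree $\leq d$ in $t=x_1+\cdots+x_n,$ so the task reduces to: approximate $\sign(t)$ on $T_n=\{-n,-n+2,\ldots,n-2,n\}\setminus\{0\}$ by a univariate rational function of degree $\leq d$ within $\epsilon.$

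For the upper bound, I would instantiate Newman's classical rational approximations to the sign function. Newman's construction produces rational functions of degree $O(\log(1/a)\log(1/\epsilon))$ approximating $\sign$ on $[-1,-a]\cup[a,1]$ within $\epsilon;$ rescaling $T_n$ to lie in $[-1,-1/n]\cup[1/n,1]$ and taking $a=1/n$ already yields the bound $O(\log n\,\log(1/\epsilon))$ for moderate $\epsilon.$ The refined exponent $O(\log\{2n/\log(1/\epsilon)\}\cdot\log(1/\epsilon))$ valid when $\log(1/\epsilon)$ is comparable to $\log n$ exploits the discreteness of $T_n:$ one precomposes with a low-degree polynomial that compresses the $n/2$ positive lattice points of $T_n$ into a short subinterval of length about $\log(1/\epsilon)/n,$ then applies Newman's approximation on that interval, trading polynomial degree against rational degree. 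For the large-error regime $\epsilon\geq 1/3,$ a properly tuned Newman-style rational of degree $O(1+\log n/\log\{1/(1-\epsilon)\})$ already attains the desired margin $n^{-O(1/d)}$ on $T_n.$

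For the lower bound, I would use LP duality for rational approximation on a finite set. The best error achievable by degree-$d$ rational functions on $T_n$ is the optimum of a fractional linear program, and its dual asks for a nonnegative witness measure on $T_n$ whose pairing with $\sign$ is large, while its pairing with the product of $\sign$ and any degree-$d$ polynomial is controlled. I would construct dual witnesses by discretizing classical Chebyshev--Zolotarev extremal polynomials for $\sign$ on $[-1,-1/n]\cup[1/n,1]$ and restricting them to $T_n,$ taking advantage of the lattice spacing $2/n$ after rescaling. An equivalent route is to linearize the rational inequality by multiplying through by $q(t):$ a good rational approximant yields a degree-$d$ polynomial $p(t)$ that is $\ell_\infty$-close to $q(t)\sign(t)$ on $T_n,$ enabling a reduction to the well-studied polynomial-approximation theory of $\sign$ on integer lattices, with the loss absorbed into the logarithmic factor $\log\{2n/\log(1/\epsilon)\}.$

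The main obstacle is the tightness of the lower bound, especially the refined factor $\log\{2n/\log(1/\epsilon)\}$ in place of the naive $\log n.$ This correction reflects the discreteness of $T_n$ precisely in the regime where $\epsilon$ is polynomially or exponentially small in $n,$ and producing dual witnesses that see this scale requires careful use of lattice-adapted extremal polynomials. A second difficulty is the regime $\epsilon\geq 1/3,$ where the degree falls to a constant and $1-\epsilon$ is small: the witness must certify that any rational with only $O(1)$ free parameters cannot push its values to the correct sign with margin at least $n^{-1/d}$ on both halves of $T_n,$ tracking the dependence on $\log\{1/(1-\epsilon)\}.$ Since the previous best constant-error lower bound was only $\Omega(\log n/\log\log n),$ closing the gap to $\Theta(\log n)$ and tracking the trade-off continuously up to $\epsilon=1-n^{-O(1)}$ will require genuinely sharper dual constructions than were previously available.
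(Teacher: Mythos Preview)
Your symmetrization step is correct and matches the paper exactly (Proposition~\ref{prop:symm-rational} and Theorem~\ref{thm:maj-vs-sign}): the problem reduces to univariate rational approximation of $\sign t$ on $\{\pm1,\ldots,\pm\lfloor n/2\rfloor\}$.

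The genuine gap is in the lower bound, and you acknowledge it yourself. Your two concrete suggestions do not work. The ``linearize by multiplying through by $q$'' route fails because $q$ is an \emph{arbitrary} positive polynomial of degree $d$: the inequality $|p(t)-\sign(t)q(t)|\leq\epsilon q(t)$ does not reduce to polynomial approximation of $\sign$, since $q$ can be huge on one half of the support and tiny on the other. The ``discretize Chebyshev--Zolotarev extremal polynomials'' idea is not a construction; those extremal objects solve a continuous problem, and nothing in your proposal explains how to extract from them a dual certificate orthogonal to all degree-$d$ polynomials on the \emph{finite} set. The paper's lower bound is not a discretization of a continuous argument at all. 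Its criterion (Theorem~\ref{thm:rational-degree-criterion-maj}) asks for a polynomial $r\in P_{2n-d-1}$ that vanishes on $\{-n,\ldots,n\}\setminus(S\cup -S)$ for a chosen support $S$ and satisfies $(-1)^t r(t)>\delta|r(-t)|$ on $S$; orthogonality to degree-$d$ polynomials then comes for free from the binomial identity $\sum_{t=-n}^n(-1)^t\binom{2n}{n+t}p(t)=0$ for $p\in P_{2n-1}$. The actual work is choosing $S$ and the remaining zeros of $r$: the paper places the support at geometric nodes $S=\{1,\Delta,\Delta^2,\ldots\}$ (and, in the high-degree regime, adds an arithmetic block $\{1,\ldots,k\}$) and places the nonlattice zeros of $r$ at the Newman midpoints $\Delta^i\sqrt{\Delta}$, then bounds the ratio $|r(t)/r(-t)|$ by elementary product estimates. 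This explicit node placement is the missing idea.

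Your upper-bound sketch is also off. The refined factor $\log\{2n/\log(1/\epsilon)\}$ does not come from precomposing Newman's rational with a compressing polynomial; it comes from a direct Newman-type construction $p(t)=\prod_{i=1}^k(t+i)\prod_{i=1}^k(t+k\Delta^i)$ with $k\approx d/2$ and $\Delta=(n/k)^{1/k}$, combining an arithmetic block of zeros near the origin with a geometric tail out to $n$, and then taking $(p(t)-p(-t))/(p(t)+p(-t))$. The arithmetic block is what buys the improvement over the naive $\log n$ when $d$ exceeds $\log n$.
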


\noindent
Again, the upper bound in Theorem~\ref{thm:main-approx-maj} is
relatively straightforward.  Indeed, an upper bound
of $O(\log\{1/\epsilon\}\log n)$ for $0<\epsilon<1/3$ was known and
used in the complexity literature long before our 
work~\cite{paturi-saks94rational,
siu-roy-kailath94rational, beigel91rational, KOS:02}, and we only
somewhat tighten that upper bound and extend it to all $\epsilon.$
Our primary contribution in Theorem~\ref{thm:main-approx-maj},
then, is a matching \emph{lower} bound on the degree, which requires
considerable effort. The closest previous line of research concerns
\emph{continuous} approximation of the sign function on
$[-1,-\epsilon]\cup[\epsilon,1],$ which unfortunately gives no
insight into the discrete case.  For example, the lower bound derived
by Newman~\cite{newman64rational} in the continuous setting is based
on the integration of relevant rational functions with respect to
a suitable weight function, which has no meaningful discrete analogue.  We
discuss our solution in greater detail at the end of the introduction.

Our first application of these lower bounds for rational
approximation is to construct an intersection of two halfspaces
with high threshold degree.  In what follows, the symbol $f\wedge f$
denotes the conjunction of two independent copies of a given
function $f.$

\begin{theorem}[Intersection of two halfspaces]
\label{thm:main-sign-hs}
Let $f\colon\moo^{n^2}\to\moo$ be given by
\textup{(\ref{eqn:halfspace-defined})}.
Then
\begin{align*}
\degthr(f\wedge f)  = \Omega(n).
\end{align*}
\end{theorem}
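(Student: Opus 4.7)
The plan is to combine Theorem~\ref{thm:main-two} with Theorem~\ref{thm:main-approx-hs} in a direct contrapositive manner; essentially no additional work is required. Set $d := \degthr(f\wedge f).$ The canonical halfspace $f$ defined by \textup{(\ref{eqn:halfspace-defined})} is nonconstant (for instance $f(\mathbf{1}) = +1$ while $f(-\mathbf{1}) = -1$ for $n \geq 1$), so in particular it is not identically false, and Theorem~\ref{thm:main-two} applies to the composition $f\wedge f.$ It furnishes rational functions $p_1/q_1$ and $p_2/q_2,$ each of degree at most $4d,$ such that
\[
   \max_{x}\left|f(x) - \frac{p_1(x)}{q_1(x)}\right|
   \;+\;\max_{y}\left|f(y) - \frac{p_2(y)}{q_2(y)}\right|\;<\;1.
\]

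At least one of the two summands above is therefore strictly less than $1/2.$ Say the first is; then there is a rational function of degree at most $4d$ that uniformly approximates $f$ to some error $\epsilon<1/2,$ so $\rdeg_\epsilon(f) \leq 4d.$ Since $\rdeg_\epsilon$ is nonincreasing in $\epsilon,$ this gives in particular $\rdeg_{1/2}(f) \leq 4d.$ Now apply Theorem~\ref{thm:main-approx-hs} with $\epsilon = 1/2 \in (1/3,1)$: it yields $\rdeg_{1/2}(f) = \Theta(1 + n/\log 2) = \Theta(n).$ Chaining the two inequalities produces $4d = \Omega(n),$ i.e.\ $d = \Omega(n),$ as claimed.

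There is no genuine obstacle to this argument; the only mild subtlety is that the strict inequality in~(\ref{eqn:approx-f-g}) guarantees that one of the two errors is bounded by an \emph{absolute} constant strictly less than $1/2,$ safely inside the regime $(1/3,1)$ in which Theorem~\ref{thm:main-approx-hs} delivers a linear-in-$n$ lower bound. All the real difficulty lies in the two ingredient theorems invoked here, namely the structural reduction from sign-representations of conjunctions to independent rational approximations (Theorem~\ref{thm:main-two}) and the sharp $\Omega(n)$ rational-approximation lower bound for the canonical halfspace (Theorem~\ref{thm:main-approx-hs}); granted those, Theorem~\ref{thm:main-sign-hs} reduces to a one-line syllogism.
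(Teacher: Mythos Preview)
Your proposal is correct and follows essentially the same approach as the paper: both combine the contrapositive of Theorem~\ref{thm:main-two} (equivalently Theorem~\ref{thm:main-finite}) with the rational-approximation lower bound for the canonical halfspace (Theorem~\ref{thm:main-approx-hs}, equivalently Theorem~\ref{thm:main-halfspace}). The paper's one-line proof phrases this as $R^+(f,\epsilon n)\geq 1/2$ forcing $4d\geq \epsilon n$, whereas you go through $\rdeg_{1/2}(f)$, but the logic is identical.
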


The lower bound in Theorem~\ref{thm:main-sign-hs} is tight and
matches the construction by Beigel~et
al.~\cite{beigel91rational}.  Prior to our work,
only an $\Omega(\log n/\log\log
n)$ lower bound was known on the threshold degree of the
intersection of two halfspaces, due to O'Donnell and
Servedio~\cite{odonnell03degree}, preceded in turn by an
$\omega(1)$ lower bound of Minsky and
Papert~\cite{minsky88perceptrons}.  Note that
Theorem~\ref{thm:main-sign-hs} requires the difficult part of
Theorem~\ref{thm:main-approx-hs}, namely, the lower bound for the
rational approximation of a halfspace.

Theorem~\ref{thm:main-sign-hs} solves an open problem in
computational learning theory, due to
Klivans~\cite{klivans-thesis}.  In more detail, recall that
Boolean functions with low threshold degree can be efficiently
PAC learned under arbitrary distributions, by expressing an
unknown function as a perceptron with unknown weights and solving
the associated linear program~\cite{KS01dnf,KOS:02}.  Now, a
central challenge in the area is PAC learning the intersection of
two halfspaces under arbitrary distributions, which remains
unresolved despite much effort and solutions to some restrictions
of the problem, e.g.,~\cite{KwekPitt:98, Vempala:97, KOS:02,
KlivansServedio:04coltmargin}.  Prior to this paper, it was 
unknown whether intersections of two halfspaces on $\zoon$ are amenable to
learning via perceptron-based techniques.  Specifically,
Klivans~\cite[\S7]{klivans-thesis} asked for a lower bound of
$\Omega(\log n)$ or better on the threshold degree of the
intersection of two halfspaces.  We solve this problem with a
lower bound of $\Omega(\sqrt n),$ thereby ruling out the use of
perceptron-based techniques for learning the intersection of two
halfspaces in subexponential time.  To our knowledge,
Theorem~\ref{thm:main-sign-hs} is the first unconditional,
structural lower bound for PAC learning the intersection of two
halfspaces; all previous hardness results for the problem were
based on complexity-theoretic assumptions~\cite{blum92trainingNN,
ABFKP:04, focs06hardness, khot-saket08hs-and-hs}.  We 
complement Theorem~\ref{thm:main-sign-hs} as follows.

\begin{theorem}[Mixed intersection]\quad 
\label{thm:main-sign-mixed}
Let $f\colon\moo^{n^2}\to\moo$ be given by
\textup{(\ref{eqn:halfspace-defined})}.  Let $\g\colon\moo^{\lceil\sqrt
n\rceil}\to\moo$ be the majority function.  Then
\begin{align*}
\degthr(f\wedge\g) = \Theta(\sqrt n).
\end{align*}
\end{theorem}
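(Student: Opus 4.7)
The plan is to read off both bounds from Theorem~\ref{thm:main-two}, which equates sign-representation of a conjunction with the rational approximability of its two components, combined with the sharp rational approximation-degree formulas for the canonical halfspace (Theorem~\ref{thm:main-approx-hs}) and for majority (Theorem~\ref{thm:main-approx-maj}). The upper bound amounts to balancing the two error parameters so that both formulas yield degree $O(\sqrt n);$ the lower bound amounts to verifying that no other balance can do asymptotically better.

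\textbf{Upper bound.} I would exhibit rational approximations of $f$ and $\g$ of degree $O(\sqrt n)$ whose $\ell_\infty$ errors sum to strictly less than $1$ and substitute them into the sign-representation $\sign\{1 + p_1/q_1 + p_2/q_2\}$ recalled immediately before Theorem~\ref{thm:main-two}. Take $\epsilon_\g = 2^{-\lceil\sqrt n\rceil/2}$ and $\epsilon_f = 1-2\epsilon_\g,$ so that $\epsilon_f+\epsilon_\g = 1-\epsilon_\g < 1.$ The small-error branch of Theorem~\ref{thm:main-approx-maj} (with $m = \lceil\sqrt n\rceil,$ so that $2m/\log(1/\epsilon_\g) = 4$) supplies a degree-$O(\sqrt n)$ rational approximation of $\g,$ while the high-error branch of Theorem~\ref{thm:main-approx-hs} (with $\log\{1/(1-\epsilon_f)\} = \Theta(\sqrt n)$) supplies one of $f$ of degree $O(n/\sqrt n) = O(\sqrt n).$ Clearing denominators by multiplying through by $q_1(x)^2 q_2(y)^2$ yields the claimed sign-representation of degree $O(\sqrt n).$

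\textbf{Lower bound.} Let $d = \degthr(f\wedge \g).$ Theorem~\ref{thm:main-two} supplies rational functions of degree at most $4d$ with $\epsilon_f + \epsilon_\g < 1.$ Write $\alpha = \log(1/\epsilon_\g).$ If $\epsilon_\g \geq 1/3,$ then $1-\epsilon_f > 1/3$ forces $\log\{1/(1-\epsilon_f)\} = O(1),$ so the high-error branch of Theorem~\ref{thm:main-approx-hs} already gives $4d = \Omega(n),$ far more than required. Otherwise $\epsilon_\g < 1/3;$ the constraint $1-\epsilon_f > \epsilon_\g$ forces $\log\{1/(1-\epsilon_f)\} < \alpha,$ and Theorem~\ref{thm:main-approx-hs} yields $4d = \Omega(n/\alpha),$ which is already $\Omega(\sqrt n)$ whenever $\alpha \leq \sqrt n.$ In the remaining case $\alpha > \sqrt n,$ monotonicity of $\rdeg_{\,\cdot\,}(\g)$ together with the small-error branch of Theorem~\ref{thm:main-approx-maj} evaluated at any $\epsilon \in (2^{-m}, 2^{-m/2})$ gives $4d \geq \rdeg_{\epsilon_\g}(\g) \geq \Omega(\sqrt n).$ In every case, $d = \Omega(\sqrt n).$

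The real work has already been done in Theorem~\ref{thm:main-two} and in the rational-approximation bounds of Theorems~\ref{thm:main-approx-hs} and~\ref{thm:main-approx-maj}; the only nontrivial step for Theorem~\ref{thm:main-sign-mixed} itself is the case analysis above. The exponent $1/2$ arises because the halfspace lower bound $\Omega(n/\alpha)$ and the majority lower bound $\Omega(\alpha \log(m/\alpha))$ cross over at $\alpha \approx m \approx \sqrt n,$ which is precisely the scale at which the theorem statement chooses the arity of its majority.
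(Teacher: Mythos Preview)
Your proof is correct and follows the same route as the paper: both bounds come from the equivalence in Theorem~\ref{thm:main-two} (Theorem~\ref{thm:main-finite} in the body) together with the rational-approximation results for the halfspace and for majority. The paper's execution is cleaner, however. For the upper bound it simply observes that $R^+(\g,\lceil\sqrt n\rceil)=0$ (majority on $m$ bits is a polynomial of degree $m$) and $R^+(f,1)<1$ (a halfspace has threshold degree~$1$), so no error balancing is needed; for the lower bound it fixes a single degree $d=\epsilon\sqrt n$ and reads off $R^+(f,d)\geq 1-2^{-\sqrt n}$ and $R^+(\g,d)\geq 2^{-\sqrt n}$ directly, whose sum is at least $1$, avoiding your case analysis entirely.
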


\noindent
In words, even if one of the halfspaces in Theorem~\ref{thm:main-sign-hs}
is replaced by a majority function, the threshold degree will remain
high, resulting in a challenging learning problem. Finally, we have:

\begin{theorem}[Intersection of two majorities]
\label{thm:main-sign-maj}
Consider the majority function 
$\MAJ_n\colon\moon\to\moo.$
Then
\begin{align*}
\degthr(\MAJ_n \wedge \MAJ_n)  = \Omega(\log n).   
\end{align*}
\end{theorem}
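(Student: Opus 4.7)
The plan is to derive Theorem~\ref{thm:main-sign-maj} as a short consequence of the two technical pillars already assembled earlier in the paper: the direct product theorem for conjunctions (Theorem~\ref{thm:main-two}) and the rational approximation lower bound for majority (Theorem~\ref{thm:main-approx-maj}). This is exactly the modular strategy the introduction advertises, and once those two results are in hand the present claim should fall out with essentially no new analysis.

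Concretely, I would set $d=\degthr(\MAJ_n\wedge\MAJ_n)$ and apply Theorem~\ref{thm:main-two} with $f=\g=\MAJ_n$ (nonconstant, hence not identically false). This produces rational functions $p_1/q_1$ and $p_2/q_2$ on $\moon$, each of degree at most $4d$, satisfying
\[
\max_{x\in\moon}\left|\MAJ_n(x)-\frac{p_1(x)}{q_1(x)}\right|
\;+\;\max_{y\in\moon}\left|\MAJ_n(y)-\frac{p_2(y)}{q_2(y)}\right|\;<\;1.
\]
A pigeonhole argument forces at least one of the two summands to be strictly less than $1/2$, so there exists a rational function of degree at most $4d$ that approximates $\MAJ_n$ pointwise to error below $1/2$. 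Equivalently, $\rdeg_{1/2}(\MAJ_n)\le 4d$.

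Finally I would invoke Theorem~\ref{thm:main-approx-maj} at $\epsilon=1/2$, which lies in the regime $1/3\le\epsilon<1$ and yields
\[
\rdeg_{1/2}(\MAJ_n)\;=\;\Theta\!\left(1+\frac{\log n}{\log 2}\right)\;=\;\Theta(\log n).
\]
Combining this lower bound with $\rdeg_{1/2}(\MAJ_n)\le 4d$ gives $d=\Omega(\log n)$, which is the claim. The genuinely hard step is not in this short reduction but in the lower bound of Theorem~\ref{thm:main-approx-maj}; as the introduction emphasizes, no continuous-approximation technique (such as Newman's integration-based argument) transfers to the discrete hypercube, so a new discrete analysis is required there. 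For the present proof, however, that bound is used only as a black box, and Theorem~\ref{thm:main-two} does all the remaining work.
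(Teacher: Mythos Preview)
Your proposal is correct and follows essentially the same route as the paper: the paper's proof (given as part of Theorem~\ref{thm:intersections}) invokes the conjunction direct-product theorem (there stated as Theorem~\ref{thm:main-finite}, the $R^+$ formulation of Theorem~\ref{thm:main-two}) together with the rational-approximation lower bound for majority (packaged there as Theorems~\ref{thm:R+} and~\ref{thm:maj-vs-sign}, which are the ingredients of Theorem~\ref{thm:main-approx-maj}) to conclude that $R^+(\MAJ_n,4d)<1/2$ forces $d=\Omega(\log n)$. The only differences are cosmetic---which theorem labels are cited and whether the argument is phrased via $R^+$ or via $\rdeg_{1/2}$.
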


\noindent
Theorem~\ref{thm:main-sign-maj} is tight, matching the
construction of Beigel et al.~\cite{beigel91rational}.  It
settles a conjecture of O'Donnell and
Servedio~\cite{odonnell03degree}, who gave a lower bound of
$\Omega(\log n/\log \log n)$ with completely different techniques
and conjectured that the true answer was $\Omega(\log n).$
Theorems~\ref{thm:main-sign-hs}--\ref{thm:main-sign-maj} are of
course also valid for disjunctions rather than
conjunctions.  Furthermore, Theorems~\ref{thm:main-sign-hs}
and~\ref{thm:main-sign-maj} remain tight with respect to
conjunctions of any constant number of functions.

Finally, we believe that the lower bounds for rational approximation in
Theorems~\ref{thm:main-approx-hs} and~\ref{thm:main-approx-maj}
are of independent interest.  Rational functions are classical
objects with various applications in
theoretical computer science~\cite{beigel91rational,
paturi-saks94rational,
siu-roy-kailath94rational,
KOS:02,
aaronson05postselection},
and yet our ability to prove strong lower bounds for
the rational approximation of Boolean functions has seen little
progress since the seminal work in 1964 by
Newman~\cite{newman64rational}.  To illustrate some of the
counterintuitive phenomena involved in rational approximation,
consider the familiar function $\OR_n\colon\zoon\to\moo,$ given
by $\OR_n(x)=1 \Leftrightarrow x=0.$ A well-known result of Nisan
and Szegedy~\cite{nisan-szegedy94degree} states that
$\deg_{1/3}(f)=\Theta(\sqrt n),$ meaning that a polynomial of
degree $\Theta(\sqrt n)$ is required for approximation within
$1/3.$ At the same time, we claim that $\rdeg_\epsilon(f)=1$ for
all $0<\epsilon<1.$ Indeed, let
\begin{align*}
A_M(x) = \frac{1 - M\sum x_i}{1+M\sum x_i}.
\end{align*}
Then $\|f-A_M\|_\infty\to0$ as $M\to\infty.$
This example illustrates that proving lower bounds for rational
functions can be a difficult and unintuitive task. We hope that
Theorems~\ref{thm:main-approx-hs} and~\ref{thm:main-approx-maj}
in this paper will spur further progress on the rational
approximation of Boolean functions.

\subsection{Our techniques \label{sec:techniques}}

We use one set of techniques to obtain our direct product
theorems for the threshold degree
(Sections~\ref{sec:general-compositions}
and~\ref{sec:results-conj}) and another, unrelated set of
techniques to analyze the rational approximation of halfspaces
(Section~\ref{sec:results-hshs}). We will give a separate overview
of the technical development in each case.

\bigskip

\noindent
\emph{Direct product theorems.}
In symmetrization, one takes an assumed multivariate polynomial $p$ that
sign-represents a given symmetric function and converts $p$ into
a univariate polynomial, which is amenable to direct analysis. No
such approach works for the function compositions of this paper,
whose sign-representing polynomials can have complicated
structure and will not simplify in a meaningful way. This leads
us to pursue a completely different approach.

Specifically, our results are based on a thorough
study of the \emph{linear programming dual} of the sign-representation
problems at hand. The challenge in our work is to bring out, through
the dual representation, analytic properties that will obey a
direct product theorem. Depending on the context 
(Theorem~\ref{thm:thrdeg-dp},~\ref{thm:main-approx-dp},
or~\ref{thm:main-two}), the property in question can be nonnegativity,
correlation, orthogonality, certain quotient structure, or a
combination of several of these. A strength of this approach
is that it works with the sign-representation problem itself (over which
we have considerable control) rather than an assumed sign-representing
polynomial (whose structure we can no longer control in a meaningful
way). We are confident that this approach will find other
applications.

As a concrete illustration, we briefly describe the idea behind
Theorem~\ref{thm:main-two}. The dual object with which we work there
is a certain problem of finding, in the positive spans of two given
matrices, two vectors whose corresponding entries have comparable
magnitude.  By an analytic argument, we are able to prove that this
intermediate problem has the sought direct-product property, giving
the missing link between sign-representation and rational approximation.
Thus, by working with the dual, we \emph{implicitly} decompose any
sign-representation $p(x,y)$ of the function $f(x)\wedge \g(y)$
into individual rational approximants for $f$ and $\g,$ regardless
of how tightly the $x$ and $y$ parts are coupled inside $p.$

\bigskip

\noindent
\emph{Rational approximation.}
Our proof of Theorem~\ref{thm:main-approx-hs}
is built around two key ideas.
The first is a new technique for placing
lower bounds on the degree of a given polynomial
$p\in\Re[x_1,x_2,\dots,x_n]$ with prescribed approximate behavior,
whereby one constructs a degree-nonincreasing linear map
$M\colon\Re[x_1,x_2,\dots,x_n]\to\Re[x]$ and argues that $Mp$ has
high degree. This technique is crucial to proving
Theorem~\ref{thm:main-approx-hs}, which 
is not amenable to standard techniques such
as symmetrization. As applied in this work, the technique amounts
to constructing random variables $\xbold_1,\xbold_2,\dots,\xbold_n$ in
Euclidean space that, on the one hand, satisfy the linear dependence
$\sum 2^i\xbold_i\equiv \zbold$ for a suitably fixed vector $\zbold$ and,
on the other hand, in expectation look independent to any low-degree
polynomial $p\in\Re[x_1,x_2,\dots,x_n].$ We pass, then, from $p$ to a
univariate polynomial by observing that 
$\Exp[p(\xbold_1,\dots,\xbold_n)]=q(\zbold)$ for some univariate polynomial
$q$ of degree no greater than the degree of $p.$ This technique is
a substantial departure from previous methods and shows promise
on other problems involving approximation by polynomials or rational
functions.

Second, we are able to prove that the rational approximation of the
sign function has a self-reducibility property on the discrete
domain. More specifically, we are able to give an explicit solution
to the \emph{dual} of the rational approximation problem by
distributing the nodes as in known positive results. What makes
this program possible in the first place is our ability to zero out
the dual object on the complementary domain, which is where the
above map $M\colon\Re[x_1,x_2,\dots,x_n]\to\Re[x]$ plays a crucial
role.  This dual approach, too, departs entirely from previous
analyses. In particular, recall that Newman's lower-bound analysis
is specialized to the continuous domain and does not extend
to the setting of Theorem~\ref{thm:main-approx-maj}, let alone
Theorem~\ref{thm:main-approx-hs}.


%

\subsection*{Recent progress}
A recent follow-up paper~\cite{sherstov09opthshs} proves that
the intersection of two halfspaces on $\zoon$ has threshold
degree $\Theta(n),$ improving on the lower bound of $\Omega(\sqrt
n)$ in this work.
We have also learned that the inequality
$\deg_\eps(F(f,\dots,f))\geq\degeps(F)\degthr(f)$ was derived
independently by Lee~\cite{lee09formulas} in a recent work on
read-once Boolean formulas.

\section{Preliminaries}
Throughout this work, the symbol $t$ refers to a real variable, whereas
$u,$ $v,$ $w,$ $x,$ $y,$ $z$ refer to vectors in $\Re^n$ and in
particular in $\moon.$ We adopt the following standard definition
of the sign function:
\begin{align*}
  \sign t = 
  \ccases{
  -1, &t<0, \\
  0,  &t=0, \\
  1, &t>0.
  }
\end{align*}
We will also have occasion to use the following modified sign function:
\[ 
  \Sign t = 
  \ccases{
  -1, &t<0, \\
  1,  &t\geq0.
  }
\]
Equations and inequalities involving vectors in $\Re^n,$
such as $x<y$ or $x\geq0,$ are to be interpreted component-wise, as
usual.

Throughout this manuscript, we view Boolean functions as mappings
$f\colon X\to\moo$ for some finite set $X,$ where $-1$ and $+1$ correspond to
``true'' and ``false,'' respectively.  If $\mu_1,\dots,\mu_k$ are
probability distributions on finite sets $X_1,\dots, X_k,$ respectively,
then $\mu_1\times\cdots\times\mu_k$ stands for the probability
distribution on $X_1\times\cdots\times X_k$ given by
\[ (\mu_1\times\cdots\times\mu_k)(x_1,\dots,x_k) = \prod_{i=1}^k\mu_i(x_i).
\]
The majority function on $n$ bits, $\MAJ_n\colon\moon\to\moo,$ is given
by 
\[ \MAJ_n(x) = \ccases{
1, & \sum x_i > 0,\\
-1, &\text{otherwise.}
}\]%
The symbol $P_k$ stands for the family of all univariate real polynomials
of degree up to $k.$ The following combinatorial identity is well-known.

\begin{fact}
\label{fact:comb}
For every integer $n\geq1$ and every polynomial $p\in P_{n-1},$
\[ \sum_{i=0}^n {n\choose i} (-1)^i p(i) = 0. \]
\end{fact}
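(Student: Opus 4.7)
The plan is to invoke the standard fact from finite difference calculus that the $n$-th forward difference of a polynomial of degree at most $n-1$ vanishes. Concretely, I would define the operator $\Delta\colon\Re[x]\to\Re[x]$ by $(\Delta p)(x)=p(x+1)-p(x)$ and observe, by inspecting leading coefficients, that $\Delta$ is degree-lowering: $\deg(\Delta p)\le\deg p -1$ (with the convention that the zero polynomial has degree $-\infty$). Iterating, the operator $\Delta^n$ annihilates every polynomial of degree strictly less than $n$.

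The second step is to expand $\Delta^n$ explicitly. By a routine induction on $n$, one verifies
\begin{align*}
(\Delta^n p)(x)=\sum_{i=0}^n\binom{n}{i}(-1)^{n-i}p(x+i).
\end{align*}
For $p\in P_{n-1}$, the left-hand side is the zero polynomial by the first step. Evaluating at $x=0$ and multiplying by $(-1)^n$ yields $\sum_{i=0}^n\binom{n}{i}(-1)^i p(i)=0$, which is precisely the identity claimed.

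There is no substantive obstacle here; the argument is textbook material, and the only thing to guard against is a sign-bookkeeping slip in the expansion of $\Delta^n$. As a self-contained alternative that avoids finite differences altogether, I would argue by linearity: it suffices to verify the identity on any basis of $P_{n-1}$, and the falling-factorial basis $(i)_k:=i(i-1)\cdots(i-k+1)$, $0\le k\le n-1$, is especially convenient because $\binom{n}{i}(i)_k=\frac{n!}{(n-i)!(i-k)!}$ vanishes for $i<k$. Setting $j=i-k$ in the resulting sum gives
\begin{align*}
\sum_{i=0}^n\binom{n}{i}(-1)^i(i)_k
=(-1)^k\frac{n!}{(n-k)!}\sum_{j=0}^{n-k}\binom{n-k}{j}(-1)^j
=(-1)^k\frac{n!}{(n-k)!}(1-1)^{n-k}=0,
\end{align*}
where the last equality uses $n-k\ge 1$. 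Either route delivers the claim in a few lines.
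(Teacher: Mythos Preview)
Your argument is correct. Both routes you outline---the finite-difference operator $\Delta^n$ annihilating $P_{n-1}$, and the direct verification on the falling-factorial basis via the binomial theorem---are clean and complete.

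The paper takes a slightly different path: it expands $(t-1)^n=\sum_{i=0}^n\binom{n}{i}(-1)^{n-i}t^i$ and differentiates repeatedly at $t=1$. Since the $k$-th derivative of the left side still carries a factor $(t-1)^{n-k}$ for $k<n$, it vanishes at $t=1$; on the right side one obtains $\sum_i\binom{n}{i}(-1)^{n-i}(i)_k$, giving the identity on the falling-factorial basis. So the paper's argument and your second route are essentially the same computation viewed from opposite ends (continuous differentiation of a generating function versus discrete manipulation of the coefficients). Your primary route via $\Delta^n$ is a genuinely different packaging---it trades the analytic identity $(t-1)^n$ for the operator identity $\Delta^n|_{P_{n-1}}=0$---but the underlying content is the same standard fact, and neither approach has any real advantage over the other here.
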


\noindent 
This fact can be verified by repeated differentiation
of the real function 
\[(t-1)^n = \sum_{i=0}^n {n\choose i} (-1)^{n-i} t^i\]
at $t=1,$ as explained in~\cite{odonnell03degree}.

For a real function $f$ on a finite set $X,$ we write
$\|f\|_\infty=\max_{x\in X} |f(x)|.$
For a subset $X\subseteq\Re^n,$ we adopt the notation $-X=\{-x : x\in
X\}.$ We say that a set $X\subseteq\Re^n$ is \emph{closed under
negation} if $X=-X.$ Given a function
$f\colon X\to\Re,$ where $X\subseteq\Re^n$ is closed under
negation, we say that $f$ is
\emph{odd} (respectively, \emph{even}) if $f(-x)=-f(x)$ for all
$x\in X$ (respectively, $f(-x)=f(x)$ for all $x\in X$).

Given functions $f\colon X\to\moo$ and $\g\colon Y\to\moo,$ recall that the
function $f\wedge \g\colon  X\times Y\to\moo$ is given by $(f\wedge \g)(x,y)
= f(x)\wedge \g(y).$ The function $f\vee \g$ is defined analogously.
Observe that in this notation, \mbox{$f\wedge f$} and $f$ are completely
different functions, the former having domain $X\times X$ and the
latter $X.$ These conventions extend in the obvious way to any
number of functions. For example, $f_1\wedge f_2\wedge \cdots\wedge
f_k$ is a Boolean function with domain $X_1\times X_2\times\cdots\times
X_k,$ where $X_i$ is the domain of $f_i.$ 
Generalizing further, we let the symbol $F(f_1,\dots,f_k)$ 
denote the Boolean function on $X_1\times
X_2\times\cdots\times X_k$ obtained by composing a given function
$F\colon \mook\to\moo$ with the functions $f_1,f_2,\dots,f_k.$
Finally, recall that the negated function $\overline f\colon  X\to\moo$
is given by $\overline f(x)=-f(x).$

\subsection{Sign-representation and approximation by polynomials}

By the \emph{degree} of a multivariate polynomial $p$ on $\Re^n,$
denoted $\deg p,$ we shall always mean the total degree of $p,$
i.e., the greatest total degree of any monomial of $p.$ The
degree of a rational function $p(x)/q(x)$ is the maximum of $\deg
p$ and $\deg q.$
Given a function $f\colon X\to\moo,$ where $X\subset\Re^n$
is a finite set, the \emph{threshold degree} $\degthr(f)$ of
$f$ is defined as the least degree of a multivariate polynomial $p$
such that $f(x)p(x)>0$ for all $x\in X.$ In words, the threshold
degree of $f$ is the least degree of a polynomial that represents
$f$ in sign. Equivalent terms in the literature
include ``strong degree''~\cite{aspnes91voting}, ``voting polynomial
degree''~\cite{krause94depth2mod}, ``polynomial threshold
function degree''~\cite{OS-extremal-ptf}, and ``sign
degree''~\cite{buhrman07pp-upp}.  Crucial to understanding the
threshold degree is the following result, which is a well-known
corollary to Gordan's transposition theorem~\cite{gordan1873lp}.

\begin{theorem}[Gordan~\cite{gordan1873lp}]
\label{thm:gordan}
Let $X\subset\Re^n$ be a finite set, $f\colon X\to\moo$ a given
function.  Then $\degthr(f)>d$ if and only if there exists a
probability distribution $\mu$ on $X$ such that 
\begin{align*}
\sum_{x\in X}\mu(x)f(x)p(x)=0
\end{align*}
for every polynomial $p$ of degree up to $d.$
Equivalently, $\degthr(f)>d$ if and only if there exists a map
$\psi\colon X\to\Re,$ $\psi\not\equiv 0,$ such that $f(x)\psi(x)\geq
0$ on $X$ and
\begin{align*}
\sum_{x\in X}\psi(x)p(x)=0
\end{align*}
for every polynomial $p$ of degree up to $d.$
\end{theorem}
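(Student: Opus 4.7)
The plan is to recognize the condition $\degthr(f)\leq d$ as the feasibility of a finite system of strict linear inequalities and then invoke Gordan's transposition theorem, as the statement itself advertises. Let $\phi\colon X\to\Re^N$ be the monomial feature map of degree at most $d$, where $N=\binom{n+d}{d}$, so that every polynomial of degree at most $d$ has the form $p(x)=\langle c,\phi(x)\rangle$ for a unique coefficient vector $c\in\Re^N$. By definition, $\degthr(f)\leq d$ if and only if there exists $c\in\Re^N$ such that $\langle c,\,f(x)\phi(x)\rangle>0$ for every $x\in X$.

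Apply Gordan's transposition theorem to the finite collection of vectors $\{f(x)\phi(x)\}_{x\in X}\subset\Re^N$: exactly one of the following holds. Either the primal system $\langle c,\,f(x)\phi(x)\rangle>0$ for all $x\in X$ is feasible, or there exist nonnegative reals $\{\lambda_x\}_{x\in X}$, not all zero, with
\[ \sum_{x\in X}\lambda_x\,f(x)\phi(x)\;=\;0. \]
Hence $\degthr(f)>d$ precisely when the second alternative holds. Normalizing by $Z=\sum_x\lambda_x>0$ (strictly positive because the $\lambda_x$ are nonnegative and not all zero) yields a probability distribution $\mu(x)=\lambda_x/Z$ on $X$ satisfying $\sum_{x\in X}\mu(x)f(x)\phi(x)=0$. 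Unwinding the definition of $\phi$ coordinate by coordinate and using linearity, this is exactly the orthogonality condition $\sum_{x\in X}\mu(x)f(x)p(x)=0$ for every polynomial $p$ of degree up to $d$, which is the first equivalent form.

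For the second form, I would pass between $\mu$ and $\psi$ via the substitution $\psi(x)=\mu(x)f(x)$. Since $f(x)\in\{-1,+1\}$, we have $f(x)\psi(x)=\mu(x)\geq 0$, $\psi\not\equiv 0$, and the orthogonality identity transfers verbatim. Conversely, any $\psi\not\equiv 0$ with $f(x)\psi(x)\geq 0$ on $X$ and $\sum_x\psi(x)p(x)=0$ for every $p$ of degree at most $d$ gives rise to a valid $\mu$ by setting $\mu(x)=f(x)\psi(x)/\sum_{y\in X}f(y)\psi(y)$; the denominator is strictly positive because its summands are nonnegative and not all zero, so $\mu$ is a genuine probability distribution with the required orthogonality property.

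The argument is essentially a one-line application of LP duality; the only subtlety is the strict-versus-nonstrict distinction in the primal inequalities, which is handled cleanly by Gordan's theorem because $X$ is finite (so a sign-representation can be scaled to satisfy $f(x)p(x)\geq 1$ uniformly, making the passage between $>0$ and $\geq 1$ harmless). I do not expect any genuine obstacle; the expository work consists of recording the translation between the coefficient-vector formulation and the orthogonality-against-polynomials formulation, and then the bijection between $\mu$ and $\psi$.
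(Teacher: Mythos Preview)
Your proposal is correct and follows exactly the approach the paper indicates: the paper does not give a proof but simply notes that Theorem~\ref{thm:gordan} ``has a short proof using linear programming duality'' and cites~\cite[\S2.2]{sherstov07ac-majmaj}. Your argument fills in precisely those details---encode degree-$d$ sign-representation as feasibility of a finite system of strict homogeneous inequalities, invoke Gordan's alternative, normalize the dual witness to a probability distribution, and translate between $\mu$ and $\psi$ via $\psi(x)=\mu(x)f(x)$---so there is nothing to add.
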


\noindent
Theorem~\ref{thm:gordan} has a short proof using linear
programming duality, as explained
in~\cite[\S2.2]{sherstov07ac-majmaj}.

The threshold degree is closely related to another analytic
notion. Let $f\colon X\to\moo$ be given, for a finite subset
$X\subset\Re^n.$ The \emph{$\epsilon$-approximate degree} of $f,$
denoted $\degeps(f),$ is the least degree of a polynomial $p$
such that $\abs{f(x)-p(x)}\leq \epsilon$ for all $x\in X.$  The
relationship between the threshold degree and approximate degree
is an obvious one:
\begin{align}
\degthr(f) = \lim_{\epsilon\nearrow1} \deg_{\epsilon}(f).
\label{eqn:adeg-thrdeg}
\end{align}
We will need the following dual characterization of the
approximate degree.

\begin{theorem}
\label{thm:dual-approx}
Fix $\epsilon\geq0.$ Let $f\colon X\to\moo$ be given, $X\subset\Re^n$ a
finite set. Then $\degeps(f)>d$ if and only if there exists a
function $\psi\colon X\to\Re$ such that
\begin{align*}
&\sum_{x\in X}|\psi(x)| =1, \\
&\sum_{x\in X}\psi(x)f(x) > \epsilon,\\
\intertext{and, for every polynomial $p$ of degree up to $d,$}
&\sum_{x\in X}\psi(x)p(x)=0.
\end{align*}
\end{theorem}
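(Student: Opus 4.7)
The plan is to derive this characterization as the linear-programming dual of the Chebyshev approximation problem defining $\deg_\epsilon(f)$. First, I fix $d$ and let $V\subseteq\Re^X$ denote the finite-dimensional subspace consisting of all restrictions $(p(x))_{x\in X}$ of multivariate polynomials $p$ on $\Re^n$ of total degree at most $d$. Identifying $f$ with the vector $(f(x))_{x\in X}\in\Re^X$ and equipping $\Re^X$ with the $\ell_\infty$ norm, the condition $\deg_\epsilon(f)>d$ is equivalent to
\[
E^*\;:=\;\min_{v\in V}\;\|f-v\|_\infty \;>\;\epsilon.
\]

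Second, I would invoke the standard $\ell_\infty/\ell_1$ duality in the finite-dimensional space $\Re^X$ (equivalently, LP duality, or Hahn--Banach): the distance from a point to a linear subspace in the $\ell_\infty$ norm equals the maximum inner product against a vector in the dual unit ball that annihilates the subspace. This yields
\[
E^* \;=\; \max_{\psi\in\Re^X}\;\left\{\sum_{x\in X}\psi(x)f(x) \;:\; \|\psi\|_1\le 1,\ \psi\perp V\right\},
\]
where the orthogonality $\psi\perp V$ unpacks to $\sum_{x\in X}\psi(x)p(x)=0$ for every polynomial $p$ of degree at most $d$. The easy direction $(\le)$ follows from $\sum\psi(x)f(x)=\sum\psi(x)(f(x)-v(x))\le\|\psi\|_1\,\|f-v\|_\infty$ for any $v\in V$; the matching $(\ge)$ is exactly the content of LP duality. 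The maximum is attained because the feasible set is a compact polytope and the objective is linear.

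From here, the theorem is immediate. If $E^*>\epsilon$, a maximizer $\psi$ satisfies $\sum\psi(x)f(x)=E^*>\epsilon\ge 0$, so in particular $\|\psi\|_1>0$; rescaling by the factor $1/\|\psi\|_1\ge 1$ preserves orthogonality to $V$, only increases the objective, and yields a witness with $\|\psi\|_1=1$ meeting all three stated conditions. Conversely, any $\psi$ with the three listed properties certifies $E^*\ge\sum\psi(x)f(x)>\epsilon$ through the easy direction above. The only nontrivial ingredient is the invocation of LP duality, but the underlying program is finite-dimensional with finitely many variables and constraints, so this step is entirely routine; I do not anticipate any real obstacle beyond the bookkeeping just described.
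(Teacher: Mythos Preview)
Your proof is correct and follows the same approach the paper indicates: the paper does not spell out a proof of this theorem but simply states that it ``follows readily from linear programming duality'' and cites an external reference. Your argument is precisely the standard $\ell_\infty/\ell_1$ duality derivation the paper alludes to, carried out cleanly.
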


Theorem~\ref{thm:dual-approx} follows readily from linear programming
duality, as explained in~\cite[\S3]{sherstov07quantum}.
Theorem~\ref{thm:gordan} can be derived from Theorem~\ref{thm:dual-approx}
in view of (\ref{eqn:adeg-thrdeg}).

%
%
%
%

\subsection{Approximation by rational functions}

Consider a function $f\colon X\to\moo,$ where $X\subseteq \Re^n$
is an arbitrary set. For $d\geq0,$ we define
\[ R(f,d) \,= \,\inf_{\rule{0pt}{7pt}p,q} \,\sup_{x\in X} 
   \left\lvert f(x) - \frac{p(x)}{q(x)} \right\rvert,\]
where the infimum is over multivariate polynomials $p$ and $q$ of degree up
to $d$ such that $q$ does not vanish on $X.$
In words,
$R(f,d)$ is the least error in an approximation of $f$ by a
multivariate rational function of degree up to $d.$ 
We will also take an interest in the related quantity
\[ R^+(f,d) \,= \,\inf_{\rule{0pt}{7pt}p,q} \,\sup_{x\in X} 
   \left\lvert f(x) - \frac{p(x)}{q(x)} \right\rvert,\]
where the infimum is over multivariate polynomials $p$ and $q$ of
degree up to $d$ such that $q$ is positive on $X.$ These two
quantities are related in a straightforward way:
\begin{equation}
\label{eqn:rational-positive-denominator}
 R^+(f,2d) \leq R(f,d) \leq R^+(f,d). 
\end{equation}
The second inequality here is trivial. The first follows from the fact that
every rational approximant $p(x)/q(x)$ of degree $d$ gives rise to a
degree-$2d$ rational approximant with the same error and a positive
denominator, namely, $\{p(x)q(x)\}/q(x)^2.$
The infimum in the definitions of $R(f,d)$ and $R^+(f,d)$ cannot
in general be replaced by a minimum~\cite{rivlin-book}, even when
$X$ is a finite subset of $\Re.$ This is in contrast to the more
familiar setting of a finite-dimensional normed linear space,
where least-error approximants are guaranteed to exist.
We now recall Newman's classical construction of a rational
approximant to the sign function~\cite{newman64rational}.

\begin{theorem}[Newman]
\label{thm:newman-approx}
Fix $N>1.$ Then for every integer $k\geq1,$ there is a rational
function $S(t)$ of degree $k$ such that
\begin{align}
 \max_{1\leq|t|\leq N} |\sign t - S(t)|\leq 1-N^{-1/k} 
 \label{eqn:newman-approx}
\end{align}
and the denominator of $S$ is positive on $[-N,-1]\cup[1,N].$
\end{theorem}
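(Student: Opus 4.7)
The plan is to use Newman's classical construction with geometrically spaced nodes. Set $\alpha = N^{1/k}$ and form the degree-$k$ polynomial
\[ p(t) = \prod_{j=0}^{k-1}(t + \alpha^{j+1/2}), \]
whose $k$ roots all lie in $[-N,-1]$. Consider the odd rational function of degree $k$:
\[ S(t) = \frac{p(t) - p(-t)}{p(t) + p(-t)}. \]
Since $\sign$ is odd, it suffices to verify the error bound on $t \in [1, N]$ and then invoke symmetry.

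For $t > 0$, set $r(t) = p(-t)/p(t) = \prod_{j=0}^{k-1}(\alpha^{j+1/2} - t)/(\alpha^{j+1/2} + t)$. A direct calculation gives $|\sign t - S(t)| = |1 - S(t)| = 2|r(t)|/|1 + r(t)|$. Each factor of $r(t)$ has absolute value strictly less than $1$ for every $t > 0$, so $|r(t)| < 1$. Writing $t = \alpha^s$ with $s \in [0,k]$ and using the identity $(\alpha^a - \alpha^b)/(\alpha^a + \alpha^b) = \tanh((a-b)(\log\alpha)/2)$, the product rewrites as
\[ |r(\alpha^s)| = \prod_{j=0}^{k-1} \tanh\!\Bigl(|j + \tfrac12 - s| \cdot \tfrac{\log\alpha}{2}\Bigr). \]
Because the nodes $j+1/2$ cover $[0,k]$ at unit spacing, for every $s \in [0,k]$ most $|j+1/2-s|$ are bounded below, while those closest to $s$ contribute small tanh factors; multiplying the contributions yields a bound on $|r(t)|$ that is strictly away from $1$. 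Translating this bound through $2|r|/|1+r|$, and observing that the worst case is $r(t)<0$ (where one needs $|r(t)| \le (\alpha-1)/(3\alpha-1)$ to secure the stated error), a telescoping estimate on the tanh product then delivers $|\sign t - S(t)| \le 1 - N^{-1/k}$ uniformly in $t \in [1,N]$.

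The denominator condition is automatic: $|r(t)|<1$ forces $p(t)+p(-t)=p(t)(1+r(t))$ to have the same sign as $p(t)$, which is positive on $[0,\infty)$ and hence on $[1,N]$; the symmetric identity $p(t)+p(-t)=p(-t)+p(-(-t))$ transports this positivity to $[-N,-1]$. The main obstacle is the precise calibration of the tanh-product estimate so that the constant comes out exactly as $1-N^{-1/k}$ rather than a qualitatively similar but weaker bound. This amounts to controlling $\sum_j \log\tanh((j+1/2-s)(\log\alpha)/2)$ by comparison with a convergent series in $\log\alpha = (\log N)/k$, and is the heart of Newman's original argument; the rest of the proof reduces to elementary algebraic manipulation.
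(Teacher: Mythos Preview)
Your polynomial $p(t)=\prod_{j=0}^{k-1}(t+\alpha^{j+1/2})$ with $\alpha=N^{1/k}$ is exactly the one the paper uses, and the paper likewise forms the odd rational function $(p(t)-p(-t))/(p(t)+p(-t))$. But the paper multiplies this by the scaling factor $N^{-1/(2k)}=\alpha^{-1/2}$, and that factor is not cosmetic: without it your $S$ cannot meet the stated bound. Take $k=1$. Then $p(t)=t+\sqrt{N}$ and your $S(t)=t/\sqrt{N}$, so at $t=N$ the error is $\sqrt{N}-1$, which exceeds $1-N^{-1}$ as soon as $N>\phi^2\approx 2.618$. More generally, writing $\beta=N^{1/(2k)}$, the quantity $r(t)=p(-t)/p(t)$ ranges over $[-\tfrac{\beta-1}{\beta+1},\tfrac{\beta-1}{\beta+1}]$ on $[1,N]$, so your unscaled $S=(1-r)/(1+r)$ ranges over $[\beta^{-1},\beta]$ and the overshoot error $\beta-1$ dominates. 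You yourself compute that the $r<0$ case would require $|r|\le(\alpha-1)/(3\alpha-1)$, and one checks that $\tfrac{\beta-1}{\beta+1}\le\tfrac{\beta^2-1}{3\beta^2-1}$ fails once $\beta>\tfrac{1+\sqrt5}{2}$.

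The paper's fix is to scale by $\beta^{-1}$, which maps $[\beta^{-1},\beta]$ to $[\beta^{-2},1]$ and gives error exactly $1-\beta^{-2}=1-N^{-1/k}$. The paper also handles the key inequality $|r(t)|\le\tfrac{\beta-1}{\beta+1}$ directly, by checking each interval $[N^{i/(2k)},N^{(i+1)/(2k)}]$; your tanh rewrite is correct but you do not actually carry out the estimate, instead deferring to ``the heart of Newman's original argument.'' So the proposal has both a genuine error (the missing scaling) and an unfilled gap (the product bound).
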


\begin{proof}
[Proof \textup{(adapted from Newman~\cite{newman64rational})}]
Consider the univariate polynomial
\[p(t) = \prod_{i=1}^k \big(t+N^{(2i-1)/(2k)}\big).\] 
By examining every interval $[N^{i/(2k)},N^{(i+1)/(2k)}],$ where
$i=0,1,\dots,2k-1,$ one sees that
\begin{align}
p(t) \geq\frac{N^{1/(2k)}+1}{N^{1/(2k)}-1}\, \lvert p(-t)\rvert,
   \qquad 1\leq t\leq N. 
   \label{eqn:newman-balance}
\end{align}
Letting 
\begin{align*}
S(t)=N^{-1/(2k)}\cdot \frac{p(t)-p(-t)}{p(t)+p(-t)},
\end{align*}
one has (\ref{eqn:newman-approx}).  The positivity of the
denominator of $S$ on $[-N,-1]\cup[1,N]$ is a consequence of
(\ref{eqn:newman-balance}).
\end{proof}

A useful consequence of Newman's theorem is the following general
statement on decreasing the error in rational approximation.

\begin{theorem}
\label{thm:error-boosting}
Let $f\colon X\to\moo$ be given, where $X\subseteq\Re^n.$ Let $d$ be a given integer,
$\epsilon=R(f,d).$ Then for $k=1,2,3,\dots,$
\begin{align*}
R(f,kd) \leq 1 - \PARENS{\frac{1-\epsilon}{1+\epsilon}}^{1/k}.
\end{align*}
\end{theorem}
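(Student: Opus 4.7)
The plan is to obtain a rational approximant of low error by composing Newman's rational approximation to the sign function with any given rational approximant of $f$. Fix $\delta>0$ arbitrary and let $p/q$ be a rational function of degree at most $d$, with $q$ nonvanishing on $X$, satisfying $\|f-p/q\|_\infty\leq\epsilon+\delta$. For each $x\in X$, the value $p(x)/q(x)$ lies in $[f(x)-\epsilon-\delta,\,f(x)+\epsilon-\delta]$, so after scaling by $1/(1-\epsilon-\delta)$ the quantity $p(x)/(q(x)(1-\epsilon-\delta))$ lies in $[-N,-1]\cup[1,N]$ with $N=(1+\epsilon+\delta)/(1-\epsilon-\delta)$, and its sign is exactly $f(x)$.

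Next, apply Theorem~\ref{thm:newman-approx} to produce a rational function $S(t)=A(t)/B(t)$ of degree $k$ such that $|\sign t - S(t)|\leq 1-N^{-1/k}$ throughout $[-N,-1]\cup[1,N]$ and such that $B$ is positive on that set. Form the composite
\[
R(x) \;=\; S\!\left(\frac{p(x)}{q(x)(1-\epsilon-\delta)}\right).
\]
Clearing the inner denominator $q^k$ separately in the numerator and denominator of $S$, one sees that $R(x)$ is the ratio of two polynomials $\widetilde{A}(x)=A(\cdot)\,q(x)^k$ and $\widetilde{B}(x)=B(\cdot)\,q(x)^k$, each of degree at most $k\deg p +(k-i)\deg q\leq kd$ term-by-term. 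Moreover $\widetilde{B}$ does not vanish on $X$: $q$ is nonzero by hypothesis, and $B$ evaluated at $p/(q(1-\epsilon-\delta))$ is strictly positive since its argument lies in $[-N,-1]\cup[1,N]$, where $B$ is positive by Newman's construction.

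By choice of $S$, we have $|f(x)-R(x)|=|\sign(p/q) - S(p/(q(1-\epsilon-\delta)))|\leq 1-N^{-1/k}$ for every $x\in X$, whence
\[
R(f,kd) \;\leq\; 1 - N^{-1/k} \;=\; 1 - \PARENS{\frac{1-\epsilon-\delta}{1+\epsilon+\delta}}^{1/k}.
\]
Letting $\delta\searrow 0$ yields the claimed bound. There is no serious obstacle here; the delicate point is merely to verify that the composite has a denominator that remains nonvanishing on $X$, which is precisely what the positivity clause of Theorem~\ref{thm:newman-approx} was arranged to provide, and to handle the fact that the infimum defining $R(f,d)$ need not be attained, which is why we introduced the slack $\delta$ and take the limit.
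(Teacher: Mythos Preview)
Your proof is correct and follows essentially the same approach as the paper: compose Newman's degree-$k$ approximant to the sign function with a near-optimal degree-$d$ rational approximant of $f$, then pass to the limit. The paper fixes $N=(1+\epsilon)/(1-\epsilon)$ once and takes a sequence $A_m$ with $\|f-A_m\|_\infty\to\epsilon$, whereas you introduce slack $\delta$ and let $N$ depend on $\delta$; these are equivalent limiting arguments. Two minor points: you have a typo (the interval should be $[f(x)-\epsilon-\delta,\,f(x)+\epsilon+\delta]$), and you should note at the outset that one may assume $\epsilon<1$, the bound being trivial otherwise, so that $1-\epsilon-\delta>0$ for small $\delta$.
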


\begin{proof}
We may assume that $\epsilon<1,$ the theorem being trivial
otherwise.  Let $S$ be the degree-$k$ rational approximant to the
sign function for $N=(1+\epsilon)/(1-\epsilon),$ as constructed
in Theorem~\ref{thm:newman-approx}.  Let
$A_1,A_2,\dots,A_m,\dots$ be a sequence of rational functions on
$X$ of degree at most $d$ such that $\sup_X|f-A_m|\to\epsilon$ as
$m\to\infty.$ The theorem follows by considering the sequence of
approximants $S(A_m(x)/\{1-\epsilon\})$ as $m\to\infty.$
\end{proof}

\subsection{Symmetrization}

Let $S_n$ denote the symmetric group on $n$ elements. For $\sigma\in
S_n$ and $x\in\Re^n$, we denote $\sigma
x=(x_{\sigma(1)},\ldots,x_{\sigma(n)})\in\Re^n.$ The following is
a generalized form of Minsky and Papert's \emph{symmetrization
argument}~\cite{minsky88perceptrons}, as formulated in~\cite{RS07dc-dnf}.

\begin{proposition}[cf.~Minsky and Papert]
\label{prop:symm-argument}
Let $n_1,\dots,n_k$ be positive integers.  Let
$\phi\colon\zoo^{n_1}\times\cdots\times\zoo^{n_k}\to\Re$ be a
polynomial of degree $d.$ Then there is a polynomial $p$ on $\Re^k$
of degree at most $d$ such that for all $x$ in the domain of $\phi,$ 
\begin{align*}
\Exp_{\sigma_1\in S_{n_1},\dots,\sigma_k\in S_{n_k}}
\left[\phi\big(\sigma_1x_1,\dots,\sigma_kx_k\big)\right] =
p\big(\dots,x_{i,1}+\cdots+x_{i,n_i},\dots\big).
\end{align*}
\end{proposition}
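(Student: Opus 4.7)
The plan is to decompose the argument into two conceptually independent steps: first, use the averaging over $S_{n_1}\times\cdots\times S_{n_k}$ to reduce to a polynomial that is separately symmetric in each of the $k$ blocks of variables; second, exploit the structure of the Boolean cube to rewrite such a polynomial as a polynomial in the block sums.

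First, I would set $\tilde\phi(x_1,\ldots,x_k) = \Exp_{\sigma_1,\ldots,\sigma_k}[\phi(\sigma_1 x_1, \ldots, \sigma_k x_k)]$ and observe that each map $\phi\mapsto\phi\circ(\sigma_1,\ldots,\sigma_k)$ is linear and degree-preserving, so the same holds for the average; hence $\deg\tilde\phi\leq d$. By construction, $\tilde\phi$ is invariant under independent permutations within each block and is therefore separately symmetric in the $k$ blocks.

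Next, I would multilinearize. Since $x_{i,j}\in\zoo$ implies $x_{i,j}^m=x_{i,j}$ for every $m\geq 1$, I can replace $\tilde\phi$ by a multilinear polynomial $\hat\phi$ of degree at most $d$ that coincides with $\tilde\phi$ on the entire domain and remains separately symmetric. In each block, a multilinear symmetric polynomial is uniquely a linear combination of the elementary symmetric polynomials $e_j(x_i)=\sum_{|S|=j}\prod_{\ell\in S}x_{i,\ell}$; applying this block by block gives
\begin{align*}
\hat\phi(x_1,\ldots,x_k) \;=\; \sum_{j_1+\cdots+j_k\leq d} c_{j_1,\ldots,j_k}\, e_{j_1}(x_1)\cdots e_{j_k}(x_k).
\end{align*}

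Finally, on $\zoo^{n_i}$ one has the identity $e_j(x_i)=\binom{s_i}{j}$, where $s_i=x_{i,1}+\cdots+x_{i,n_i}$, and $\binom{s_i}{j}$ is a polynomial in $s_i$ of degree exactly $j$. Substituting yields a polynomial $p(s_1,\ldots,s_k)$ of total degree at most $d$ that agrees with $\hat\phi$, and hence with $\tilde\phi$, throughout the domain, as required. There is no real obstacle here: the bookkeeping is routine, and the only points needing even brief justification are that the multilinearization step does not raise the degree and that the substitution $e_j\mapsto\binom{s}{j}$ is degree-preserving, both standard features of working over the Boolean cube.
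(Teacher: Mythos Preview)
Your proof is correct and follows the standard route for this classical lemma. Note, however, that the paper does not actually supply its own proof of this proposition: it is stated as a known result (attributed to Minsky and Papert, in the formulation of~\cite{RS07dc-dnf}) and invoked without argument. So there is nothing to compare against; your write-up simply fills in the omitted proof, and does so cleanly.
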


We now obtain a form of the symmetrization argument for
rational approximation.

\begin{proposition}
\label{prop:symm-rational}
Let $n_1,\dots,n_k$ be positive integers, and $\alpha,\beta$ distinct
reals. Let
$G\colon\{\alpha,\beta\}^{n_1}\times\cdots\times\{\alpha,\beta\}^{n_k}\to\moo$
be a function such that $G(x_1,\dots,x_k)\equiv G(\sigma_1
x_1,\dots,\sigma_k x_k)$ for all $\sigma_1\in S_{n_1},\dots,\sigma_k\in
S_{n_k}.$ Let $d$ be a given integer. Then for each $\epsilon>R^+(G,d),$
there exists a rational function $p/q$ on $\Re^k$ of degree at most
$d$ such that for all $x$ in the domain of $G,$ one has
\begin{align*}
\left\lvert G(x)-\frac
{p(\dots,x_{i,1}+\cdots+x_{i,n_i},\dots)}
{q(\dots,x_{i,1}+\cdots+x_{i,n_i},\dots)}
\right\rvert<\epsilon
\end{align*}
and $q(\dots,x_{i,1}+\cdots+x_{i,n_i},\dots)>0.$ 
\end{proposition}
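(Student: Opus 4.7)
The plan is to reduce to the polynomial symmetrization lemma (Proposition~\ref{prop:symm-argument}) by separately averaging the numerator and denominator of a near-optimal rational approximant over the block permutation groups. The subtle point is that averaging numerator and denominator independently is \emph{not} the same as averaging the ratio, and so the naive choice $\Exp_\sigma[P(\sigma x)/Q(\sigma x)]$ (which would yield a rational function of degree $d$ times the size of the symmetry group) is bypassed; the symmetry of $G$ is what rescues the approximation bound under the separate averaging.

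First, since $\epsilon>R^+(G,d)$, I choose polynomials $P,Q$ on $\Re^{n_1+\cdots+n_k}$ of degree at most $d$ with $Q$ strictly positive on the domain and
\begin{align*}
|G(x)Q(x)-P(x)|<\epsilon\,Q(x) \qquad \text{for every $x$ in the domain of $G$.}
\end{align*}
Because the domain is finite, this sup-norm bound is attained with slack, so the strict inequality is preserved after convex averaging below.

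Second, I set
\begin{align*}
\bar P(x)=\Exp_\sigma[P(\sigma x)],\qquad \bar Q(x)=\Exp_\sigma[Q(\sigma x)],
\end{align*}
where $\sigma=(\sigma_1,\dots,\sigma_k)$ is drawn uniformly from $S_{n_1}\times\cdots\times S_{n_k}$ and $\sigma x=(\sigma_1x_1,\dots,\sigma_kx_k)$. Both polynomials have degree at most $d$, and $\bar Q$ is strictly positive on the domain as an average of positive quantities. Using the symmetry hypothesis $G(\sigma x)=G(x)$, one has for every $\sigma$ and every $x$ in the domain
\begin{align*}
|G(x)Q(\sigma x)-P(\sigma x)|=|G(\sigma x)Q(\sigma x)-P(\sigma x)|<\epsilon\,Q(\sigma x).
\end{align*}
Averaging over $\sigma$ and applying the triangle inequality to the left-hand side gives
\begin{align*}
|G(x)\bar Q(x)-\bar P(x)|\leq\Exp_\sigma|G(x)Q(\sigma x)-P(\sigma x)|<\epsilon\,\bar Q(x),
\end{align*}
so that $|G(x)-\bar P(x)/\bar Q(x)|<\epsilon$ for every $x$ in the domain.

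Third, I invoke Proposition~\ref{prop:symm-argument} separately on $P$ and $Q$, after the trivial affine change of variables $x_{i,j}\mapsto(x_{i,j}-\alpha)/(\beta-\alpha)$ coordinatewise, which sends $\{\alpha,\beta\}$ to $\{0,1\}$, preserves polynomial degree, and transforms each block sum by an affine map. This produces polynomials $p,q$ on $\Re^k$ of degree at most $d$ with
\begin{align*}
\bar P(x)=p(\ldots,x_{i,1}+\cdots+x_{i,n_i},\ldots),\qquad \bar Q(x)=q(\ldots,x_{i,1}+\cdots+x_{i,n_i},\ldots)
\end{align*}
for every $x$ in the domain of $G$. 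The positivity of $q$ at the block-sum tuples realized by the domain is immediate from $\bar Q>0$, and substituting these identities into the approximation bound from the previous step finishes the proof.

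I do not anticipate a real obstacle in this plan; the only delicate point is the weighted triangle-inequality estimate in step two, which is precisely where the symmetry hypothesis on $G$ is used and which would fail for an arbitrary Boolean target.
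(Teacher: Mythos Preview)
Your proof is correct and follows essentially the same approach as the paper: pick a near-optimal rational approximant with positive denominator, average numerator and denominator separately over the block permutation group (using the symmetry of $G$ to preserve the inequality $(1-\epsilon)Q<G\cdot P<(1+\epsilon)Q$ under averaging), and then invoke Proposition~\ref{prop:symm-argument} after the affine reduction to $\{0,1\}$-coordinates. The paper is terser about the averaging step, but your explicit triangle-inequality justification is exactly the content of its phrase ``the required properties of $p$ and $q$ follow immediately from the corresponding properties of $P$ and $Q$.''
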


\begin{proof}
Clearly, we may assume that $\epsilon<1.$
Using the linear bijection $(\alpha,\beta)\leftrightarrow (0,1)$ if
necessary, we may further assume that $\alpha=0$ and $\beta=1.$
Since $\epsilon>R^+(G,d),$ there are polynomials $P,Q$ 
of degree up to $d$ such that for all $x$ in the domain of $G,$ one
has $Q(x)>0$ and
\begin{align*}
(1-\epsilon)Q(x)<G(x)P(x)<(1+\epsilon)Q(x).
\end{align*}
By Proposition~\ref{prop:symm-argument}, there exist polynomials
$p,q$ on $\Re^k$ of degree at most $d$ such that
\begin{align*}
\Exp_{\sigma_1\in S_{n_1},\ldots,\sigma_k\in S_{n_k}}
\left[P\big(\sigma_1x_1,\dots,\sigma_kx_k\big)\right] =
p\big(\dots,x_{i,1}+\cdots+x_{i,n_i},\dots\big)
\end{align*}
and
\begin{align*}
\Exp_{\sigma_1\in S_{n_1},\ldots,\sigma_k\in S_{n_k}}
\left[Q\big(\sigma_1x_1,\dots,\sigma_kx_k\big)\right] =
q\big(\dots,x_{i,1}+\cdots+x_{i,n_i},\dots\big)
\end{align*}
for all $x$ in the domain of $G.$ Then the required properties of 
$p$ and $q$ follow immediately from the corresponding properties of $P$ and
$Q.$
\end{proof}

\section{Direct product theorems}

In the several subsections that follow, we prove our direct
product theorems for polynomial representations of
composed Boolean functions. General compositions are treated in
Section~\ref{sec:dp}, followed by a study of conjunctions and
other specific compositions in
Sections~\ref{sec:auxiliary}--\ref{sec:additional-conj}.

\subsection{General compositions}
\label{sec:dp}

We begin our study with general
compositions of the form $F(f_1,\dots,f_k).$ Our focus in this
section will be on results that depend only on the threshold or
approximate degrees of $F,f_1,\dots,f_k.$ In later sections, we will
exploit additional structure of the functions involved. The following result
settles Theorems~\ref{thm:thrdeg-dp} and~\ref{thm:main-approx-dp}
from the Introduction.

\begin{theorem}
\label{thm:thrdeg-adeg-dp}
Let $f\colon X\to\moo$ and $F\colon \mook\to\moo$ be given functions, where
$X\subset\Re^n$ is a finite set. Then for $0<\epsilon<1,$
\begin{align}
\deg_\eps(F(f,\dots,f)) \geq \deg_\eps(F)\degthr(f).
\label{eqn:adeg-dp}
\end{align}
In particular,
\begin{align}
\degthr(F(f,\dots,f)) \geq \degthr(F)\degthr(f).
\label{eqn:thrdeg-dp}
\end{align}
\end{theorem}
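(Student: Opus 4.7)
The plan is to prove the approximate-degree bound~(\ref{eqn:adeg-dp}) via the linear-programming duals from Theorems~\ref{thm:gordan} and~\ref{thm:dual-approx}, and then deduce~(\ref{eqn:thrdeg-dp}) by letting $\epsilon\nearrow 1$ using~(\ref{eqn:adeg-thrdeg}). Concretely, let $D=\degthr(f)$ and $d=\deg_\epsilon(F)$. The edge cases $D=0$ or $d=0$ are trivial, so assume $D,d\geq 1$. Theorem~\ref{thm:gordan} supplies a probability distribution $\mu$ on $X$ such that $\sum_{x}\mu(x)f(x)p(x)=0$ for every $p$ of degree at most $D-1$; in particular, taking $p\equiv 1$ gives $\mu(f^{-1}(+1))=\mu(f^{-1}(-1))=\tfrac12$. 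Theorem~\ref{thm:dual-approx} supplies $\psi\colon\mook\to\Re$ with $\sum_z|\psi(z)|=1$, $\sum_z\psi(z)F(z)>\epsilon$, and $\sum_z\psi(z)q(z)=0$ for every $q$ of degree at most $d-1$.

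Next I would combine the two witnesses in a tensor-like fashion by setting
\[
  \Psi(x_1,\dots,x_k) \;=\; 2^k\,\psi\bigl(f(x_1),\dots,f(x_k)\bigr)\prod_{i=1}^{k}\mu(x_i)
\]
and verify the three conditions of Theorem~\ref{thm:dual-approx} with threshold $dD-1$. Summing over the fibers of $f$ and using $\mu(f^{-1}(\pm 1))=\tfrac12$ yields $\sum_x|\Psi(x)|=\sum_z|\psi(z)|=1$ and $\sum_x\Psi(x)\,F(f(x_1),\dots,f(x_k))=\sum_z\psi(z)F(z)>\epsilon$. For orthogonality, I would expand $\psi$ in its multilinear Fourier representation on $\mook$, writing $\psi(z)=\sum_{S\subseteq[k]}\widehat{\psi}(S)\prod_{i\in S}z_i$; the orthogonality of $\psi$ against polynomials of degree $<d$ forces $\widehat{\psi}(S)=0$ whenever $|S|<d$. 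This gives the factored form
\[
  \Psi(x_1,\dots,x_k)\;=\;\sum_{S\subseteq[k]}\widehat{\psi}(S)\prod_{i\in S}\bigl(2f(x_i)\mu(x_i)\bigr)\prod_{i\notin S}\bigl(2\mu(x_i)\bigr).
\]

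The main obstacle, and the crux of the argument, is verifying $\sum_x \Psi(x)P(x)=0$ for every $P$ of degree less than $dD$. By linearity it suffices to consider a product monomial $P(x_1,\dots,x_k)=\prod_i m_i(x_i)$ with $\sum_i\deg m_i<dD$. Then $\sum_x\Psi(x)P(x)$ factorizes across coordinates: for each $S$, the $i\in S$ factor equals $\sum_{x_i}2f(x_i)\mu(x_i)m_i(x_i)$, which vanishes whenever $\deg m_i<D$ by the property of $\mu$. Hence only sets $S$ contained in $T:=\{i:\deg m_i\geq D\}$ can contribute, but $|T|\cdot D\leq\sum_i\deg m_i<dD$ forces $|T|<d$, so every such $S$ satisfies $|S|<d$ and $\widehat{\psi}(S)=0$. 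Thus all terms vanish, completing the dual construction, and Theorem~\ref{thm:dual-approx} yields $\deg_\epsilon(F(f,\dots,f))\geq dD$. Inequality~(\ref{eqn:thrdeg-dp}) then follows by letting $\epsilon\to 1$ in~(\ref{eqn:adeg-dp}) and invoking~(\ref{eqn:adeg-thrdeg}); alternatively, the same construction works directly from the threshold-degree dual in Theorem~\ref{thm:gordan}, dropping the mass-normalization condition on $\Psi$.
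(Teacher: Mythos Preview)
Your proof is correct and follows essentially the same approach as the paper's: both construct the dual witness $\Psi(x_1,\dots,x_k)=2^k\psi(f(x_1),\dots,f(x_k))\prod_i\mu(x_i)$ from the Gordan dual $\mu$ for $f$ and the approximate-degree dual $\psi$ for $F$, and both prove orthogonality by Fourier-expanding $\psi$ and arguing that any surviving set $S$ must have $|S|<\deg_\epsilon(F)$. Your phrasing of the orthogonality step (identify $T=\{i:\deg m_i\geq D\}$ and note $|T|<d$) is a clean variant of the paper's pigeonhole formulation, but the content is identical.
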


\begin{proof}
Recall that the threshold degree is a limiting case of the
approximate degree, as given by (\ref{eqn:adeg-thrdeg}).  
Hence, one obtains (\ref{eqn:thrdeg-dp}) by letting $\epsilon\nearrow1$
in (\ref{eqn:adeg-dp}).  In the remainder of the proof, we focus
on (\ref{eqn:adeg-dp}) alone.

Put $D=\degeps(F)$ and $d=\degthr(f).$ By
Theorem~\ref{thm:dual-approx}, there
exists a map $\Psi\colon \mook\to\Re$ such that
\begin{align}
&\sum_{z\in\mook}\abs{\Psi(z)} =1, \label{eqn:Psi-bounded}\\
&\sum_{z\in\mook}\Psi(z)F(z) > \epsilon \label{eqn:Psi-correl},
\end{align}
and  $\sum\Psi(z)p(z)=0$ for every polynomial $p$ of
degree less than $D.$ By Theorem~\ref{thm:gordan}, 
there exists a distribution $\mu$ on $X$ such that
$\sum\mu(x)f(x)p(x)=0$ for every polynomial $p$ of
degree less than $d.$

Now, define $\zeta\colon X^k\to\Re$ by 
\begin{align*}
\zeta(\dots,x_i,\dots)
= 2^k\Psi(\dots,f(x_i),\dots) \prod_{i=1}^k \mu(x_i).
\end{align*}
We claim that 
\begin{align}
\label{eqn:zeta-orthogonality}
\sum_{X^k} \zeta(\dots,x_i,\dots)p(\dots,x_i,\dots) = 0
\end{align}
for every polynomial $p$ of degree less than $Dd.$ By linearity,
it suffices to consider a polynomial $p$ of the form $p(\dots,x_i,\dots)
= \prod p_i(x_i),$ where $\sum \deg p_i<Dd.$ Since $\Psi$ is
orthogonal on $\mook$ to all polynomials of degree less than $D,$
we have the representation
\begin{align*}
\Psi(z) = \sum_{\substack{S\subseteq\{1,\dots,k\},\\|S|\geq D}} \hat\Psi(S)
\prod_{i\in S} z_i
\end{align*}
for some reals $\hat\Psi(S).$ As a result,
\begin{multline}
\sum_{X^k} \zeta(\dots,x_i,\dots)p(\dots,x_i,\dots) \\
=2^k \sum_{|S|\geq D} \hat\Psi(S)
\prod_{i\in S} \undercbrace{\PARENS{\sum_{x_i\in
X}^{\phantom{A}} \mu(x_i)f(x_i)p_i(x_i)}}
		 \prod_{i\notin S}\PARENS{\sum_{x_i\in X}^{\phantom{A}}
		 \mu(x_i)p_i(x_i)}.
		 \label{eqn:big-sum}
\end{multline}
Since $\sum \deg p_i<Dd,$ the pigeonhole principle implies that
$\deg p_i<d$ for more than $k-D$ indices $i\in\{1,\dots,k\}.$ As a
result, for each set $S$ in the outer summation of (\ref{eqn:big-sum}),
at least one of the underbraced factors vanishes (recall that
$f$ is orthogonal on $X$ with respect to $\mu$
to all polynomials of degree less than
$d$). This gives (\ref{eqn:zeta-orthogonality}).

We may assume that $f$
is not a constant function, the theorem being trivial otherwise.
It follows that $\degthr(f)\geq1$ and $\sum_X \mu(x)f(x)=0.$ 
Now, define a product distribution $\lambda$ on $X^k$ by
$\lambda(\dots,x_i,\dots)=\prod \mu(x_i).$ Since $\sum_X
\mu(x)f(x)=0,$ it follows that the string
$(\dots,f(x_i),\dots)$ is distributed uniformly on $\mook$
when $(\dots,x_i,\dots)\sim\lambda.$ As a result,
\begin{align}
\sum_{X^k} |\zeta(\dots,x_i,\dots)|
= 2^k \Exp_{z\in\mook}[\abs{\Psi(\dots,z_i,\dots)}]=1,
\label{eqn:zeta-bounded}
\end{align}
where the last equality holds by (\ref{eqn:Psi-bounded}).
Similarly,
\begin{multline}
\sum_{X^k} \zeta(\dots,x_i,\dots)F(\dots,f(x_i),\dots)\\
= 2^k
\Exp_{z\in\mook}[\Psi(\dots,z_i,\dots)F(\dots,z_i,\dots)]>\epsilon,
\qquad
\label{eqn:zeta-correl}
\end{multline}
where the inequality holds by (\ref{eqn:Psi-correl}).  
Now (\ref{eqn:adeg-dp}) follows from
(\ref{eqn:zeta-orthogonality}), (\ref{eqn:zeta-bounded}), 
(\ref{eqn:zeta-correl}), and Theorem~\ref{thm:dual-approx}.
\end{proof}

\begin{remark*}
In Theorem~\ref{thm:thrdeg-adeg-dp} and elsewhere in this paper,
we consider Boolean functions on finite subsets of $\Re^n,$ which
is the setting of primary interest in computational complexity.
It is useful to keep in mind, however, that approximation and
sign-representation problems on compact infinite sets and other
well-behaved infinite sets are easily reduced to the finite case.
\end{remark*}

We now
consider the so-called AND-OR tree, given by
$f(x)=\bigvee_{i=1}^n \bigwedge_{j=1}^n x_{ij}.$ We improve the
standing lower bound on the approximate degree of $f$ from
$\Omega(n^{0.66\dots})$ to $\Omega(n^{0.75}),$ the best upper
bound being $O(n).$

\bigskip

\begin{restatetheorem}{thm:main-and-or}{\textsc{restated}}
Let $f\colon \moo^{n^2}\to\moo$ be given by
$
f(x) = \bigvee_{i=1}^n \bigwedge_{j=1}^n x_{ij}.
$
Then 
\begin{align*}
\deg_{1/3}(f) = \Omega(n^{0.75}).
\end{align*}
\end{restatetheorem}

\begin{proof}
Without loss of generality, assume that $n=4m^2$ for some integer $m.$  
Define $\g\colon\moo^{4m^3}\to\moo$ by 
\begin{align*}
\g(x) = \bigvee_{i=1}^m \bigwedge_{j=1}^{4m^2} x_{ij}.
\end{align*}
Let $G\colon\moo^{4m}\to\moo$ be given by $G(x)=x_1\vee\cdots\vee
x_{4m}.$
A well-known result of Minsky and Papert~\cite{minsky88perceptrons}
states that $\degthr(\g)=m.$ Also, Nisan and
Szegedy~\cite{nisan-szegedy94degree} proved that
$\deg_{1/3}(G)=\Theta(\sqrt m).$ Since $f=G(\g,\dots,\g),$ it follows
by Theorem~\ref{thm:thrdeg-adeg-dp} that $\deg_{1/3}(f)=\Omega(m\sqrt
m),$ as desired.
\end{proof}

We now further develop the ideas of
Theorem~\ref{thm:thrdeg-adeg-dp} to
obtain a more general result on the approximation of composed
functions by polynomials. This generalization is based on a
combinatorial property of Boolean functions known as
\emph{certificate complexity}.  For a string $x\in\mook$ and a
set $S\subseteq\onetok$ whose distinct elements are
$i_1<i_2<\cdots<i_{|S|},$ we adopt the notation $x|_S =
(x_{i_1},x_{i_2},\dots,x_{i_{|S|}})\in\zoo^{|S|}.$ For a Boolean
function $F\colon\mook\to\moo$ and a point $x\in\mook,$ the
\emph{certificate complexity of $F$ at $x,$} denoted $C_x(F),$ is
the minimum size of a subset $S\subseteq\onetok$ such that
$F(x)=F(y)$ for all $y\in\mook$ with $x|_S=y|_S.$ The
\emph{certificate complexity of $F,$} denoted $C(F),$ is the
maximum $C_x(F)$ over all $x.$ In the degenerate case when $F$ is
constant, we have $C(F)=0.$ At the other extreme, the parity
function $F\colon\mook\to\moo$ satisfies $C(F)=k,$ which is the
maximum possible.  The following proposition is immediate from
the definition of certificate complexity.

\begin{proposition}
\label{prop:certificate}
Let $F\colon\mook\to\moo$ be a given Boolean function. Let
$y\in\mook$ be a random string whose $i$th bit is
set to $-1$ with probability $\alpha_i$ and to $+1$ otherwise,
independently for each $i.$ 
Then for every $x\in\mook,$
\begin{align*}
\Prob_{y}[F(x_1,\dots,x_k)= F(x_1y_1,\dots,x_ky_k)] \geq
\min_{i_1<i_2<\cdots<i_{C_x(F)}}
\prod_{j=1}^{C_x(F)}(1-\alpha_{i_j}).
\end{align*}
\end{proposition}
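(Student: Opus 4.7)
The plan is to unpack the definition of $C_x(F)$ and exploit the fact that the $+1$ entries of $y$ act as the identity under coordinatewise multiplication. Fix $x\in\mook$ and let $S\subseteq\{1,\dots,k\}$ be a certificate for $F$ at $x$ of minimum size, so $|S|=C_x(F)$ and $F(x)=F(z)$ for every $z\in\mook$ with $z|_S=x|_S$. The key observation is that whenever $y_i=+1$ for all $i\in S$, the perturbed point $z=(x_1y_1,\dots,x_ky_k)$ satisfies $z_i=x_i$ for every $i\in S$, hence $z|_S=x|_S$, and therefore $F(z)=F(x)$ by the certificate property.

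Consequently the event $\{F(x_1,\dots,x_k)=F(x_1y_1,\dots,x_ky_k)\}$ contains the event $\{y_i=+1\text{ for all }i\in S\}$. Since the coordinates of $y$ are independent with $\Prob[y_i=+1]=1-\alpha_i$, the latter event has probability exactly $\prod_{i\in S}(1-\alpha_i)$. Writing the elements of $S$ in increasing order as $i_1<i_2<\cdots<i_{C_x(F)}$, we obtain
\begin{align*}
\Prob_y[F(x_1,\dots,x_k)=F(x_1y_1,\dots,x_ky_k)] \;\geq\; \prod_{j=1}^{C_x(F)}(1-\alpha_{i_j}).
\end{align*}
This particular product is bounded below by the minimum of $\prod_{j=1}^{C_x(F)}(1-\alpha_{i_j})$ over all size-$C_x(F)$ index tuples, which yields the claimed inequality.

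There is no real obstacle here: the entire content is the observation that a certificate controls the value of $F$ under any perturbation that fixes the certified coordinates, together with the trivial probability computation for independent bits. The only minor care required is to make sure one uses a certificate for $x$ (not for some other point) so that the unperturbed value $F(x)$ is what one recovers, and to bound the specific product by the minimum-product expression on the right-hand side.
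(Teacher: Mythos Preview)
Your proof is correct and follows exactly the same approach as the paper: fix a minimum-size certificate $S$ for $x$, observe that the event $\{y|_S=(1,\dots,1)\}$ forces $F(x_1y_1,\dots,x_ky_k)=F(x)$, compute its probability by independence, and bound by the minimum over all size-$C_x(F)$ index sets. The paper's proof is just a two-sentence version of what you wrote.
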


\begin{proof}
Fix a set $S\subseteq\onetok$ of cardinality $C_x(F)$ such that
$F(x)=F(y)$ whenever $x|_S=y|_S.$ Then clearly
$\Prob_{y}[F(\dots,x_i,\dots)=F(\dots,x_iy_i,\dots)] \geq
\Prob_{y}[y|_S=(1,1,\dots,1)],$ and the bound follows.
\end{proof}

We can now state and prove the desired generalization of
Theorem~\ref{thm:thrdeg-adeg-dp}.

\begin{theorem}
   \label{thm:approx-dp}
Let $f\colon X\to\moo$ and $F\colon \mook\to\moo$ be given functions, where
$X\subset\Re^n$ is a finite set. Then for each $\epsilon,\delta>0,$
\begin{align}
\deg_{\epsilon+\eta-2+2(1-\delta)^{C(F)}}(F(f,\dots,f)) &\geq \deg_{\epsilon}(F)
\deg_{1-\delta}(f)
\label{eqn:approx-dp}
\end{align}
for some $\eta=\eta(\epsilon,F)>0.$ 
\end{theorem}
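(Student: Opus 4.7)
The plan is to mirror the dual-witness construction from the proof of Theorem~\ref{thm:thrdeg-adeg-dp}, substituting an approximate-degree witness for $f$ in place of the threshold-degree one and absorbing the resulting discrepancy via the certificate complexity of $F$ and Proposition~\ref{prop:certificate}.

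Set $D=\degeps(F)$ and $d=\deg_{1-\delta}(f).$ Assume $f$ is nonconstant (otherwise the theorem is trivial), so that $d\geq1.$ Theorem~\ref{thm:dual-approx} supplies $\Psi\colon\mook\to\Re$ and $\psi\colon X\to\Re,$ both of unit $\ell_1$ norm, with $\sum\Psi F>\epsilon+\eta$ for some $\eta=\eta(\epsilon,F)>0$ and $\sum\psi f>1-\delta,$ orthogonal respectively to polynomials of degree less than $D$ and less than $d.$ The slack $\eta$ exists because the minimum approximation error of any polynomial of degree less than $D$ strictly exceeds $\epsilon,$ and $\sum\psi=0$ because $\psi$ is orthogonal to the constant polynomial. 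Set $\tilde\mu(x)=|\psi(x)|$ (a probability distribution) and $\tilde f(x)=\sign\psi(x).$ Then $\sum\tilde\mu\,\tilde f=\sum\psi=0$ makes the marginal of $\tilde f$ uniform on $\moo$ under $\tilde\mu,$ and $\sum\psi f>1-\delta$ translates to $\alpha_1+\alpha_{-1}<\delta,$ where $\alpha_\sigma=\Prob_{\tilde\mu}[f\neq\sigma\mid\tilde f=\sigma].$

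The candidate dual witness is
\[ \zeta(x_1,\dots,x_k)=2^k\,\Psi\bigl(\tilde f(x_1),\dots,\tilde f(x_k)\bigr)\prod_{i=1}^k\tilde\mu(x_i). \]
Under $\tilde\mu^k,$ the tuple $(\tilde f(x_1),\dots,\tilde f(x_k))$ is uniform on $\mook,$ which gives $\sum|\zeta|=\sum_z|\Psi(z)|=1$ and $2^k\Exp_{\tilde\mu^k}[\Psi(\tilde f)F(\tilde f)]=\sum_z\Psi(z)F(z)>\epsilon+\eta.$ Orthogonality of $\zeta$ to every polynomial of degree less than $Dd$ follows by the pigeonhole argument of Theorem~\ref{thm:thrdeg-adeg-dp}: expand $\Psi$ as $\sum_{|S|\geq D}\hat\Psi(S)\prod_{i\in S}z_i,$ factor the resulting product, and note that whenever $|S|\geq D$ some $i_0\in S$ must have $\deg p_{i_0}<d,$ so the factor $\sum_{x_{i_0}}\psi(x_{i_0})p_{i_0}(x_{i_0})$ vanishes. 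Splitting $F(f)=F(\tilde f)+(F(f)-F(\tilde f)),$ the remaining task is to bound the error term. Conditioning on $\tilde f=z,$ the $x_i$ remain independent (with $x_i$ drawn from $\tilde\mu$ restricted to $\{\tilde f=z_i\}$), so $f$ is a product of independent Bernoulli flips of $z$ with per-coordinate parameter $\alpha_{z_i}<\delta.$ Proposition~\ref{prop:certificate} applied to the fixed string $z$ yields $\Prob[F(f)=F(z)\mid\tilde f=z]\geq(1-\delta)^{C(F)}$ for every $z,$ whence
\[ 2^k\bigl|\Exp_{\tilde\mu^k}[\Psi(\tilde f)(F(f)-F(\tilde f))]\bigr|\leq 2\bigl(1-(1-\delta)^{C(F)}\bigr). \]
Combining, the correlation of $\zeta$ with $F(f,\dots,f)$ strictly exceeds $\epsilon+\eta-2+2(1-\delta)^{C(F)},$ and Theorem~\ref{thm:dual-approx} concludes $\degeps(F(f,\dots,f))\geq Dd.$

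The main obstacle is the uniform-in-$z$ bound $\Prob[F(f)\neq F(\tilde f)\mid\tilde f=z]\leq 1-(1-\delta)^{C(F)}.$ Establishing it requires first engineering the conditional law of $f$ given $\tilde f=z$ to factor as a product of independent Bernoulli flips with both per-sign flip parameters bounded by $\delta$—which hinges crucially on $\sum\psi=0$ (ensuring $\tilde f$ is uniform under $\tilde\mu,$ so that $\alpha_1,\alpha_{-1}$ are both controlled by the overall disagreement $\Prob_{\tilde\mu}[\tilde f\neq f]<\delta/2$) and on the product structure of $\tilde\mu^k$—and then applying Proposition~\ref{prop:certificate} to convert this product-flip structure into the desired certificate-complexity bound uniformly in the conditioning string.
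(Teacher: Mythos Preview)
Your proposal is correct and follows essentially the same approach as the paper: build the dual witness $\zeta$ by lifting $\Psi$ through $\tilde f=\Sign\psi$ and the product measure $|\psi|^{\otimes k},$ verify orthogonality via the pigeonhole argument of Theorem~\ref{thm:thrdeg-adeg-dp}, and control the discrepancy $F(f)-F(\tilde f)$ using Proposition~\ref{prop:certificate} together with the per-sign flip bound $\alpha_{\pm1}<\delta$ derived from $\sum\psi f>1-\delta$ and $\sum\psi=0.$ The only differences from the paper are notational (your $\tilde\mu,\tilde f,\alpha_\sigma$ versus the paper's $\mu,\Sign\psi,A_{\pm1}$).
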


\begin{remark}
One recovers Theorem~\ref{thm:thrdeg-adeg-dp} 
by letting $\delta\searrow0$ in (\ref{eqn:approx-dp}). 
We also note that (\ref{eqn:approx-dp}) is
considerably stronger than Theorem~\ref{thm:thrdeg-adeg-dp}: functions
$\mook\to\moo$ are known, such as {\sc odd-max-bit}~\cite{beigel94perceptrons},
with threshold degree~$1$ and \mbox{$(1-\delta)$}-approximate degree
$k^{\Omega(1)}$ for $\delta$ as small as
$\delta=\exp\{-k^{\Omega(1)}\}.$ Another advantage of
Theorem~\ref{thm:approx-dp} is that the $(1-\delta)$-approximate
degree is easier to bound from below than the threshold
degree~\cite{beigel94perceptrons, vereshchagin95weight, mlj07sq,
podolskii07perceptrons, podolskii08perceptrons}, even for
$\delta$ exponentially small. For $\delta$ small, the
$(1-\delta)$-approximate degree is essentially equivalent to a
notion known as \emph{perceptron
weight}~\cite{minsky88perceptrons, beigel94perceptrons,
vereshchagin95weight, KP98threshold, KOS:02,
klivans-servedio06decision-lists, mlj07sq, buhrman07pp-upp,
podolskii07perceptrons, podolskii08perceptrons}.
\end{remark}

\begin{proof}[Proof of Theorem~\textup{\ref{thm:approx-dp}}]
Let $D=\degeps(F)$ and $d=\deg_{1-\delta}(f)>0.$ 
Theorem~\ref{thm:dual-approx} provides 
a map $\Psi\colon \mook\to\Re$ such that
\begin{align}
&\sum_{z\in\mook}\abs{\Psi(z)} =1, \label{eqn:Psi-bounded2}\\
&\sum_{z\in\mook}\Psi(z)F(z) > \epsilon+\eta
\label{eqn:Psi-correl2}
\end{align}
for some $\eta=\eta(\epsilon,F)>0,$
and  $\sum_{z\in\mook}\Psi(z)p(z)=0$ for every polynomial $p$ of
degree less than $D.$ Analogously, there
exists a map $\psi\colon X\to\Re$ such that
\begin{align}
&\sum_{x\in X}|\psi(x)| =1, \label{eqn:psi-bounded2}\\
&\sum_{x\in X}\psi(x)f(x) > 1-\delta, \label{eqn:psi-correl2}
\end{align}
and $\sum_{x\in X}\psi(x)p(x)=0$ for every polynomial $p$ of degree
less than $d.$

Define $\zeta\colon X^k\to\Re$ by 
\begin{align*}
\zeta(\dots,x_i,\dots)
= 2^k\,\Psi(\dots,\Sign \psi(x_i),\dots) \prod_{i=1}^k \abs{\psi(x_i)}.
\end{align*}
By the same argument as in Theorem~\ref{thm:thrdeg-adeg-dp}, we have 
\begin{align}
\sum_{X^k} \zeta(\dots,x_i,\dots)p(\dots,x_i,\dots) = 0
\label{eqn:zeta-orthogonality2}
\end{align}
for every polynomial $p$ of degree less than $Dd.$ 

Let $\mu$ be the distribution on $X^k$ given by $\mu(\dots,x_i,\dots)=\prod
\, \abs{\psi(x_i)}.$ Since $\psi$ is orthogonal to the constant
polynomial $1,$ the string $(\dots,\Sign\psi(x_i),\dots)$ is
distributed uniformly over $\mook$ when one samples
$(\dots,x_i,\dots)$ according to $\mu.$ As a result,
\begin{align}
\sum_{X^k}\,  \abs{\zeta(\dots,x_i,\dots)}
 = \sum_{z\in\mook}\abs{\Psi(z)}
 = 1,
 \label{eqn:zeta-bounded2}
\end{align}
where the final equality uses (\ref{eqn:Psi-bounded2}).

Define 
$A_{+1}=\{x\in X:\psi(x)>0, f(x)=-1\}$ and 
$A_{-1}=\{x\in X:\psi(x)<0, f(x)=+1\}.$  Since $\psi$ is orthogonal to the
constant polynomial~$1,$ it follows from (\ref{eqn:psi-bounded2}) that 
\begin{align*}
\sum_{x:\psi(x)<0} \abs{\psi(x)} = 
\sum_{x:\psi(x)>0} \abs{\psi(x)} = \frac12.
\end{align*}
In light of (\ref{eqn:psi-correl2}), we see that
$\sum_{x\in A_{+1}} \abs{\psi(x)} < \delta/2$ and 
$\sum_{x\in A_{-1}} \abs{\psi(x)} < \delta/2.$ Now, for any given
$z\in\mook,$ the following two random variables are identically
distributed: 

\begin{itemize}
\item the string $(\dots,f(x_i),\dots)$ when one chooses 
$(\dots,x_i,\dots)\sim\mu$ and conditions on the event
that $(\dots,\Sign\psi(x_i),\dots)=z$;

\item the string $(\dots,y_iz_i,\dots),$ where $y\in\mook$ is a
random string whose $i$th bit independently takes on $-1$
with probability $2\sum_{x\in A_{z_i}}\abs{\psi(x)}<\delta.$
\end{itemize}
Proposition~\ref{prop:certificate}
now implies that for each $z\in\mook,$
\begin{multline}
\left\lvert\Exp_{\mu}\Big[F(\dots,f(x_i),\dots) \mid (\dots,\Sign
\psi(x_i),\dots)=z\Big]\right.\\
\left.\phantom{\Exp_{\mu}}-F(\dots,\Sign\psi(x_i),\dots)\right\rvert 
\leq 2-2(1-\delta)^{C(F)}.\qquad
\label{eqn:error-prob}
\end{multline}
We are now prepared to complete the proof. We have
\begin{align}
\sum_{X^k} &\zeta(\dots,x_i\dots)F(\dots,f(x_i),\dots)
\nonumber\\
&=
2^k\Exp_{\mu}\Big[\Psi(\dots,\Sign\psi(x_i),\dots)F(\dots,f(x_i),\dots)\Big]
\nonumber\\
&\geq \sum_{z\in\mook}\Psi(z)F(z) - 2\{1-(1-\delta)^{C(F)}\}
\sum_{z\in\mook}\abs{\Psi(z)} 
&&\nonumber\\
&> \epsilon +\eta- 2 + 2(1-\delta)^{C(F)},
\label{eqn:zeta-correlated2}
\end{align}
where the last two inequalities use (\ref{eqn:error-prob}),
(\ref{eqn:Psi-bounded2}), and (\ref{eqn:Psi-correl2}). In
view of Theorem~\ref{thm:dual-approx}, the exhibited properties
(\ref{eqn:zeta-orthogonality2}), (\ref{eqn:zeta-bounded2}), and
(\ref{eqn:zeta-correlated2}) of $\zeta$ force (\ref{eqn:approx-dp}).
\end{proof}

Theorems~\ref{thm:thrdeg-adeg-dp}
and~\ref{thm:approx-dp} complement known \emph{upper} bounds for
the approximation of composed functions. The following theorem is
due to Buhrman et al.~\cite{BNRW05robust}, who
studied the approximation of Boolean functions with perturbed
inputs. We include the proof from~\cite{BNRW05robust} and
slightly generalize it to any given parameters.

\begin{theorem}[cf. Buhrman et al.]
\label{thm:approx-upper-F-f}
Fix functions $F\colon\mook\to\moo$ and $f\colon X\to\moo,$ where
$X\subset\Re^n$ is finite.  Then for all $\Delta,\delta\geq 0,$
\begin{align}
\deg_{\eta(\Delta,\delta)}(F(f,\dots,f)) 
  \leq \deg_\Delta(F)\deg_\delta(f),
  \label{eqn:Delta-delta}
\end{align}
where 
\begin{align}
\eta(\Delta,\delta) = 
\Delta + 2-2\left(1-\frac{\delta}{1+\delta}\right)^{C(F)}.
\label{eqn:approx-upper-F-f-nu}
\end{align}
In particular,
\begin{multline}
\deg_{1/3}(F(f,\dots,f)) \\\leq \deg_{1/3}(F)\deg_{1/3}(f)\cdot O(\log
\{1+\deg_{1/3}(F)\}).\qquad
  \label{eqn:approx-upper}
\end{multline}
\end{theorem}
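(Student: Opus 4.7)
The plan is to prove (\ref{eqn:Delta-delta}) via a randomized-rounding/multilinear-extension argument, and then extract (\ref{eqn:approx-upper}) by choosing parameters and invoking standard error-reduction for polynomial approximation together with the known polynomial bound $C(F) \le \mathrm{poly}(\deg_{1/3}(F))$.

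First I would fix a degree-$d$ polynomial $p$ on $X$ with $\|p-f\|_\infty \le \delta$ and a degree-$D$ polynomial $P$ on $\mook$ with $\|P-F\|_\infty \le \Delta$. Since $\mook\subset\Re^k$, I may reduce $P$ modulo the relations $z_i^2=1$ and assume without loss of generality that $P$ is multilinear. Rescale by setting $r(x)=p(x)/(1+\delta)$, so that $|r(x)|\le 1$ on $X$ and $f(x)\,r(x)\ge (1-\delta)/(1+\delta)$. The candidate approximant is
\begin{equation*}
  Q(x_1,\dots,x_k) \;=\; P\bigl(r(x_1),\dots,r(x_k)\bigr),
\end{equation*}
which has total degree at most $Dd$, as required.

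The error analysis is where the certificate-complexity factor enters. Fix $(x_1,\dots,x_k)\in X^k$ and let $Y_1,\dots,Y_k\in\moo$ be independent random bits with $\Exp[Y_i]=r(x_i)$; a direct computation gives $\Prob[Y_i\ne f(x_i)]\le \delta/(1+\delta)$. Multilinearity of $P$ and independence of the $Y_i$ yield $Q(x)=\Exp[P(Y_1,\dots,Y_k)]$. Applying Proposition~\ref{prop:certificate} with every $\alpha_i=\delta/(1+\delta)$ gives
\begin{equation*}
  \Prob\bigl[F(Y)=F(f(x_1),\dots,f(x_k))\bigr] \;\ge\; \bigl(1-\tfrac{\delta}{1+\delta}\bigr)^{C(F)},
\end{equation*}
so splitting the error and using $\|P-F\|_\infty\le\Delta$ on $\mook$ gives
\begin{equation*}
  \bigl|Q(x)-F(f(x))\bigr| \;\le\; \Delta \;+\; 2-2\bigl(1-\tfrac{\delta}{1+\delta}\bigr)^{C(F)} \;=\; \eta(\Delta,\delta),
\end{equation*}
which is (\ref{eqn:Delta-delta}).

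For (\ref{eqn:approx-upper}) I would set $\Delta=1/6$ and choose $\delta=\Theta(1/C(F))$ small enough that $2-2(1-\delta/(1+\delta))^{C(F)}\le 1/6$, making $\eta(\Delta,\delta)\le 1/3$. Standard amplification for polynomial approximation (e.g., composing with a Chebyshev-based univariate rounding polynomial, in the same spirit as Theorem~\ref{thm:error-boosting}) converts a $1/3$-approximant for $f$ of degree $\deg_{1/3}(f)$ into a $\delta$-approximant of degree $\deg_{1/3}(f)\cdot O(\log(1/\delta))$. Combining this with the classical bound $C(F)\le\mathrm{poly}(\deg_{1/3}(F))$ of Nisan--Smolensky/Beals et al., one obtains $\log(1/\delta)=O(\log\{1+\deg_{1/3}(F)\})$, which yields the stated bound.

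The only conceptual content is the randomized-rounding step; the whole point of invoking $C(F)$ rather than a crude union bound over coordinates is that for $F$ such as $\AND$ or $\OR$ the error stays controlled by $C(F)$ rather than $k$, so the composition does not pay a linear-in-$k$ blowup. The main bookkeeping hazard is verifying that the multilinearization of $P$ preserves both degree and the $\Delta$ error on $\mook$, after which everything reduces to elementary probability and the certificate bound from Proposition~\ref{prop:certificate}.
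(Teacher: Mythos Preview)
Your proposal is correct and essentially identical to the paper's own argument: the same multilinear reduction of $P$, the same rescaling $p(x)/(1+\delta)$, the same randomized-rounding interpretation via independent $\pm1$ bits with the prescribed expectations, the same invocation of Proposition~\ref{prop:certificate} to bound the probability that $F$ changes, and the same parameter choice $\Delta=1/6,\ \delta=\Theta(1/C(F))$ together with univariate error reduction and the polynomial relationship between $C(F)$ and $\deg_{1/3}(F)$ for part~(\ref{eqn:approx-upper}). The only cosmetic point is that your reference to Theorem~\ref{thm:error-boosting} for the amplification step is about rational functions; the actual tool needed (and used in the paper) is a degree-$O(\log(1/\epsilon))$ univariate polynomial sending $[-\tfrac43,-\tfrac23]\to[-1-\epsilon,-1+\epsilon]$ and $[\tfrac23,\tfrac43]\to[1-\epsilon,1+\epsilon]$, which is exactly the ``Chebyshev-based rounding polynomial'' you describe.
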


\begin{proof}[Proof \textup{(adapted from Buhrman et al.)}.]
Fix polynomials $P$ and $p$ on $\mook$ and $X,$
respectively. As usual, $P$ may be assumed to be multilinear in
view of its domain. 
Define $\Phi\colon X^k\to\Re$ by 
\begin{align*}
\Phi(\dots,x_i,\dots) = P\PARENS{\dots,
\frac1{1+\|f-p\|_\infty}p(x_i),\dots}.
\end{align*}
Fix any input $(\dots,x_i,\dots)\in X^k$ and
consider a random variable $y\in\mook$ whose $i$th bit takes on
$-1$ with probability 
\begin{align*}
\alpha_i = \frac12 - \frac{f(x_i)p(x_i)}{2(1+\|f-p\|_\infty)}
\leq \frac{\|f-p\|_\infty}{1+\|f-p\|_\infty},
\end{align*}
independently for each $i.$ Then
\begin{align*}
\abs{\Phi(\dots,x_i,\dots) &- F(\dots,f(x_i),\dots)}\\
   &= \left|\Exp_y[P(\dots,y_if(x_i),\dots) -
   F(\dots,f(x_i),\dots)]\right|\\
   &\leq \|P-F\|_\infty + 
       \left|\Exp_y[F(\dots,y_if(x_i),\dots) -
       F(\dots,f(x_i),\dots)]\right|\\
   & \leq \|P-F\|_\infty + 
      2 -
	  2\PARENS{1-\frac{\|f-p\|_\infty}{1+\|f-p\|_\infty}}^{C(F)},
\end{align*}
where the first and last steps in the derivation follow by the
multilinearity of $P$ and by Proposition~\ref{prop:certificate},
respectively.  This completes the proof of
(\ref{eqn:Delta-delta}).

Taking $\Delta=1/6$ and $\delta=1/(12C(F))$ in 
(\ref{eqn:Delta-delta}) gives
\begin{align*}
\deg_{1/3}(F(f,\dots,f)) \leq \deg_{1/6}(F)\deg_{1/(12C(F))}(f).
\end{align*}
Basic approximation theory~\cite{eremenko07sign} shows
that for each $\epsilon>0,$ there exists a univariate polynomial
of degree $O(\log \frac1\epsilon)$ that sends $[-\frac43,-\frac23]\to
[-1-\epsilon,-1+\epsilon]$ and $[\frac23,\frac43] \to
[1-\epsilon,1+\epsilon].$ As a result, we obtain 
\begin{align*}
\deg_{1/3}(F(f,\dots,f)) \leq
\deg_{1/3}(F)\deg_{1/3}(f)\cdot O(\log\{1+C(F)\}),
\end{align*}
which is equivalent to (\ref{eqn:approx-upper}) because $C(F)$ is
known to be within a polynomial of $\deg_{1/3}(F)$ for every
Boolean function $F\colon\mook\to\moo,$ as discussed in detail in
the survey article~\cite{buhrman-dewolf02DT-survey}.
\end{proof}

\paragraph{Compositions with \emph{k} distinct functions.}
We now consider compositions of the form $F(f_1,\dots,f_k),$ where the
functions $f_1,\dots,f_k$ may all be distinct.
For a function $F\colon \mook\to\Re$ and a vector
$v=(v_1,\dots,v_k)$ of nonnegative integers, define the
\emph{$(\epsilon,v)$-approximate degree} $\deg_{\epsilon,v}(F)$ to be
the least $D$ for which there is a polynomial $P(x_1,\dots,x_k)$
with 
\begin{align*}
P\in\Span\BRACES{\prod_{i\in S} x_i 
:S\subseteq\onetok, \; \sum_{i\in S}^{\phantom{S}} v_i\leq D}
\end{align*}
and $\|F-P\|_\infty\leq\epsilon.$ Note that the
$\epsilon$-approximate degree of $F$ is the
$(\epsilon,v)$-approximate degree of $F$ for $v=(1,1,\dots,1).$
It is clear that
\begin{align*}
\deg_{\eps,v}(F) \geq \min_{i_1<i_2<\cdots<i_{\deg_{\eps}(F)}}
\{v_{i_1}+v_{i_2}+\cdots+v_{i_{\deg_{\eps}(F)}}\},
\end{align*}
with an arbitrary gap achievable between the right and left
members of the inequality. 
We will also need the following generalized version of
Theorem~\ref{thm:dual-approx}, due to
Ioffe and Tikhomirov~\cite{ioffe-tikhomirov68duality}.

\begin{theorem}[Ioffe and Tikhomirov]
\label{thm:berr-morth}
Let $X$ be a finite set. Fix any family $\Phi$ of functions
$X\to\Re$ and an additional function
$f\colon X\to\Re.$ Then
\begin{align*}
 \min_{\phi\in\Span(\Phi)} \|f - \phi\|_\infty = 
  \max_\psi \left\{ \sum_{x\in X} f(x)\psi(x) \right\},
\end{align*}
where the maximum is over all functions $\psi\colon X\to\Re$ such that
\begin{align*}
\sum_{x\in X} |\psi(x)| \leq 1
\end{align*}
and, for each $\phi\in\Phi,$
\begin{align*}
\sum_{x\in X}\phi(x)\psi(x)=0.
\end{align*}
\end{theorem}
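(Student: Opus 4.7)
The plan is to derive this identity as an instance of finite-dimensional linear-programming duality. Since $X$ is finite and $\Span(\Phi)$ is finite-dimensional, the left-hand side is the optimum of the LP
\begin{align*}
\text{minimize } \epsilon \quad \text{subject to} \quad \phi(x) - f(x) \leq \epsilon \;\text{ and }\; f(x) - \phi(x) \leq \epsilon \quad (x\in X),
\end{align*}
with free variables $\epsilon\in\Re$ and $\phi\in\Span(\Phi).$ This LP is clearly feasible and has finite optimum bounded below by $0.$

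Next I would pass to the dual by attaching nonnegative multipliers $\alpha(x), \beta(x)$ to the two families of constraints above. Stationarity with respect to the coefficients of a basis of $\Span(\Phi)$ forces $\sum_x (\beta(x)-\alpha(x))\phi(x) = 0$ for every $\phi\in\Phi,$ while stationarity in the free variable $\epsilon$ forces $\sum_x (\alpha(x)+\beta(x)) = 1,$ and the dual objective reads $\sum_x (\beta(x)-\alpha(x))f(x).$ Setting $\psi := \beta - \alpha,$ the triangle inequality gives $\sum_x |\psi(x)| \leq \sum_x (\alpha(x)+\beta(x)) = 1,$ and conversely any $\psi$ satisfying the theorem's constraints is realized by taking $\alpha := \max(-\psi, 0),$ $\beta := \max(\psi,0),$ and distributing any remaining slack arbitrarily to meet $\sum_x (\alpha(x)+\beta(x)) = 1.$ This exhibits the dual LP as exactly the maximization in the statement.

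Weak duality is then a one-line calculation: for any primal-feasible $\phi$ and dual-feasible $\psi,$ the orthogonality condition yields
\begin{align*}
\sum_x \psi(x) f(x) \;=\; \sum_x \psi(x)\bigl(f(x)-\phi(x)\bigr) \;\leq\; \|f-\phi\|_\infty \cdot \sum_x |\psi(x)| \;\leq\; \|f-\phi\|_\infty.
\end{align*}
Equality of the two sides then follows from the LP strong-duality theorem, which applies because the primal is feasible and its optimum is finite. Finally, the supremum on the right-hand side is attained because the feasible $\psi$ range over a compact subset of $\Re^X$ and the objective is continuous in $\psi.$ There is essentially no obstacle beyond routine bookkeeping; the one step of genuine content is the appeal to strong LP duality, which in an infinite-dimensional setting would be replaced by a Hahn--Banach separation argument but here is immediate.
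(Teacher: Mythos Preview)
Your proof is correct; the LP-duality derivation is carried out accurately, including the correspondence between dual variables $(\alpha,\beta)$ and the single signed weight $\psi$, and the appeal to strong duality is justified since the primal is feasible with finite optimum. The paper itself does not prove this statement but simply cites \cite[\S3]{sherstov07quantum} for a short proof, which is exactly the linear-programming duality argument you have written out.
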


\noindent
A short proof of Theorem~\ref{thm:berr-morth} 
can be found, e.g., in~\cite[\S3]{sherstov07quantum}.
With this setup in place, we obtain the following analogues
of Theorems~\ref{thm:approx-dp} and~\ref{thm:approx-upper-F-f}
for compositions of the form $F(f_1,\dots,f_k).$

\begin{theorem}
   \label{thm:approx-dp-generalized}
Fix nonconstant functions $F\colon\mook\to\moo$ and $f_i\colon X_i\to\moo,$
$i=1,2,\dots,k,$ where each $X_i\subset\Re^n$ is finite.
Then for $\epsilon,\delta>0,$ one has
\begin{align}
\deg_{\epsilon+\eta-2+2(1-\delta)^{C(F)}}(F(f_1,\dots,f_k)) 
&\geq \deg_{\epsilon,v}(F)
\label{eqn:approx-dp-generalized}
\end{align}
for some $\eta=\eta(\epsilon,F)>0,$ 
where $v=(\deg_{1-\delta}(f_1),\dots,\deg_{1-\delta}(f_k)).$
\end{theorem}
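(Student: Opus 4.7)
The plan is to follow the same dual-witness strategy as in Theorem~\ref{thm:approx-dp}, but with an individual dual object $\psi_i$ for each inner function $f_i$ and a weighted pigeonhole argument tailored to the mixed degrees $v = (d_1,\dots,d_k)$, where $d_i = \deg_{1-\delta}(f_i)$. Let $D = \deg_{\epsilon,v}(F)$. First I would obtain, via Theorem~\ref{thm:berr-morth} applied to the family $\Phi = \{\prod_{i\in S} z_i : \sum_{i\in S} v_i < D\}$ together with a slight rescaling, a function $\Psi\colon\mook\to\Re$ satisfying $\sum_z|\Psi(z)|=1$, $\sum_z\Psi(z)F(z) > \epsilon + \eta$ for some $\eta=\eta(\epsilon,F)>0$, and $\sum_z\Psi(z)\prod_{i\in S}z_i = 0$ whenever $\sum_{i\in S} v_i < D$. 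Simultaneously, Theorem~\ref{thm:dual-approx} supplies, for each $i$, a function $\psi_i\colon X_i\to\Re$ with $\sum_{x}|\psi_i(x)|=1$, $\sum_{x}\psi_i(x)f_i(x) > 1-\delta$, and $\psi_i$ orthogonal to all univariate polynomials of degree less than $d_i$.

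Next I would assemble the candidate dual witness
\[
\zeta(x_1,\dots,x_k) \;=\; 2^k\,\Psi\bigl(\Sign\psi_1(x_1),\dots,\Sign\psi_k(x_k)\bigr)\prod_{i=1}^k |\psi_i(x_i)|
\]
on $X_1\times\cdots\times X_k$, and verify three properties. For the orthogonality to every polynomial $\prod_i p_i(x_i)$ with $\sum_i\deg p_i < D$, I expand $\Psi(z)=\sum_{S:\sum_{i\in S}v_i\geq D}\hat\Psi(S)\prod_{i\in S}z_i$, use $\Sign\psi_i\cdot|\psi_i|=\psi_i$, and factor the sum as in Theorem~\ref{thm:thrdeg-adeg-dp}; the key new point is the weighted pigeonhole argument, namely that for any $S$ with $\sum_{i\in S}v_i\geq D$ and any $p_1,\dots,p_k$ with $\sum_i\deg p_i<D$, some $i\in S$ must satisfy $\deg p_i < d_i=v_i$, so the corresponding $\sum_{x_i}\psi_i(x_i)p_i(x_i)$ vanishes.

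The normalization $\sum|\zeta|=1$ follows as in the proof of Theorem~\ref{thm:approx-dp}: each $\psi_i$ is orthogonal to the constant $1$, so when $x_i$ is sampled from $|\psi_i|$, the sign $\Sign\psi_i(x_i)$ is an unbiased $\pm1$ coin; the coordinates are independent, so the tuple $(\Sign\psi_i(x_i))_{i}$ is uniform on $\mook$, whence $\sum|\zeta| = 2^k\,\Exp_{z\text{ uniform}}[|\Psi(z)|] = \sum_z|\Psi(z)| = 1$. Finally, for the correlation bound, I estimate coordinate by coordinate the probability that $f_i(x_i)\neq\Sign\psi_i(x_i)$: defining $A^i_{+1}=\{x:\psi_i(x)>0,\,f_i(x)=-1\}$ and $A^i_{-1}=\{x:\psi_i(x)<0,\,f_i(x)=+1\}$, the inequalities $\sum|\psi_i|=1$ and $\sum\psi_i f_i>1-\delta$ force $\sum_{x\in A^i_{\pm 1}}|\psi_i(x)|<\delta/2$, so, conditional on $(\Sign\psi_i(x_i))_i=z$, each coordinate independently flips with probability at most $\delta$. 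Proposition~\ref{prop:certificate} then yields
\[
\bigl|\Exp[F(\dots,f_i(x_i),\dots)\mid(\dots,\Sign\psi_i(x_i),\dots)=z] - F(z)\bigr| \leq 2-2(1-\delta)^{C(F)},
\]
and combining with $\sum_z\Psi(z)F(z)>\epsilon+\eta$ gives $\sum\zeta\cdot F(f_1,\dots,f_k)(x) > \epsilon+\eta-2+2(1-\delta)^{C(F)}$. An application of Theorem~\ref{thm:dual-approx} to $\zeta$ produces the claimed lower bound on $\deg_{\epsilon+\eta-2+2(1-\delta)^{C(F)}}(F(f_1,\dots,f_k))$.

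I expect the only genuinely new step to be the weighted pigeonhole argument inside the orthogonality check; the normalization and correlation steps transcribe almost verbatim from Theorem~\ref{thm:approx-dp} once the $\psi_i$ are kept separate. A minor care point is producing the strict gap $\eta>0$ in the dual of $\deg_{\epsilon,v}(F)$, which follows because the Ioffe--Tikhomirov minimum over $\Span(\Phi)$ is strictly greater than $\epsilon$ by definition of $D$.
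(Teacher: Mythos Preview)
Your proposal is correct and follows essentially the same approach as the paper's proof: both construct the dual witness $\zeta$ from $\Psi$ and the individual $\psi_i$, verify orthogonality via the weighted pigeonhole observation, and handle normalization and correlation exactly as in Theorem~\ref{thm:approx-dp}. One trivial slip: the $\psi_i$ are orthogonal to polynomials on $\Re^n$ (since $X_i\subset\Re^n$), not ``univariate'' ones, but this does not affect the argument.
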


\begin{proof}
Let $D=\deg_{\eps,v}(F)$ and $d_i=\deg_{1-\delta}(f_i).$
Theorem~\ref{thm:berr-morth} provides 
a map $\Psi\colon \mook\to\Re$ such that
\begin{align}
&\sum_{z\in\mook}\abs{\Psi(z)} =1, \label{eqn:Psi-bounded3}\\
&\sum_{z\in\mook}\Psi(z)F(z) > \epsilon+\eta
\nonumber
\end{align}
for some $\eta=\eta(\epsilon,F)>0,$
and  
\begin{align*}
\Psi(z) = \sum_{S\in\mathcal{S}} \hat\Psi(S)\prod_{i\in S} z_i
\end{align*}
for some reals $\hat\Psi(S),$ where $\mathcal{S} =
\{S\subseteq\onetok: \sum_{i\in S}d_i\geq D\}.$
Analogously, there are maps $\psi_i\colon X_i\to\Re,$
$i=1,2,\dots,k,$ such that
\begin{align*}
&\sum_{x_i\in X_i}|\psi_i(x_i)| =1,               \\
&\sum_{x_i\in X_i}\psi_i(x_i)f_i(x_i) > 1-\delta, 
\end{align*}
and $\sum_{x_i\in X_i}\psi_i(x_i)p(x_i)=0$ for every polynomial $p$ of degree
less than $d_i.$

Define $\zeta\colon X_1\times\cdots\times X_k\to\Re$ by 
\begin{align*}
\zeta(\dots,x_i,\dots)
= 2^k\,\Psi(\dots,\Sign \psi_i(x_i),\dots) \prod_{i=1}^k \abs{\psi_i(x_i)}.
\end{align*}
By an argument analogous to that in 
Theorem~\ref{thm:thrdeg-adeg-dp}, we have 
\begin{align}
\sum_{X_1\times\cdots\times X_k} \zeta(\dots,x_i,\dots)p(\dots,x_i,\dots) = 0
\label{eqn:zeta-orthogonality3}
\end{align}
for every polynomial $p$ of degree less than $D.$ 

Let $\mu$ be the distribution on $X_1\times\cdots\times X_k$ 
given by $\mu(\dots,x_i,\dots)=\prod
\, \abs{\psi_i(x_i)}.$ Since each $\psi_i$ is orthogonal to the constant
polynomial $1,$ the string $(\dots,\Sign\psi_i(x_i),\dots)$ is
distributed uniformly over $\mook$ when one samples
$(\dots,x_i,\dots)$ according to $\mu.$ As a result,
\begin{align}
\sum_{X_1\times\cdots\times X_k}\,  \abs{\zeta(\dots,x_i,\dots)}
 = \sum_{z\in\mook}\abs{\Psi(z)}
 = 1,
 \label{eqn:zeta-bounded3}
\end{align}
where the final equality uses (\ref{eqn:Psi-bounded3}).

By an argument analogous to that in Theorem~\ref{thm:approx-dp}, we obtain 
\begin{align}
\sum_{X_1\times\cdots\times X_k} &\zeta(\dots,x_i\dots)F(\dots,f_i(x_i),\dots)
> \epsilon +\eta- 2 + 2(1-\delta)^{C(F)}.
\label{eqn:zeta-correlated3}
\end{align}
In view of Theorem~\ref{thm:dual-approx}, the exhibited properties
(\ref{eqn:zeta-orthogonality3}), (\ref{eqn:zeta-bounded3}), and
(\ref{eqn:zeta-correlated3}) of $\zeta$ complete the proof.
\end{proof}

\begin{remark}
Analogous to the earlier development, 
taking $\delta\searrow0$ in
Theorem~\ref{thm:approx-dp-generalized} yields the lower bound
$
\deg_{\epsilon}(F(f_1,\dots,f_k)) 
\geq \deg_{\epsilon,v}(F)
$
for each $\epsilon>0,$ where 
$v=(\degthr(f_1),\dots,\degthr(f_k)).$
\end{remark}

\begin{theorem}
\label{thm:approx-upper-F-f-generalized}
Fix functions $F\colon\mook\to\moo$ and $f_i\colon X_i\to\moo,$
$i=1,2,\dots,k,$ where each $X_i\subset\Re^n$ is finite.  Then
for all $\Delta,\delta\geq 0,$
\begin{align*}
\deg_{\eta(\Delta,\delta)}(F(f_1,\dots,f_k)) 
  \leq \deg_{\Delta,v}(F),
\end{align*}
where 
$v = (\deg_\delta(f_1),\dots,\deg_\delta(f_k))$ and
\begin{align}
\eta(\Delta,\delta) &= 
\Delta + 2-2\left(1-\frac{\delta}{1+\delta}\right)^{C(F)}.
\label{eqn:approx-upper-F-f-generalized}
\end{align}
In particular,
\begin{align}
\deg_{1/3}(F(f_1,\dots,f_k)) = \deg_{{1/3},v}(F)\cdot O(\log
\{1+\deg_{1/3}(F)\})
  \label{eqn:approx-upper-generalized}
\end{align}
for $v=(\deg_{1/3}(f_1),\dots,\deg_{1/3}(f_k)).$
\end{theorem}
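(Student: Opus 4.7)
The plan is to adapt the proof of Theorem~\ref{thm:approx-upper-F-f} to the asymmetric setting, with the key new ingredient being that the $(\Delta,v)$-approximate degree, by definition, respects a weighting that matches exactly the individual degrees of the approximants $p_i$ for the base functions $f_i.$

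First, I would invoke the definition of $\deg_{\Delta,v}(F)$ with $v_i=\deg_{\delta}(f_i)$ to obtain a polynomial
\[
P \;\in\; \Span\Bigl\{\prod_{i\in S}x_i \;:\; S\subseteq\onetok,\;\sum_{i\in S} v_i\leq D\Bigr\},
\qquad D=\deg_{\Delta,v}(F),
\]
with $\|F-P\|_\infty\leq\Delta,$ which may be taken multilinear since its domain is $\mook.$ For each $i,$ fix a polynomial $p_i$ on $X_i$ of degree $v_i$ with $\|f_i-p_i\|_\infty\leq\delta.$ Now define
\[
\Phi(x_1,\dots,x_k) \;=\; P\biggl(\dots,\;\frac{p_i(x_i)}{1+\|f_i-p_i\|_\infty},\;\dots\biggr).
\]
The crucial degree check is that each monomial $\prod_{i\in S}x_i$ of $P$ satisfies $\sum_{i\in S}v_i\leq D,$ so after the substitution it contributes a summand of total degree at most $\sum_{i\in S}\deg p_i=\sum_{i\in S}v_i\leq D.$ Hence $\deg\Phi\leq D=\deg_{\Delta,v}(F).$

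Next, the error analysis transplants from the proof of Theorem~\ref{thm:approx-upper-F-f} essentially verbatim. For any fixed input $(x_1,\dots,x_k),$ introduce independent random bits $y_i\in\moo$ with
\[
\Pr[y_i=-1]\;=\;\alpha_i\;=\;\tfrac{1}{2}-\frac{f_i(x_i)p_i(x_i)}{2(1+\|f_i-p_i\|_\infty)} \;\leq\; \frac{\|f_i-p_i\|_\infty}{1+\|f_i-p_i\|_\infty}\;\leq\;\frac{\delta}{1+\delta}.
\]
Multilinearity of $P$ gives $\Phi(x_1,\dots,x_k)=\Exp_y[P(\dots,y_if_i(x_i),\dots)].$ Splitting $\Phi(x)-F(\dots,f_i(x_i),\dots)$ via the triangle inequality into an approximation-of-$F$ term (bounded by $\|P-F\|_\infty\leq\Delta$) and a flipped-input term (bounded using Proposition~\ref{prop:certificate} by $2-2(1-\delta/(1+\delta))^{C(F)}$) yields the claimed error $\eta(\Delta,\delta)$ of (\ref{eqn:approx-upper-F-f-generalized}).

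Finally, for the ``In particular'' bound (\ref{eqn:approx-upper-generalized}), I would take $\Delta=1/6$ and $\delta=c/C(F)$ for a small constant $c,$ yielding a polynomial of degree $\deg_{1/6,v}(F)$ that approximates $F(f_1,\dots,f_k)$ to within, say, $2/3$ on its range $[-4/3,4/3].$ Composing with a univariate amplifying polynomial of degree $O(\log C(F))$ (as in the original proof, via standard approximation theory~\cite{eremenko07sign}) boosts the error back to $1/3,$ and the well-known polynomial relation $C(F)=\deg_{1/3}(F)^{O(1)}$ from~\cite{buhrman-dewolf02DT-survey} lets us absorb $\log C(F)$ into $\log(1+\deg_{1/3}(F)).$ The one mildly delicate step is verifying that $\deg_{1/6,v}(F)$ and $\deg_{1/3,v}(F)$ differ by at most a constant factor, which follows from the same univariate amplifier argument applied before the substitution; this is the one place where the asymmetric weighting demands a small sanity check beyond the symmetric case, but once that is noted the entire proof goes through as a routine generalization.
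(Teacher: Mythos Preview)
Your proof is correct and follows essentially the same route as the paper, which simply says to redo the argument of Theorem~\ref{thm:approx-upper-F-f} with the substitution $\Phi(\dots,x_i,\dots)=P\bigl(\dots,p_i(x_i)/(1+\|f_i-p_i\|_\infty),\dots\bigr)$ and the obvious notational changes; your degree check via the weighted monomial condition is exactly the point.

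One remark on your final paragraph, which is a little tangled. With $\Delta=1/6$ and $\delta=c/C(F)$, the main inequality directly gives error at most $1/3$ (not $2/3$), so no post-hoc amplifier is needed; moreover the degree you obtain is $\deg_{1/6,v'}(F)$ with $v'_i=\deg_{c/C(F)}(f_i)$, not $\deg_{1/6,v}(F)$. The $O(\log C(F))$ factor enters by bounding $v'_i\leq O(\log C(F))\cdot v_i$ via the univariate amplifier applied to each $p_i$, together with the scaling identity $\deg_{\Delta,\,Lv}(F)=L\cdot\deg_{\Delta,v}(F)$. Your ``mildly delicate'' check that $\deg_{1/6,v}(F)\leq O(1)\cdot\deg_{1/3,v}(F)$ is correct and is proved exactly as you say, by composing the outer approximant with a constant-degree amplifier before substitution (noting that on $\mook$ a product of two weight-$\leq D$ monomials is a weight-$\leq 2D$ monomial). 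With these clarifications the argument is complete.
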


\begin{proof}
Fix a real polynomial $P$ on $\mook$ and polynomials 
$p_i$ on $X_i,$ respectively.
As usual, $P$ may be assumed to be multilinear in
view of its domain. 
Define $\Phi\colon X_1\times\cdots\times X_k\to\Re$ by 
\begin{align*}
\Phi(\dots,x_i,\dots) = P\PARENS{\dots,
\frac1{1+\|f_i-p_i\|_\infty}p_i(x_i),\dots}.
\end{align*}
The remainder of the proof is analogous to that of
Theorem~\ref{thm:approx-upper-F-f}, with the obvious notational
changes and an optimal choice of approximants $P,p_1,\dots,p_k.$
\end{proof}

\paragraph{Bounds using block sensitivity.}
Several results above can be sharpened somewhat using
the notion of \emph{block sensitivity}, denoted $\bs(F)$ for a
function $F\colon\mook\to\moo$ and defined as the maximum number
of nonempty disjoint subsets $S_1,S_2,S_3,\dots\subseteq\onetok$
such that on some input $x\in\mook,$ flipping the bits in any one
set $S_i$ changes the value of the function. We have:

\begin{proposition}
\label{prop:certificate-bs}
Let $F\colon\mook\to\moo$ be a given Boolean function. Let
$y\in\mook$ be a random string whose $i$th bit is
set to $-1$ with probability at most $\alpha,$
independently for each $i.$ 
Then for every $x\in\mook,$
\begin{align*}
\Prob_{y}[F(x_1,\dots,x_k)\ne F(x_1y_1,\dots,x_ky_k)] \leq
2\alpha\bs(F).
\end{align*}
\end{proposition}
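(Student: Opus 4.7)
The plan is a Russo--Margulis style calculus argument. Fix $x$, and for $\vec{t}=(t_1,\ldots,t_k)\in[0,1]^k$ let
\[
p(\vec{t})\;:=\;\Prob_{y\sim\mu_{\vec{t}}}[F(xy)\neq F(x)],
\]
where $\mu_{\vec{t}}$ is the product measure on $\mook$ under which $y_i=-1$ with probability $t_i$ independently. The strategy is to bound the partial derivatives of $p$ and then integrate along the segment from the origin to $(\alpha_1,\ldots,\alpha_k).$ Writing $p$ as the expectation of an indicator and differentiating the product density gives the identity
\[
\frac{\partial p}{\partial t_i}\;=\;\Prob[F(xy)\neq F(x)\mid y_i=-1]\;-\;\Prob[F(xy)\neq F(x)\mid y_i=+1],
\]
so taking absolute values, and moving them inside the expectation over $y_{-i},$ yields
\[
\left|\frac{\partial p}{\partial t_i}\right|
\;\leq\;\Prob_{y_{-i}}\bigl[\mathbf{1}\{F(xy)\neq F(x)\}\text{ depends on }y_i\bigr]
\;=\;\Prob_{y_{-i}}\bigl[\text{bit $i$ is sensitive at $xy$}\bigr],
\]
since the indicator toggles with $y_i$ exactly when $F(xy)$ changes as coordinate $i$ is flipped.

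Summing over $i$ and exchanging sum and expectation, the right-hand side equals $\Exp_y[s(F,xy)],$ where $s(F,z)$ counts the coordinates at which $F$ is sensitive to single-bit flips at $z.$ Since every singleton sensitive coordinate is a sensitive block of size one, one has $s(F,z)\leq\bs(F,z)\leq\bs(F)$ for every $z,$ giving the pointwise gradient bound $\sum_{i=1}^{k}|\partial p/\partial t_i|\leq\bs(F).$ Using $p(\vec{0})=0$ and integrating $p$ along the ray $s\mapsto(s\alpha_1,\ldots,s\alpha_k)$ for $s\in[0,1],$ we obtain
\[
p(\alpha_1,\ldots,\alpha_k)
\;=\;\int_0^1\sum_{i=1}^k\alpha_i\,\frac{\partial p}{\partial t_i}(s\vec{\alpha})\,ds
\;\leq\;\Bigl(\max_i\alpha_i\Bigr)\int_0^1\sum_{i=1}^k\left|\frac{\partial p}{\partial t_i}\right|ds
\;\leq\;\alpha\,\bs(F),
\]
which is in fact stronger than the claimed bound $2\alpha\,\bs(F).$

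The main obstacle is the identification of $|\partial p/\partial t_i|$ with the biased influence of coordinate $i$ on the event $\{F(xy)\neq F(x)\};$ once the formula for the partial derivative is written down, this step is immediate, and the passage from $\sum_i|\partial p/\partial t_i|$ to $\bs(F)$ via the inequality $s\leq\bs$ is routine. The rest of the proof is then plain single-variable calculus.
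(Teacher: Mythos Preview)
Your argument is correct and in fact yields the sharper bound $\alpha\,\bs(F)$ rather than $2\alpha\,\bs(F).$ The derivative identity is the standard Russo--Margulis computation, and the passage $\sum_i\lvert\partial p/\partial t_i\rvert\leq\Exp_y[s(F,xy)]\leq\bs(F)$ is sound because each sensitive coordinate is itself a sensitive block of size one.

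The paper takes a different, purely combinatorial route: it first reduces (by monotonicity) to the case where every bit of $y$ equals $-1$ with probability exactly $\alpha,$ then conditions on the Hamming weight $r=\lvert\{i:y_i=-1\}\rvert.$ For fixed $r$ the flipped set is a uniformly random $r$-subset, and by partitioning $\{1,\dots,k\}$ into $\lfloor k/r\rfloor$ disjoint blocks of size $r$ one sees that at most $\bs(F)$ of them can be sensitive, giving a per-block flip probability of at most $\bs(F)/\lfloor k/r\rfloor\leq 2r\bs(F)/k;$ averaging over $r$ yields $2\alpha\,\bs(F).$ Your analytic approach avoids this conditioning and the attendant floor, which is precisely where the factor of $2$ enters; the paper's argument, on the other hand, requires no calculus and is entirely self-contained at the level of counting disjoint sensitive blocks.
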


\begin{proof}
By monotonicity, we may assume that each bit of $y$ takes on
$-1$ with probability exactly $\alpha.$
For a fixed integer $r$ and a uniformly random string $y\in\mook$ with
$\abs{\{i:y_i=-1\}}=r,$ the probability that $F(\dots,x_i,\dots)\ne
F(\dots,x_iy_i,\dots)$ is clearly at most $\bs(F)/\lfloor
k/r\rfloor\leq 2r\bs(F)/k.$ Averaging over $r$ gives the sought
bound.
\end{proof}

Since by definition $C(F)\geq\bs(F)$ for every function
$F\colon\mook\to\moo,$ use of
Proposition~\ref{prop:certificate-bs} instead of
Proposition~\ref{prop:certificate} can lead to sharper bounds in
some results of this section.  Specifically,
Theorems~\ref{thm:approx-dp}, \ref{thm:approx-upper-F-f},
\ref{thm:approx-dp-generalized}, and
\ref{thm:approx-upper-F-f-generalized} remain valid 
with (\ref{eqn:approx-dp}) replaced by
\begin{align}
\deg_{\epsilon+\eta-4\delta\bs(F)}(F(f,\dots,f)) &\geq \deg_{\epsilon}(F)
\deg_{1-\delta}(f);
\end{align}
with (\ref{eqn:approx-upper-F-f-nu}) and
(\ref{eqn:approx-upper-F-f-generalized}) replaced by 
\begin{align}
\eta(\Delta,\delta) = 
\Delta + \frac{4\delta\bs(F)}{1+\delta};
\end{align}
and with (\ref{eqn:approx-dp-generalized}) replaced by 
\begin{align}
\deg_{\epsilon+\eta-4\delta\bs(F)}(F(f_1,\dots,f_k)) 
&\geq \deg_{\epsilon,v}(F).
\end{align}
In particular, we obtain from Theorem~\ref{thm:approx-dp} that
\begin{align*}
\deg_{1/3}(F(f,\dots,f))&\geq \deg_{2/3}(F)\deg_{1-(12\bs(F))^{-1}}(f)
\nonumber\\
&\geq
\deg_{1/3}(F)\deg_{1/3}(f)\cdot \Omega\left(\frac1{1+\bs(F)}\right).
\end{align*}

\subsection{Auxiliary results on rational approximation
\label{sec:auxiliary}}

In this section, we prove a number of auxiliary facts about
uniform approximation and sign-representation. This preparatory
work will set the stage for our analysis of conjunctions of
functions.  We start by spelling out the exact relationship
between the rational approximation and sign-representation of a
Boolean function.

\begin{theorem}
\label{thm:trivial-approx}
Let $f\colon X\to\moo$ be a given function, where $X\subset\Re^n$ 
is finite. Then for every integer $d,$
\[ \degthr(f)\leq d \quad \Leftrightarrow \quad R^+(f,d)<1. \]
\end{theorem}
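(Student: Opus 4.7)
The plan is to argue both directions directly from the definitions, exploiting the fact that $X$ is finite (so the relevant extrema are attained) and that $f$ takes values only in $\{-1,+1\}$.

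For the forward direction ($\degthr(f)\leq d \Rightarrow R^+(f,d)<1$), I would start with a polynomial $p$ of degree at most $d$ with $f(x)p(x)>0$ on $X$, and look for a rational approximant of the form $p(x)/\lambda$ with $\lambda>0$ a constant. Since $X$ is finite, $M=\max_{x\in X}|p(x)|$ is finite; choosing $\lambda$ any real strictly greater than $M$ ensures $|p(x)/\lambda|<1$ on $X$. Because $p(x)/\lambda$ has the same sign as $f(x)\in\{-1,+1\}$ and strictly smaller magnitude, a case split on $f(x)=\pm 1$ gives $|f(x)-p(x)/\lambda|<1$ for every $x\in X.$ The denominator $\lambda$ is a positive constant, trivially of degree $\leq d$, so this exhibits a legal approximant showing $R^+(f,d)<1.$

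For the reverse direction ($R^+(f,d)<1 \Rightarrow \degthr(f)\leq d$), I would use the fact that the infimum being strictly less than $1$ means some concrete rational function $P/Q$ of degree at most $d$, with $Q>0$ on $X$, actually achieves $\sup_{x\in X}|f(x)-P(x)/Q(x)|<1.$ On points where $f(x)=+1$ this forces $P(x)/Q(x)\in(0,2)$, and on points where $f(x)=-1$ it forces $P(x)/Q(x)\in(-2,0)$; combined with $Q(x)>0$, this says $f(x)P(x)>0$ on all of $X.$ Thus $P$ itself, a polynomial of degree $\leq d$, sign-represents $f$, proving $\degthr(f)\leq d.$

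There is no real obstacle here; the argument is essentially definitional, modulo two minor points worth flagging. First, the forward direction uses crucially that $X$ is finite so that $\max_X|p|$ exists, allowing the simple rescaling trick and sidestepping the more delicate infinite-domain rational approximation questions the paper warns about. Second, one should notice that on the right-hand side the infimum defining $R^+(f,d)$ need not be attained in general, but this is irrelevant because $R^+(f,d)<1$ directly supplies, for any $\eta\in(R^+(f,d),1)$, some concrete pair $(P,Q)$ with error $<\eta<1$, which is all the reverse direction needs.
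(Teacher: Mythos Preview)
Your proposal is correct and essentially identical to the paper's proof: the paper rescales by $M=\max_X|p|$ (rather than some $\lambda>M$) and bounds the error by $1-m/M<1$ where $m=\min_X|p|$, and the reverse direction is argued exactly as you describe. Your remarks about finiteness of $X$ and the infimum not needing to be attained are accurate and relevant.
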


\begin{proof}
For the forward implication, let $p$ be a polynomial of degree at most $d$ 
such that $f(x)p(x)>0$ for every $x\in X.$ Letting 
$M=\max_{x\in X} |p(x)|$ and $m=\min_{x\in X}|p(x)|,$ we have 
\begin{align*}
R^+(f,d)\leq \max_{x\in
X}\left|f(x)-\frac{p(x)}{M}\right|\leq 1-\frac mM <1.
\end{align*}

For the converse, fix a degree-$d$ rational function $p(x)/q(x)$
with $q(x)>0$ on $X$ and $\max_X\abs{f(x) - \{p(x)/q(x)\}}
< 1.$ Then clearly $f(x)p(x)>0$ on $X.$
\end{proof}

Our next observation amounts to reformulating the rational approximation
of Boolean functions in a way that is more analytically pleasing.

\begin{theorem}
\label{thm:balance}
Let $f\colon X\to\moo$ be a given function, where $X\subset\Re^n$ 
is finite.  Then for every integer $d\geq\degthr(f),$ one has
\begin{align*}
R^+(f,d) = \inf_{c\geq1} \; \frac{c^2-1}{c^2+1},
\end{align*}
where the infimum is over all $c\geq1$ for which there exist
polynomials $p,q$ of degree up to $d$ such that
$0<\frac1c q(x) \leq f(x)p(x) \leq cq(x)$ on $X.$ 
\end{theorem}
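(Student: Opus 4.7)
The plan is to set up a parameter conversion between the additive approximation error $\epsilon\in[0,1)$ of a rational approximant with positive denominator and the multiplicative ``two-sided'' error $c\ge 1$ of the balance condition in the theorem, via the bijection
\[
\epsilon \;=\; \frac{c^{2}-1}{c^{2}+1}
\quad\Longleftrightarrow\quad
c \;=\; \sqrt{\frac{1+\epsilon}{1-\epsilon}}.
\]
Once this change of variables is in view, the two inequalities of the theorem reduce to routine rescalings of the numerator and denominator, respectively.

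For the direction $R^+(f,d)\le \inf_{c}(c^{2}-1)/(c^{2}+1)$, I would fix any admissible triple $(c,p,q)$ of degree at most $d$ satisfying $0<\tfrac1c q\le fp\le c q$ on $X$. Since the geometric mean of the bounds $1/c$ and $c$ equals $1$, the natural scaling is to replace the numerator by $\widetilde p=\tfrac{2c}{c^{2}+1}\,p$, which has the same degree. Then for every $x\in X$,
\[
\frac{f(x)\,\widetilde p(x)}{q(x)}\;\in\;
\left[\frac{2}{c^{2}+1},\;\frac{2c^{2}}{c^{2}+1}\right]
\;=\;\left[1-\frac{c^{2}-1}{c^{2}+1},\;1+\frac{c^{2}-1}{c^{2}+1}\right],
\]
so $|f(x)-\widetilde p(x)/q(x)|\le (c^{2}-1)/(c^{2}+1)$. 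Because $q>0$ on $X$ and $\deg\widetilde p,\deg q\le d$, this witnesses $R^+(f,d)\le (c^{2}-1)/(c^{2}+1)$; taking the infimum over all admissible $(c,p,q)$ gives the inequality.

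For the converse, I would first invoke Theorem~\ref{thm:trivial-approx} with the hypothesis $d\ge\degthr(f)$ to conclude $R^+(f,d)<1$. Fix any $\epsilon\in(R^+(f,d),1)$ and polynomials $p,q$ of degree at most $d$ with $q>0$ on $X$ and $\max_{X}|f-p/q|\le \epsilon$. Since $f$ is $\pm 1$-valued and $q>0$, this rewrites as $(1-\epsilon)q\le fp\le (1+\epsilon)q$ on $X$. Setting $c=\sqrt{(1+\epsilon)/(1-\epsilon)}\ge 1$ and $\widetilde q=\sqrt{1-\epsilon^{2}}\,q$, which is still of degree at most $d$ and positive on $X$, a direct check gives $\widetilde q/c=(1-\epsilon)q$ and $c\widetilde q=(1+\epsilon)q$, so $0<\tfrac1c \widetilde q\le fp\le c\widetilde q$ on $X$. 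Thus $(c,p,\widetilde q)$ is admissible with $(c^{2}-1)/(c^{2}+1)=\epsilon$, proving that the infimum is at most $\epsilon$. Letting $\epsilon\searrow R^+(f,d)$ closes the loop.

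There is essentially no obstacle beyond spotting the correct change of variables; the whole argument is a bookkeeping exercise that converts additive error for a $\{-1,+1\}$-valued target into the log-symmetric multiplicative form demanded by the theorem, exploiting at each step that a positive scalar multiple of a polynomial is a polynomial of the same degree.
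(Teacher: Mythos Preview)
Your proof is correct and follows essentially the same approach as the paper's: both arguments rest on the bijection $c=\sqrt{(1+\epsilon)/(1-\epsilon)}$ and a rescaling of one of $p,q$ by the factor $\sqrt{1-\epsilon^{2}}$ (or its reciprocal) to pass between the additive form $(1-\epsilon)q\le fp\le(1+\epsilon)q$ and the log-symmetric form $\tfrac1c q\le fp\le cq$. The paper presents this as a single equivalence after invoking Theorem~\ref{thm:trivial-approx}, whereas you spell out each inequality separately, but the content is identical.
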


\begin{proof}
In view of Theorem~\ref{thm:trivial-approx}, the quantity $R^+(f,d)$
is the infimum over all $\epsilon<1$ for which there exist polynomials
$p$ and $q$ of degree up to $d$ such that $0<(1-\epsilon)q(x)\leq
f(x)p(x)\leq (1+\epsilon)q(x)$ on $X.$ Equivalently, one may require
that
\begin{align*}
0<\frac{1-\epsilon}{\sqrt{1-\epsilon^2}}\, q(x)
\leq f(x)p(x)\leq 
\frac{1+\epsilon}{\sqrt{1-\epsilon^2}}\, q(x).
\end{align*}
Letting $c=c(\epsilon)=\sqrt{(1+\epsilon)/(1-\epsilon)},$ the
theorem follows.
\end{proof}



We will now show that if a degree-$d$ rational approximant
achieves error $\epsilon$ in approximating a given Boolean
function, then a degree-$2d$ approximant can achieve error as
small as $\epsilon^2.$ Note that this result is a refinement of
Theorem~\ref{thm:error-boosting} for small $k.$

\begin{theorem}
\label{thm:accuracy-boost}
Let $f\colon X\to\moo$ be a given function, where $X\subseteq\Re^n.$ 
Let $d$ be a given integer. Then
\[ R^+(f,2d) \leq \PARENS{\frac{\epsilon}{1 + \sqrt{1-\epsilon^2}}}^2,\]
where $\epsilon = R(f,d).$
\end{theorem}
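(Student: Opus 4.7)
My plan is to construct, from a near-optimal degree-$d$ rational approximation of $f$, a pair of degree-$2d$ polynomials $P,Q$ with $Q>0$ on $X$ such that $fP/Q$ lies in a narrow multiplicative window around $1$, and then to convert this multiplicative control into the claimed additive error bound via Theorem~\ref{thm:balance}. I may assume $\epsilon<1$, since the case $\epsilon=1$ makes the right-hand side trivially equal to $1$.

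First, I would fix $\epsilon'\in(\epsilon,1)$ arbitrarily close to $\epsilon=R(f,d)$ and pick polynomials $p,q$ of degree at most $d$ with $q$ nonvanishing on $X$ and $\max_X|f-p/q|\le\epsilon'$. Writing $\alpha=\sqrt{1-(\epsilon')^2}$, the construction I have in mind is
\[
P\;=\;2pq,\qquad Q\;=\;p^2+\alpha^2 q^2,
\]
both of degree at most $2d$, with $Q>0$ on $X$. Since $|f-p/q|<1$ forces $\sign(p/q)=f$ pointwise on $X$, one has $f\cdot pq=|pq|$ and hence $fP=2|pq|\ge 0$ everywhere on $X$.

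The crux of the argument is to verify the pointwise double inequality
\[
1\;\le\;\frac{fP}{Q}\;\le\;\frac{1}{\alpha}\qquad\text{on }X.
\]
The upper bound I would obtain from AM--GM applied to the nonnegative quantities $p^2$ and $\alpha^2 q^2$, giving $Q\ge 2\alpha|pq|=\alpha\,fP$. The lower bound I would get by rewriting the hypothesis as $(p-fq)^2\le(\epsilon')^2q^2$ and expanding, which rearranges to $Q=p^2+\alpha^2 q^2\le 2fpq=fP$.

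Finally, the rescaled pair $(\sqrt\alpha\,P,\,Q)$ is still of degree at most $2d$ with $Q>0$, and satisfies $f(\sqrt\alpha\,P)/Q\in[\sqrt\alpha,\,1/\sqrt\alpha]$, which realizes the parameter $c=1/\sqrt\alpha$ in Theorem~\ref{thm:balance}. That yields
\[
R^+(f,2d)\;\le\;\frac{c^2-1}{c^2+1}\;=\;\frac{1-\alpha}{1+\alpha}\;=\;\frac{(\epsilon')^2}{(1+\alpha)^2},
\]
the last equality using $1-\alpha^2=(\epsilon')^2$. Sending $\epsilon'\searrow\epsilon$ delivers exactly the stated bound $(\epsilon/(1+\sqrt{1-\epsilon^2}))^2$.

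The only subtlety I anticipate is spotting the coefficient $\alpha^2=1-(\epsilon')^2$ in the definition of $Q$: this is precisely the value that makes AM--GM (which is tight when $p^2=\alpha^2 q^2$) and the hypothesis-driven upper bound on $Q$ align symmetrically around $fP$ on a multiplicative scale. Once that coefficient is chosen, everything else is routine manipulation.
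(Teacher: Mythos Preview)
Your proof is correct, and the underlying construction is the same as the paper's: with $t=p/q$ and $\alpha=\sqrt{1-\epsilon^2}$, your rescaled approximant $\frac{2\alpha}{1+\alpha}\cdot\frac{2pq}{p^2+\alpha^2q^2}$ is precisely $S(p/q)$ for the paper's univariate $S(t)=\frac{4\alpha}{1+\alpha}\cdot\frac{t}{t^2+\alpha^2}$. The difference is only in the verification. The paper simply asserts, via calculus, that $\max_{1-\epsilon\le|t|\le1+\epsilon}|\sign t-S(t)|=(\epsilon/(1+\alpha))^2$; you instead pass through the multiplicative reformulation of Theorem~\ref{thm:balance} and derive $1\le fP/Q\le 1/\alpha$ from AM--GM and from expanding $(p-fq)^2\le(\epsilon')^2q^2$. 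Your route has the pedagogical advantage of explaining why the coefficient $\alpha^2$ is the right one (it is exactly what makes the AM--GM upper bound and the hypothesis-driven lower bound symmetric on a multiplicative scale).

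One small technical point: Theorem~\ref{thm:balance} is stated for \emph{finite} $X$, whereas the present theorem allows arbitrary $X\subseteq\Re^n$. This is harmless because you only use the easy direction---that polynomials with $1/c\le f p/q\le c$ yield $R^+(f,d)\le(c^2-1)/(c^2+1)$---which holds for any $X$ by taking the approximant $\frac{2c}{c^2+1}\cdot p/q$; you might simply note this or bypass Theorem~\ref{thm:balance} with that one-line computation.
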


\begin{proof}
The theorem is clearly true for $\epsilon=1.$ For $0\leq
\epsilon<1,$ consider the univariate rational function
\[ S(t) = \frac{4\sqrt{1-\epsilon^2}}{1+\sqrt{1-\epsilon^2}}\cdot 
          \frac{t}{t^2+(1-\epsilon^2)}. \]
Calculus shows that
\[ \max_{\rule{0mm}{8pt}1-\epsilon\leq |t|\leq 1+\epsilon}
|\sign t - S(t)| =
   \PARENS{\frac{\epsilon}{1 + \sqrt{1-\epsilon^2}}}^2.\]
Fix a sequence $A_1,A_2,\dots$ of rational functions of degree at
most $d$ such that $\sup_{x\in X}|f(x) - A_m(x)|\to\epsilon$ as
$m\to\infty.$ Then $S(A_1(x)),S(A_2(x)),\dots$ is the sought
sequence of approximants to $f,$ each a rational function of
degree at most $2d$ with a positive denominator.
\end{proof}

\begin{corollary}
\label{cor:general-amplify}
Let $f\colon X\to\moo$ be a given function, where $X\subseteq\Re^n.$ 
Then for all integers $d\geq1$ and reals $t\geq2,$ 
\[ R^+(f,td)\leq R(f,d)^{t/2}.  \]
\end{corollary}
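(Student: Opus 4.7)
The plan is to iterate Theorem~\ref{thm:accuracy-boost} along a geometric sequence of degrees and then interpolate to arbitrary real $t\geq 2$ by monotonicity of $R^+(f,\cdot)$ in the degree parameter.

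First I would sharpen the bound in Theorem~\ref{thm:accuracy-boost} to the cleaner statement $R^+(f,2d)\leq R(f,d)^2$, which is immediate because $1+\sqrt{1-\epsilon^2}\geq 1$ on $[0,1]$. Combined with the definitional inequality $R(f,m)\leq R^+(f,m)$ (positive-denominator approximants form a subclass of all rational approximants of the same degree), a straightforward induction on $k\geq 1$ yields
\[
R^+(f,2^k d) \;\leq\; R(f,2^{k-1}d)^2 \;\leq\; R^+(f,2^{k-1}d)^2 \;\leq\; \cdots \;\leq\; R(f,d)^{2^k}.
\]

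Next, given a real $t\geq 2$, I would pick $k=\lfloor \log_2 t\rfloor\geq 1$, so that $2^k\leq t<2^{k+1}$. Since increasing the degree budget can only shrink the optimal approximation error,
\[
R^+(f,td) \;\leq\; R^+(f,2^k d) \;\leq\; R(f,d)^{2^k}.
\]
Because $R(f,d)\in[0,1]$ and the exponent $2^k$ exceeds $t/2$, the final inequality $R(f,d)^{2^k}\leq R(f,d)^{t/2}$ closes the argument.

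There is essentially no obstacle here; the proof is a direct iteration of Theorem~\ref{thm:accuracy-boost}, with the only minor subtlety being non-dyadic values of $t$, disposed of by monotonicity. The choice $k=\lfloor \log_2 t\rfloor$ is the tightest dyadic degree below $td$, and the slack $2^k>t/2$ is precisely what is needed to exchange the actual exponent $2^k$ for the target exponent $t/2$.
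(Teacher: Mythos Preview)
Your proof is correct and follows essentially the same route as the paper's: iterate the squaring bound from Theorem~\ref{thm:accuracy-boost} along dyadic degrees, then handle non-dyadic $t$ via $2^{\lfloor\log_2 t\rfloor}\geq t/2$ together with monotonicity of $R^+(f,\cdot)$. The paper's version is simply a terser rendition of the same argument.
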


\begin{proof}
If $t=2^k$ for some integer $k\geq1,$ then repeated applications of
Theorem~\ref{thm:accuracy-boost} yield
$R^+(f,2^kd) \leq R(f,2^{k-1}d)^2 \leq \cdots \leq R(f,d)^{2^k}.$
The general case follows because $2^{\lfloor\log t\rfloor}\geq t/2.$
\end{proof}

\subsection{Conjunctions of functions
\label{sec:intersection-two}}

In this section, we prove our direct product theorems for 
conjunctions of Boolean
functions.  Recall that a key challenge 
will be, given a sign-representation $\phi(x,y)$ of a
composite function $f(x)\wedge \g(y),$ to suitably break down $\phi$
and recover individual rational approximants of $f$ and $\g.$ We
now present an ingredient of our solution, namely, a certain fact
about pairs of matrices based on Farkas' Lemma.  For the time being,
we will formulate this fact in a clean and abstract way.

\begin{theorem}
\label{thm:zero-sum-game}
Fix matrices $A,B\in\Re^{m\times n}$ and a real $c\geq1.$ Consider
the following system of linear inequalities in $u,v\in\Re^n${\rm :}
\begin{equation}
\LEFTRIGHT.\rcbrace{
\hspace{3cm}
	\begin{aligned}
	\frac{1}{c}\, Au \leq &Bv \leq cAu,\\
	 u&\geq 0, \\
	 v&\geq 0. \\
	\end{aligned}
\hspace{3cm}}
  \label{eqn:matrix-system}
\end{equation}
If $u=v=0$ is the only solution to {\rm (\ref{eqn:matrix-system}),} then
there exist vectors $w\geq0$ and $z\geq0$ such that
\[ w\tr A + z\tr B > c (z\tr A + w\tr B). \]
\end{theorem}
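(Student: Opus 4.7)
The plan is to prove this as a Farkas-type theorem of alternatives, realized concretely through the minimax theorem applied to a small matrix game. Assemble the data into the single block matrix
\[
\mathbf{M} \;=\; \begin{pmatrix} cA & -B \\ -A & cB \end{pmatrix} \in \Re^{2m\times 2n},
\]
so that the system (\ref{eqn:matrix-system}) is equivalent to $x \geq 0$ and $\mathbf{M} x \geq 0$ for $x = \binom{u}{v}$. The hypothesis says that $x = 0$ is the only such vector, and by rescaling this is equivalent to the infeasibility of the system $x \in \Delta_{2n}$, $\mathbf{M} x \geq 0$, where $\Delta_{2n}$ denotes the standard simplex in $\Re^{2n}$.

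Next, I would read off this infeasibility through the value of a two-player zero-sum matrix game in which player~$1$ chooses $x \in \Delta_{2n}$, player~$2$ chooses a row index $i \in \{1,\dots,2m\}$, and the payoff to player~$1$ is $[\mathbf{M} x]_i$. The hypothesis amounts to
\[
V \;=\; \max_{x \in \Delta_{2n}} \min_i \,[\mathbf{M} x]_i \;<\; 0,
\]
the maximum being attained by compactness. By the minimax theorem, $V$ also equals $\min_{y \in \Delta_{2m}} \max_{x \in \Delta_{2n}} y\tr \mathbf{M} x$, so there is a mixed strategy $y^* = \binom{\alpha}{\beta} \in \Delta_{2m}$ with $\max_x (y^*)\tr \mathbf{M} x < 0$. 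Since the maximum of a linear functional over $\Delta_{2n}$ equals its largest coordinate, the row vector $(y^*)\tr \mathbf{M}$ is strictly negative in every coordinate.

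Finally, expanding
\[
(y^*)\tr \mathbf{M} \;=\; \bigl(\,c\alpha\tr A - \beta\tr A,\;\; c\beta\tr B - \alpha\tr B\,\bigr),
\]
componentwise strict negativity reads $\beta\tr A > c\alpha\tr A$ and $\alpha\tr B > c\beta\tr B$. Setting $w = \beta \geq 0$ and $z = \alpha \geq 0$ and summing these two inequalities coordinatewise yields $w\tr A + z\tr B > c(z\tr A + w\tr B)$, as required.

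The only nontrivial step is recognizing the statement as an alternative theorem and assembling the correct block matrix $\mathbf{M}$; once that is in place, everything reduces to the minimax theorem on a product of simplices, which automatically transfers the strict inequality from the primal side to the dual side because the outer extremum is attained on a compact set. A slight bookkeeping point, which I would flag explicitly in the write-up, is that the two dual components $\alpha$ and $\beta$ must be identified with $z$ and $w$ in the correct order for the final inequality to come out in the direction stated.
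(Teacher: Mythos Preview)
Your proof is correct and follows essentially the same route as the paper's: both invoke LP duality (the paper as a black box, you explicitly via the minimax theorem on the block matrix $\mathbf{M}$) to produce nonnegative vectors $w,z$ satisfying $w\tr A > c\,z\tr A$ and $z\tr B > c\,w\tr B$, and then sum these two inequalities. Your write-up simply spells out the duality step that the paper leaves implicit.
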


\begin{proof}
If $u=v=0$ is the only solution to (\ref{eqn:matrix-system}), then
linear programming duality implies the existence of vectors $w\geq0$
and $z\geq 0$ such that $w\tr A > cz\tr A$ and $z\tr B > cw\tr B.$
Adding the last two inequalities completes the proof.
%
\end{proof}

For clarity of exposition, we first prove the main result of this
section for the case of \emph{two} Boolean functions at least one of which
is \emph{odd}.  While this case seems restricted, we will see that
it captures the full complexity of the problem.


\begin{theorem}
\label{thm:main-finite-odd}
Let $f\colon X\to\moo$ and $\g\colon Y\to\moo$ be given functions, where
$X,Y\subset\Re^n$ are arbitrary finite sets. Assume that $f\not\equiv1$
and $\g\not\equiv1.$ Let $d=\degthr(f\wedge \g).$ If $f$ is odd, then
\[ R^+(f,2d) + R^+(\g,d) < 1. \]
\end{theorem}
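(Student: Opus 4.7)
My plan is to argue by contraposition. Assume $R^+(f,2d)+R^+(g,d)\geq 1$; I will derive $\degthr(f\wedge g)>d$, contradicting the hypothesis $d=\degthr(f\wedge g)$.

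First, I use Theorem~\ref{thm:balance} to rephrase the approximation hypothesis: it forbids, for every pair $(c_1,c_2)$ satisfying $\frac{c_1^2-1}{c_1^2+1}+\frac{c_2^2-1}{c_2^2+1}<1$, the simultaneous existence of a ``$c_1$-good'' degree-$2d$ rational approximant for $f$ on $X$ and a ``$c_2$-good'' degree-$d$ rational approximant for $g$ on $Y$. A standard compactness/limiting argument (needed because the infimum in Theorem~\ref{thm:balance} is not always attained) consolidates this family of obstructions into a single LP-infeasibility statement controlled by one parameter $c\geq 1$.

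Second, I cast this infeasibility in the matrix form of Theorem~\ref{thm:zero-sum-game}. The rows of $A$ and $B$ are indexed by $X\times Y$ (with extra sign copies to record both halves of each two-sided inequality), and their columns encode the coefficient vectors of candidate numerator/denominator polynomials---degree $\leq 2d$ on the $f$-side and $\leq d$ on the $g$-side. The oddness of $f$ enters here through the identification of constraints at $x$ and at $-x$: since $X=-X$, the two ``signs'' of the $f$-inequality at a pair $\{x,-x\}$ collapse into a single cone, halving the effective dimension on the $f$-side and producing the asymmetric degree split $(2d,d)$ rather than a symmetric $(4d,4d)$. The LP infeasibility from Step~1 then matches the hypothesis of Theorem~\ref{thm:zero-sum-game}: only $u=v=0$ solves $(1/c)Au\leq Bv\leq cAu$.

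Third, Theorem~\ref{thm:zero-sum-game} produces nonnegative dual witnesses $w,z$ with $w^\top A+z^\top B>c(z^\top A+w^\top B)$. I read off $w$ and $z$ as a nonzero function $\psi\colon X\times Y\to\Re$ satisfying: (i) $\psi(x,y)\cdot(f(x)\wedge g(y))\geq 0$ on $X\times Y$, and (ii) $\sum_{(x,y)}\psi(x,y)\,p(x,y)=0$ for every polynomial $p$ of degree at most $d$. The orthogonality~(ii) holds because each column of $A$ and $B$ encodes the evaluations of a monomial of degree $\leq d$ in $(x,y)$; the sign condition~(i) follows from the strict inequality once one undoes the oddness folding of Step~2. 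Theorem~\ref{thm:gordan} then yields $\degthr(f\wedge g)>d$, the desired contradiction.

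The main obstacle is Step~2: arranging the columns of $A$ and $B$ so that their positive spans exactly parameterize the allowed numerator and denominator polynomials, while simultaneously guaranteeing that every dual inequality pulls back to an orthogonality condition against a polynomial of degree at most $d$ in $(x,y)$. The oddness of $f$ is essential---it is what allows the $f$-side constraints at $x$ and at $-x$ to be packaged together, and is precisely what dictates the specific degree split $(2d,d)$ in the conclusion.
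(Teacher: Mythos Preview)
Your contrapositive plan has a genuine gap at Step~2, and it is not a technical detail but the heart of the matter. You propose matrices $A,B$ with rows indexed by $X\times Y$ and columns encoding, separately, coefficient vectors of polynomials in $x$ (for the $f$-side) and polynomials in $y$ (for the $g$-side). But the hypothesis $R^+(f,2d)+R^+(\g,d)\geq 1$ says nothing about any single cone in $\Re^{X\times Y}$: it is a disjunctive statement about two unrelated linear programs, one living on $X$ and one on $Y$. There is no single parameter $c$ and no single pair $(A,B)$ for which ``only $u=v=0$ solves $(1/c)Au\leq Bv\leq cAu$'' is equivalent to your approximation hypothesis; the compactness hand-wave does not manufacture one. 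Moreover, even if you had such a system, the dual witnesses $w,z$ it produces would be supported on $X\times Y$, but your columns are evaluations of polynomials in $x$ alone or in $y$ alone, so the orthogonality you would get is to separable functions $p(x)+q(y)$, not to all degree-$d$ polynomials in $(x,y)$. Gordan's theorem needs the latter.

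The paper's proof runs in the opposite direction and for an essential reason: it \emph{starts} from a degree-$d$ sign-representation $\phi(x,y)$ of $f\wedge\g$ and uses $\phi$ itself as the bridge between the two variable blocks. The matrices in Theorem~\ref{thm:zero-sum-game} have rows indexed by $Y$ and columns by $X^-=f^{-1}(-1)$, with entries $\phi(-x,y)$ and $\g(y)\phi(x,y)$; a feasible $(u,v)$ would give a degree-$d$ rational approximant to $\g$ built from slices of $\phi$, so infeasibility follows from $R^+(\g,d)$ exceeding the relevant threshold. The dual witnesses $\lambda_y,\mu_y$ then combine slices of $\phi$ in the $y$-direction to form degree-$d$ polynomials $\alpha(x),\beta(x)$, and the oddness of $f$ enters concretely in the step where one squares $\alpha+\beta$ to obtain the degree-$2d$ numerator for $f$: it is what turns the mixed-sign inequality at $-x$ into one comparable with that at $x$. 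Your description of oddness as ``halving the effective dimension'' does not correspond to this mechanism. Without $\phi$ in hand you have no object that couples $x$ and $y$, and I do not see how your outline can be repaired without reverting to the paper's direct argument.
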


\begin{proof}
We first collect some basic observations. Since $f\not\equiv1$
and $\g\not\equiv1,$ we have $\degthr(f)\leq d$ and $\degthr(\g)\leq d.$ 
Therefore, Theorem~\ref{thm:trivial-approx} implies that
\begin{equation}
 R^+(f,d)<1,\qquad R^+(\g,d)<1. 
 \label{eqn:both-nontrivial}
\end{equation}
In particular, the theorem holds if $R^+(\g,d)=0.$ In the
remainder of the proof, we assume that $R^+(\g,d)=\epsilon,$ where
$0<\epsilon<1.$

By hypothesis, there exists a degree-$d$ polynomial $\phi$ 
such that $f(x)\wedge \g(y) = \sign \phi(x,y)$ for all $x\in X,$ $y\in Y.$
Define 
\[ X^- = \{ x\in X : f(x)=-1 \}. \]
Since $X$ is closed under negation and $f$ is odd, we have $f(x)=1
\Leftrightarrow -x\in X^-.$ We will make several uses of this fact
in what follows, without further mention.

Put 
\[ c = \SQRT{\frac{1 + (1-\delta)\epsilon}{1-(1-\delta)\epsilon}}, \]
where $\delta\in(0,1)$ is sufficiently small.
Since $R^+(\g,d) > (c^2-1)/(c^2+1),$ we know by Theorem~\ref{thm:balance}
that there cannot exist polynomials $p,q$ of degree up to $d$ such that
\begin{equation}
0 < \frac1c q(y) \leq \g(y)p(y) \leq cq(y), \qquad y\in Y.
\label{eqn:qpq}
\end{equation}
We claim, then, that there cannot exist reals $a_x\geq0,$ $x\in X,$ not all
zero, such that
\[ \frac1c \sum_{x\in X^-} a_{-x}\phi(-x,y) 
\leq \g(y) \sum_{x\in X^-} a_x \phi(x,y) 
\leq c\sum_{x\in X^-} a_{-x} \phi(-x,y), \quad y\in Y. \]
Indeed, if such reals $a_x$ were to exist, then (\ref{eqn:qpq}) would
hold for the polynomials $p(y) = \sum_{x\in X^-}
a_x\phi(x,y)$ and $q(y)=\sum_{x\in X^-} a_{-x}\phi(-x,y).$
In view of the nonexistence of the $a_x,$ Theorem~\ref{thm:zero-sum-game} 
applies to the matrices 
\[ \Big[\phi(-x,y)\Big]_{y\in Y,\, x\in X^-},\qquad
   \Big[\g(y)\phi(x,y)\Big]_{y\in Y,\, x\in X^-}        \]
and guarantees the existence of nonnegative reals $\lambda_y,\mu_y$ for
$y\in Y$ such that
\begin{multline}
 \sum_{y\in Y} \lambda_y \phi(-x,y) + 
   \sum_{y\in Y} \mu_y  \g(y) \phi(x,y)\\
   > c\PARENS{ 
   \sum_{y\in Y}^{~} \mu_y \phi(-x,y) + 
   \sum_{y\in Y} \lambda_y \g(y) \phi(x,y) }, \qquad x\in X^-.
\qquad
   \label{eqn:lambda-mu-claim}
\end{multline}
Define polynomials $\alpha,\beta$ on $X$ by
\begin{align*}
 \alpha(x) &= 
\sum_{y\in \g^{-1}(-1)} \{\lambda_y \phi(-x,y) - \mu_y\phi(x,y)\}, \\
\beta(x) &= 
\sum_{y\in \g^{-1}(1)\phantom{-}} \{\lambda_y \phi(-x,y) + \mu_y\phi(x,y)\}.
\end{align*}
Then (\ref{eqn:lambda-mu-claim}) can be restated as
\[ \alpha(x) + \beta(x) > c\{-\alpha(-x) + \beta(-x)\}, \qquad x\in X^-. \]
Both members of this inequality are nonnegative, and thus 
$\{\alpha(x) + \beta(x)\}^2 > c^2\{-\alpha(-x) + \beta(-x)\}^2$ for 
$x\in X^-.$
Since in addition $\alpha(-x)\leq0$ and $\beta(-x)\geq0$ for $x\in X^-,$ we have
\[ \{\alpha(x) + \beta(x)\}^2 > c^2\{\alpha(-x) + \beta(-x)\}^2, 
\qquad x\in X^-. \]
Letting $\gamma(x) = \{\alpha(x) + \beta(x)\}^2,$ we see that 
\[ R^+(f,2d) \leq \max_{x\in X} 
\left| f(x) - \frac{c^2+1}{c^2} \cdot
\frac{\gamma(-x) - \gamma(x)}{\gamma(-x)+\gamma(x)}\right|
\leq \frac1{c^2}<1-\epsilon, \]
where the final inequality holds for all $\delta\in(0,1)$ small enough.
\end{proof}

\begin{remark*}
In Theorem~\ref{thm:main-finite-odd} and elsewhere in this paper,
the degree of a multivariate polynomial $p(x_1,x_2,\dots,x_n)$ is
defined as the greatest total degree of any monomial of $p.$ A
related notion is the \emph{partial degree} of $p,$ which is the
maximum degree of $p$ in any one of the variables $x_1,x_2,\dots,x_n.$
One readily sees that the proof of Theorem~\ref{thm:main-finite-odd}
applies unchanged to this alternate notion. Specifically, if the
conjunction $f(x)\wedge \g(y)$ can be sign-represented by a polynomial
of partial degree $d,$ then there exist rational functions $F(x)$
and $G(y)$ of partial degree $2d$ such that $\|f-F\|_\infty +
\|\g-G\|_\infty<1.$ In the same way, the program of Section~\ref{sec:h}
carries over, with cosmetic changes, to the notion of partial degree.
Analogously, our proofs apply to hybrid definitions of degree, such
as partial degree over blocks of variables. Other, more abstract
notions of degree can also be handled.  In the remainder of the
paper, we will maintain our focus on total degree and will not
elaborate further on its generalizations.
\end{remark*}

As promised, we will now remove the assumption, made in
Theorem~\ref{thm:main-finite-odd}, about one of the functions being
odd.  The result that we are about to prove settles
Theorem~\ref{thm:main-two} from the Introduction.

\begin{theorem}
\label{thm:main-finite}
Let $f\colon X\to\moo$ and $\g\colon Y\to\moo$ be given functions, where
$X,Y\subset\Re^n$ are arbitrary finite sets. Assume that $f\not\equiv1$
and $\g\not\equiv1.$ Let $d=\degthr(f\wedge \g).$ Then 
\begin{align}
R^+(f,4d) + R^+(\g,2d) < 1 
\label{eqn:42}
\end{align}
and, by symmetry,
\begin{align*}
R^+(f,2d) + R^+(\g,4d) < 1.
\end{align*}
\end{theorem}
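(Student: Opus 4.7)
The approach is to reduce to Theorem~\ref{thm:main-finite-odd} via an odd extension. Introduce one new coordinate and set $\tilde X = \{(1, x) : x \in X\} \cup \{(-1, -x) : x \in X\} \subset \Re^{n+1}$, which is closed under negation and contains an embedded copy of $X$ via $x \mapsto (1, x)$. Define $\tilde f : \tilde X \to \moo$ to be the odd extension $\tilde f(1, x) = f(x)$, $\tilde f(-1, -x) = -f(x)$. If $\degthr(\tilde f \wedge g) \leq 2d$, then Theorem~\ref{thm:main-finite-odd} applied with $\tilde f$ as the odd function yields $R^+(\tilde f, 4d) + R^+(g, 2d) < 1$; and since every rational approximant of $\tilde f$ of degree $k$ restricts (via $x \mapsto (1, x)$) to an approximant of $f$ of the same degree, we have $R^+(f, 4d) \leq R^+(\tilde f, 4d)$, giving the first inequality. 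The second follows by interchanging the roles of $f$ and $g$.

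The construction of a sign-representation of $\tilde f \wedge g$ of degree at most $2d$ hinges on the pointwise identity $\overline f(x) \wedge g(y) = (f(x) \wedge g(y)) \cdot g(y)$, which is immediate by checking the four cases $f, g \in \moo$. Fix a degree-$d$ sign-representation $g_s$ of $g$ (available since $\degthr(g) \leq \degthr(f \wedge g) = d$) and let $\phi$ be the given degree-$d$ sign-representation of $f \wedge g$. The natural candidate is
\[
\Phi(s, z, y) \;=\; \tfrac{1+s}{2}\,\phi(z, y) \;+\; \tfrac{1-s}{2}\,\phi(-z, y)\,g_s(y).
\]
At $(s, z) = (1, x)$ this evaluates to $\phi(x, y)$, whose sign matches $\tilde f(1, x) \wedge g(y) = f(x) \wedge g(y)$; at $(s, z) = (-1, -x)$ it evaluates to $\phi(x, y)\,g_s(y)$, whose sign is $(f(x) \wedge g(y)) \cdot g(y) = \overline f(x) \wedge g(y) = \tilde f(-1, -x) \wedge g(y)$ by the identity. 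Hence $\Phi$ sign-represents $\tilde f \wedge g$ on $\tilde X \times Y$.

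The main obstacle is nailing down the degree: the naive total-degree count for this $\Phi$ is $2d + 1$, since the factor $\tfrac{1-s}{2}$ adds one unit to the $(z, y)$-degree $2d$ of $\phi(-z, y)\,g_s(y)$. Shaving off this extra $+1$ is the technical heart of the reduction. One route is to exploit $s^2 = 1$ on $\tilde X$ and choose $g_s$ carefully (for instance $g_s(y) = \phi(x_0, y)$ with $x_0 \in f^{-1}(-1)$ fixed) so that in the decomposition $\Phi = A(z, y) + s\,B(z, y)$ the $s$-linear coefficient $B$ has $(z, y)$-degree at most $2d - 1$. A cleaner alternative sidesteps the odd extension entirely: mimic the proof of Theorem~\ref{thm:main-finite-odd} directly, indexing the matrices in Theorem~\ref{thm:zero-sum-game} by pairs $(x_1, x_2) \in X \times X$ and using that $\phi(x_1, y)\phi(x_2, y)$ is everywhere positive when $f(x_1) = f(x_2)$ and everywhere $g$-signed when $f(x_1) = -f(x_2)$; these produce the degree-$2d$ positive and $g$-signed polynomials required by Theorem~\ref{thm:balance}, the dual witnesses become bivariate polynomials of degree $2d$, and squaring as in the odd case produces a rational approximant of $f$ of degree $4d$, yielding the pair $(4d, 2d)$ without any degree-shaving.
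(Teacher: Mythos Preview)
Your approach is exactly the paper's: pass to the odd extension $\tilde f$ on $\tilde X$ and invoke Theorem~\ref{thm:main-finite-odd}. You also correctly isolate the one real issue, namely that your $\Phi$ has degree $2d+1$ rather than $2d$. The difficulty is that neither of your two proposed fixes actually closes this gap.

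For the first fix, taking $g_s(y)=\phi(x_0,y)$ with $f(x_0)=-1$ does not force the $s$-linear part $B(z,y)=\tfrac12\{\phi(z,y)-\phi(-z,y)\phi(x_0,y)\}$ to have degree $\leq 2d-1$; the term $\phi(-z,y)\phi(x_0,y)$ is genuinely of degree $2d$ and there is no cancellation mechanism. Invoking $s^2=1$ does not help either, since $s$ already appears linearly. For the second fix, the ``direct'' route indexing by pairs $(x_1,x_2)$ is suggestive but, as written, does not explain how the dual witnesses produce a rational approximant of $f$ as a function of a \emph{single} variable $x$; in the odd proof this relied crucially on the involution $x\mapsto -x$, which has no counterpart for arbitrary pairs.

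The paper shaves the extra $+1$ with a one-line trick you are missing: instead of zeroing out the second term at $s=1$ via the projector $\tfrac{1-s}{2}$, let it stay and make the first term \emph{dominate}. With $x_0\in f^{-1}(-1)$ fixed, put
\[
\Phi(s,z,y)\;=\;K(1+s)\,\phi(z,y)\;+\;\phi(-z,y)\,\phi(x_0,y)
\]
for a sufficiently large constant $K$. At $s=-1$ only the second term survives, and its sign is $(f(-(-x))\wedge g(y))\cdot g(y)=\overline{f}(x)\wedge g(y)$ as you computed. At $s=1$ the first term dominates for large $K$ (the domain is finite), so the sign is that of $\phi(z,y)$. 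The degree is $\max(d+1,\,2d)=2d$ for $d\geq 1$, which is exactly what is needed.
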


\begin{proof}
It suffices to prove (\ref{eqn:42}).
Define $X'\subset\Re^{n+1}$ by $X'= \{(x,1),(-x,-1) : x\in X\}.$
It is clear that $X'$ is closed under negation. Let $f'\colon X'\to\moo$ be
the odd Boolean function given by 
\[ f'(x,b) = \ccases{
f(x), &b=1,\\
-f(-x), &b=-1.
}
\]
Let $\phi$ be a polynomial of degree no greater than $d$
such that $f(x)\wedge \g(y)\equiv
\sign \phi(x,y).$ Fix an input $\tilde x\in X$ such that 
$f(\tilde x)=-1.$ Then 
 $f'(x,b) \wedge \g(y) \equiv \sign\BRACES{ 
   K(1+b)\phi(x,y) + \phi(-x,y)\phi(\tilde x,y)
} 
$ for a large enough constant $K\gg1,$ whence 
\[ \degthr(f'\wedge \g) \leq 2d. \]
Theorem~\ref{thm:main-finite-odd} now yields 
$R^+(f',4d) + R^+(\g,2d) < 1.$
Since $R^+(f,4d) \leq R^+(f',4d)$ by definition, the proof is complete.
\end{proof}

Finally, we obtain an analogue of Theorem~\ref{thm:main-finite} for
a conjunction of three and more functions.

\begin{theorem}
\label{thm:main-finite-multi}
Let $f_1,f_2,\dots,f_k$ be given Boolean functions on finite sets
$X_1,X_2,\dots,X_k$ $\subset \Re^n,$ respectively. 
Assume that $f_i\not\equiv1$ for $i=1,2,\dots,k.$
Let $d=\degthr(f_1\wedge f_2\wedge \cdots \wedge f_k).$ Then 
\[ \sum_{i=1}^k R^+(f_i,D) < 1 \]
for $D=8d\log2k.$
\end{theorem}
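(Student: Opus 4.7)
The plan is to establish the theorem by induction on $k$, using Theorem~\ref{thm:main-finite} as the engine of the recursion and Corollary~\ref{cor:general-amplify} to control the error budget.  The base case $k=1$ is immediate from Theorem~\ref{thm:trivial-approx}, which gives $R^+(f_1,d)<1$, so any $D\geq d$ suffices and in particular $D=8d$.

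For the inductive step with $k\geq 2$, I split the functions into two halves $S_1,S_2$ of sizes $\lceil k/2\rceil$ and $\lfloor k/2\rfloor$, and set $h_j=\bigwedge_{i\in S_j} f_i$.  Since $h_1\wedge h_2$ is exactly our original conjunction, Theorem~\ref{thm:main-finite} yields both $R^+(h_1,4d)+R^+(h_2,2d)<1$ and (by symmetry) $R^+(h_1,2d)+R^+(h_2,4d)<1$, and combining these with Theorem~\ref{thm:trivial-approx} gives $\degthr(h_1),\degthr(h_2)\leq 2d$.  Applying the inductive hypothesis to each half with its own parameter $2d$ produces, for each $i\in S_j$, a rational approximant of degree $8(2d)\log(2\lceil k/2\rceil)$ such that the sum within each half is less than $1$.

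The central obstacle is that combining the two halves na\"ively gives a total sum bounded only by $2$, not $1$, and the degree $16d\log(k+1)$ exceeds the target $8d\log 2k$.  I plan to resolve both issues simultaneously via amplification.  Specifically, Corollary~\ref{cor:general-amplify} allows us to replace any rational approximant of degree $d_0$ and error $\epsilon<1$ by one of degree $td_0$ with error at most $\epsilon^{t/2}$.  The first inequality from Theorem~\ref{thm:main-finite} guarantees that at least one of the halves — say $h_2$ — satisfies $R^+(h_2,2d)<1/2$; for this half, amplification with $t=O(\log k)$ drives the error below $1/(2k)$, and restricting the amplified rational approximant of $h_2$ to satisfying assignments of all-but-one of its coordinates (which exist because no $f_i\equiv 1$) yields individual degree-$O(d\log k)$ approximants to each $f_i\in S_2$ summing to less than $1/2$.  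For the other half $h_1$ the base error is not controlled, so I recurse on it directly; the recursion contributes an additive $O(d)$ per level, for a total overhead of $O(d\log k)$ across the $\lceil\log_2 k\rceil$ levels.  The constants are chosen so that the recurrence $D(k,d)\leq D(\lceil k/2\rceil,2d)+O(d)$ closes with $D(k,d)=8d\log 2k$; the factor of $8$ absorbs the degree-doubling from Theorem~\ref{thm:main-finite}, the cost of amplification, and the arithmetic of relating half-bounds and full bounds.

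The hardest part of the argument will be the accounting: I must ensure that at every level of recursion the amplification step is efficient enough (each level costing only an additive $O(d)$ in degree) while being strong enough that the combined error from both halves stays strictly below $1$.  The key leverage for this is the sum-bound $R^+(h_1,4d)+R^+(h_2,2d)<1$ from Theorem~\ref{thm:main-finite}, which forces at least one half into a regime where amplification is fast, breaking what would otherwise be a circular dependence on the (unknown) base error.
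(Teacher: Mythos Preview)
Your recurrence does not close. If $D(k,d)=8d\log 2k$, then $D(\lceil k/2\rceil,2d)\approx 16d\log k$, which already exceeds $8d\log 2k=8d+8d\log k$ for every $k\geq 2$; the additive $O(d)$ only makes it worse. Unrolling $D(k,d)\leq D(k/2,2d)+O(d)$ gives $D(k,d)=\Theta(kd)$, not $\Theta(d\log k)$. The culprit is the step where you bound $\degthr(h_j)\leq 2d$ via Theorem~\ref{thm:main-finite} and Theorem~\ref{thm:trivial-approx}: this doubles the degree parameter at every level of the recursion. In fact one has $\degthr(h_j)\leq d$ directly, because each $h_j$ is a subfunction of $f_1\wedge\cdots\wedge f_k$ (fix the other block to inputs where every $f_i$ equals $-1$, which exist since $f_i\not\equiv 1$). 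With that observation the degree parameter stays at $d$ throughout, and a more careful error split across the $\log k$ levels can be made to work, but the accounting is not the one you wrote.

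The paper avoids all of this with a much shorter argument. Since each $f_i\not\equiv 1$, for every pair $i<j$ the function $f_i\wedge f_j$ is itself a subfunction of $f_1\wedge\cdots\wedge f_k$, so $\degthr(f_i\wedge f_j)\leq d$ and Theorem~\ref{thm:main-finite} applies \emph{to each pair} at the original $d$, giving $R^+(f_i,4d)+R^+(f_j,4d)<1$ for all $i<j$. Now let $\epsilon=\max_i R^+(f_i,4d)$, attained say at $i=1$; the pairwise inequalities force $R^+(f_i,4d)<\min\{1-\epsilon,1/2\}$ for every $i\geq 2$. A single application of Corollary~\ref{cor:general-amplify} with $t=2\log 2k$ then yields
\[
\sum_{i=1}^k R^+(f_i,8d\log 2k)\leq \epsilon+\sum_{i=2}^k R^+(f_i,4d)^{1+\log k}<\epsilon+(k-1)\cdot\min\{1-\epsilon,\tfrac12\}^{1+\log k}<1.
\]
No recursion, no halving, no level-by-level bookkeeping---just the observation that the pairwise conjunctions are already subfunctions at threshold degree $d$.
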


\begin{proof}
Since $f_1,f_2,\dots,f_k\not\equiv 1,$ it follows that 
for each pair of indices $i<j,$ the function $f_i\wedge f_j$ is a
subfunction of $f_1\wedge f_2\wedge \cdots \wedge f_k.$
Theorem~\ref{thm:main-finite} now shows that for each $i<j,$
\begin{align}
R^+(f_i,4d)  + R^+(f_j,4d) < 1. \label{eqn:ij}
\end{align}
Without loss of generality, 
$R^+(f_1,4d)=\max_{i=1,\dots,k} R^+(f_i,4d).$ 
Abbreviate $\epsilon=R^+(f_1,4d).$
By (\ref{eqn:ij}), 
\[R^+(f_i,4d) < \min\BRACES{1-\epsilon,\frac12},
  \qquad i=2,3,\dots,k.\]
Now Corollary~\ref{cor:general-amplify} implies that 
\begin{align*}
\sum_{i=1}^k R^+(f_i,D)
 \leq \epsilon + \sum_{i=2}^k R^+(f_i,4d)^{1 + \log k}
 < 1. 
 \tag*{\qedhere}
\end{align*}
\end{proof}

\subsection{Other combining functions \label{sec:h}}

As we will now see, the development in Section~\ref{sec:intersection-two}
applies to many combining functions other than conjunctions.
Disjunctions are an illustrative starting point.  Consider two
Boolean functions $f\colon X\to\moo$ and $\g\colon Y\to\moo,$ where
$X,Y\subset\Re^n$ are finite sets and $f,\g\not\equiv-1.$ Let 
$d=\degthr(f\vee \g).$ Then, we claim that
\begin{equation}
 R^+(f,4d) + R^+(\g,4d) < 1.
 \label{eqn:f-or-g}
\end{equation}
To see this, note first that the function $f\vee \g$ has the same
threshold degree as its negation, $\overline f\wedge \overline \g.$
Applying Theorem~\ref{thm:main-finite} to the latter function shows
that
\[ 
   R^+(\overline f,4d) + R^+(\overline \g,4d) < 1. 
\]
This is equivalent to (\ref{eqn:f-or-g}) since 
approximating a function is the same as approximating its negation:
$R^+(\overline f,4d)=R^+(f,4d)$ and 
$R^+(\overline \g,4d)=R^+(\g,4d).$ As in the case
of conjunctions, (\ref{eqn:f-or-g}) can be strengthened to
\[ 
   R^+(f,2d) + R^+(\g,2d) < 1
\]
if at least one of $f,\g$ is known to be odd. These observations
carry over to disjunctions of multiple functions, 
$f_1\vee f_2\vee \cdots\vee f_k.$

The above discussion is still too specialized. In what follows, we
consider composite functions $h(f_1,f_2,\dots,f_k),$
where $h\colon\mook\to\moo$ is any given Boolean function.  We will
shortly see that the results of the previous sections hold for
various $h$ other than $h=\AND$ and $h=\OR.$

We start with some notation and definitions.  
Let $f,h\colon \mook\to\moo$
be given Boolean functions. Recall that $f$ is called a \emph{subfunction}
of $h$ if for some fixed strings $y,z\in\mook,$ one has
\[ f(x) = h(\dots,(x_i\wedge y_i) \vee z_i,\dots) \]
for each $x\in\mook.$ In words, $f$ can
be obtained from $h$ by replacing some of the variables
$x_1,x_2,\dots,x_k$ with fixed values ($-1$ or $+1$).

\begin{definition}
\label{def:and-reducible}
A function $F\colon \mook\to\moo$ is \emph{\AND-reducible} if for each
pair of indices $i,j,$ where $1\leq i\leq j\leq k,$ at least 
one of the eight functions
\[ 
\begin{aligned}
x_i&\wedge x_j,\\
   x_i&\wedge \overline {x_j},\\
   \overline {x_i}&\wedge x_j,\\
   \overline {x_i}&\wedge \overline {x_j},
\end{aligned}
\quad\qquad
\begin{aligned}
  x_i&\vee x_j,\\
   x_i&\vee \overline {x_j},\\
   \overline {x_i}&\vee x_j,\\
   \overline {x_i}&\vee \overline {x_j}
\end{aligned}
\]
is a subfunction of $F(x).$ 
\end{definition}

\begin{theorem}
\label{thm:and-reducible}
Let $f_1,f_2,\dots,f_k$ be nonconstant Boolean functions on finite
sets $X_1,X_2,\dots,X_k\subset\Re^n,$ respectively.  Let $F\colon \mook\to\moo$
be an \AND-reducible function.  Put
$d=\degthr(F(f_1,f_2,\dots,f_k)).$ Then
\[ \sum_{i=1}^k R^+(f_i,D) < 1 \]
for $D=8d\log 2k.$
\end{theorem}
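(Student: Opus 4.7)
The plan is to reduce the problem to Theorem~\ref{thm:main-finite} and its disjunctive consequence via the \textsc{and}-reducibility hypothesis, and then rerun the amplification argument of Theorem~\ref{thm:main-finite-multi} unchanged.

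First I would carry out the subfunction reduction. Fix any pair $i<j$. By Definition~\ref{def:and-reducible}, one of the eight two-variable \textsc{and}/\textsc{or} combinations of $x_i,\overline{x_i},x_j,\overline{x_j}$ is a subfunction of $F$; that is, there exist constants $c_\ell\in\moo$ for $\ell\ne i,j$ such that substituting $x_\ell=c_\ell$ into $F$ yields a function of the form $A\wedge B$ or $A\vee B$, where $A\in\{x_i,\overline{x_i}\}$ and $B\in\{x_j,\overline{x_j}\}$. Because each $f_\ell$ is nonconstant, I can select $x_\ell^\star\in X_\ell$ realizing $f_\ell(x_\ell^\star)=c_\ell$. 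Fixing $x_\ell=x_\ell^\star$ inside the composition $F(f_1,\dots,f_k)$ produces a subfunction of the form $A'\wedge B'$ or $A'\vee B'$ in which $A'\in\{f_i,\overline{f_i}\}$ and $B'\in\{f_j,\overline{f_j}\}$. Since restricting variables cannot increase the threshold degree, this combination has threshold degree at most $d$.

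Next I would extract the pairwise rational-approximation bound. Applying Theorem~\ref{thm:main-finite} when the combining operation is a conjunction, or its disjunctive consequence \eqref{eqn:f-or-g} when it is a disjunction, gives $R^+(A',4d)+R^+(B',4d)<1$. Rational approximability is invariant under negation, because negating the numerator of any approximant $p/q$ sends a degree-$D$ approximant of $h$ to an equally good degree-$D$ approximant of $\overline h$ with positive denominator; thus $R^+(\overline{f_i},4d)=R^+(f_i,4d)$ and likewise for $f_j$. Therefore
\[
R^+(f_i,4d)+R^+(f_j,4d)<1\qquad\text{for every pair }i<j.
\]

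Finally I would run the amplification argument of Theorem~\ref{thm:main-finite-multi} verbatim. After reindexing, set $\epsilon:=R^+(f_1,4d)=\max_i R^+(f_i,4d)$; the pairwise bounds force $R^+(f_i,4d)<\min\{1-\epsilon,1/2\}$ for $i\geq 2$. Since $D=8d\log 2k$, one has $D/(4d)=2(1+\log k)\geq 2$, so Corollary~\ref{cor:general-amplify} yields $R^+(f_i,D)\leq R^+(f_i,4d)^{1+\log k}$ for every $i$. Summing,
\[
\sum_{i=1}^k R^+(f_i,D)\leq \epsilon+(k-1)\bigl(\min\{1-\epsilon,1/2\}\bigr)^{1+\log k}<1,
\]
by splitting into the cases $\epsilon\leq 1/2$ (where the tail is at most $(k-1)/(2k)<1/2$) and $\epsilon>1/2$ (where $1-\epsilon<1/2$ and the tail is at most $(k-1)(1-\epsilon)/k<1-\epsilon$), exactly as in Theorem~\ref{thm:main-finite-multi}.

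The only substantive step is the subfunction reduction: one must check that the nonconstancy of each $f_\ell$ suffices to realize the constants demanded by Definition~\ref{def:and-reducible}, and that both the conjunctive and disjunctive cases are covered by the results of Section~\ref{sec:intersection-two} combined with negation-invariance of $R^+$. Once this bookkeeping is in place, the amplification and summation are routine transplants from the purely conjunctive proof.
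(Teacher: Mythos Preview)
Your proposal is correct and follows essentially the same approach as the paper's proof: reduce to the pairwise bound $R^+(f_i,4d)+R^+(f_j,4d)<1$ via \textsc{and}-reducibility together with Theorem~\ref{thm:main-finite} and its disjunctive analogue~\eqref{eqn:f-or-g}, then invoke the amplification argument of Theorem~\ref{thm:main-finite-multi} verbatim. You simply spell out a few details (nonconstancy to realize the restriction, negation-invariance of $R^+$, and the final case split) that the paper leaves implicit.
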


\begin{proof}
Since $F$ is \AND-reducible, it follows that for each pair of indices
$i<j,$ one of the following eight functions is a subfunction of
$F(f_1,\dots,f_k)$:
\[ 
\begin{aligned}
f_i&\wedge f_j,\\
   f_i&\wedge \overline {f_j},\\
   \overline {f_i}&\wedge f_j,\\
   \overline {f_i}&\wedge \overline {f_j},
\end{aligned}
\quad\qquad
\begin{aligned}
  f_i&\vee f_j,\\
   f_i&\vee \overline {f_j},\\
   \overline {f_i}&\vee f_j,\\
   \overline {f_i}&\vee \overline {f_j}.
\end{aligned}
\]
By Theorem~\ref{thm:main-finite} (and the opening remarks of this
section), 
\[ R^+(f_i,4d) + R^+(f_j,4d) < 1. \]
The remainder of the proof is identical to the proof of
Theorem~\ref{thm:main-finite-multi}, starting at equation (\ref{eqn:ij}).
\end{proof}

In summary, the development in Section~\ref{sec:intersection-two}
naturally extends to compositions
$F(f_1,f_2,\dots,f_k)$ for various $F.$ For a function
$F\colon \mook\to\moo$ to be \AND-reducible, $F$ must clearly depend
on all of its inputs. This necessary condition
is often sufficient, for example when $F$ is a read-once
AND/OR/NOT formula or a halfspace.  Hence, Theorem~\ref{thm:main-h}
from the Introduction is a corollary of Theorem~\ref{thm:and-reducible}.

\begin{remark*}
If more information is available about the combining function $F,$
Theorem~\ref{thm:and-reducible} can be generalized to let some of
$f_1,\dots,f_k$ be constant functions. For example, some or all of
the functions $f_1,\dots,f_k$ in Theorem~\ref{thm:main-finite-multi} can
be identically true. Another direction for generalization is as
follows. In Definition~\ref{def:and-reducible}, one considers all
the ${k\choose 2}$ distinct pairs of indices $(i,j).$ If one happens
to know that $f_1$ is harder to approximate than $f_2,\dots,f_k,$
then one can relax Definition~\ref{def:and-reducible} to examine
only the $k-1$ pairs $(1,2),(1,3),\dots,(1,k).$ We do not formulate
these extensions as theorems, the fundamental technique being already
clear.
\end{remark*}

\subsection{Additional observations}
\label{sec:additional-conj}

Analogous to Section~\ref{sec:dp},
our results here can be viewed as a technique for proving lower
bounds on the threshold degree of composite functions 
$F(f_1,f_2,\dots,f_k).$ We make this view explicit in the following
statement, which is the contrapositive of Theorem~\ref{thm:and-reducible}.

\begin{theorem}
\label{thm:and-reducible-degthr-technique}
Let $f_1,f_2,\dots,f_k$ be nonconstant Boolean functions on finite
sets $X_1,X_2,\dots,X_k\subset\Re^n,$ respectively.  Let $F\colon \mook\to\moo$
be an $\AND$-reducible function.  Suppose that
$\sum R^+( f_i,D) \geq 1$
for some integer $D.$ Then
\begin{equation}
\degthr(F(f_1,f_2,\dots,f_k))>\frac D{8\log 2k}.
\label{eqn:lower-bound-degthr}
\end{equation}
\end{theorem}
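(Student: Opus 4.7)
The plan is extremely short: the statement is literally the contrapositive of Theorem~\ref{thm:and-reducible}, so the proof should amount to a mechanical repackaging of that theorem combined with the monotonicity of $R^+$ in the degree parameter.

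Concretely, I would set $d=\degthr(F(f_1,f_2,\dots,f_k))$ and argue by contradiction: assume $d\leq D/(8\log 2k)$, i.e., $8d\log 2k\leq D$. Since $f_1,\dots,f_k$ are nonconstant and $F$ is \AND-reducible, Theorem~\ref{thm:and-reducible} is applicable with this value of $d$ and yields
\[
  \sum_{i=1}^k R^+(f_i,\,8d\log 2k)\;<\;1.
\]

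Next, I invoke the basic monotonicity $R^+(f,D_1)\geq R^+(f,D_2)$ whenever $D_1\leq D_2$, which is immediate from the definition (a degree-$D_2$ approximant is \emph{a fortiori} a degree-$D_2$ one when $D_1\leq D_2$, so the infimum over the larger class is no larger). Applying this with $D_1=8d\log 2k$ and $D_2=D$ gives $R^+(f_i,D)\leq R^+(f_i,8d\log 2k)$ for every $i$, hence
\[
  \sum_{i=1}^k R^+(f_i,D)\;\leq\;\sum_{i=1}^k R^+(f_i,\,8d\log 2k)\;<\;1,
\]
which directly contradicts the hypothesis $\sum R^+(f_i,D)\geq 1$. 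Therefore $d>D/(8\log 2k)$, establishing \textup{(\ref{eqn:lower-bound-degthr})}.

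There is no real obstacle here: the only bookkeeping issue is that $8d\log 2k$ need not be an integer, but this is harmless since one can interpret $R^+(f,\cdot)$ at the integer floor without loss (or extend Theorem~\ref{thm:and-reducible}'s statement to any integer $D'\geq 8d\log 2k$, which is what monotonicity provides). The entire content of the proposition lies in Theorem~\ref{thm:and-reducible}; the present statement is simply repackaged as a lower-bound technique, making explicit that any witness of rational inapproximability $\sum R^+(f_i,D)\geq 1$ automatically certifies a threshold-degree lower bound of $\Omega(D/\log k)$ for $F(f_1,\dots,f_k)$.
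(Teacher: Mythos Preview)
Your proposal is correct and matches the paper's approach exactly: the paper simply states that this theorem is the contrapositive of Theorem~\ref{thm:and-reducible} and gives no further argument. Your write-up just fills in the routine monotonicity step, which is indeed all that is needed.
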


\begin{remark}[On the tightness of
Theorem~\textup{\ref{thm:and-reducible-degthr-technique}}]
\label{rem:tightness}
Theorem~\ref{thm:and-reducible-degthr-technique} is close to
optimal. For example, when $F=\AND,$ the lower bound
in~(\ref{eqn:lower-bound-degthr}) is tight up to a factor of
$\Theta(k\log k).$ This can be seen by the well-known
argument~\cite{beigel91rational} described in the Introduction.
Specifically, fix an integer $D$ such that $\sum R^+(f_i,D) <1.$
Then there exists a rational function $p_i(x_i)/q_i(x_i)$ on $X_i,$
for $i=1,2,\dots,k,$ such that $q_i$ is positive on $X_i$ and
\[ \sum_{i=1}^k \;\max_{x_i\in X_i} \left| f_i(x_i) -
\frac{p_i(x_i)}{q_i(x_i)} \right| < 1. \]
As a result,
\begin{align*}
\bigwedge_{i=1}^k f_i(x_i) 
	 \equiv \sign\PARENS{k-1 + \sum_{i=1}^k f_i(x_i) }
	\equiv \sign\PARENS{ k-1 + \sum_{i=1}^k 
	\frac{p_i(x_i)}{q_i(x_i)}}. 
\end{align*}
Multiplying by $\prod q_i(x_i)$ yields
\begin{align*}
 \bigwedge_{i=1}^k f_i(x_i) \equiv \sign\PARENS{ 
(k-1)\prod_{i=1}^k q_i(x_i) +
\sum_{i=1}^k p_i(x_i)\prod_{j\in\{1,\dots,k\}\setminus\{i\}}{q_j(x_j)}
},
\end{align*}
whence $\degthr(f_1\wedge f_2\wedge \cdots \wedge f_k) \leq k D.$
This settles our claim regarding $F=\AND.$ For arbitrary
$\AND$-reducible functions $F\colon \mook\to\moo,$ a similar
argument (cf.~Theorem~31 of Klivans et al.~\cite{KOS:02}) shows
that the lower bound in~(\ref{eqn:lower-bound-degthr}) is tight up
to a polynomial in $k.$
\end{remark}

\bigskip
We close this section with one additional result.

\begin{theorem}
\label{thm:2-to-k}
Let $f\colon X\to\moo$ be a given function, where $X\subset\Re^n$ is
finite. Then for every integer $k\geq2,$
\begin{equation}
\degthr(\undercbrace{f\wedge f\wedge \cdots\wedge f}_k)
\leq (8k\log k)\cdot \degthr(f\wedge f).
\label{eqn:2-to-k}
\end{equation}
\end{theorem}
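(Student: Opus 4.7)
The plan is to bootstrap from the $k=2$ case via rational amplification. The three ingredients are: Theorem~\ref{thm:main-finite} applied to $f\wedge f$, which converts the sign-representation of a two-fold conjunction into a bounded-error rational approximation of $f$; Corollary~\ref{cor:general-amplify}, which amplifies rational approximation by enlarging the degree; and the Beigel et al.\ construction recorded in Remark~\ref{rem:tightness}, which combines individual rational approximants into a sign-representation of a conjunction.

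Let $d=\degthr(f\wedge f)$ and dispose of the trivial case $f\equiv\pm 1$, in which the $k$-fold conjunction is constant and has threshold degree $0$. Otherwise, Theorem~\ref{thm:main-finite} applied to the pair $(f,f)$ yields $R^+(f,4d)+R^+(f,2d)<1$, and monotonicity of $R^+$ in the degree gives the strict bound $R^+(f,4d)<1/2$. This is the only place where the hypothesis $\degthr(f\wedge f)=d$ enters the argument.

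Next, I apply Corollary~\ref{cor:general-amplify} with base degree $4d$ and amplification parameter $t=2\log_2 k\geq 2$, obtaining $R^+(f,8d\log_2 k)\leq R(f,4d)^{\log_2 k}<(1/2)^{\log_2 k}=1/k$. Setting $D=\lfloor 8d\log_2 k\rfloor$, this yields $k\cdot R^+(f,D)<1$, so some rational function $p/q$ with positive denominator and degree at most $D$ satisfies $\|f-p/q\|_\infty<1/k$. Installing $p/q$ in each of the $k$ slots of the construction in Remark~\ref{rem:tightness} sign-represents the $k$-fold conjunction by a polynomial of degree at most $kD\leq 8kd\log_2 k$, matching the claimed bound $(8k\log k)\cdot\degthr(f\wedge f)$.

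The argument is a direct chaining of tools already developed in Sections~\ref{sec:auxiliary} and~\ref{sec:intersection-two}, so no serious obstacle arises. The only care required is to track strictness of the inequalities (which is preserved because $R^+(f,4d)<1/2$ is strict) and to handle integrality of $D$ via the floor; otherwise everything is routine bookkeeping.
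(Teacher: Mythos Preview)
Your proposal is correct and follows exactly the same approach as the paper's proof: apply Theorem~\ref{thm:main-finite} to obtain $R^+(f,4d)<1/2$, amplify via Corollary~\ref{cor:general-amplify} to get $R^+(f,8d\log k)<1/k$, and invoke the Beigel--Reingold--Spielman construction in Remark~\ref{rem:tightness}. You have simply filled in a few bookkeeping details (the constant case, the monotonicity step, the floor on $D$) that the paper leaves implicit.
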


\begin{proof}
Put $d=\degthr(f\wedge f).$ Theorem~\ref{thm:main-finite} implies
that $R^+(f,4d)< 1/2,$ whence $R^+(f,8d\log k)<1/k$ by
Corollary~\ref{cor:general-amplify}.  By 
the argument in Remark~\ref{rem:tightness}, this proves the theorem.
\end{proof}

To illustrate, let $\Ccal$ be a given class of functions on
$\moon,$ such as halfspaces. Theorem~\ref{thm:2-to-k} shows that
the task of constructing a sign-representation for the
intersections of up to $k$ members from $\Ccal$ reduces to the
case $k=2.$ In other words, solving the problem for $k=2$
essentially solves it for all $k.$ The dependence on $k$ in
(\ref{eqn:2-to-k}) is tight up to a factor of $16\log k,$ even in
the simple case when $f$ is the OR
function~\cite{minsky88perceptrons}.

\section{Rational approximation of a halfspace}

In this section, we determine how well a rational function of any
given degree can approximate the canonical halfspace.
The lower bounds in Theorem~\ref{thm:main-approx-hs},
the main result to be proved in this section, 
are considerably more involved than the upper bounds. 
To help build some intuition in the former case, we first obtain
the upper bounds (Section~\ref{sec:small-error}) and only then
prove the lower bounds (Sections~\ref{sec:preparatory}
and~\ref{sec:rational}).

\subsection{Upper bounds}
\label{sec:small-error}

As shown in the Introduction, the OR function on $n$ bits has
$R^+(\OR,1)=0.$ A similar example is the ODD-MAX-BIT 
function $f\colon\zoon\to\moo,$ due to Beigel~\cite{beigel94perceptrons},
defined by
\begin{align*}
f(x) = \sign\left(1 + \sum_{i=1}^n (-2)^i x_i\right).
\end{align*}
Indeed, letting
\begin{align*}
A_M(x) = \frac{1 + \sum_{i=1}^n (-M)^i x_i}
              {1 + \sum_{i=1}^n   M^i  x_i},
\end{align*}
we have $\|f - A_M\|_\infty\to0$ as $M\to\infty.$ Thus,
$R^+(f,1) = 0.$ With this construction in mind, we now turn to
the canonical halfspace. We start with an auxiliary result that
generalizes the argument just given.

\begin{lemma}
  \label{lem:012-dfa}
Let $f\colon\{0,\pm1,\pm2\}^n\to\moo$ be the function
given by $f(z) = \sign(1+\sum_{i=1}^n
2^iz_i).$ Then
\begin{align*} 
R^+(f,64) = 0.
\end{align*}
\end{lemma}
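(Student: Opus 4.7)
The plan is to exhibit a family of rational functions $A_M(z) = p_M(z)/q_M(z)$, parameterized by $M \to \infty$, each of total degree at most $64$ with $q_M$ strictly positive on the domain, such that $\|f - A_M\|_\infty \to 0$. First note that $P(z) := 1 + \sum_{i=1}^n 2^i z_i$ is always an odd integer on $\{0,\pm 1,\pm 2\}^n$ (each summand $2^i z_i$ is even), so $|P(z)|\ge 1$ and $f = \sign P$. Although $|P(z)|$ can be as large as $\Theta(2^n)$, which would force a naive univariate rational approximation of $\sign \circ P$ via Newman's theorem to have degree $\Omega(n)$, the discrete structure of the domain allows a far more efficient construction.

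The key observation, hinted at by the lemma's name, is that $f$ is the output of a five-state DFA processing $z_n, z_{n-1}, \dots, z_1$ from most- to least-significant. Writing $t_j := 2^{-j-1}\sum_{i>j} 2^i z_i$, the three \emph{transient} states correspond to $t_j \in \{-1,0,+1\}$ and the two \emph{absorbing} states $\mathrm{POS}, \mathrm{NEG}$ are entered as soon as $|t_j|\ge 2$; from that point on, the sign of $P$ is determined because the remaining coordinates contribute at most $\pm(2^{j+2}-4)$ in absolute value. Following the Paturi--Saks--Beigel template used for \textup{ODD-MAX-BIT}, I would construct $A_M$ so that, in the limit $M \to \infty$, the dominant $M^k$-coefficient of $p_M$ equals $f(z)$ times that of $q_M$, reading off the DFA's verdict from the highest ``decisive'' coordinate encountered. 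To encode this verdict I would use three degree-$\le 4$ univariate gadgets per coordinate,
\begin{align*}
 \tau_i := \frac{z_i^2(z_i^2-1)}{12}, \qquad
 \omega_i := \frac{z_i^2(4-z_i^2)}{3}, \qquad
 \sigma_i := \frac{z_i(7-z_i^2)}{6},
\end{align*}
which on $\{0,\pm 1,\pm 2\}$ realize $\mathbf{1}[|z_i|=2]$, $\mathbf{1}[|z_i|=1]$, and $\sign z_i$, respectively.

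The concrete rational $A_M$ is assembled as a sum over positions $i$ of terms $M^i \cdot N_i(z)$ in the numerator and $M^i \cdot D_i(z)$ in the denominator, where $N_i$ and $D_i$ are polynomial combinations of the gadgets $\tau_{\bullet}, \omega_{\bullet}, \sigma_{\bullet}$ on a bounded ``decisive window'' of coordinates near $i$, chosen so that for every $z$ in the domain the leading coefficient in $M$ aligns with $f(z)$. Positivity of $q_M$ is secured by adding a sufficiently large positive constant of lower weight. The main obstacle is the degree budget: a per-coordinate DFA simulation would have degree $\Omega(n)$, but here each monomial of $p_M$ and $q_M$ references only a bounded window of at most $16$ coordinates, each contributing at most $4$ via its gadget, totaling at most $64$; the summation over positions $i$ affects only the number of monomials, not their individual degrees. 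Verifying (a) the sign-alignment of leading $M$-coefficients at every $z$, (b) the positivity of $q_M$, and (c) the total-degree bound of $64$ completes the proof.
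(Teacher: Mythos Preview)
Your approach is the same as the paper's: recognize the DFA structure, use the ODD-MAX-BIT template with weights $M^i$, and argue that the ``verdict'' at each position depends on only a bounded window of coordinates. You also correctly set up useful interpolation gadgets. But you are missing the one step that makes the whole thing work: you assert that each $N_i$, $D_i$ depends on a bounded window (you say ``at most 16 coordinates'') without proving it. The number 16 appears to be reverse-engineered from the target degree ($16$ coordinates $\times$ degree-$4$ gadgets $= 64$), not derived from any property of the automaton; as stated, nothing in your argument prevents the decisive window from having size $\Theta(n)$.

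The actual key lemma, which the paper states and uses, is this: if the automaton is in a nonterminal state after reading $z_n, z_{n-1}, \dots, z_{j+1}$, then that state is uniquely determined by the \emph{last two} symbols $z_{j+2}, z_{j+1}$. (Check: $t_j = 4t_{j+2} + 2z_{j+2} + z_{j+1}$ with $t_{j+2} \in \{-1,0,1\}$; the three candidate values of $4t_{j+2}$ are separated by $4$ while the target set $\{-1,0,1\}$ has diameter $2$, so at most one candidate lands.) Given this, the transition signal at position $i$ --- does the automaton enter $\mathrm{POS}$, enter $\mathrm{NEG}$, or remain transient upon reading $z_i$ --- is a function $\alpha(z_{i+2},z_{i+1},z_i)$ of just three coordinates, and the paper simply takes
\[
A_M(z)=\frac{1+\sum_i M^{i+1}\,\alpha(z_{i+2},z_{i+1},z_i)}{1+\sum_i M^{i+1}\,|\alpha(z_{i+2},z_{i+1},z_i)|},
\]
interpolating $\alpha$ and $|\alpha|$ to constant degree. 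Without this locality-of-state argument, your $N_i$ and $D_i$ are left unspecified and the degree bound is unsupported. Once you add it, your sketch collapses to exactly the paper's proof (with a window of $3$, not $16$).
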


\begin{proof}
Consider the deterministic finite automaton in
Figure~\ref{fig:dfa}.
The automaton has two terminal states (labeled ``$+$'' and ``$-$'')
and three nonterminal states (the start state and two additional
states). We interpret the output of the automaton to be $+1$ and
$-1$ at the two terminal states, respectively, and $0$ otherwise.
A string $z=(z_n,z_{n-1},\dots,z_1,0)\in\{0,\pm1,\pm2\}^{n+1},$
when read by the automaton left to right, forces it to output
exactly $\sign(\sum_{i=1}^n 2^i z_i).$ If the automaton is
currently at a nonterminal state, this state is determined
uniquely by the last two symbols read.  Hence, the output of the
automaton on input 
$z=(z_n,z_{n-1},\dots,z_1,0)\in\{0,\pm1,\pm2\}^{n+1}$ is given by
\begin{align*}
\sign\PARENS{\sum_{i=0}^n 2^i \alpha(z_{i+2},z_{i+1},z_i)}
\end{align*}
for a suitable map $\alpha\colon\{0,\pm1,\pm2\}^3\to\{0,-1,+1\},$
where we adopt the shorthand $z_{n+1}=z_{n+2}=z_0=0.$ 
Put 
\begin{align*}
A_M(z) = \frac{ 1 + \sum_{i=0}^n M^{i+1} \alpha(z_{i+2},z_{i+2},z_i) }
           { 1 + \sum_{i=0}^n M^{i+1} \abs{\alpha(z_{i+2},z_{i+2},z_i)} }.
\end{align*}
By interpolation, the numerator and denominator of $A_M$ can be represented
by polynomials of degree no more than $4\times 4\times 4=64.$ On the
other hand, we have $\|f-A_M\|_\infty\to 0$ as $M\to\infty.$
\end{proof}

\begin{figure}[t]
\centering
\includegraphics[width=0.8\textwidth]{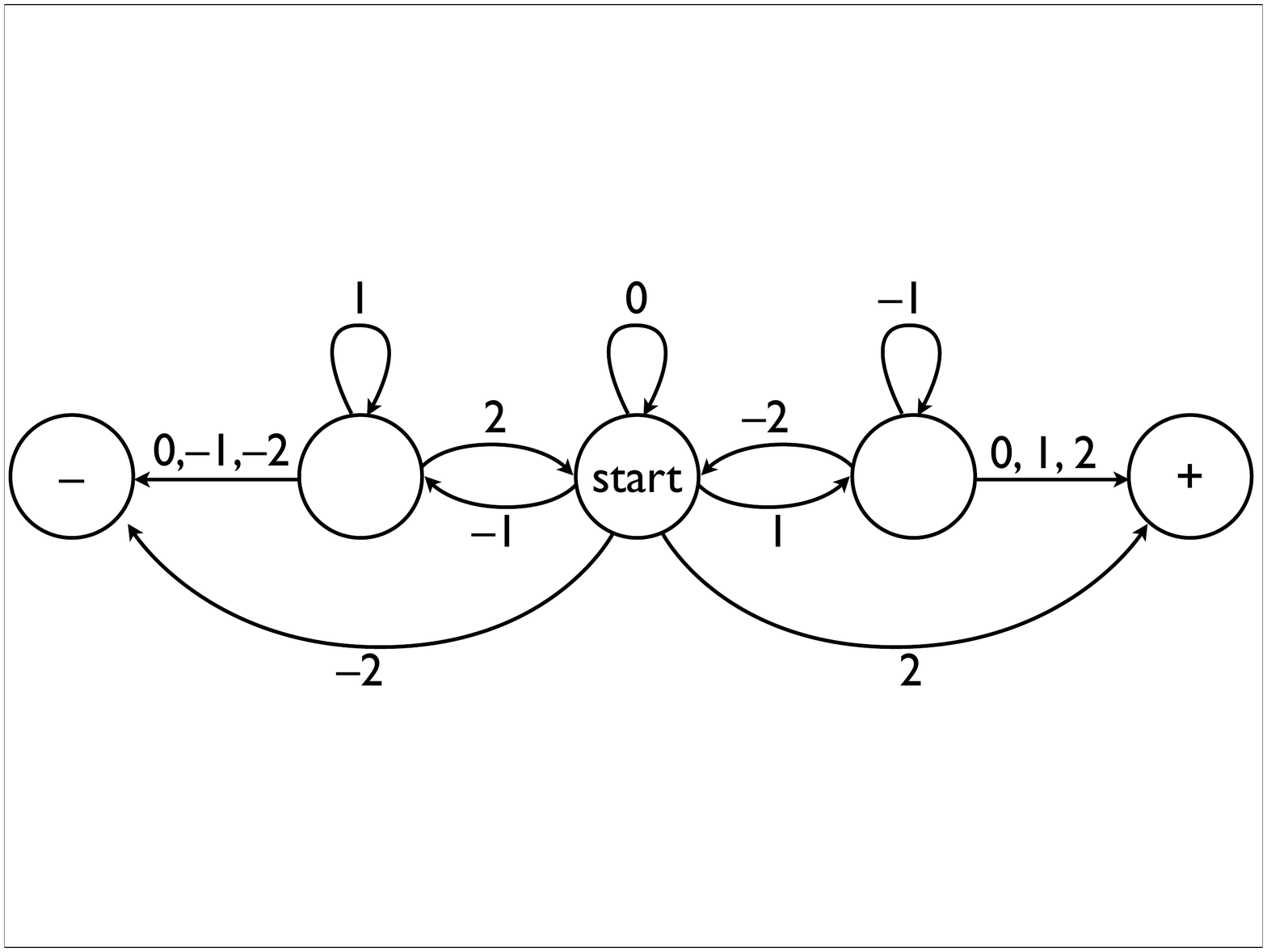}
\caption{Finite automaton for the proof of Lemma~\ref{lem:012-dfa}.}
\label{fig:dfa}
\end{figure}

We are now prepared to prove our desired upper bounds for halfspaces.

\begin{theorem}
\label{thm:upper-fnk}
Let $f\colon\moo^{nk}\to\moo$ be the function
given by 
\begin{align}
f(x) = \sign\left(1+\sum_{i=1}^n\sum_{j=1}^k 2^ix_{ij}\right).
\label{eqn:f-n-k}
\end{align}
Then 
\begin{align} 
R^+(f,64k\lceil \log k\rceil+1) =0.
\label{eqn:R0}
\end{align}
In addition, for all integers $d\geq1,$
\begin{align}
R^+(f,d) \leq 1 - (k2^{n+1})^{-1/d}.
\label{eqn:Rd}
\end{align}
\end{theorem}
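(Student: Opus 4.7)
The two claims are handled separately; I sketch each in turn. For~(\ref{eqn:Rd}), the plan is to compose Newman's univariate rational approximant of the sign function with the affine polynomial $N(x) := 1 + \sum_{i,j} 2^{i} x_{ij}$. I would first observe that $N(x)$ is always an odd integer (each $2^i x_{ij}$ with $i \geq 1$ is even), so $|N(x)| \in [1,\, k \cdot 2^{n+1}]$. Applying Theorem~\ref{thm:newman-approx} with parameter $k \cdot 2^{n+1}$ yields a degree-$d$ rational function $S(t)$ with positive denominator on $[-k \cdot 2^{n+1},-1] \cup [1, k \cdot 2^{n+1}]$ and error at most $1 - (k \cdot 2^{n+1})^{-1/d}$ against $\sign t$; since $N(x)$ has total degree $1$ in $x$, the composition $S(N(x))$ will be the sought degree-$d$ rational approximant of $f$ with positive denominator on $\moo^{nk}$.

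For~(\ref{eqn:R0}), the plan is to reduce to Lemma~\ref{lem:012-dfa} via a polynomial substitution. I would aim to construct polynomials $z_1(x),\allowbreak z_2(x),\ldots,z_{N'}(x)$ on $\moo^{nk}$, each taking values in $\{0,\pm 1,\pm 2\}$, of total degree at most $k\lceil \log_2 k\rceil$ in $x$, and satisfying the identity
\[
\sum_{\ell=1}^{N'} 2^{\ell}\, z_{\ell}(x) \;=\; \sum_{i=1}^n \sum_{j=1}^k 2^{i} x_{ij} \qquad (x \in \moo^{nk}),
\]
with $N' = n + O(\log k)$. Given such a decomposition, Lemma~\ref{lem:012-dfa} applied to the halfspace $g(z) := \sign(1 + \sum_{\ell} 2^{\ell} z_{\ell})$ supplies a family of degree-$64$ rational approximants $A_M(z)$ with positive denominator converging pointwise to $g$ as $M\to\infty$. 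Substituting $z_{\ell} := z_{\ell}(x)$ into $A_M$ would produce rational approximants of $f$ of total degree at most $64 k \lceil \log_2 k \rceil + 1$ in $x$, with positive denominator, converging pointwise to $f$; the additive $+1$ absorbs any off-by-one arising in the substitution.

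To build the digits $z_\ell$, I would proceed in two stages. First, for each row $i$, decompose the row-sum $y_i := \sum_j x_{ij} \in \{-k,\ldots,k\}$ into signed base-$2$ digits $y_i = \sum_{\ell'=0}^{m-1} 2^{\ell'}\, d_{i,\ell'}$ with $d_{i,\ell'} \in \{0,\pm 1,\pm 2\}$ and $m = \lceil \log_2 k \rceil + O(1)$; since $y_i$ takes only $k+1$ values, each $d_{i,\ell'}$ can be realized as a polynomial of degree at most $k$ in $x_{i,\cdot}$ via Lagrange interpolation in the value of $y_i$. Placing $d_{i,\ell'}$ at position $i+\ell'$ and summing the shifted decompositions across $i$ yields aggregate digits that can exceed $\{0,\pm 1,\pm 2\}$ by a factor of $O(\log k)$. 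A redundant signed-digit carry-save normalization would then reduce each aggregate back into $\{0,\pm 1,\pm 2\}$, with the crucial feature that each normalized digit $z_\ell$ depends on aggregate values at only $O(\log k)$ adjacent positions. Since each such position corresponds to a single row contributing $k$ variables, $z_\ell$ will be a function of at most $O(k\log k)$ bits of $x$, hence will have multilinear degree at most $k\lceil \log_2 k\rceil$.

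The main obstacle will be the carry-save normalization step: naive base-$2$ carry propagation would cascade across all positions and destroy the locality that underpins the $k \lceil \log_2 k\rceil$ degree bound. Avoiding this will require fully exploiting the redundancy of the digit set $\{0,\pm 1,\pm 2\}$ so that each carry is confined to an $O(\log k)$ window of positions. Once locality is secured, both the multilinear-degree bound on each $z_\ell$ and the final composition with $A_M$ from Lemma~\ref{lem:012-dfa} reduce to routine bookkeeping.
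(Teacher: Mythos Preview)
Your treatment of~(\ref{eqn:Rd}) coincides with the paper's: compose Newman's degree-$d$ approximant of $\sign t$ on $[1,k2^{n+1}]\cup[-k2^{n+1},-1]$ with the degree-$1$ form $1+\sum_{i,j}2^ix_{ij}$.

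For~(\ref{eqn:R0}) your route and the paper's both terminate in Lemma~\ref{lem:012-dfa}, but the reductions differ. You decompose each row-sum $y_i=\sum_j x_{ij}$ separately into $O(\log k)$ signed bits, which leaves $O(\log k)$-bounded aggregate digits at each bit-position and forces you into the carry-save normalization you flag as the main obstacle. That step can in fact be carried out---an Avizienis-style iterated recoding halves the digit magnitude each round and converges in $O(\log\log k)$ rounds, each propagating one position, so every final $\{0,\pm1,\pm2\}$ digit depends on only $O(\log k)$ consecutive rows and hence on $O(k\log k)$ Boolean inputs---but you stop short of supplying it. The paper sidesteps carries altogether by grouping the other way: it bundles $\Delta=\lceil\log k\rceil$ \emph{consecutive rows} into blocks $S_\ell=\sum_{i=1}^{\Delta}\sum_j 2^{i-1}x_{\ell\Delta+i,j}$, each an integer with $|S_\ell|<2^{2\Delta}$ depending on $k\Delta$ inputs, and writes each $S_\ell$ with $2\Delta$ digits in $\{0,\pm1\}$. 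Because $S_\ell$ sits at bit-position $\ell\Delta$ and spans $2\Delta$ positions, same-parity blocks tile without overlap; splitting into even- and odd-indexed $S_\ell$ thus gives two $\{0,\pm1\}$ streams whose sum lands directly in $\{0,\pm1,\pm2\}$, with each digit a polynomial of degree at most $k\Delta$ and no normalization needed. Your approach buys nothing extra and leaves a step to fill in; the paper's parity split is the cleaner realization of the same idea.
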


\noindent
In particular, Theorem~\ref{thm:upper-fnk} settles all upper bounds on
$\rdeg_\epsilon(f)$ in Theorem~\ref{thm:main-approx-hs}.

\begin{proof}[Proof of Theorem \textup{\ref{thm:upper-fnk}}]
Theorem~\ref{thm:newman-approx} immediately implies
(\ref{eqn:Rd}) in view of the representation (\ref{eqn:f-n-k}).
It remains to prove (\ref{eqn:R0}). In the degenerate case $k=1,$ we have
$f\equiv x_{n1}$ and thus (\ref{eqn:R0}) holds. In what follows, we assume
that $k\geq2$ and put $\Delta=\lceil\log k\rceil.$
We adopt the convention that $x_{ij}\equiv 0$ for $i>n.$ For
$\ell=0,1,2,\dots,$ define
\begin{align*}
S_\ell 
 = \sum_{i=1}^{\Delta}\sum_{j=1}^k 2^{i-1} x_{\ell\Delta+i,j}.
\end{align*}
Then 
\begin{multline}
\sum_{i=1}^n \sum_{j=1}^k 2^{i-1}x_{ij} = 
\PARENS{
S_0 + 2^{2\Delta} S_2 + 2^{4\Delta} S_4 + 2^{6\Delta} S_6 +
\cdots}\\
+\PARENS{
2^{\Delta}S_1 + 2^{3\Delta} S_3 + 2^{5\Delta} S_5 + 2^{7\Delta} S_7 +
\cdots}.
\label{eqn:integer-sum}
\end{multline}
Now, each $S_\ell$ is an integer in $[-2^{2\Delta}+1,2^{2\Delta}-1]$
and therefore admits a representation as
\begin{align*}
S_\ell = z_{\ell,1} + 2z_{\ell,2} + 2^2z_{\ell,3} + \cdots +
2^{2\Delta-1}z_{\ell,2\Delta},
\end{align*}
where $z_{\ell,1},\dots,z_{\ell,2\Delta}\in\{-1,0,+1\}.$
Furthermore, each $S_\ell$ only depends on $k\Delta$ of the
original variables $x_{ij},$ whence $z_{\ell,1},\dots,z_{\ell,2\Delta}$
can all be viewed as polynomials of degree at most $k\Delta$ in the
original variables. Rewriting (\ref{eqn:integer-sum}),
\begin{align*}
\sum_{i=1}^n \sum_{j=1}^k 2^{i-1}x_{ij} 
&= \left(\sum_{i\geq 1} 2^{i-1} z_{\ell(i),j(i)}\right)
+ \left(\sum_{i\geq \Delta+1} 2^{i-1} z_{\ell'(i),j'(i)}\right)
\end{align*}
for appropriate indexing functions $\ell(i),\ell'(i),j(i),j'(i).$ Thus,
\begin{align*}
f(x) \equiv \sign\left(
1 + \sum_{i=1}^\Delta 2^i \undercbrace{z_{\ell(i),j(i)}}
 + \sum_{i\geq \Delta+1}2^i \undercbrace{\left(z_{\ell(i),j(i)} +
 z_{\ell'(i),j'(i)}\right)}
 \right).
\end{align*}
Since the underbraced expressions range in $\{0,\pm1,\pm2\}$ and are
polynomials of degree at most $k\Delta$ in the original variables,
Lemma~\ref{lem:012-dfa} implies (\ref{eqn:R0}).
\end{proof}

\subsection{Preparatory work} \label{sec:preparatory}

This section sets the stage for our rational approximation 
lower bounds with some preparatory
results about halfspaces. It will be convenient to establish some
additional notation, for use in this section only.  Here, we 
typeset real vectors in boldface ($\xbold_1,\xbold_2,\zbold,\vbold$) to better
distinguish them from scalars.  The $i$th component of a vector
$\xbold\in \Re^n$ is denoted by $(\xbold)_i,$ while the symbol $\xbold_i$
is reserved for another \emph{vector} from some enumeration.  In
keeping with this convention, we let $\ebold_i$ denote the vector with
$1$ in the $i$th component and zeroes everywhere else.  For
$\xbold,\ybold\in\Re^n,$ the vector $\xbold\ybold\in\Re^n$ is given by
$(\xbold\ybold)_i\equiv (\xbold)_i(\ybold)_i.$ More generally, for a polynomial
$p$ on $\Re^k$ and vectors $\xbold_1,\dots,\xbold_k\in\Re^n,$ we define
$p(\xbold_1,\dots,\xbold_k)\in\Re^n$ by $(p(\xbold_1,\dots,\xbold_k))_i =
p((\xbold_1)_i,\dots,(\xbold_k)_i).$ The expectation of a random variable
$\xbold\in\Re^n$ is defined componentwise, i.e., the vector
$\Exp[\xbold]\in\Re^n$ is given by $(\Exp[\xbold])_i\equiv \Exp[(\xbold)_i].$


For convenience, we adopt the notational shorthand $\alpha^0=1$
for all $\alpha\in\Re.$ 
In particular, if $\xbold\in \Re^n$ is a
given vector, then $\xbold^0 = (1,1,\dots,1)\in \Re^n.$
A scalar $\alpha\in\Re,$ when interpreted as a vector, stands for
$(\alpha,\alpha,\dots,\alpha).$ This shorthand allows one to speak of
$\Span\{1,\zbold,\zbold^2,\dots,\zbold^k\},$ for example, where $\zbold\in\Re^n$ is
a given vector.

\begin{theorem}
\label{thm:mu-b}
Let $N$ and $m$ be positive integers. Then reals
$\alpha_0,\alpha_1,\dots,\alpha_{4m}$ exist with the following
property: for each $\bbold\in\zoo^N,$ there is a probability
distribution $\mu_\bbold$ on $\{0,\pm1,\dots,\pm m\}^N$ such that
\begin{align*}
\Exp_{\vbold\sim \mu_\bbold}[(2\vbold + \bbold)^d] = 
	(\alpha_d,\alpha_d,\dots,\alpha_d), 
	\qquad d=0,1,2,\dots,4m.
\end{align*}
\end{theorem}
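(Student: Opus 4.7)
The plan is to first reduce this $N$-dimensional statement to a one-dimensional one by independence. Since the expectation in the conclusion is componentwise, it suffices to construct, for each $b \in \{0,1\}$, a probability distribution $\nu_b$ on $\{0,\pm 1,\dots,\pm m\}$ such that $\Exp_{v \sim \nu_b}[(2v+b)^d] = \alpha_d$ for all $d = 0,1,\dots,4m$. Given such $\nu_0, \nu_1$, one simply takes $\mu_\bbold = \nu_{b_1} \times \nu_{b_2} \times \cdots \times \nu_{b_N}$ on $\{0,\pm 1,\dots,\pm m\}^N$.

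After the change of variable $w = 2v+b$, this reduces to finding probability distributions $p$ supported on the even integers $E = \{-2m,-2m+2,\dots,2m\}$ and $q$ supported on the odd integers $O = \{-2m+1,-2m+3,\dots,2m+1\}$ whose first $4m$ moments agree. Noting that $E \cup O = \{-2m,-2m+1,\dots,2m+1\}$ consists of $4m+2$ consecutive integers, I would interpret $p-q$ (extended by zero) as a signed measure on $E \cup O$ that must be orthogonal to all polynomials of degree at most $4m$. The space of such polynomials on $4m+2$ points has dimension $4m+1$, so its orthogonal complement is one-dimensional, which suggests a canonical candidate.

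The candidate is $r(v) = (-1)^v \binom{4m+1}{v+2m}$ for $v \in E \cup O$, which is exactly the classical annihilator from Fact~\ref{fact:comb}: writing any polynomial $p$ of degree at most $4m$ in the variable $i = v+2m$, one has $\sum_{i=0}^{4m+1} (-1)^i \binom{4m+1}{i} p(i) = 0$. The crucial feature is that $r$ has a clean sign pattern: $r(v) > 0$ exactly when $v$ is even, i.e., $r > 0$ on $E$ and $r < 0$ on $O$. I would therefore define $p := r/Z$ on $E$ and $q := -r/Z$ on $O$, where $Z = \sum_{v \in E} r(v) = \sum_{k \text{ even}} \binom{4m+1}{k} = 2^{4m}$ (this common value for $E$ and $-O$ follows automatically from the fact that $r$ is orthogonal to the constant polynomial). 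Both $p$ and $q$ are then genuine probability distributions, and the matching moments $\alpha_d$ are defined to be the common value $\sum_{v \in E} p(v) v^d = \sum_{v \in O} q(v) v^d$, which agree for each $d \le 4m$ by the orthogonality of $r$.

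The conceptual obstacle is recognizing the right dual object, i.e., anticipating that the annihilator supplied by Fact~\ref{fact:comb} on $4m+2$ consecutive integers will have the precise sign pattern needed to split cleanly into distributions supported on the even and odd points. Once this alignment between parity of $v$ and sign of $(-1)^v\binom{4m+1}{v+2m}$ is spotted, the remaining verifications (nonnegativity, normalization via $\sum_{k \text{ even}}\binom{4m+1}{k} = 2^{4m}$, and moment matching via Fact~\ref{fact:comb}) are routine binomial bookkeeping.
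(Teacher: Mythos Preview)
Your proposal is correct and is essentially the same argument as the paper's: the paper takes $\mu_\bbold=\lambda_{b_1}\times\cdots\times\lambda_{b_N}$ with $\lambda_0(t)=16^{-m}\binom{4m+1}{2m+2t}$ and $\lambda_1(t)=16^{-m}\binom{4m+1}{2m+2t+1}$, which under your change of variable $w=2t+b$ are exactly your distributions $p$ and $q$ with the same normalization $Z=2^{4m}=16^{m}$, and the moment matching is verified via the identical appeal to Fact~\ref{fact:comb}. The only difference is expository: you motivate the construction by identifying the one-dimensional orthogonal complement and its sign pattern, whereas the paper simply writes down $\lambda_0,\lambda_1$ and checks the identity.
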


\begin{proof}
Let $\lambda_0$ and $\lambda_1$ be the distributions on
$\{0,\pm1,\dots,\pm m\}$ given by
\begin{align*}
  \lambda_0(t) = 16^{-m}{4m+1 \choose 2m+2t}, \qquad
  \lambda_1(t) = 16^{-m}{4m+1 \choose 2m+2t+1}.
\end{align*}
Then for $d=0,1,\dots,4m,$ one has
\begin{multline}
\Exp_{t\sim\lambda_0}[(2t)^d] - \Exp_{t\sim\lambda_1}[(2t+1)^d]\\
= 16^{-m} \sum_{t=0}^{4m+1} (-1)^t {4m+1\choose t}(t-2m)^d
=0, \qquad 
\label{eqn:indisting}
\end{multline}
where (\ref{eqn:indisting}) holds by Fact~\ref{fact:comb}.
Now, let 
$
\mu_\bbold=
\lambda_{(\bbold)_1} \times
\lambda_{(\bbold)_2} \times
\cdots
\times \lambda_{(\bbold)_N}.
$
Then in view of (\ref{eqn:indisting}), the theorem holds by letting
$\alpha_d = \Exp_{\lambda_0}[(2t)^d]$ for $d=0,1,2,\dots,4m.$
\end{proof}

Using the previous theorem, we will now establish another auxiliary
result pertaining to halfspaces.

\begin{theorem}
\label{thm:distribution}
Put $\zbold=(-2^n,-2^{n-1},\dots,-2^0,2^0,\dots,2^{n-1},2^n)\in \Re^{2n+2}.$ 
There are random
variables $\xbold_1,\xbold_2,\dots,\xbold_{n+1}\in
\{0,\pm1,\pm2,\dots,\pm (3n+1)\}^{2n+2}$ such that:
\begin{align}
\sum_{i=1}^{n+1} 2^{i-1}\xbold_i \equiv \zbold 
  \label{eqn:support}
\end{align}
and
\begin{align}
\Exp\left[\prod_{i=1}^n \xbold_i^{d_i}\right]
 \in\Span\{(1,1,\dots,1)\}
  \label{eqn:span-property}
\end{align}
for $d_1,\dots,d_n\in \{0,1,\dots,4n\}.$
\end{theorem}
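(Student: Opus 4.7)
Following the strategy sketched in Section~\ref{sec:techniques}, my plan is to construct the $\xbold_i$'s as small random perturbations of a deterministic signed-binary encoding of $\zbold$, with the randomness harvested from Theorem~\ref{thm:mu-b}. Writing $(\zbold)_j = \sigma_j\,2^{k_j}$ for $\sigma_j \in \{\pm 1\}$ and $k_j \in \{0,1,\ldots,n\}$, I would set $(\bbold^{(i)})_j = \sigma_j\,\mathbf{1}[i = k_j+1]$ so that $\sum_{i=1}^{n+1} 2^{i-1}\bbold^{(i)} = \zbold$ pointwise. Applying Theorem~\ref{thm:mu-b} with $N = 2n+2$ and $m = n$ furnishes, for each $\bbold \in \{0,1\}^{2n+2}$, a product distribution $\mu_\bbold$ on $\{0,\pm 1,\ldots,\pm n\}^{2n+2}$ whose coordinatewise $d$-th moments equal a common constant $\alpha_d$ for all $d \leq 4n$; this construction extends to signed-bit parameters $(\bbold^{(i)})_j \in \{-1,0,1\}$ by translating the supports of Theorem~\ref{thm:mu-b}'s $\lambda_0$ and $\lambda_1$ by one unit, at a harmless cost to the support. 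I would then draw $\vbold_i \sim \mu_{\bbold^{(i)}}$ independently for $i = 1,\ldots,n$, set $\xbold_i = 2\vbold_i + \bbold^{(i)}$, and define $\xbold_{n+1}$ through the linear constraint~(\ref{eqn:support}).

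The moment identity~(\ref{eqn:span-property}) then comes out almost for free. Theorem~\ref{thm:mu-b} (and its signed-bit extension) gives $\Exp[(\xbold_i)_j^d] = \alpha_d$, independently of $j$, coordinate by coordinate. Since the $\vbold_i$'s are independent across $i \in \{1,\ldots,n\}$, the joint moment factorizes:
\[
\Exp\!\left[\,\prod_{i=1}^n (\xbold_i)_j^{d_i}\right] \;=\; \prod_{i=1}^n \alpha_{d_i},
\]
which is a vector of identical entries, as required.

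The principal obstacle, and the step I expect to demand real technical effort, is verifying that $\xbold_{n+1}$ is integer-valued with entries in $\{0,\pm 1,\ldots,\pm(3n+1)\}$. A direct substitution shows that the integrality condition reduces to $\sum_{i=1}^{n} 2^i (\vbold_i)_j \equiv 0 \pmod{2^n}$ holding almost surely, which is not a consequence of the product structure of $\mu_{\bbold^{(i)}}$ and is in tension with the range bound: demanding, say, $(\vbold_1)_j \equiv 0 \pmod{2^{n-1}}$ forces $(\vbold_1)_j$ to be either $0$ or of magnitude $\geq 2^{n-1} \gg n$, violating $|(\vbold_1)_j| \leq n$. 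Resolving this tension is the heart of the proof and will likely require one of two refinements: conditioning the joint law of the $\vbold_i$'s on the sublattice $\{\vbold : \sum 2^i v_i \equiv 0 \bmod 2^n\}$ while arguing that Theorem~\ref{thm:mu-b}'s low-degree moment identities survive the conditioning; or redistributing the signed digits $(\bbold^{(i)})_j$ among the bits so that the divisibility holds deterministically, leaving the residual randomness small enough to fit within $\pm(3n+1)$, with the remaining slack $n$ between $2m+1 = 2n+1$ and $3n+1$ absorbing the correction needed in the final coordinate $\xbold_{n+1}$.
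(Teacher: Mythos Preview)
You have correctly diagnosed the obstruction in your own construction, and neither of your two proposed repairs is the one that works. Conditioning the independent product law of $(\vbold_1,\dots,\vbold_n)$ on the sublattice $\{\sum 2^i\vbold_i\equiv0\bmod 2^n\}$ destroys exactly the coordinatewise product-moment structure you need for~(\ref{eqn:span-property}); and ``redistributing the signed digits'' among the positions does nothing to address the fact that the residual random part $\sum_i 2^i\vbold_i$ must be divisible by $2^n$ regardless of how the deterministic digits are arranged.

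The missing idea is to drop independence in favor of a \emph{telescoping Markovian} construction that makes the linear constraint~(\ref{eqn:support}) hold identically, not just in expectation. Write $\xbold_i = 2\ybold_i - \ybold_{i-1} + c_i$, where $c_i=\ebold_{n+1+i}-\ebold_{n+2-i}$ is the deterministic $\{-1,0,1\}$-vector with $\sum_{i=1}^{n+1}2^{i-1}c_i=\zbold$. Then $\sum_i 2^{i-1}\xbold_i = 2^{n+1}\ybold_{n+1}-\ybold_0+\zbold$, so~(\ref{eqn:support}) is automatic once $\ybold_0=\ybold_{n+1}=0$, for \emph{any} choice of $\ybold_1,\dots,\ybold_n$. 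Now generate the $\ybold_i$ sequentially: given $\ybold_{i-1}$, take the unique integer vector $\ubold$ with $\bbold:=2\ubold-\ybold_{i-1}+c_i\in\{0,1\}^{2n+2}$, draw $\vbold\sim\mu_\bbold$ from Theorem~\ref{thm:mu-b}, and set $\ybold_i=\vbold+\ubold$. This forces $\xbold_i=2\vbold+\bbold$, so \emph{conditioned on the past} $\ybold_0,\dots,\ybold_{i-1}$ one has $\Exp[\xbold_i^{d}\mid\text{past}]=\alpha_{d}(1,\dots,1)$ by Theorem~\ref{thm:mu-b}. Peeling off the factors via the tower property gives~(\ref{eqn:span-property}) exactly as in your independent case. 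Finally, the recursion $\ybold_i=\vbold+\ubold$ with $|\ubold|\leq(|\ybold_{i-1}|+2)/2$ and $|\vbold|\leq n$ is contractive, so $|\ybold_i|=O(n)$ uniformly and $\xbold_{n+1}=-\ybold_n+c_{n+1}$ lands in $\{0,\pm1,\dots,\pm(3n+1)\}$. The point is that the $\xbold_i$ need only have the correct moments \emph{conditionally}, not be jointly independent; sacrificing independence is precisely what lets the constraint, the integrality, and the range bound coexist.
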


\begin{proof}
Let 
\begin{align*}
 \xbold_i = 2\ybold_i-\ybold_{i-1} +\ebold_{n+1+i}-\ebold_{n+2-i}, \qquad
i=1,2,\dots,n+1,
\end{align*}
where $\ybold_0,\ybold_1,\dots,\ybold_{n+1}$ are suitable random variables
with $\ybold_0\equiv\ybold_{n+1}\equiv 0.$ Then property~(\ref{eqn:support})
is immediate. We will construct $\ybold_0,\ybold_1,\dots,\ybold_{n+1}$
such that the remaining property~(\ref{eqn:span-property}) holds as well.

Let $N=2n+2$ and $m=n$ in Theorem~\ref{thm:mu-b}. Then reals
$\alpha_0,\alpha_1,\dots,\alpha_{4n}$ exist with the property that for each
$\bbold\in\zoo^{2n+2},$ a probability distribution $\mu_\bbold$ can be
found on $\{0,\pm1,\dots,\pm n\}^{2n+2}$ such that
\begin{align}
\Exp_{\vbold\sim\mu_\bbold} [(2\vbold +\bbold)^d]=\alpha_d(1,1,\dots,1), 
  \qquad d=0,1,\dots,4n.
\label{eqn:alpha_d}
\end{align}
Now, we will specify the distribution of $\ybold_0,\ybold_1,\dots,\ybold_n$ by 
giving an algorithm for generating $\ybold_i$ from $\ybold_{i-1}.$ First, recall
that $\ybold_0\equiv \ybold_{n+1}\equiv 0.$ 
The algorithm for generating $\ybold_{i}$ given
$\ybold_{i-1}$ $(i=1,2,\dots,n)$ is as follows.
\begin{itemize}
\item [\textup{(1)}] Let $\ubold$ be the unique integer vector
such that $2\ubold - \ybold_{i-1} + \ebold_{n+1+i} - \ebold_{n+2-i} \in\zoo^{2n+2}.$ 
\item [\textup{(2)}] Draw a random vector $\vbold\sim \mu_{\bbold},$
where $\bbold=2\ubold - \ybold_{i-1} + \ebold_{n+1+i} - \ebold_{n+2-i}.$ 
\item [\textup{(3)}] Set $\ybold_i = \vbold + \ubold.$
\end{itemize}
One easily verifies that 
$\ybold_0,\ybold_1,\dots,\ybold_{n+1}\in\{0,\pm1,\dots,\pm 3n\}^{2n+2}.$

Let $R$ denote the resulting joint distribution of
$(\ybold_0,\ybold_1,\dots,\ybold_{n+1}).$ Let $i\leq n.$ Then conditioned
on any fixed value of $(\ybold_0,\ybold_1,\dots,\ybold_{i-1})$ in the support
of $R,$ the random variable $\xbold_i$ is by definition independent of
$\xbold_1,\dots,\xbold_{i-1}$ and is distributed identically to $2\vbold
+\bbold,$ for some fixed vector $\bbold\in\zoo^{2n+2}$ and a random
variable $\vbold\sim \mu_\bbold.$ In view of (\ref{eqn:alpha_d}), we
conclude that
\begin{align*}
\Exp\left[\prod_{i=1}^n \xbold_i^{d_i}\right]
 = (1,1,\dots,1) \prod_{i=1}^n \alpha_{d_i} 
\end{align*}
for all $d_1,d_2,\dots,d_n\in\{0,1,\dots,4n\},$
which establishes (\ref{eqn:span-property}).  It remains to note that
$\xbold_1,\xbold_2,\dots,\xbold_n\in\{-2n,-2n+1,\dots,-1,0,1,\dots,
2n,2n+1\}^{2n+2},$ whereas $\xbold_{n+1}= -\ybold_n
+\ebold_{2n+2}-\ebold_1\in\{0,\pm1,\dots,\pm(3n+1)\}^{2n+2}.$
\end{proof}

At last, we arrive at the main theorem of this section, which will play a
crucial role in our analysis of the rational approximation of halfspaces. 

\begin{theorem}
\label{thm:aux-thr-deg}
For $i=0,1,2,\dots,n,$ define
\begin{align*}
A_i &= \BRACES{(x_1,\dots,x_{n+1})\in\{0,\pm1,\dots,\pm(3n+1)\}^{n+1}
  \colon \quad
\sum_{j=1}^{n+1} 2^{j-1}x_j = 2^i}.
\end{align*}
Let $p(x_1,\dots,x_{n+1})$ be a real polynomial with sign $(-1)^i$
throughout $A_i$ $(i=0,1,2,\dots,n)$ and sign $(-1)^{i+1}$ throughout
$-A_i$ $(i=0,1,2,\dots,n).$ Then
\begin{align*}
 \deg p \geq 2n+1. 
\end{align*}
\end{theorem}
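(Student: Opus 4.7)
The plan is to invoke Theorem~\ref{thm:distribution} and use the random vectors it provides to reduce the multivariate statement to a univariate sign-change count. Let $\xbold_1,\dots,\xbold_{n+1}\in\{0,\pm 1,\dots,\pm(3n+1)\}^{2n+2}$ be as in that theorem, with $\sum_{i=1}^{n+1} 2^{i-1}\xbold_i\equiv\zbold=(-2^n,\dots,-1,1,\dots,2^n).$ We may assume $\deg p\le 4n,$ since otherwise the conclusion is immediate.

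The first step is to eliminate $x_{n+1}$ using~(\ref{eqn:support}). Substituting $2^{-n}(t-\sum_{j=1}^{n}2^{j-1}x_j)$ for $x_{n+1}$ in $p$ yields
\[
\tilde p(x_1,\dots,x_n,t)\;=\;p\!\left(x_1,\dots,x_n,\,2^{-n}\!\left(t-\sum_{j=1}^{n}2^{j-1}x_j\right)\right),
\]
a polynomial of total degree at most $\deg p$ satisfying $\tilde p(\xbold_1,\dots,\xbold_n,\zbold)=p(\xbold_1,\dots,\xbold_{n+1})$ componentwise (by the defining linear relation of $\zbold$).

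The second step is to take expectations monomial by monomial. Write $\tilde p=\sum_{d_1,\dots,d_n,e} c_{d_1,\dots,d_n,e}\,t^e\prod_{i=1}^n x_i^{d_i}.$ Each individual exponent $d_i$ is at most $\deg p\le 4n,$ so property~(\ref{eqn:span-property}) gives $\Exp\!\left[\prod_{i=1}^n \xbold_i^{d_i}\right]=\alpha_{d_1,\dots,d_n}(1,1,\dots,1)$ for scalars $\alpha_{d_1,\dots,d_n}.$ Consequently,
\[
\Exp[p(\xbold_1,\dots,\xbold_{n+1})]_k\;=\;q\bigl((\zbold)_k\bigr),\qquad k=1,\dots,2n+2,
\]
where $q(t)=\sum c_{d_1,\dots,d_n,e}\,\alpha_{d_1,\dots,d_n}\,t^e$ is a univariate polynomial of degree at most $\deg p.$

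The third step is a sign-change count for $q.$ By the support and linear constraints of $\xbold_1,\dots,\xbold_{n+1},$ at each coordinate $k$ the point $((\xbold_1)_k,\dots,(\xbold_{n+1})_k)$ lies in $A_i$ when $(\zbold)_k=2^i$ and in $-A_i$ when $(\zbold)_k=-2^i.$ So $p$ takes there the constant sign $(-1)^i$ or $(-1)^{i+1}$ respectively, and this sign is inherited by the expectation, i.e.\ by $q((\zbold)_k).$ Ordering the $2n+2$ arguments as $-2^n<-2^{n-1}<\cdots<-1<1<\cdots<2^n,$ the sign pattern of $q$ strictly alternates, producing $2n+1$ sign changes and hence $\deg p\ge \deg q\ge 2n+1.$ The main subtlety lies in the second step: property~(\ref{eqn:span-property}) bounds the \emph{individual} exponents of each monomial by $4n,$ not the total degree, which is why the opening reduction to $\deg p\le 4n$ is essential.
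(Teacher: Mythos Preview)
Your argument is correct and essentially identical to the paper's: both substitute out $x_{n+1}$ via the linear relation~(\ref{eqn:support}), use~(\ref{eqn:span-property}) to reduce $\Exp[p(\xbold_1,\dots,\xbold_{n+1})]$ to $q(\zbold)$ for a univariate $q$ of degree at most $\deg p$, and then count the $2n+1$ sign changes of $q$ along $-2^n<\cdots<-1<1<\cdots<2^n$. The only cosmetic difference is that the paper phrases this as a contradiction from $\deg p\le 2n$, whereas you split off the trivial case $\deg p>4n$ and argue directly; the content is the same.
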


\begin{proof}
For the sake of contradiction, suppose that $p$ has degree no greater
than $2n.$ Put $\zbold=(-2^n,-2^{n-1},\dots,-2^0,2^0,\dots,2^{n-1},2^n).$
Let $\xbold_1,\dots,\xbold_{n+1}$ be the random variables constructed in
Theorem~\ref{thm:distribution}.  By (\ref{eqn:span-property}) and
the identity $\xbold_{n+1}\equiv2^{-n}\zbold - \sum_{i=1}^n 2^{i-n-1}\xbold_i,$
we have
\begin{align*}
\Exp[p(\xbold_1,\dots,\xbold_{n+1})] \in \Span\{1,\zbold,\zbold^2,\dots,\zbold^{2n}\}, 
\end{align*}
whence 
$
   \Exp[p(\xbold_1,\dots,\xbold_{n+1})] = q(\zbold) 
$
for a univariate polynomial $q\in P_{2n}.$ 
In view of (\ref{eqn:support}) and the
assumed sign behavior of $p,$ we have $\sign q(2^i) = (-1)^i$ and $\sign
q(-2^i) = (-1)^{i+1},$ for $i=0,1,2,\dots,n.$ Therefore, $q$ has
at least $2n+1$ roots. Since $q\in P_{2n},$ we arrive at a
contradiction. It follows that the assumed polynomial $p$ does
not exist.
\end{proof}

\begin{remark}
The passage $p\mapsto q$ in the proof of Theorem~\ref{thm:aux-thr-deg}
is precisely the linear degree-nonincreasing map
$M\colon\Re[x_1,x_2,\dots,x_{n+1}]\to\Re[x]$ described previously
in the Introduction.
\end{remark}

\subsection{Lower bounds}
\label{sec:rational}

The purpose of this section is to prove that the canonical halfspace 
cannot be approximated well by a rational function of low degree. A
starting point in our discussion is a criterion for inapproximability
by low-degree rational functions, which is applicable not only to
halfspaces but any odd Boolean functions on Euclidean space.

\newcommand{\X}{(x)}
\newcommand{\Xm}{(-x)}

\begin{theorem}[Criterion for inapproximability]
\label{thm:rational-degree-criterion-hs}
Fix a nonempty finite subset $S\subset\Re^m$ with $S\cap -S=\varnothing.$
Define $f\colon S\cup -S\to\moo$ by
\begin{align*}
f\X = \ccases{
+1, & x\in S, \\
-1, & x\in -S.
}
\end{align*}
Let $\psi$ be a real function such that
\begin{align}
\psi\X > 
\delta |\psi\Xm|, && x\in S,
\label{eqn:r-unbalanced-hs}
\end{align}
for some $\delta\in(0,1)$ and 
\begin{align}
\sum_{S\cup -S} \psi\X u\X = 0
\label{eqn:r-orthog}
\end{align}
for every polynomial $u$ of degree at most $d.$ Then
\begin{align*}
R^+(f,d) \geq \frac{2\delta}{1 + \delta}.
\end{align*}
\end{theorem}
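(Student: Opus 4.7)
The plan is to fix an approximant $p/q$ of degree at most $d$ with $q>0$ on $S\cup -S$ and pointwise error at most $\epsilon$, and show that $\epsilon \geq 2\delta/(1+\delta)$; the theorem then follows by taking the infimum over such $p,q$. The first step I would take is to reduce to the case in which $p$ is an odd polynomial and $q$ is an even one. Set $\tilde p(x) = \tfrac12[p(x)-p(-x)]$ and $\tilde q(x) = \tfrac12[q(x)+q(-x)]$, both of degree at most $d$; since $S\cup -S$ is closed under negation and $q>0$ there, $\tilde q > 0$ on $S\cup -S$ as well. Writing $r(x) = p(x)/q(x)$, so that $|r-f|\leq\epsilon$ pointwise, the oddness of $f$ gives $|r(-x)+f(x)| = |r(-x)-f(-x)|\leq\epsilon$, and a short computation yields
\[ \left|\frac{\tilde p(x)}{\tilde q(x)} - f(x) \right| \leq \frac{|r(x)-f(x)|\,q(x) + |r(-x)+f(x)|\,q(-x)}{q(x)+q(-x)} \leq \epsilon. \]
So we may assume $p$ odd and $q$ even.

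With this reduction, orthogonality of $\psi$ to polynomials of degree at most $d$ gives $\sum \psi p = \sum \psi q = 0$. Let $\alpha = \sum_{x\in S} \psi(x) q(x)$ and $\beta = \sum_{x\in -S} |\psi(x)| q(x)$; then $\alpha > 0$ (since $\psi > 0$ on $S$ and $q>0$), and $\sum_{-S}\psi q = -\alpha$ by orthogonality. Splitting $\sum\psi f q$ by $S$ and $-S$, where $f \equiv 1$ on $S$ and $f\equiv -1$ on $-S$, yields $\sum\psi f q = \alpha - (-\alpha) = 2\alpha$. On the other hand, $|f(x)q(x)-p(x)| \leq \epsilon q(x)$ combined with $\sum \psi p = 0$ gives
\[ 2\alpha \;=\; \Bigl|\sum \psi(fq - p)\Bigr| \;\leq\; \epsilon\sum|\psi|\,q \;=\; \epsilon(\alpha+\beta), \]
so $\beta \geq (2-\epsilon)\alpha/\epsilon$.

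The matching upper bound on $\beta$ is where the evenness of $q$ becomes essential: by the change of variables $x\mapsto -x$ and the identity $q(-x)=q(x)$,
\[ \beta = \sum_{x\in S} |\psi(-x)|\,q(-x) = \sum_{x\in S} |\psi(-x)|\,q(x), \]
and the unbalanced hypothesis $\delta|\psi(-x)| < \psi(x)$ on $S$ then gives $\delta\beta < \alpha$. Combining yields $(2-\epsilon)/\epsilon \leq \beta/\alpha < 1/\delta$, equivalently $\epsilon > 2\delta/(1+\delta)$, and the theorem follows. The main obstacle I anticipate is locating the symmetrization step: without first reducing to an even $q$, the natural estimate of $\beta$ via the unbalanced condition produces $(1/\delta)\sum_{x\in S}\psi(x)q(-x)$, and the asymmetric hypothesis on $\psi$ gives no way to compare this to $\alpha=\sum_{x\in S}\psi(x)q(x)$. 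The symmetrization relies crucially on $f$ being odd, which is precisely the setup of the theorem.
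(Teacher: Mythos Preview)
Your proof is correct and is essentially the same argument as the paper's, organized slightly differently. The paper forms the single polynomial $u(x)=q(x)+q(-x)+p(x)-p(-x)$, observes that for $x\in S$ one has $u(x)\geq(2-\epsilon)\{q(x)+q(-x)\}$ and $|u(-x)|\leq\epsilon\{q(x)+q(-x)\}$, applies the orthogonality of $\psi$ to $u$, and finishes with a pointwise pigeonhole step (some $x\in S$ must satisfy $\psi(x)u(x)\leq|\psi(-x)u(-x)|$). Your symmetrized $\tilde p,\tilde q$ are precisely the odd and even parts of this $u$ (up to a factor of $2$), and instead of a pointwise step you run a summed estimate via $\alpha,\beta$; the resulting inequality $(2-\epsilon)/\epsilon<1/\delta$ is identical.
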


\begin{proof}
Fix polynomials $p,q$ of degree at most $d$ such that $q$ is positive on
$S\cup -S.$ Put
\[ \epsilon = \max_{S\cup -S}
    \left| f\X - \frac{p\X}{q\X} \right|.  \]
We assume that $\epsilon<1$ since otherwise there is nothing to show.
For $x\in S,$ 
\begin{align}
(1-\epsilon)q\X \leq p\X \leq (1+\epsilon)q\X
\label{eqn:pqeps1-hs}
\end{align}
and
\begin{align}
(1-\epsilon)q\Xm \leq -p\Xm \leq (1+\epsilon)q\Xm.
\label{eqn:pqeps2-hs}
\end{align}
Consider the polynomial $u\X = q\X + q\Xm + p\X - p\Xm.$
Equations (\ref{eqn:pqeps1-hs}) and (\ref{eqn:pqeps2-hs}) show
that for $x\in S,$ one has
$u\X \geq (2-\epsilon) \{ q\X + q\Xm \}$
and 
$|u\Xm| \leq \epsilon \{ q\X + q\Xm \},$
whence 
\begin{align}
u&\X \geq \PARENS{\frac 2\epsilon -1} |u\Xm|, &&x\in S.
\label{eqn:u-unbalanced-hs}
\intertext{We also note that}
u&\X >0, &&x\in S.
\label{eqn:u-positive-hs}
\end{align}
Since $u$ has degree at most $d,$ we have by (\ref{eqn:r-orthog}) that
\begin{align*}
  \sum_{x\in S} \{ \psi\X u\X  + \psi\Xm u\Xm\}
= \sum_{S\cup -S} \psi\X u\X  
= 0,
\end{align*}
whence
\begin{align*}
\psi\X u\X  \leq |\psi\Xm u\Xm|
\end{align*}
for some $x\in S.$ At the same time, it follows from
(\ref{eqn:r-unbalanced-hs}),
(\ref{eqn:u-unbalanced-hs}), and (\ref{eqn:u-positive-hs}) that
\begin{align*}
 \psi\X u\X  > \delta\PARENS{\frac 2\epsilon -1}
   \lvert \psi\Xm u\Xm\rvert, && x\in S.
\end{align*}
We immediately obtain
$\delta(\{2/\epsilon\} -1)< 1, $
as was to be shown.
\end{proof}

\begin{remark}
The method of Theorem~\ref{thm:rational-degree-criterion-hs} amounts
to reformulating (\ref{eqn:u-unbalanced-hs}) and (\ref{eqn:u-positive-hs})
as a linear program and exhibiting a solution to its dual.  The
presentation above does not explicitly use the language of linear
programs or appeal to duality, however, because our goal is solely
to prove the correctness of our method and not its completeness.
\end{remark}

Using the criterion of Theorem~\ref{thm:rational-degree-criterion-hs} and our
preparatory work in Section~\ref{sec:preparatory}, we now establish a
key lower bound for the rational approximation of halfspaces within constant
error. 

\begin{theorem}
\label{thm:rtl-approx-of-halfspace}
Let $f\colon \{0,\pm1,\dots,\pm(3n+1)\}^{n+1}\to\moo$ be given by 
\begin{align*}
f(x) = \sign\PARENS{1 + \sum_{i=1}^{n+1}2^ix_i}.
\end{align*}
Then
\begin{align*}
R^+(f,n)=\Omega(1).
\end{align*}
\end{theorem}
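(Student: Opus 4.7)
The plan is to reduce the multivariate rational approximation of $f$ to the univariate rational approximation of $\sign t$ on the $2n+2$ nodes $\{\pm 2^0,\pm 2^1,\dots,\pm 2^n\}$ via the degree-nonincreasing map $M$ built from the random vectors of Theorem~\ref{thm:distribution}, and then to prove the univariate lower bound using Theorem~\ref{thm:rational-degree-criterion-hs}.

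For the reduction, for each component $k\in\{1,\dots,2n+2\}$ let $\nu_k$ denote the marginal distribution of $((\xbold_1)_k,\dots,(\xbold_{n+1})_k);$ by~(\ref{eqn:support}) this is a probability measure on $\{0,\pm 1,\dots,\pm(3n+1)\}^{n+1}$ supported inside the hyperplane $\sum_{j=1}^{n+1}2^{j-1}y_j=\zbold_k,$ and hence inside $A_{i(k)}$ or $-A_{i(k)},$ where $\zbold_k=\pm 2^{i(k)}.$ Using~(\ref{eqn:span-property}) together with the linear dependence~(\ref{eqn:support}) to eliminate $x_{n+1}=2^{-n}(\zbold_k-\sum_{i\le n}2^{i-1}x_i),$ one checks that for every polynomial $P$ of total degree at most $n$ there is a univariate $\tilde P\in P_n$ with $\Exp_{\nu_k}[P]=\tilde P(\zbold_k).$ This is precisely the linear degree-nonincreasing map $M\colon\Re[x_1,\dots,x_{n+1}]\to\Re[x]$ described in the introduction. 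Applying $M$ separately to the numerator $p$ and the positive denominator $q$ of any rational approximant $p/q$ of $f$ of degree at most $n$ and error $\epsilon,$ the pointwise inequality $(1-\epsilon)q(x)\le\sign(\zbold_k)p(x)\le(1+\epsilon)q(x)$ on $\operatorname{supp}(\nu_k)$ integrates to $(1-\epsilon)\tilde q(\zbold_k)\le\sign(\zbold_k)\tilde p(\zbold_k)\le(1+\epsilon)\tilde q(\zbold_k),$ with $\tilde q(\zbold_k)>0.$ Thus $\tilde p/\tilde q$ approximates $\sign t$ on $\{\pm 2^i\}_{i=0}^n$ within error $\epsilon$ using degree $\le n$ and positive denominator, so it suffices to prove $R^+(\sign,n)=\Omega(1)$ on these $2n+2$ nodes.

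For the univariate lower bound, I invoke Theorem~\ref{thm:rational-degree-criterion-hs} in dimension one with $S=\{2^0,\dots,2^n\}.$ The task reduces to exhibiting a signed measure $\phi$ on the $2n+2$ nodes, orthogonal to $P_n$ ($n+1$ linear constraints, leaving an $(n+1)$-dimensional family of solutions), with the pointwise unbalanced property $\phi(2^i)>\delta|\phi(-2^i)|$ for every $i$ and some absolute constant $\delta\in(0,1).$ By the standard divided-difference identity, every such $\phi$ has the form $\phi(z_k)=r(z_k)/w'(z_k)$ for $w(t)=\prod_{i=0}^n(t-2^i)(t+2^i)$ and some $r\in P_n.$ A Newman-style choice, with the $n$ zeros of $r$ placed at the staggered locations $-2^{i-1/2}$ $(i=1,\dots,n),$ makes
\begin{equation*}
\frac{|r(2^i)|}{|r(-2^i)|}\;\ge\;\frac{\sqrt{2}+1}{\sqrt{2}-1}\;=\;3+2\sqrt{2}
\end{equation*}
at every positive node by Newman's balance inequality~(\ref{eqn:newman-balance}) specialized to $N=2^n,$ $k=n,$ while the denominator satisfies $|w'(2^i)|=|w'(-2^i)|$ by the reflection symmetry of the node set.

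The main obstacle I anticipate is the sign bookkeeping in the last step: both $r(\pm 2^i)$ and $w'(\pm 2^i)$ carry signs that alternate in $i,$ so the naive Newman choice does not automatically give $\phi(2^i)>0$ at every positive node. Resolving this calls for either an additional sign-correcting polynomial factor (which must still keep $r\in P_n$) or, more flexibly, taking an appropriate affine combination inside the $(n+1)$-dimensional solution space. Once this sign issue is handled and the constant ratio $\phi(2^i)\ge\delta|\phi(-2^i)|$ is verified via the geometric-series estimate above, Theorem~\ref{thm:rational-degree-criterion-hs} yields $R^+(\sign,n)\ge 2\delta/(1+\delta)=\Omega(1),$ and combined with the reduction of the first paragraph this completes the proof.
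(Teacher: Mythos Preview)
Your reduction in the first paragraph is correct and is in fact a cleaner packaging of the same linear map than the paper uses: integrating the pointwise sandwich $(1-\epsilon)q\le f\cdot p\le(1+\epsilon)q$ against each marginal $\nu_k$ and invoking (\ref{eqn:span-property}) after eliminating $x_{n+1}$ via (\ref{eqn:support}) does yield a degree-$n$ univariate approximant to $\sign t$ on $\{\pm 2^0,\dots,\pm 2^n\}$ with positive denominator. The paper instead keeps everything multivariate: it first uses Theorem~\ref{thm:aux-thr-deg} (which hides the map $M$ inside its proof) to get $\degthr(g)>2n$, then invokes Gordan's theorem for an abstract dual witness $\phi$ on $A\cup -A$, and finally multiplies $\phi$ by the Newman-type polynomial $p(x)=\prod_{j=0}^{n-1}(-2^j\sqrt 2+\sum_i 2^{i-1}x_i)$. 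Your route is more explicit (the divided-difference weight $1/w'$ replaces the abstract Gordan witness), at the cost of re-deriving in the second step what the paper gets for free from Theorem~\ref{thm:aux-thr-deg}.

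The sign obstacle you flag is real for the choice $r(t)=\prod_j(t+2^{j-1/2})$, but the fix is not an affine combination or an extra factor---simply place the roots at $+2^{j-1/2}$ instead. With $r(t)=\prod_{j=1}^n(t-2^{j-1/2})$ one has $\sign r(2^i)=(-1)^{n-i}=\sign w'(2^i)$, so $\phi(2^i)>0$ at every positive node. The price is that the ratio now goes the other way: you get
\[
\frac{|r(2^i)|}{|r(-2^i)|}\;=\;\prod_{j=1}^n\frac{|2^i-2^{j-1/2}|}{2^i+2^{j-1/2}}\;\ge\;\Bigl(\prod_{m\ge 1}\frac{2^{m/2}-1}{2^{m/2}+1}\Bigr)^{2}\;>\;\exp(-9\sqrt 2),
\]
a constant strictly less than $1$, not the $3+2\sqrt 2$ you quoted from Newman's inequality (which applies to the reflected polynomial). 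This is exactly the estimate the paper carries out, and it is all that Theorem~\ref{thm:rational-degree-criterion-hs} requires. With this correction your argument is complete.
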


\begin{proof}
Let $A_0,A_1,\dots,A_n$ be as defined in Theorem~\ref{thm:aux-thr-deg}. Put
$A=\bigcup A_i$ and define $\g\colon A\cup -A\to\moo$ by
\begin{align*}
\g(x) = 
\ccases{
(-1)^i,&x\in A_i,\\
(-1)^{i+1},&x\in -A_i.
}
\end{align*}
Then $\degthr(f)>2n$ by Theorem~\ref{thm:aux-thr-deg}. As a result,
Theorem~\ref{thm:gordan} guarantees the existence of a function
$\phi\colon A\cup -A\to\Re,$ not identically zero, such that 
\begin{align}
\phi(x)\g(x)\geq 0, \qquad x\in A\cup -A, \label{eqn:signrep}
\end{align}
and
\begin{align}
\sum_{A\cup -A} \phi(x)u(x) = 0 \label{eqn:phiorthog}
\end{align}
for every polynomial $u$ of degree at most $2n.$ Put
\begin{align*}
p(x) = \prod_{j=0}^{n-1}
\PARENS{ - 2^j\sqrt 2 + \sum_{i=1}^{n+1} 2^{i-1}x_i}
\end{align*}
and
\begin{align*}
\psi(x) = (-1)^n\{\phi(x)-\phi(-x)\}p(x).
\end{align*}
Define $S=A\setminus\psi^{-1}(0).$ Then $S\ne\varnothing$ 
by (\ref{eqn:signrep}) and the fact that 
$\phi$ is not identically zero on $A\cup -A.$
For $x\in S,$ we have $\psi(-x)\ne 0$ and
\begin{align*}
\frac{\lvert\psi(x)\rvert}{\lvert\psi(-x)\rvert}
 = \frac{\lvert p(x)\rvert}{\lvert p(-x)\rvert}
 > \PARENS{\prod_{i=1}^\infty \frac{2^{i/2}-1}{2^{i/2}+1}}^2 
 > \exp(-9\sqrt 2), 
\end{align*}
where the final step uses the bound $(a-1)/(a+1)>\exp(-2.5/a),$
valid for $a\geq\sqrt 2.$
It follows from (\ref{eqn:signrep}) and the definition of $p$ 
that $\psi$ is positive on $S.$ Hence,
\begin{align}
\psi(x) > \exp(-9\sqrt 2)\; \lvert\psi(-x)\rvert, \qquad x\in S.
\label{eqn:psibalance}
\end{align}
For any polynomial $u$ of degree no greater than $n,$ we infer from
(\ref{eqn:phiorthog}) that
\begin{align}
\sum_{S\cup -S} \psi(x)u(x) = 
  (-1)^n\sum_{A\cup -A} 
  \{\phi(x)-\phi(-x)\} u(x) p(x) = 0.
\label{eqn:psiorthog}
\end{align}
Since $f$ is positive on $S$ and negative on $-S,$
the proof is now complete in view of (\ref{eqn:psibalance}),
(\ref{eqn:psiorthog}), and Theorem~\ref{thm:rational-degree-criterion-hs}.
\end{proof}

We have reached the main result of this section, which extends
Theorem~\ref{thm:rtl-approx-of-halfspace} to any subconstant
approximation error and to halfspaces on the hypercube.

\begin{theorem}
\label{thm:main-halfspace}
Let $F\colon\moo^{m^2}\to\moo$ be given by
\begin{align*}
F(x) = \sign\PARENS{1 + \sum_{i=1}^m\sum_{j=1}^m 2^i x_{ij}}.
\end{align*}
Then for $d<m/14,$
\begin{align}
R(F,d) \geq 1 - 2^{-\Theta(m/d)}.
\label{eqn:approx-halfspace-lower}
\end{align}
\end{theorem}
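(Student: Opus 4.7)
The plan is to reduce Theorem~\ref{thm:main-halfspace} to Theorem~\ref{thm:rtl-approx-of-halfspace} in two stages: first, via symmetrization and restriction to a carefully chosen subdomain, obtain a constant-error lower bound $R(F,d)=\Omega(1)$ at degree $d=\Theta(m)$; then amplify to the subconstant bound $R(F,d)\geq 1-2^{-\Theta(m/d)}$ at smaller $d$ using Theorem~\ref{thm:error-boosting}.

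For the first stage, I would assume $m$ is even (otherwise freeze one variable per group at a cost of constant factors) and set $n:=\lfloor(m-2)/6\rfloor=\Theta(m)$, so that $2(3n+1)\leq m$. Suppose $R(F,d)<\epsilon$. By (\ref{eqn:rational-positive-denominator}), $R^+(F,2d)<\epsilon$. Since $F$ depends only on the group sums $y_i:=\sum_j x_{ij}$ and is therefore invariant under permutations within each group of $m$ variables, Proposition~\ref{prop:symm-rational} yields a symmetric rational approximant $\tilde p(y_1,\dots,y_m)/\tilde q(y_1,\dots,y_m)$ of degree $\leq 2d$ with positive denominator and error $<\epsilon$. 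I would then restrict to the subdomain $Y^*$ defined by $y_i=2z_i$ for $i\leq n+1$ (with $z_i\in\{0,\pm 1,\dots,\pm(3n+1)\}$) and $y_i=0$ for $i>n+1$. This is a valid subdomain of the $y$-image of the hypercube because $m$ is even and $|2z_i|\leq 2(3n+1)\leq m$.

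The restriction yields a rational approximant (of degree $\leq 2d$, positive denominator, error $<\epsilon$) to the function $\sign(1+2\sum_{i=1}^{n+1}2^i z_i)$ on $\{0,\pm 1,\dots,\pm(3n+1)\}^{n+1}$. A key point is that $\sum 2^i z_i$ is always an even integer, so this sign function coincides with the function $f$ of Theorem~\ref{thm:rtl-approx-of-halfspace}. Applying that theorem then gives $\epsilon\geq R^+(f,2d)\geq R^+(f,n)=\Omega(1)$ provided $2d\leq n$, yielding $R(F,d)\geq c$ for an absolute constant $c>0$ and all $d\leq D_0:=\lfloor n/2\rfloor=\Theta(m)$. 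For the amplification stage, fix $d<m/14$ and set $k:=\lfloor D_0/d\rfloor$. When $d\leq D_0/2$ we have $k\geq D_0/(2d)=\Theta(m/d)$, and Theorem~\ref{thm:error-boosting} combined with monotonicity of $R$ gives
\begin{align*}
c \;\leq\; R(F,D_0) \;\leq\; R(F,kd) \;\leq\; 1-\PARENS{\frac{1-R(F,d)}{1+R(F,d)}}^{1/k},
\end{align*}
so $(1-R(F,d))/(1+R(F,d))\leq (1-c)^k=2^{-\Theta(m/d)}$, whence $R(F,d)\geq 1-2^{-\Theta(m/d)}$. In the remaining regime $D_0/2<d<m/14$, the ratio $m/d$ is bounded and the claim follows directly from the constant lower bound of the first stage.

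The principal obstacle is the restriction step: one must verify that $Y^*$ embeds into the hypercube's $y$-image (so the parity of $m$ must be handled), that the embedding carries the symmetrized approximant to a rational approximant of $f$ of the same degree with positive denominator, and that $\sign(1+2\sum 2^i z_i)=f(z)$ on the $z$-domain. Once this combinatorial setup is in place, invoking Theorem~\ref{thm:rtl-approx-of-halfspace} and then amplifying via Theorem~\ref{thm:error-boosting} is routine.
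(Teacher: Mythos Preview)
Your proposal is correct and follows essentially the same route as the paper: symmetrize within each block via Proposition~\ref{prop:symm-rational}, restrict so that the block sums take values $2z_i$ with $z_i\in\{0,\pm1,\dots,\pm(3n+1)\}$, observe that $\sign(1+2\sum 2^iz_i)=\sign(1+\sum 2^iz_i)$ because the inner sum is even, invoke Theorem~\ref{thm:rtl-approx-of-halfspace} to get a constant lower bound at degree $\Theta(m)$, and finally amplify via Theorem~\ref{thm:error-boosting}. The paper packages the restriction step slightly differently, introducing an auxiliary function $G$ on $(n+1)(6n+2)$ variables and showing explicitly that either $G(x)$ or $-G(-x)$ is a subfunction of $F$; this explicit substitution is how the paper handles the parity of $m$, whereas you dismiss the odd case with ``freeze one variable per group.'' That remark is a bit terse---a naive freezing does not yield offset $\pm1$---but the right choice (e.g., freeze $x_{i,m}=+1$ for $i<m$ and $x_{m,m}=-1$) gives offset $-1$ and reduces to $-f(-z)$, which has the same $R^+$ as $f$, so the gap is cosmetic rather than substantive.
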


Observe that Theorem~\ref{thm:main-halfspace} settles
the lower bounds in Theorem~\ref{thm:main-approx-hs} from the
Introduction.

\begin{proof}[Proof of Theorem~\textup{\ref{thm:main-halfspace}}.]
%
We may assume that $m\geq14,$ the claim being trivial otherwise.
Consider the function $G\colon
\moo^{(n+1)(6n+2)}\to\moo$
given by
\begin{align*}
G(x) = \sign\PARENS{1 + \sum_{i=1}^{n+1}
\; \sum_{j=1}^{6n+2}2^ix_{ij}},
\end{align*}
where $n=\lfloor (m-2)/6 \rfloor.$ For every $\epsilon>R^+(G,n),$
Proposition~\ref{prop:symm-rational} provides a rational function
$A$ on $\Re^{n+1}$ of degree at most $n$ such that, on the domain
of $G,$
\begin{align*}
\left\lvert G(x)-A\PARENS{\dots,\sum_{j=1}^{6n+2}
x_{ij},\dots}\right\rvert < \epsilon
\end{align*}
and the denominator of $A$ is positive.
Letting $f$ be the function in
Theorem~\ref{thm:rtl-approx-of-halfspace}, it follows that 
$| f(x_1,\dots,x_{n+1}) - A(2x_1,\dots,2x_{n+1}) |<\epsilon$
on the domain of $f,$ whence
\begin{align}
R^+(G,n) = \Omega(1).
\label{eqn:RnG}
\end{align}

We now claim that either $G(x)$ or $-G(-x)$ is a subfunction of
$F.$ For example, consider the following substitution for the variables
$x_{ij}$ for which $i>n+1$ or $j>6n+2$:
\begin{align*}
 &x_{mj}\gets (-1)^j,         &&(1\leq j\leq m),\\
 &x_{ij}\gets (-1)^{j+1},     &&(n+1<i<m,\quad 1\leq j\leq m),\\
 &x_{ij}\gets (-1)^{j+1},     &&(1\leq i\leq n+1, \quad \, j>6n+2).
\end{align*}
After this substitution, $F$ is a function of the remaining variables
$x_{ij}$ and is equivalent to $G(x)$ if $m$ is even, and to $-G(-x)$
if $m$ is odd. In either case, (\ref{eqn:RnG}) implies that
\begin{align}
R^+(F,n) = \Omega(1).
\label{eqn:R+nF}
\end{align}
Theorem~\ref{thm:error-boosting} shows that 
\begin{align*}
R(F,n/2)\leq 1 - \PARENS{\frac{1 - R(F,d)}{2}}^{1/\lfloor n/(2d)\rfloor}
\end{align*}
for $d=1,2,\dots,\lfloor n/2\rfloor,$ which yields 
(\ref{eqn:approx-halfspace-lower}) in light of
(\ref{eqn:rational-positive-denominator}) and (\ref{eqn:R+nF}).
\end{proof}

\section{Rational approximation of the majority function}
\label{sec:rational-approx-maj}

The goal of this section is to determine $R^+(\MAJ_n,d)$ for each
integer $d,$ i.e., to determine the least error to which a degree-$d$
multivariate rational function can approximate the majority function.
As is frequently the case with symmetric Boolean functions such as
majority, the multivariate problem of analyzing
$R^+(\MAJ_n,d)$ is equivalent to a univariate question. Specifically, given an
integer $d$ and a finite set $S\subset\Re,$ we define
\[ R^+(d,S) \,= \,\inf_{p,q}\, \max_{t\in S} 
   \left\lvert \sign t - \frac{p(t)}{q(t)} \right\rvert,\]
where the infimum ranges over $p,q\in P_d$ such that $q$ is positive on
$S.$  In other words, we study how well a rational
function of a given degree can approximate
the sign function over a finite support. We give a
detailed answer to this question in the following theorem:

\begin{theorem}[Rational approximation of \textsc{majority}]
\label{thm:R+}
Let $n,d$ be positive integers. Abbreviate 
$R=R^+(d,\{\pm1,\pm2,\dots,\pm n\}).$
For $1\leq d\leq\log n,$
\[\exp\BRACES{-\Theta\PARENS{\frac1{n^{1/(2d)}}}}
\leq R < \exp\BRACES{-\frac1{n^{1/d}}}. \]
For $\log n < d <n,$
\[ R=\exp\BRACES{-\Theta\PARENS{\frac{d}{\log (2n/d)}}}. \]
For $d\geq n,$ \[ R=0.\]
Moreover, the rational approximant is constructed explicitly in
each case.
\end{theorem}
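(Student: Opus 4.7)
My plan is to first reduce the multivariate problem to a univariate one, and then handle the three regimes of $d$ separately, exploiting symmetry to apply the univariate machinery developed in Sections \ref{sec:preparatory}--\ref{sec:rational}. The reduction is immediate from Proposition~\ref{prop:symm-rational}: since $\MAJ_n$ is symmetric, any rational approximant can be assumed to depend only on the Hamming weight, giving $R^+(\MAJ_n,d)=R^+(d,\{\pm1,\pm2,\dots,\pm n\})$ (up to the obvious change of variables from $\zoo$-valued inputs to integer sums). All subsequent analysis is univariate.

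For the upper bounds, the regime $d\geq n$ is handled by explicit interpolation: the sign function is odd and takes only $n$ distinct values on each side, so degree-$n$ polynomials $p,q$ with $p(t)-q(t)=c\prod_{i=1}^n(t-i)$ and $p(t)+q(t)=c\prod_{i=1}^n(t+i)$ give $p/q\equiv\sign$ exactly on the set, so $R=0$. For $1\leq d\leq\log n$, I would apply Theorem~\ref{thm:newman-approx} directly with $N=n,$ $k=d$ to get $R\leq 1-n^{-1/d}\leq \exp(-n^{-1/d})$. For the middle regime $\log n<d<n$, the Newman estimate alone is too weak, so I would use iterated error boosting: apply Newman at scale $k=\lceil\log n\rceil$ to get constant-factor error, then feed that approximant through Theorem~\ref{thm:error-boosting} $\Theta(d/\log(2n/d))$ times, achieving the claimed exponentially small error.

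For the lower bounds, the core tool is the inapproximability criterion of Theorem~\ref{thm:rational-degree-criterion-hs}, applied in the univariate setting with $S=\{1,2,\dots,n\}$ and $-S=\{-1,\dots,-n\}$. What I need is a function $\psi$ on $\{\pm1,\dots,\pm n\}$ that is orthogonal to $P_d$ and satisfies $\psi(i)>\delta|\psi(-i)|$ with $\delta$ as close to $1$ as possible. My plan is to build $\psi$ in two stages, mirroring the construction in Theorem~\ref{thm:rtl-approx-of-halfspace}. First, invoke Theorem~\ref{thm:gordan} to extract a ``raw'' witness $\phi$ orthogonal to $P_d$ on a carefully chosen subset of geometrically spaced nodes---for the middle regime this means nodes at roughly $\exp(i\log(2n/d))$ for $i=0,1,\dots,\Theta(d)$---which is possible because no polynomial of degree $d$ can sign-represent an alternating pattern on $\Theta(d)$ well-separated nodes. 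Second, multiply $\phi(x)-\phi(-x)$ by an odd ``balancing polynomial'' $p(x)$ of degree at most $d$ whose zeros are placed at irrational multiples of the nodes (as in Theorem~\ref{thm:rtl-approx-of-halfspace}), yielding $\psi$ satisfying both the orthogonality constraint (by construction) and the imbalance ratio $|\psi(x)|/|\psi(-x)|$ bounded below by a quantity like $\prod(1-c/2^{i/k})^2$, which translates into $\delta=1-\exp(-\Theta(d/\log(2n/d)))$ in the middle regime and $\delta=1-\Theta(n^{-1/(2d)})$ in the small-$d$ regime. Substituting into $R\geq 2\delta/(1+\delta)$ gives the claimed bounds.

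The main obstacle will be the construction of the raw witness $\phi$ in the right regime---i.e., choosing the node set so that (i) a dual orthogonal measure exists with degree budget $d$ and (ii) the subsequent multiplication by $p$ preserves the imbalance ratio tightly. In the halfspace setting of Section~\ref{sec:preparatory}, the analogous step required the delicate linear-algebraic construction of random variables $\xbold_1,\dots,\xbold_{n+1}$ with prescribed moments (Theorem~\ref{thm:distribution}); here the univariate analogue should be considerably cleaner, reducing to a counting argument on sign changes of a degree-$d$ polynomial on a logarithmically-scaled grid. Nevertheless, matching the upper and lower bounds in the middle regime to within a constant factor in the exponent, rather than merely up to polynomial factors, will require the geometric node placement to be finely calibrated so that the ``self-reducibility'' property advertised in Section~\ref{sec:techniques} kicks in: the distribution of orthogonal mass on a geometric progression looks, up to scaling, like the same problem on a shorter progression, which is precisely what allows the error bound to match Newman's upper bound.
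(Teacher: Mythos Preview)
Your upper bound for the middle regime $\log n < d < n$ has a genuine gap. Starting from constant error at degree $\Theta(\log n)$ and boosting via Theorem~\ref{thm:error-boosting} (or even the sharper Corollary~\ref{cor:general-amplify}) yields at best $R^+(f,d)\leq c^{\,\Theta(d/\log n)}$ for a constant $c<1$, i.e., error $\exp\{-\Theta(d/\log n)\}$. The target is $\exp\{-\Theta(d/\log(2n/d))\}$, and these differ by a factor of $\log n/\log(2n/d)$ in the exponent, which blows up as $d$ approaches $n$. The paper does \emph{not} bootstrap; it builds an explicit polynomial $p(t)=\prod_{i=1}^k(t+i)\prod_{i=1}^k(t+k\Delta^i)$ with $k=\lfloor d/2\rfloor$ and $\Delta=(n/k)^{1/k}$. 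The first product forces $p(-t)=0$ for $t=1,\dots,k$, so only the range $[k,n]$ matters; on that range the effective ratio is $n/k$ rather than $n$, and Newman-style geometric nodes there give the correct exponent $\Theta(k/\log(n/k))$ via Lemma~\ref{lem:newman-exp-product}. This ``spend half the degree killing small integers'' idea is not recoverable from generic error amplification.

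Your lower-bound plan also diverges from the paper, and in the middle regime it is underspecified in a way that matters. You propose geometric nodes at roughly $\exp(i\log(2n/d))$ for $i=0,\dots,\Theta(d)$, but these escape $\{1,\dots,n\}$ immediately. The paper instead takes $\Delta=(n/d)^{1/d}$ (so $\Delta$ is barely larger than $1$) and crucially \emph{combines} $d$ geometric nodes $\lfloor d\Delta^i\rfloor$ with an arithmetic block $\{1,\dots,k\}$ where $k=\lceil d/\log(n/d)\rceil$; the arithmetic block is handled by a separate factor $r_1(t)=\prod_{i=1}^k(t-i-\tfrac12)$ whose imbalance is controlled by the binomial estimate of Lemma~\ref{lem:binomial-ratio}. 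Without this two-scale construction you will not match the exponent. Separately, the paper never invokes Gordan for majority: the dual witness is written down explicitly as $\psi(t)=(-1)^t\binom{2n}{n+t}r(t)$, with orthogonality to $P_d$ coming for free from Fact~\ref{fact:comb} once $\deg r\leq 2n-d-1$. Your Gordan-based route would work for the low-degree regime (and gives the same $|p(t)|/|p(-t)|$ ratio after antisymmetrization), but it is less explicit and does not obviously extend to the mixed arithmetic--geometric node set needed for $\log n<d<n$.
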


Theorem~\ref{thm:R+} is the main result of this section. We establish
it in the next two subsections, giving separate treatment to 
the cases $d\leq\log n$
and $d>\log n$ (see Theorems~\ref{thm:rational-small-degree}
and~\ref{thm:rational-high-degree}, respectively). In the concluding
subsection, we give the promised proof that 
$R^+(d,\{\pm1,\dots,\pm n\})$ 
and $R^+(\MAJ_n,d)$ are essentially equivalent.


\subsection{Low-degree approximation}

We start by specializing the criterion of
Theorem~\ref{thm:rational-degree-criterion-hs} to the problem of
approximating the sign function on the set $\{\pm1,\pm2,\dots,\pm
n\}.$

\begin{theorem}
\label{thm:rational-degree-criterion-maj}
Let $d$ be an integer, $0\leq d\leq 2n-1.$ 
Fix a nonempty subset $S\subseteq\{1,2,\dots,n\}.$
Suppose that there exists a real $\delta\in(0,1)$
and a polynomial $r\in P_{\,2n-d-1}$ that vanishes on 
$\{-n,\dots,n\}\setminus (S\cup -S)$ and obeys
\begin{equation}
(-1)^t r(t) > \delta \lvert r(-t)\rvert,\qquad t\in S.
\label{eqn:r-unbalanced-maj}
\end{equation}
Then
\begin{align} 
R^+(d,S\cup -S) \geq \frac{2\delta}{1+\delta}.
\label{eqn:maj-criterion}
\end{align}
\end{theorem}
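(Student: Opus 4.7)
The plan is to reduce Theorem~\ref{thm:rational-degree-criterion-maj} to the general criterion of Theorem~\ref{thm:rational-degree-criterion-hs}, applied to $f = \sign$ on the one-dimensional set $S \cup -S \subseteq \{\pm1,\dots,\pm n\}$. What I need to do is produce, from the given polynomial $r$, a dual witness $\psi$ on $S \cup -S$ satisfying the two hypotheses of that theorem (the ``unbalanced'' pointwise condition and orthogonality to polynomials of degree $\le d$).

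The natural candidate is
\[
\psi(t) = (-1)^t \binom{2n}{n+t}\, r(t),
\]
viewed initially as a function on $\{-n,\dots,n\}$. Because $r$ vanishes on $\{-n,\dots,n\}\setminus (S\cup -S)$, the support of $\psi$ lies in $S\cup -S$. The unbalanced condition follows immediately from the symmetry $\binom{2n}{n+t}=\binom{2n}{n-t}$ together with hypothesis~(\ref{eqn:r-unbalanced-maj}): for every $t\in S$,
\[
\psi(t) = (-1)^t\binom{2n}{n+t} r(t) > \delta \binom{2n}{n-t}\lvert r(-t)\rvert = \delta\,\lvert\psi(-t)\rvert.
\]

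For the orthogonality $\sum_{t\in S\cup -S}\psi(t)u(t)=0$ for all $u\in P_d$, I extend the sum to all of $\{-n,\dots,n\}$ using the support property of $r$, reducing the claim to the identity
\[
\sum_{t=-n}^{n} (-1)^t \binom{2n}{n+t}\, p(t) = 0 \qquad\text{for every } p\in P_{2n-1}.
\]
This is exactly Fact~\ref{fact:comb} applied to the integer $2n$ after the shift $i\mapsto t+n$. Since $\deg(r\cdot u) \le (2n-d-1)+d = 2n-1$, the polynomial $p=r\cdot u$ is eligible, and orthogonality follows.

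With both properties in hand, Theorem~\ref{thm:rational-degree-criterion-hs} immediately delivers the bound $R^+(d,S\cup -S) \ge \tfrac{2\delta}{1+\delta}$ claimed in~(\ref{eqn:maj-criterion}). The only genuinely creative step is the choice of the binomial weighting in $\psi$: a naive witness $\psi(t)=(-1)^t r(t)$ preserves the sign asymmetry but is \emph{not} orthogonal to low-degree polynomials, and weighting by $\binom{2n}{n+t}$ is precisely what converts the orthogonality requirement into Fact~\ref{fact:comb}. Everything else is bookkeeping.
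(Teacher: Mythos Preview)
Your proof is correct and follows essentially the same approach as the paper: you define the identical dual witness $\psi(t)=(-1)^t\binom{2n}{n+t}r(t)$, verify the unbalanced condition via the binomial symmetry $\binom{2n}{n+t}=\binom{2n}{n-t}$, establish orthogonality by extending the sum to $\{-n,\dots,n\}$ and invoking Fact~\ref{fact:comb} on the degree-$(2n-1)$ polynomial $r\cdot u$, and then apply Theorem~\ref{thm:rational-degree-criterion-hs}. The paper's proof is line-for-line the same.
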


\begin{proof}
Define $f\colon S\cup-S\to\moo$ by $f(t)=\sign t.$ Define $\psi\colon S\cup
-S\to\Re$ by
$\psi(t) = (-1)^t {2n\choose n+t} r(t).$
Then (\ref{eqn:r-unbalanced-maj}) takes on the form
\begin{align}
\psi(t) > \delta \lvert \psi(-t)\rvert, \qquad t\in S.
\label{eqn:maj-psi-unbalanced}
\end{align}
For every polynomial $u$ of degree at most $d,$ we have
\begin{align}
\sum_{S\cup -S} \psi(t)u(t) = 
\sum_{t=-n}^n (-1)^t {2n\choose n+t} r(t)u(t)=0
\label{eqn:maj-psi-orthogonal}
\end{align}
by Fact~\ref{fact:comb}. Now (\ref{eqn:maj-criterion}) is immediate
from (\ref{eqn:maj-psi-unbalanced}), (\ref{eqn:maj-psi-orthogonal}),
and Theorem~\ref{thm:rational-degree-criterion-hs}.
\end{proof}

Using Theorem~\ref{thm:rational-degree-criterion-maj}, we will
now determine the optimal error in the approximation of the majority
function by rational functions of degree up to $\log n.$ The case
of higher degrees will be settled in the next subsection.

\begin{theorem}[Low-degree rational approximation of \textsc{majority}]
\label{thm:rational-small-degree}
Let $d$ be an integer, $1\leq d\leq \log n.$ Then
\[\exp\BRACES{-\Theta\PARENS{\frac1{n^{1/(2d)}}}}
\leq R^+(d,\{\pm1,\pm2,\dots,\pm n\}) 
< \exp\BRACES{-\frac1{n^{1/d}}}. \]
\end{theorem}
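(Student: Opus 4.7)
The plan is to prove the two bounds separately, keeping in mind that the statement asserts strict inequality at the upper end.

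For the upper bound, I would invoke Newman's Theorem~\ref{thm:newman-approx} directly with $N=n$ and $k=d$: this produces a rational function of degree $d$ approximating $\sign t$ on $[-n,-1]\cup[1,n]$ with error at most $1-n^{-1/d}$, whose denominator is positive on that set. Restricting to $\{\pm1,\pm2,\dots,\pm n\}$ and using $1-x<\exp(-x)$ for $x>0$ yields the strict upper bound $R^+(d,\{\pm1,\dots,\pm n\})<\exp(-n^{-1/d})$ at once.

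For the lower bound, I would apply the dual-witness criterion of Theorem~\ref{thm:rational-degree-criterion-maj}: it suffices to construct a nonempty $S\subseteq\{1,\ldots,n\}$ and a polynomial $r\in P_{2n-d-1}$ vanishing on $\{-n,\ldots,n\}\setminus(S\cup -S)$ with $(-1)^t r(t)>\delta\,|r(-t)|$ on $S$ for some $\delta=1-O(n^{-1/(2d)})$, since then $R^+\geq 2\delta/(1+\delta)\geq \exp(-\Theta(n^{-1/(2d)}))$. The natural choice is a geometric family $S=\{s_1<s_2<\cdots<s_k\}$ with $k$ a constant multiple of $d$ and $s_i$ approximately $n^{i/k}$, modelled on Newman's interpolation nodes.

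I would factor $r(t)=q(t)\Pi(t)$, where $\Pi(t)=\prod_{j\in J}(t-j)$ with $J=\{-n,\ldots,n\}\setminus (S\cup -S)$ enforces the vanishing. Since $J$ is symmetric, $|\Pi(-t)|=|\Pi(t)|$ on $S$, and the criterion collapses to
\[
\sigma(s_i)\,q(s_i)>\delta\,|q(-s_i)|,\qquad i=1,\ldots,k,
\]
for an alternating sign pattern $\sigma(s_i)=(-1)^{s_i}\sign\Pi(s_i)$ that one computes in closed form from counting elements of $J$ exceeding $s_i$. I would take $q(t)=\prod_\ell(t+a_\ell)$ as a Newman-style product of linear factors with positive offsets $a_\ell$ in geometric progression interlacing $\{s_i\}$, so that the interval-by-interval estimate of (\ref{eqn:newman-balance}) delivers the ratio $q(s_i)/|q(-s_i)|\geq (1+n^{-1/(2d)})/(1-n^{-1/(2d)})$, giving the claimed $\delta$.

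The main obstacle will be matching the alternating pattern $\sigma$ with $q$: a plain Newman product is positive on the positive reals and cannot alternate by itself. I expect to resolve this by augmenting $q$ with additional low-degree factors whose roots supply the prescribed sign flips on $S$, using the $O(d)$ slack in the degree budget of $r$; alternatively, one can shift the $s_i$ to fix their parities relative to $|J\cap(s_i,\infty)|$. This sign-alignment is where the factor-of-$2$ gap between the exponents $n^{-1/d}$ (upper) and $n^{-1/(2d)}$ (lower) naturally appears, since those extra degrees go toward alignment rather than toward sharpening the Newman balance ratio.
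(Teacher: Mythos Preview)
Your upper bound is correct and is exactly the paper's argument.

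For the lower bound, you have the right framework --- Theorem~\ref{thm:rational-degree-criterion-maj} with a geometric $S$ and $r=q\Pi$ --- but your proposed $q=\prod_\ell(t+a_\ell)$ is the wrong object, and neither of your fixes works as stated. A direct computation gives $\sigma(s_i)=(-1)^{s_i}\sign\Pi(s_i)=(-1)^{\,n-|S|+i}$, which alternates in the \emph{rank} $i$ and is independent of the value (in particular the parity) of $s_i$; so ``shifting the $s_i$ to fix their parities'' cannot possibly correct the signs. As for ``augmenting $q$ with additional low-degree factors'': to make the sign alternate across $|S|$ points you need $|S|-1$ roots between consecutive elements of $S$ on the positive axis, and since the degree budget is $\deg q\le 2|S|-d-2$, with $|S|=d+1$ there is no room left for a Newman piece. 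Those ``extra'' sign-flipping factors \emph{are} the construction, and the Newman product is a red herring.

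The paper does precisely this: take $S=\{1,\Delta,\ldots,\Delta^d\}$ with $\Delta=\lfloor n^{1/d}\rfloor$ and set $q(t)=\prod_{i=0}^{d-1}(t-\Delta^{i+1/2})$, so the roots of $q$ sit at the positive geometric means of consecutive $S$-points. This makes $q$ alternate in sign on $S$, and one checks that $r(t)=(-1)^n q(t)\Pi(t)$ has $\sign r(\Delta^j)=(-1)^{\Delta^j}$ as required. The ratio is then
\[
\frac{|r(\Delta^j)|}{|r(-\Delta^j)|}
=\prod_{i=0}^{d-1}\frac{|\Delta^j-\Delta^{i+1/2}|}{\Delta^j+\Delta^{i+1/2}}
>\Bigl(\prod_{i\ge1}\frac{\Delta^{i/2}-1}{\Delta^{i/2}+1}\Bigr)^{2}
\ge\exp\Bigl(-\frac{18}{\sqrt\Delta}\Bigr),
\]
so $\delta=\exp(-\Theta(n^{-1/(2d)}))$. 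Thus the factor of~$2$ between the exponents in the upper and lower bounds comes from the $\sqrt\Delta$ in this product estimate (the interlacing roots live at half-integer powers of~$\Delta$), not from degrees spent on sign alignment.
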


\begin{proof}
The upper bound is immediate from Newman's Theorem~\ref{thm:newman-approx}.
For the lower bound, put $\Delta = \lfloor n^{1/d}\rfloor\geq2$ and
$S = \{1,\Delta,\Delta^2,\dots,\Delta^d\}.$
Define $r\in P_{\,2n-d-1}$ by 
\[ r(t) = (-1)^n\prod_{i=0}^{d-1} (t-\Delta^i\sqrt\Delta) 
\prod_{i\in \{-n,\dots,n\}\setminus (S\cup-S)}  (t-i). \]
For $j=0,1,2,\dots,d,$ 
\begin{align*}
\frac{|r(\Delta^j)|}{|r(-\Delta^j)|} &= 
   \prod_{i=0}^{j-1} 
    \frac{\Delta^j - \Delta^i\sqrt\Delta}
         {\Delta^j + \Delta^i\sqrt\Delta}
	\;	 
   \prod_{i=j}^{d-1} 
    \frac{\Delta^i\sqrt\Delta - \Delta^j}
         {\Delta^i\sqrt\Delta + \Delta^j} 
  > \PARENS{\prod_{i=1}^\infty
   \frac{\Delta^{i/2}- 1}{\Delta^{i/2}+ 1}}^2\\
  &> \exp\BRACES{-5\sum_{i=1}^\infty
   \frac1{\Delta^{i/2}}} 
  >\exp\BRACES{- \frac{18}{\sqrt\Delta}},
\end{align*}
where we used the bound $(a-1)/(a+1)>\exp(-2.5/a),$ 
valid for $a\geq\sqrt2.$
Since $\sign r(t)=(-1)^t$ for $t\in S,$ we conclude that 
\[ (-1)^t r(t) >
\exp\BRACES{-\frac{18}{\sqrt\Delta}}
       \lvert r(-t)\rvert, \qquad
		 t\in S.\]
Since in addition $r$ vanishes on $\{-n,\dots,n\}\setminus (S\cup-S),$ 
we infer from Theorem~\ref{thm:rational-degree-criterion-maj} that
$R^+(d,S\cup -S)\geq\exp\{-18/\sqrt \Delta\}.$
\end{proof}

\subsection{High-degree approximation}

In the previous subsection, we determined the least error in
approximating the majority function by rational functions of degree
up to $\log n.$ Our goal here is to solve the case of higher degrees.

We start with some preparatory work. First, we need to
accurately estimate products of the form $\prod_i
(\Delta^i+1)/(\Delta^i-1)$ for all $\Delta>1.$ 
A suitable \emph{lower} bound was already given by 
Newman~\cite[Lem.~1]{newman64rational}:

\begin{lemma}[Newman] 
\label{lem:newman-exp-product}
For all $\Delta>1,$
\[ \prod_{i=1}^n \frac{\Delta^i+1}{\Delta^i-1} >
\exp\BRACES{\frac{2(\Delta^n-1)}{\Delta^n(\Delta-1)}}. \]
\end{lemma}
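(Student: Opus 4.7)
The plan is to take logarithms and reduce the lemma to a routine geometric-series computation, using the standard bound $\log\{(1+x)/(1-x)\} > 2x$ for $x\in(0,1)$.

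First I would set $x = 1/\Delta^i$, so that
\[
\log\frac{\Delta^i+1}{\Delta^i-1} \;=\; \log\frac{1+1/\Delta^i}{1-1/\Delta^i}
\;=\; 2\sum_{k\geq 0}\frac{(1/\Delta^i)^{2k+1}}{2k+1}
\;>\; \frac{2}{\Delta^i},
\]
where the strict inequality holds because all higher-order terms in the expansion of $\log\{(1+x)/(1-x)\}$ are strictly positive for $x\in(0,1)$, and $1/\Delta^i\in(0,1)$ since $\Delta>1$.

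Next I would sum this estimate over $i=1,2,\dots,n$. The right-hand side is a finite geometric series whose value is
\[
\sum_{i=1}^n \frac{2}{\Delta^i} \;=\; \frac{2}{\Delta}\cdot\frac{1-\Delta^{-n}}{1-\Delta^{-1}} \;=\; \frac{2(\Delta^n-1)}{\Delta^n(\Delta-1)}.
\]
Therefore
\[
\log\prod_{i=1}^n \frac{\Delta^i+1}{\Delta^i-1} \;=\; \sum_{i=1}^n \log\frac{\Delta^i+1}{\Delta^i-1} \;>\; \frac{2(\Delta^n-1)}{\Delta^n(\Delta-1)},
\]
and exponentiating both sides yields the claimed inequality.

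There is no real obstacle here: the only ingredients are the Taylor expansion of $\log\{(1+x)/(1-x)\}$ (to get a clean termwise lower bound) and a geometric sum. The slight subtlety is ensuring that the inequality is strict, which is guaranteed because for $\Delta>1$ and $i\geq 1$ the higher-order terms $\tfrac{2}{2k+1}(1/\Delta^i)^{2k+1}$ with $k\geq 1$ are strictly positive, so the termwise bound is strict for every $i$.
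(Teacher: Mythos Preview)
Your proof is correct and is essentially the same as the paper's. The paper simply cites the elementary bound $(a+1)/(a-1)>\exp(2/a)$ for $a>1$ (which, with $a=\Delta^i$, is exactly your inequality $\log\{(1+x)/(1-x)\}>2x$ for $x=1/\Delta^i$) and then sums the resulting geometric series, just as you do.
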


\begin{proof}
Immediate from the bound $(a+1)/(a-1)>\exp(2/a),$ which is 
valid for $a>1.$ 
\end{proof}

We will need a corresponding upper bound:

\begin{lemma}
\label{lem:infinite-product}
For all $\Delta>1,$
\[ \prod_{i=1}^\infty \frac{\Delta^i + 1}{\Delta^i - 1} < 
\exp\BRACES{\frac 4{\Delta-1}}.
\]
\end{lemma}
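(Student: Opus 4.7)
The plan is to take logarithms and bound the resulting sum by a telescoping/series manipulation. Setting $y_i = \Delta^{-i} \in (0,1)$, the target becomes
\[
\sum_{i=1}^\infty \log\frac{1+y_i}{1-y_i} < \frac{4}{\Delta-1}.
\]
First I would apply the elementary identity $\log(1-y^2) = \log(1+y)+\log(1-y)\leq 0$, which rearranges to give the pointwise bound
\[
\log\frac{1+y}{1-y} \;\leq\; -2\log(1-y),\qquad 0<y<1.
\]
This trades the hard-to-handle ratio for a single logarithm that has a clean power series.

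Next, I would expand $-\log(1-y) = \sum_{k\geq 1} y^k/k$ and swap the order of summation:
\[
\sum_{i=1}^\infty \log\frac{1+\Delta^{-i}}{1-\Delta^{-i}}
\;\leq\; 2\sum_{i=1}^\infty\sum_{k=1}^\infty \frac{\Delta^{-ik}}{k}
\;=\; 2\sum_{k=1}^\infty \frac{1}{k}\cdot\frac{\Delta^{-k}}{1-\Delta^{-k}}
\;=\; 2\sum_{k=1}^\infty \frac{1}{k(\Delta^k-1)}.
\]
All terms are positive, so Tonelli justifies the swap.

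The final step is to bound $\Delta^k-1$ from below using Bernoulli's inequality: since $\Delta = 1+(\Delta-1)$ with $\Delta-1 > 0$, one has $\Delta^k \geq 1 + k(\Delta-1)$, hence $\Delta^k-1 \geq k(\Delta-1)$. Substituting,
\[
2\sum_{k=1}^\infty \frac{1}{k(\Delta^k-1)}
\;\leq\; \frac{2}{\Delta-1}\sum_{k=1}^\infty \frac{1}{k^2}
\;=\; \frac{\pi^2}{3(\Delta-1)}
\;<\; \frac{4}{\Delta-1},
\]
where the last inequality is just $\pi^2/3 < 4$. Exponentiating gives the claim.

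The only part requiring care is choosing the right first step: the naive bound $\log\frac{1+y}{1-y}\leq 2y/(1-y)$ leads to $\sum 2/(\Delta^i-1)$, which is too large when $\Delta$ is close to $1$ (the series grows like $(\log\frac{1}{\Delta-1})/(\Delta-1)$). Replacing it with $-2\log(1-y)$ builds in an extra factor of $1/k$ inside the double series, which is exactly what the Basel sum $\sum 1/k^2 = \pi^2/6$ needs to cancel against Bernoulli's $k(\Delta-1)$. That pairing is the only nonobvious idea in the argument; everything else is routine.
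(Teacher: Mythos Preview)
Your proof is correct and takes a genuinely different route from the paper's. The paper splits the product at $k=\lfloor 2/(\Delta-1)\rfloor$, bounds the finite head $\prod_{i\leq k}$ via Bernoulli's inequality $\Delta^i\geq 1+i(\Delta-1)$ and a binomial-coefficient estimate, bounds the tail $\prod_{i>k}$ by a geometric series, and then checks numerically that the resulting combined constant is below $4$. Your argument avoids the split entirely: the key idea is the pointwise bound $\log\tfrac{1+y}{1-y}\leq -2\log(1-y)$, which after the power-series expansion introduces an extra factor of $1/k$ that pairs with Bernoulli's $\Delta^k-1\geq k(\Delta-1)$ to produce the convergent Basel sum $\sum 1/k^2$. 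This is cleaner and more uniform, and in fact yields the sharper constant $\pi^2/3\approx 3.29$ in place of $4$; the paper's approach, by contrast, is slightly more elementary in that it does not invoke $\sum 1/k^2=\pi^2/6$, but it pays for this with a two-part argument and an ad hoc verification of a supremum.
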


\begin{proof}
Let $k\geq0$ be an integer.
By the binomial theorem, $\Delta^i\geq(\Delta-1)i+1$ for integers $i\geq0.$ 
As a result, 
\begin{align*}
\prod_{i=1}^k \frac{\Delta^i + 1}{\Delta^i - 1} 
\leq \prod_{i=1}^k \frac{1}{i}\PARENS{i+ \frac{2}{\Delta-1}} \leq 
{k+\left\lceil\frac{2}{\Delta-1}\right\rceil \choose k}.
\end{align*}
Also,
\begin{align*}
\prod_{i=k+1}^\infty \frac{\Delta^i + 1}{\Delta^i - 1} 
<\prod_{i=0}^\infty \PARENS{1 + \frac{2}{(\Delta^{k+1}-1)\Delta^i}}
<\exp\BRACES{\frac{2\Delta}{(\Delta^{k+1}-1)(\Delta-1)}}.
\end{align*}
Setting $k=k(\Delta)=\left\lfloor \frac2{\Delta-1}\right\rfloor,$ 
we conclude that
\[ \prod_{i=1}^\infty \frac{\Delta^i + 1}{\Delta^i - 1}
< \exp\BRACES{\frac C{\Delta-1} }, \]
where 
\begin{align*}
C=\sup_{\Delta>1}\BRACES{ (\Delta-1)\ln
{k(\Delta)+\left\lceil\frac{2}{\Delta-1}\right\rceil \choose k(\Delta)}
 + \frac{2\Delta}{\Delta^{k(\Delta)+1}-1}}<4. 
 \tag*{\qedhere}
\end{align*}
\end{proof}

We will also need the following binomial estimate.

\begin{lemma}
\label{lem:binomial-ratio}
Put $p(t) = \prod_{i=1}^n \left(t-i-\frac12\right).$ Then
\[ \max_{t=1,2,\dots,n+1} \left\lvert \frac{p(-t)}{p(t)}\right\rvert \leq
\Theta(16^n). \]
\end{lemma}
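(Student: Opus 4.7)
The plan is to derive a clean closed form for the ratio in terms of double factorials, convert it into central binomial and multinomial coefficients, and then bound each piece by elementary means. For $t\in\{1,\dots,n+1\}$, I would first clear the factors of $\tfrac12$ by multiplying through by $2^n$ to obtain
\[
2^n|p(-t)|=(2t+3)(2t+5)\cdots(2t+2n+1)=\frac{(2n+2t+1)!!}{(2t+1)!!}.
\]
For $|p(t)|$, the quantity $2t-2i-1$ is positive for $i\le t-1$ and negative for $i\ge t$, so splitting by sign (and using the convention $(-1)!!=1$) gives $2^n|p(t)|=(2t-3)!!\,(2n-2t+1)!!$. Hence
\[
\frac{|p(-t)|}{|p(t)|}=\frac{(2n+2t+1)!!}{(2t+1)!!\,(2t-3)!!\,(2n-2t+1)!!}.
\]

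The second step is to convert this into ordinary factorials using $(2k-1)!!=(2k)!/(2^k k!)$, at which point all powers of $2$ cancel exactly, and then to peel off central binomial factors via $(2k)!=\binom{2k}{k}(k!)^2$. A short computation then expresses the ratio as
\[
\frac{\binom{2n+2t+2}{n+t+1}}{\binom{2t+2}{t+1}\binom{2t-2}{t-1}\binom{2n-2t+2}{n-t+1}}\;\cdot\;\binom{n+t+1}{t+1,\,t-1,\,n-t+1}.
\]

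The final step is to estimate each factor. The first fraction I would bound using $\binom{2k}{k}\le 4^k$ on its numerator and $\binom{2k}{k}\ge 4^k/(2k+1)$ on each of its three denominators; all powers of $4$ cancel, leaving at most $(2t+3)(2t-1)(2n-2t+3)=O(n^3)$. For the multinomial coefficient, the crude bound $\binom{N}{a,b,c}\le 3^N$ with $N=n+t+1\le 2n+2$ yields $O(9^n)$. Multiplying, the ratio is $O(n^3\cdot 9^n)$, which is $O(16^n)$ since $(16/9)^n$ dominates any polynomial, giving the claimed $\Theta(16^n)$ bound.

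The main obstacle I anticipate is merely bookkeeping in the first step to accommodate the degenerate endpoints $t=1$ and $t=n+1$, where one of the double factorials reduces to $(-1)!!=1$ and correspondingly $(2t-2)!$ or $(2n-2t+2)!$ becomes $0!=1$; with these conventions every formula above continues to hold. In particular, there is substantial slack in the bound: the true exponential rate is closer to $3^{3n/2}\approx 5.2^n$ (achieved near $t=n/2$ by an entropy calculation on the multinomial factor), so the stated $\Theta(16^n)$ is proven with room to spare by these elementary tools.
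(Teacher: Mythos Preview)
Your proof is correct. Both you and the paper proceed by writing $|p(\pm t)|$ in terms of double factorials, converting to ordinary factorials, regrouping into binomial-type coefficients, and then applying elementary bounds of the form $\binom{2k}{k}\asymp 4^k/\sqrt{k}$; the only real difference is the regrouping. The paper writes
\[
\left|\frac{p(-t)}{p(t)}\right|
=\frac{t}{2t+1}\cdot
\frac{\binom{2n+2t+1}{2t}\binom{2n+1}{n+t}}{\binom{2t-2}{t-1}\binom{2n-2t+2}{n-t+1}}
\]
and bounds the numerator by $\Theta(2^{4n}/\sqrt{n})\cdot\Theta(2^{2n}/\sqrt{n})$ and the denominator from below by $\Theta(2^{2n}/n)$, landing exactly on $\Theta(16^n)$. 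Your decomposition into a ratio of four central binomials times the multinomial $\binom{n+t+1}{t+1,\,t-1,\,n-t+1}$ is equally valid; the cruder bounds $\binom{2k}{k}\ge 4^k/(2k+1)$ and $\binom{N}{a,b,c}\le 3^N$ give $O(n^3\cdot 9^n)$, which suffices with room to spare. Your handling of the endpoints $t=1$ and $t=n+1$ via $(-1)!!=1$ and $\binom{0}{0}=1$ is fine.
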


\begin{proof}
For $t=1,2,\dots,n+1,$ we have
\begin{align*}
|p(t)| = \frac{(2t-2)! (2n-2t+2)!}{4^n (t-1)!(n-t+1)!},\quad
|p(-t)| = \frac{t!(2n+2t+1)!}{4^n (2t+1)!(n+t)!}.
\end{align*}
As a result,
\begin{align*}
\left\lvert\frac{p(-t)}{p(t)}\right\rvert 
 &= \frac t{2t+1}  \cdot
 \frac{ \displaystyle {2n+2t+1\choose 2t}{2n+1\choose n+t}}
      { \displaystyle {2t -2\choose t-1}{2n-2t+2 \choose n-t+1}}
 \leq \frac{\displaystyle \Theta\PARENS{\frac{2^{4n}}{\sqrt n}} 
             \Theta\PARENS{\frac{2^{2n}}{\sqrt n}}}
           {\displaystyle \Theta\PARENS{\frac{2^{2n}}n}},
\end{align*}
which gives the sought bound.
\end{proof}

Our construction requires one additional ingredient.

\begin{lemma}
\label{lemma:floors}
Let $n,d$ be integers, $1\leq d\leq n/55.$ Consider the
polynomial
$p(t) = \prod_{i=1}^{d-1} (t-d \Delta^i\sqrt \Delta),$
where $\Delta=(n/d)^{1/d}.$ Then
\[ \min_{j=1,\dots,d} \left| 
   \frac{p(\lfloor d\Delta^j\rfloor)}{p(-\lfloor d\Delta^j\rfloor)} \right|
	>\exp\BRACES{-\frac{4\ln3d}{\ln(n/d)} -\frac{8}{\sqrt \Delta-1}}.
\]
\end{lemma}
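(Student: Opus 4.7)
My plan is to write each factor of the target ratio in the form $(\Delta^{|\tau|}-1)/(\Delta^{|\tau|}+1)$ where $\tau$ is a half-integer perturbed by a small ``floor shift,'' then reduce to a product with the shift removed (for which Lemma~\ref{lem:infinite-product} applies) and control the shift-induced error separately. Setting $m=\lfloor d\Delta^j\rfloor$ and $\alpha=m/(d\Delta^j)\in(1-1/d,1]$, the $i$-th factor simplifies to
\[
\left\lvert\frac{m-d\Delta^{i+1/2}}{m+d\Delta^{i+1/2}}\right\rvert=\frac{\Delta^{|j-i-1/2-\sigma|}-1}{\Delta^{|j-i-1/2-\sigma|}+1},\qquad \sigma:=-\log_\Delta\alpha\geq0.
\]
Splitting by the sign of $j-i-1/2-\sigma$, the ratio factors as $P_1P_2$ with
\[
P_1=\prod_{k=1}^{j-1}\frac{\Delta^{k-1/2-\sigma}-1}{\Delta^{k-1/2-\sigma}+1},\qquad P_2=\prod_{k=1}^{d-j}\frac{\Delta^{k-1/2+\sigma}-1}{\Delta^{k-1/2+\sigma}+1}.
\]

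Next I would dispose of the baselines at $\sigma=0$. Each factor of $P_2$ is increasing in $\sigma$, so $P_2$ dominates its $\sigma=0$ version, which is the odd-$i$ sub-product of $\prod_{i\geq1}(\sqrt{\Delta}^{\,i}-1)/(\sqrt{\Delta}^{\,i}+1)$; Lemma~\ref{lem:infinite-product} with $\delta=\sqrt{\Delta}$ then gives $P_2>\exp\{-4/(\sqrt{\Delta}-1)\}$. The same argument applied to the undistorted version $P_1^0:=\prod_{k=1}^{j-1}(\Delta^{k-1/2}-1)/(\Delta^{k-1/2}+1)$ gives $P_1^0>\exp\{-4/(\sqrt{\Delta}-1)\}$, together accounting for the $-8/(\sqrt{\Delta}-1)$ term.

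The heart of the argument is bounding $\ln(P_1^0/P_1)$. Writing $\Phi(x)=(\Delta^x-1)/(\Delta^x+1)$ and $A_k=\Delta^{k-1/2}$, a short algebraic manipulation yields
\[
\frac{\Phi(k-1/2)}{\Phi(k-1/2-\sigma)}=1+\frac{2(1-\alpha)A_k}{(A_k+1)(A_k-1-(1-\alpha)A_k)}.
\]
The hypothesis $d\leq n/55$ guarantees that $2(1-\alpha)A_k\leq A_k-1$ for every $k\geq1$; the binding case $k=1$ reduces to $1-1/\sqrt{\Delta}\geq 2/d$, which follows from $\sqrt{\Delta}-1\geq\tfrac{1}{2}\ln\Delta=\tfrac{1}{2d}\ln(n/d)$ and $\ln(n/d)\geq\ln 55>4$. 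Under this condition the denominator is at least $(A_k+1)(A_k-1)/2$, and the elementary inequality $A_k-1/A_k\geq(2k-1)\ln\Delta$ (which is $\sinh x\geq x$ applied at $x=(k-\tfrac{1}{2})\ln\Delta$) bounds each summand by $4(1-\alpha)/((2k-1)\ln\Delta)$. Therefore
\[
\ln\frac{P_1^0}{P_1}\leq\frac{4(1-\alpha)}{\ln\Delta}\sum_{k=1}^{j-1}\frac{1}{2k-1}\leq\frac{4\,H_{j-1}}{d\Delta^j\ln\Delta}\leq\frac{4\ln(3d)}{\ln(n/d)},
\]
using $1-\alpha\leq 1/(d\Delta^j)$, $\Delta^j\geq 1$, $d\ln\Delta=\ln(n/d)$, and $H_{j-1}\leq 1+\ln(j-1)\leq\ln(3d)$. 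Combining the two baselines with this correction yields the claimed bound.

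The main obstacle is precisely the $k=1$ contribution to the floor correction and the tight bookkeeping of absolute constants. The threshold $n/55$ in the hypothesis is exactly what guarantees $\ln(n/d)\geq 4$, which in turn both keeps the denominator $A_1-1-(1-\alpha)A_1$ positive at $k=1$ and allows the harmonic-sum estimate to produce the prefactor $4$ in $4\ln(3d)/\ln(n/d)$ with no slack. Loosening the threshold would either invalidate this positivity condition or force a larger constant in front of $\ln(3d)/\ln(n/d)$, so the entire argument hinges on carrying this one estimate through cleanly.
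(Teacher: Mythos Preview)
Your approach mirrors the paper's proof: both separate a ``clean'' geometric product (bounded via Lemma~\ref{lem:infinite-product}, giving the $-8/(\sqrt\Delta-1)$ term) from a floor-correction factor (bounded by a harmonic sum, giving the $-4\ln(3d)/\ln(n/d)$ term). The paper replaces $\lfloor d\Delta^j\rfloor$ by $d\Delta^j-1$ in the numerator and $d\Delta^j$ in the denominator and then extracts the additive correction $\prod_{i<j}\bigl(1-1/(d\Delta^j-d\Delta^{i+1/2})\bigr)$; you parametrize the floor multiplicatively via $\alpha=m/(d\Delta^j)$ and bound the ratio $P_1^0/P_1$ instead. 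The bookkeeping differs, but the structure and the two key ingredients (Lemma~\ref{lem:infinite-product} and a harmonic-sum estimate) are the same.

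There is one slip. Your reduction of the $k=1$ condition to $1-1/\sqrt\Delta\geq 2/d$ uses the coarse bound $1-\alpha<1/d$, and the resulting inequality is \emph{false} for small $d$ (e.g.\ $d=2$, $n=110$ gives $1-1/\sqrt\Delta\approx0.63<1=2/d$); nor does it follow from $\sqrt\Delta-1\geq 2/d$, since $1-1/\sqrt\Delta=(\sqrt\Delta-1)/\sqrt\Delta$. The fix is immediate: $P_1$ is nonempty only for $j\geq 2$, so the sharper bound $1-\alpha<1/(d\Delta^j)\leq 1/(d\Delta^2)$ applies. The $k=1$ condition then becomes $2/(d\Delta^{3/2})\leq \sqrt\Delta-1$, which does follow from $\sqrt\Delta-1\geq 2/d$ (your correct consequence of $\ln(n/d)>4$) together with $\Delta^{3/2}>1$. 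With this adjustment your argument goes through and matches the paper's bound exactly.
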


\begin{proof}
Fix $j=1,2,\dots,d.$ Then for each $i=1,2,\dots,j-1,$
\[ d\Delta^j - d\Delta^i\sqrt \Delta \geq
d\left(\Delta^{j-i-\frac12}-1\right)
                      \geq \frac12\, (j-i) \ln\frac nd,  \]
and thus
\begin{align}
\prod_{i=1}^{j-1} \PARENS{1 - \frac1{d\Delta^j - d\Delta^i\sqrt \Delta}}
&\geq  \exp\BRACES{-\frac{4}{\ln(n/d)}\sum_{i=1}^{j-1}
                  \frac{1}{j-i}
          }\nonumber\\
&\geq  \exp\BRACES{-\frac{4\ln3d}{\ln(n/d)} }.
\label{eqn:linearization}
\end{align}
For brevity, let $\xi$ stand for the final expression in
(\ref{eqn:linearization}). 
Since $1\leq d\leq n/55,$ we have $\lfloor d\Delta^j\rfloor
-d\Delta^{j-1}\sqrt \Delta>1.$ 
As a result,
\begin{align*}
\left|\frac{p(\lfloor d\Delta^j\rfloor)}{p(-\lfloor d\Delta^j\rfloor)}\right|
	&\geq \prod_{i=1}^{j-1} 
	\frac{d\Delta^j-1-d\Delta^i\sqrt \Delta}{d\Delta^j+d\Delta^i\sqrt \Delta}
	\;\; \prod_{i=j}^{d-1} 
	\frac{d\Delta^i\sqrt \Delta-d\Delta^j}{d\Delta^i\sqrt \Delta+d\Delta^j}\\
	&\geq \xi
	\prod_{i=1}^{j-1} 
	\frac{d\Delta^j-d\Delta^i\sqrt \Delta}{d\Delta^j+d\Delta^i\sqrt \Delta}
	\;\;\prod_{i=j}^{d-1} 
	\frac{d\Delta^i\sqrt \Delta-d\Delta^j}{d\Delta^i\sqrt \Delta+d\Delta^j}
	&&\text{by (\ref{eqn:linearization})}\\
	&> \xi
	\PARENS{\prod_{i=1}^\infty \frac{\Delta^{i/2} - 1}{\Delta^{i/2} +1}}^2\\
	&\geq \xi \exp\BRACES{-\frac 8{\sqrt \Delta-1}},
\end{align*}
where the last inequality holds by Lemma~\ref{lem:infinite-product}.
\end{proof}

We have reached the main result of this subsection.

\begin{theorem}[High-degree rational approximation of \textsc{majority}]
\label{thm:rational-high-degree}
Let $d$ be an integer, $\log n<d\leq n-1.$ Then
\begin{align*}
R^+(d,\{\pm1,\pm2,\dots,\pm n\}) =
\exp\BRACES{-\Theta\PARENS{\frac{d}{\log (2n/d)}}}.
\end{align*}
Also,
\[ R^+(n,\{\pm1,\pm2,\dots,\pm n\}) =0. \]
\end{theorem}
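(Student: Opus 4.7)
The plan splits the theorem into three claims. First, for $d=n$ I exhibit an exact sign-representation by taking $p(t)=\prod_{i=1}^{n}(t+i)$ and $S(t)=(p(t)-p(-t))/(p(t)+p(-t))$. At each integer $k\in\{1,\ldots,n\}$ the factor $(t+k)$ in $p$ makes $p(-k)=0$, forcing $S(k)=1$; by oddness, $S(-k)=-1$. The denominator equals $p(k)>0$ throughout $\{\pm1,\ldots,\pm n\}$, so $R^+(n,\cdot)=0$.

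For the upper bound in the regime $\log n<d<n$, I construct a hybrid approximant by combining dense and geometric zeros. Let $m=\lceil d/2\rceil$, $\ell=d-m$, set $\Delta=(n/m)^{1/\ell}$, and take
\[ p(t)=\prod_{i=1}^{m}(t+i)\,\prod_{j=1}^{\ell}(t+m\Delta^j). \]
The resulting $S(t)=(p(t)-p(-t))/(p(t)+p(-t))$ sign-represents $\{\pm1,\ldots,\pm m\}$ exactly, and on $k\in\{m+1,\ldots,n\}$ the error is at most $4|p(-k)/p(k)|$, which factors as a dense ratio times a geometric ratio. The geometric ratio $\prod_j|m\Delta^j-k|/(m\Delta^j+k)$ is handled uniformly via Lemma~\ref{lem:infinite-product} applied with $\sqrt\Delta$: the worst case is $k$ log-midway between two consecutive nodes, which gives a bound $\exp(-\Theta(1/(\sqrt\Delta-1)))=\exp(-\Theta(\ell/\log(n/m)))=\exp(-\Theta(d/\log(2n/d)))$ with the chosen $m,\ell$.

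For the lower bound I invoke Theorem~\ref{thm:rational-degree-criterion-maj}. Set $\Delta=(n/d)^{1/d}$, $S=\{\lfloor d\Delta^j\rfloor:j=1,\ldots,d\}\subseteq\{1,\ldots,n\}$, and take the witness
\[ r(t)=(-1)^n\tilde r(t)\prod_{i\in\{-n,\ldots,n\}\setminus(S\cup-S)}(t-i),\qquad \tilde r(t)=\prod_{i=1}^{d-1}(t-d\Delta^i\sqrt\Delta). \]
Then $\deg r=2n-d-1$ and $r$ vanishes on the required complement. The filler factor equals $t\prod_{i\notin S}(t^2-i^2)$ and is therefore odd, so on $S$ one has $|r(k)/r(-k)|=|\tilde r(k)/\tilde r(-k)|$, which by Lemma~\ref{lemma:floors} is at least $\exp(-\Theta(d/\log(n/d)))=\exp(-\Theta(d/\log(2n/d)))$. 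A direct sign count (the negative factors of $\tilde r(k)$ plus the number of $i\notin S$ with $i>k$, combined with $(-1)^n$) shows $(-1)^k r(k)>0$ on $S$, giving $(-1)^k r(k)>\delta|r(-k)|$ with $\delta=\exp(-\Theta(d/\log(2n/d)))$. The criterion then yields $R^+(d,\{\pm1,\ldots,\pm n\})\geq R^+(d,S\cup-S)\geq\Omega(\delta)$.

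The main obstacle is the upper-bound geometric estimate. Lemma~\ref{lemma:floors} controls the ratio $|p(\cdot)/p(-\cdot)|$ only at the sparse test points $\lfloor d\Delta^j\rfloor$, but the upper bound demands control at \emph{every} integer in $\{m+1,\ldots,n\}$. The worst case occurs at an integer log-midway between consecutive geometric nodes, which forces one to invoke an infinite-product estimate in the style of Lemma~\ref{lem:infinite-product} rather than the pointwise ratio of Lemma~\ref{lemma:floors}. The choice $m=\lceil d/2\rceil$ is what balances the dense and geometric contributions so that the error exponent lines up with the target $\Theta(d/\log(2n/d))$; the lower bound, by contrast, exploits only sparse control and follows more directly from the stated lemmas.
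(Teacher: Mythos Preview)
Your approach follows the paper's closely, but there are a few genuine gaps in the lower-bound argument and a lemma mix-up in the upper bound.

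\textbf{Lower bound, degree count.} With $S=\{\lfloor d\Delta^j\rfloor:j=1,\ldots,d\}$ and $\tilde r$ of degree $d-1$, the filler product over $\{-n,\ldots,n\}\setminus(S\cup-S)$ has degree $2n+1-2|S|\geq 2n+1-2d$, so $\deg r\geq 2n-d$. Theorem~\ref{thm:rational-degree-criterion-maj} requires $r\in P_{2n-d-1}$, so your witness is one degree too large. The paper fixes this by augmenting $S$ with a dense block $S_1=\{1,\ldots,k\}$ and a matching factor $r_1(t)=\prod_{i=1}^k(t-i-\tfrac12)$: each added point removes two linear factors from the filler while $r_1$ adds only one, giving a net degree drop of $k$. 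The choice $k=\lceil d/\log(n/d)\rceil$ keeps the $r_1$ contribution to the ratio (bounded via Lemma~\ref{lem:binomial-ratio} by $\Theta(16^{-k})$) at the target order $\exp(-\Theta(d/\log(n/d)))$.

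\textbf{Lower bound, range of $d$.} Lemma~\ref{lemma:floors} is stated only for $1\leq d\leq n/55$, and for $d$ a constant fraction of $n$ one has $\Delta=(n/d)^{1/d}$ so close to $1$ that the floors $\lfloor d\Delta^j\rfloor$ collide and the geometric construction degenerates. The paper handles the complementary range $n/55\leq d\leq n-1$ by a separate, purely dense witness $r(t)=(-1)^n\,t\,\prod_{i=1}^d(t-i-\tfrac12)\prod_{i=d+2}^n(t^2-i^2)$ with $S=\{1,\ldots,d+1\}$, invoking only Lemma~\ref{lem:binomial-ratio}. Your write-up omits this case.

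\textbf{Upper bound, lemma citation.} You need $|p(-k)/p(k)|$ to be small, i.e.\ $p(k)/|p(-k)|$ large; at the log-midpoint this product has the form $\prod(\Delta^{i}+1)/(\Delta^{i}-1)$ (or the half-integer analogue), and what is required is a \emph{lower} bound on this product. That is Lemma~\ref{lem:newman-exp-product} (Newman), not Lemma~\ref{lem:infinite-product}. The paper applies Newman's lemma directly with integer powers of $\Delta=(n/k)^{1/k}$, $k=\lfloor d/2\rfloor$, to obtain $p(t)/|p(-t)|>\exp(\Theta(k/\log(n/k)))$ uniformly over $t\in\{1,\ldots,n\}$; your construction is the same, but the lemma you cite points in the wrong direction.
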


\begin{proof}
The final statement in the theorem follows at once by considering
the rational function $\{p(t)-p(-t)\}/\{p(t)+p(-t)\},$ where $p(t)
= \prod_{i=1}^n (t+i).$ 

Now assume that $\log n < d < n/55.$ Let 
\[ k = \left\lceil \frac{d}{\log (n/d)} \right\rceil, \qquad
\Delta=\PARENS{\frac nd}^{1/d}. \]
Define sets 
\begin{align*}
S_1 &= \{1,2,\dots,k\}, \\
S_2 &= \rule{0mm}{4mm} \{\lfloor d \Delta^i\rfloor\,:\, i=1,2,\dots,d \},\\
S_{\phantom{1}} &= \rule{0mm}{4mm} S_1\cup S_2.
\end{align*}
Consider the polynomial
\[ r(t) = (-1)^n r_1(t) r_2(t)    
  \prod_{i\in\{-n,\dots,n\}\setminus (S\cup -S)} (t-i),
\]
where 
\[ r_1(t) = 
   \prod_{i=1}^{k} \PARENS{t-i-\frac12}, \qquad
   r_2(t) = \prod_{i=1}^{d-1} (t-d \Delta^i\sqrt \Delta).
\]
We have:
\begin{align*}
\min_{t\in S} \left| \frac{r(t)}{r(-t)} \right| 
   &\geq 
   \min_{i=1,\dots,k+1} \left| \frac{r_1(i)}{r_1(-i)} \right| 
   \cdot
   \min_{i=1,\dots,d} \left| 
   \frac{r_2(\lfloor d\Delta^i\rfloor)}{r_2(-\lfloor d\Delta^i\rfloor)} \right|\\
   &> \exp\BRACES{-\frac{Cd}{\log (n/d)}}
\end{align*}
by Lemmas~\ref{lem:binomial-ratio}  and~\ref{lemma:floors},
where $C>0$ is an absolute constant. Since $\sign p(t)=(-1)^t$ for
$t\in S,$ we can restate this result as follows:
\[ (-1)^t r(t) > 
\exp\BRACES{-\frac{Cd}{\log (n/d)}}
|r(-t)|, \qquad t\in S. \]
Since $r$ vanishes on $\{-n,\dots,n\}\setminus (S\cup -S)$
and has degree $\leq 2n-1-d,$ we infer from
Theorem~\ref{thm:rational-degree-criterion-maj} that
$  R^+(d,S\cup-S) \geq \exp\BRACES{-Cd/\log (n/d)}. $
This proves the lower bound for the case $\log n < d < n/55.$

To handle the case $n/55\leq d \leq n-1,$ a different argument is
needed.  Let 
\[ r(t)=(-1)^n\,t\,\prod_{i=1}^d \PARENS{t-i-\frac12} 
\prod_{i=d+2}^n (t^2-i^2).  \]
By Lemma~\ref{lem:binomial-ratio},
there is an absolute constant $C>1$ such that 
\[ \left\lvert\frac{r(t)}{r(-t)}\right\rvert
  > C^{-d}, \qquad t=1,2,\dots,d+1.
\]
Since $\sign r(t)=(-1)^t$ for $t=1,2,\dots,d+1,$ we conclude that
\[ (-1)^t r(t) > C^{-d} |r(-t)|, \qquad t=1,2,\dots,d+1. \]
Since the polynomial $r$ 
vanishes on $\{-n,\dots,n\}\setminus \{\pm1,\pm2,\dots,\pm (d+1)\}$ 
and has degree $2n-1-d,$
we infer from Theorem~\ref{thm:rational-degree-criterion-maj} that
\[ R^+(d,\{\pm1,\pm2,\dots,\pm (d+1)\}) \geq C^{-d}. \]
This settles the lower bound for the case $n/55\leq d\leq n-1.$

It remains to prove the upper bound for the case $\log n<d\leq n-1.$ Here
we always have $d\geq2.$ 
Letting $k=\lfloor d/2\rfloor$ and $\Delta=(n/k)^{1/k},$ 
define $p\in P_{\,2k}$ by
\[ p(t) = \prod_{i=1}^k (t+i) \prod_{i=1}^k (t+k\Delta^i). \]
Fix any point $t\in \{1,2,\dots,n\}$ with $p(-t)\ne0.$ Letting $i^*$ be the
integer with $k \Delta^{i^*}<t<k \Delta^{i^*+1},$ we have:
\begin{align*}
\frac{p(t)}{|p(-t)|}
&> \prod_{i=0}^{i^*} \frac{k\Delta^{i^*+1} + k\Delta^i}{k\Delta^{i^*+1}-k\Delta^i}
  \prod_{i=i^*+1}^{k} \frac{k\Delta^i + k\Delta^{i^*}}{k\Delta^i-k\Delta^{i^*}}
\geq  \prod_{i=1}^k \frac{\Delta^i + 1}{\Delta^i-1}\\
&> \exp\BRACES{\frac{2(\Delta^k-1)}{\Delta^k(\Delta-1)}},
\end{align*}
where the last inequality holds by Lemma~\ref{lem:newman-exp-product}.
Substituting $\Delta=(n/k)^{1/k}$ and recalling that
$k\geq\Theta(\log n),$ we obtain
$p(t) > A |p(-t)|$ for $t=1,2,\dots,n,$ where 
\[ A = \exp\BRACES{\Theta\PARENS{\frac{k}{\log (n/k)}}}. \]
As a result, $R^+(2k,\{\pm1,\pm2,\dots,\pm n\}) \leq2A/(A^2+1),$
the approximant in question being 
\begin{align*}
\frac{A^2-1}{A^2+1}\cdot \frac{p(t)-p(-t)}{p(t)+p(-t)}.
\tag*{\qedhere}
\end{align*}
\end{proof}

\subsection{Equivalence of the majority and sign functions}

It remains to prove the promised equivalence of the majority and sign
functions, from the standpoint of approximating them by rational functions
on the discrete domain. We have:

\begin{theorem}
\label{thm:maj-vs-sign}
For every integer $d,$
\begin{align}
 R^+(\MAJ_n,d) &\leq R^+(d-2,\{\pm1,\pm2,\dots,\pm \lceil n/2\rceil\}),
 \label{eqn:upper-bound-R} \\
 R^+(\MAJ_n,d) &\geq R^+(d,\{\pm1,\pm2,\dots,\pm \lfloor n/2\rfloor\}).
 \label{eqn:lower-bound-R}
\end{align}
\end{theorem}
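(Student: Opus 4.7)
The plan is to establish the two inequalities separately, both exploiting the symmetry of $\MAJ_n$.

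\textbf{Lower bound.} Fix $\epsilon > R^+(\MAJ_n, d)$. Since $\MAJ_n$ is invariant under coordinate permutations, Proposition~\ref{prop:symm-rational} with $k=1$, $n_1=n$, $\alpha=-1$, $\beta=+1$ furnishes a univariate rational function $p(t)/q(t)$ of degree at most $d$ satisfying $q(\sum x_i)>0$ and $\lvert\MAJ_n(x)-p(\sum x_i)/q(\sum x_i)\rvert<\epsilon$ for every $x\in\moon$. On the nonzero values of $\sum x_i$---which are $\{\pm 2,\pm 4,\dots,\pm n\}$ for even $n$ and $\{\pm 1,\pm 3,\dots,\pm n\}$ for odd $n$---the function $\MAJ_n$ coincides with $\sign(\sum x_i)$. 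The affine substitution $t=2s$ (for even $n$) or $t=2s-1$ (for odd $n$) maps $\{\pm 1,\dots,\pm\lfloor n/2\rfloor\}$ into this nonzero support while preserving sign, yielding a degree-$d$ rational function in $s$ with positive denominator that approximates $\sign s$ on $\{\pm 1,\dots,\pm\lfloor n/2\rfloor\}$ within error $\epsilon$. Letting $\epsilon\searrow R^+(\MAJ_n,d)$ gives (\ref{eqn:lower-bound-R}).

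\textbf{Upper bound, even $n$.} Let $p(t)/q(t)$ be a degree-$(d-2)$ rational approximant of $\sign t$ on $\{\pm 1,\dots,\pm n/2\}$ with $q>0$ there and error $\epsilon$. Write $s=\sum x_i$ and, for a parameter $A>0$, define
\[ R(x) \;=\; \frac{s^2\,p(s/2)\,-\,A}{s^2\,q(s/2)\,+\,A}. \]
Both numerator and denominator are polynomials of degree at most $d$ in $x$. The denominator is strictly positive on $\moon$: it equals $A>0$ at $s=0$, and at $s\in\{\pm 2,\dots,\pm n\}$ it is positive because $q(s/2)>0$ there. At $s=0$ one has $R(x)=-1$, matching $\MAJ_n$ exactly. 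For $s\ne 0$, setting $M=s^2 q(s/2)>0$, a direct computation gives $\lvert R(x)-\MAJ_n(x)\rvert \leq \epsilon + (2-\epsilon)A/(M+A)$, which is at most $\epsilon+A/(2m)$ for $m=\min_{u\in\{\pm 1,\dots,\pm n/2\}} q(u) > 0$. Since $R^+$ is an infimum, taking $A\searrow 0$ yields $R^+(\MAJ_n,d)\le\epsilon$.

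\textbf{Upper bound, odd $n$.} I reduce to the even case via the pointwise identity $\MAJ_n(x_1,\dots,x_n)=\MAJ_{n+1}(x_1,\dots,x_n,+1)$. This follows from $\sign(\sum_{i=1}^n x_i+1)=\sign(\sum_{i=1}^n x_i)$ whenever $\sum x_i\ne -1$, together with the fact that both sides equal $-1$ at $\sum x_i=-1$ by the tie-breaking convention for $\MAJ_{n+1}$. Substituting $x_{n+1}=+1$ in any rational approximant of $\MAJ_{n+1}$ produces a rational approximant of $\MAJ_n$ of the same degree, positive denominator, and equal error. Therefore $R^+(\MAJ_n,d)\le R^+(\MAJ_{n+1},d)$, and applying the even case to $n+1$, together with the identity $\lceil(n+1)/2\rceil=\lceil n/2\rceil$ for odd $n$, completes the argument.

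The main obstacle is designing the correction in the even-$n$ upper bound: the naive substitution $t=s/2$ leaves the value at $s=0$ (where $\MAJ_n=-1$) uncontrolled by the approximation guarantee on $\{\pm 1,\dots,\pm n/2\}$, and $q(0)$ need not even be positive. The construction above clamps the value at $s=0$ to $-1$ without ever involving $q(0)$, at the cost of an $O(A)$ perturbation of the approximation elsewhere, which is absorbed via the infimum in the definition of $R^+$.
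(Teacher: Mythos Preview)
Your proof is correct and follows essentially the same approach as the paper. The lower bound via symmetrization and affine change of variable matches the paper exactly, and the upper-bound construction $\frac{s^2 p(s/2)-A}{s^2 q(s/2)+A}$ is precisely the paper's perturbation $A_\delta(t)=\frac{t^2 p(t)-\delta}{t^2 q(t)+\delta}$ composed with $t=s/2$. The one cosmetic difference is your treatment of odd $n$: the paper uses the single substitution $t=\tfrac12\sum(x_i+1)-\lfloor n/2\rfloor$, which handles both parities at once (for odd $n$ it sends $\sum x_i=-1$ to $t=0$ and the remaining values into $\{\pm1,\dots,\pm\lceil n/2\rceil\}$), whereas you reduce odd $n$ to even $n+1$ via $\MAJ_n(x)=\MAJ_{n+1}(x,+1)$; both routes are valid and equally short.
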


\begin{proof}
We prove (\ref{eqn:upper-bound-R}) first. Fix a degree-$(d-2)$
approximant $p(t)/q(t)$ to $\sign t$ on $S=\{\pm1,\dots,\pm\lceil
n/2\rceil\},$ where $q$ is positive on $S.$ For small $\delta>0,$
define
\[ A_\delta(t) = \frac{t^2p(t)-\delta}{t^2q(t)+\delta}. \]
Then $A_\delta$ is a rational function of degree at most $d$ whose
denominator is positive on $S\cup \{0\}.$ Finally, we have
$A_\delta(0)=-1$ and
\[ \lim_{\delta\to0} \max_{t\in S} |\sign t - A_\delta(t)|
=  \max_{t\in S} \left|\sign t - \frac{p(t)}{q(t)}\right|.
\]
Then $A_\delta(\frac12\sum (x_i+1) - \lfloor n/2\rfloor)$
is the desired approximant for $\MAJ_n(x_1,\dots,x_n).$
 

We now turn to the lower bound, (\ref{eqn:lower-bound-R}).
For every $\epsilon>R^+(\MAJ_n,d),$
Proposition~\ref{prop:symm-rational}
gives a univariate rational function $p(t)/q(t)$ of degree at most $d$ 
such that for all $x\in\moon,$ one has
\begin{align*}
\left\lvert \MAJ_n(x) - \frac{p(\sum x_i)}{q(\sum x_i)}\right\rvert
<\epsilon
\end{align*}
and  $q(\sum x_i)>0.$
Then
\begin{align*}
 \max_{t=\pm1,\pm2,\dots,\pm\lfloor n/2\rfloor}
\left| \sign t - \frac{p(2t+n-2\lfloor n/2\rfloor)}{q(2t+n-2\lfloor n/2\rfloor)}
\right|< \epsilon, 
\end{align*}
completing the proof of (\ref{eqn:lower-bound-R}).
\end{proof}

Note that (\ref{eqn:rational-positive-denominator})
and Theorems~\ref{thm:rational-small-degree},
\ref{thm:rational-high-degree}, and~\ref{thm:maj-vs-sign}
immediately imply Theorem~\ref{thm:main-approx-maj} from the
Introduction.

\begin{remark}
The proof that we gave for the upper bound, (\ref{eqn:upper-bound-R}),
illustrates a useful property of univariate rational approximants
$A(t)=p(t)/q(t)$ on a finite set $S.$ Specifically, given such an
approximant and a point $t^*\notin S,$ there exists an
approximant $A'$ with $A'(t^*)=a$ for any prescribed value $a$ and
$A'\approx A$ everywhere on $S.$  One such construction is 
\[ A'(t)=\frac{(t-t^*)p(t)+a\delta}{(t-t^*)q(t)+\delta}\]
for an arbitrarily small constant $\delta>0.$ Note that $A'$ has
degree only $1$ higher than the degree of the original approximant,
$A.$ This phenomenon is in sharp contrast to approximation by
polynomials, which do not possess this corrective ability.
\end{remark}

\section{Intersections of halfspaces
\label{sec:maj-maj}}

In this section, we prove our main theorems on the sign-representation
of intersections of halfspaces and majority functions.  
In the two subsections that follow, we give results for 
the threshold degree as well as \emph{threshold density}, another
key complexity measure of a sign-representation.

\subsection{Lower bounds on the threshold degree}

We start by formalizing the elegant observation due to Beigel et
al.~\cite{beigel91rational}, already described briefly in the
Introduction.

\begin{theorem}[Beigel, Reingold, and Spielman]
\label{thm:rational-is-possible}
Let $f\colon X\to\moo$ and $\g\colon Y\to\moo$ be given functions, where
$X,Y\subset \Re^n$ are finite sets. Let $d$ be an integer with 
$R^+(f,d) + R^+(\g,d)<1.$
Then
\begin{align*}
\degthr(f\wedge \g) \leq 2d.
\end{align*}
\end{theorem}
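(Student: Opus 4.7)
The plan is to formalize the construction that was already sketched informally in the introduction. The hypothesis $R^+(f,d)+R^+(\g,d)<1$ guarantees, by the definition of $R^+$, the existence of polynomials $p_1,q_1,p_2,q_2$ of degree at most $d$ with $q_1>0$ on $X$ and $q_2>0$ on $Y$, such that
\[
\max_{x\in X}\left|f(x)-\frac{p_1(x)}{q_1(x)}\right|+\max_{y\in Y}\left|\g(y)-\frac{p_2(y)}{q_2(y)}\right|<1.
\]
(If the infimum in the definition of $R^+$ is not attained, we first pick rational functions achieving errors strictly less than $R^+(f,d)+\eta$ and $R^+(\g,d)+\eta$ for a small enough $\eta>0$, so that the two errors still sum to less than $1$.)

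Next I would verify the key sign identity
\[
f(x)\wedge \g(y)\equiv\sign\bigl\{1+f(x)+\g(y)\bigr\},
\]
which holds trivially on $\moo\times\moo$: the quantity $1+f(x)+\g(y)$ equals $-1$ when $f(x)=\g(y)=-1$ (true) and equals $\geq 1$ otherwise. Then, using the rational approximations, I would argue that
\[
f(x)\wedge \g(y)\equiv\sign\left\{1+\frac{p_1(x)}{q_1(x)}+\frac{p_2(y)}{q_2(y)}\right\}
\]
for every $(x,y)\in X\times Y$. This is the only step that requires checking: since the total approximation error is strictly less than $1$, the bracketed expression differs from $1+f(x)+\g(y)$ by less than $1$ in absolute value, and $|1+f(x)+\g(y)|\geq 1$, so the sign cannot flip.

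Finally, I would multiply the expression inside the sign by the strictly positive quantity $q_1(x)q_2(y)$, which preserves the sign, to obtain
\[
f(x)\wedge \g(y)\equiv\sign\bigl\{q_1(x)q_2(y)+p_1(x)q_2(y)+p_2(y)q_1(x)\bigr\}.
\]
The polynomial in braces has total degree at most $2d$, which witnesses $\degthr(f\wedge \g)\leq 2d$. Note that, in contrast to the $4d$ bound sketched in the introduction (where $q_1^2q_2^2$ was used to ensure positivity), here the positivity of $q_1$ and $q_2$ comes for free from the use of $R^+$ rather than $R$, which is why we save a factor of two. There is no substantive obstacle in this proof; the only mildly delicate point is the passage from $R^+$ (an infimum) to an actual pair of rational functions with positive denominators, handled by the $\eta$-slack argument above.
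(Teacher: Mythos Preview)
Your proof is correct and follows exactly the same approach as the paper's own proof: obtain the positive-denominator rational approximants, use the identity $f\wedge g\equiv\sign\{1+f+g\}$, replace $f,g$ by their approximants (the total error being $<1$ preserves the sign), and clear denominators by multiplying through by $q_1(x)q_2(y)$. If anything, you have spelled out the $\eta$-slack and sign-preservation steps more carefully than the paper does.
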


\begin{proof}
Fix rational functions $p_1(x)/q_1(x)$ and $p_2(y)/q_2(y)$ of degree
at most $d$ such that $q_1$ and $q_2$ are positive on $X$ and $Y,$
respectively, and
\begin{align*}
   \max_{x\in X} \left| f(x) - \frac{p_1(x)}{q_1(x)}\right| + 
   \max_{y\in Y} \left| \g(y) - \frac{p_2(y)}{q_2(y)}\right| < 1. 
\end{align*}
Then
\begin{align*}
 f(x)\wedge \g(y) \equiv \sign\{1 + f(x)+\g(y)\}
 \equiv \sign\BRACES{ 1 + \frac{p_1(x)}{q_1(x)} +
\frac{p_2(y)}{q_2(y)} }.
\end{align*}
Multiplying the last expression by the positive quantity $q_1(x)q_2(y),$
we obtain $f(x)\wedge \g(y) \equiv \sign\{ q_1(x)q_2(y) + p_1(x)q_2(y)
+ p_2(y)q_1(x)\}.$
\end{proof}

Recall that Theorem~\ref{thm:main-finite} gives an essentially
exact converse to Theorem~\ref{thm:rational-is-possible}.
We are now in a position to prove our main results on the
threshold degree.

\begin{theorem}[restatement of Theorems~\ref{thm:main-sign-hs} 
and~\ref{thm:main-sign-maj}]
\label{thm:intersections}
Consider the function $f\colon\moo^{n^2}\to\moo$ given by
\begin{align*}
f(x) = \sign\PARENS{1 + \sum_{i=1}^n\sum_{j=1}^n 2^i x_{ij}}.
\end{align*}
Let $\g\colon\moon\to\moo$ be the majority function on $n$ bits.
Then
\begin{align}
\degthr(f\wedge f)             &= \Omega(n),        \label{eqn:hs-hs}   \\
\degthr(\,\g\, \wedge\, \g\,)  &= \Omega(\log n).   \label{eqn:maj-maj}   
\end{align}
\end{theorem}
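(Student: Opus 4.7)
The plan is to deduce both parts of Theorem~\ref{thm:intersections} by combining the structural result Theorem~\ref{thm:main-finite} with the rational approximation lower bounds proved in the previous two sections. Specifically, setting $d = \degthr(f \wedge f)$ and applying Theorem~\ref{thm:main-finite} with both arguments equal to $f$ yields
\begin{align*}
R^+(f,4d) + R^+(f,2d) < 1.
\end{align*}
Since $R^+(f,\,\cdot\,)$ is nonincreasing in the degree, one has $R^+(f,4d) \leq R^+(f,2d)$, so $2\,R^+(f,4d) < 1$, i.e., $R^+(f,4d) < \tfrac12$. Passing through~(\ref{eqn:rational-positive-denominator}) gives $R(f,4d) < \tfrac12$, and hence $\rdeg_{1/2}(f) \leq 4d$.

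For the halfspace case~(\ref{eqn:hs-hs}), Theorem~\ref{thm:main-approx-hs} evaluated at $\epsilon = \tfrac12$ gives $\rdeg_{1/2}(f) = \Theta(1 + n/\log 2) = \Theta(n)$, so $4d \geq \Omega(n)$ and consequently $\degthr(f \wedge f) = \Omega(n)$. For the majority case~(\ref{eqn:maj-maj}), I would run the identical template with $f$ replaced by $\MAJ_n$ and invoke Theorem~\ref{thm:main-approx-maj} at $\epsilon = \tfrac12$ to obtain $\rdeg_{1/2}(\MAJ_n) = \Theta(\log n)$, and therefore $\degthr(\MAJ_n \wedge \MAJ_n) = \Omega(\log n)$.

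All of the real technical content has in fact been dispatched in the earlier sections: Theorem~\ref{thm:main-finite} supplies the crucial structural step that decouples any sign-representation of $f \wedge f$ into independent rational approximants for each copy of $f$, while Theorems~\ref{thm:main-approx-hs} and~\ref{thm:main-approx-maj} supply the sharp statement that approximating a halfspace (respectively, a majority) by a rational function within constant error requires linear (respectively, logarithmic) degree. There is no remaining obstacle; the argument above is a one-step deduction per case, and the matching upper bounds come from the Beigel--Reingold--Spielman construction recalled in Theorem~\ref{thm:rational-is-possible}.
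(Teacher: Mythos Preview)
Your proposal is correct and follows essentially the same route as the paper: both arguments combine Theorem~\ref{thm:main-finite} (yielding $R^+(f,4d)<1/2$ from $d=\degthr(f\wedge f)$) with the rational-approximation lower bounds for the halfspace and majority functions. The only cosmetic difference is that the paper works directly with $R^+$ and cites the body results (Theorems~\ref{thm:main-halfspace}, \ref{thm:R+}, \ref{thm:maj-vs-sign}), whereas you detour through $R$ and the summary Theorems~\ref{thm:main-approx-hs} and~\ref{thm:main-approx-maj}; the logic is identical.
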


\begin{proof}
By Theorem~\ref{thm:main-halfspace}, we have $R^+(f,\epsilon n)\geq1/2$ for
some constant $\epsilon>0,$ which settles 
(\ref{eqn:hs-hs}) in view of Theorem~\ref{thm:main-finite}.

Analogously, Theorems~\ref{thm:R+} and~\ref{thm:maj-vs-sign} show 
that $R^+(\g,\epsilon\log n)\geq 1/2$ for some constant $\epsilon>0,$
which settles (\ref{eqn:maj-maj}) in view of
Theorem~\ref{thm:main-finite}.
\end{proof}

\begin{remark}
The lower bounds (\ref{eqn:hs-hs}) and (\ref{eqn:maj-maj}) are tight
and match the constructions due to Beigel et 
al.~\cite{beigel91rational}. These
matching upper bounds can be seen as follows.  By
Theorem~\ref{thm:upper-fnk}, we have $R^+(f,Cn)<1/2$ for some
constant $C>0,$ which shows that \mbox{$\degthr(f\wedge f)=O(n)$}
in view of Theorem~\ref{thm:rational-is-possible}.  Analogously,
Theorems~\ref{thm:R+} and~\ref{thm:maj-vs-sign} imply that
$R^+(\g,C\log n)<1/2$ for some constant $C>0,$ which shows that
\mbox{$\degthr(\g\wedge\g) =O(\log n)$} in view of
Theorem~\ref{thm:rational-is-possible}.

Furthermore, Theorem~\ref{thm:rational-is-possible} generalizes
immediately to conjunctions of $k=3$ and more functions.
In particular, the lower bounds in (\ref{eqn:hs-hs}) and
(\ref{eqn:maj-maj}) remain tight for intersections \mbox{$f\wedge
f\wedge\cdots\wedge f$} and \mbox{$\g\wedge \g\wedge\cdots\wedge
\g$} featuring any constant number of functions.
\end{remark}

We give one additional result, featuring the intersection of the
canonical halfspace with a majority function.

\begin{restatetheorem}{thm:main-sign-mixed}{\textsc{restated}}
Let $f\colon\moo^{n^2}\to\moo$ be given by
\begin{align*}
f(x) = \sign\PARENS{1 + \sum_{i=1}^n\sum_{j=1}^n 2^i x_{ij}}.
\end{align*}
Let $\g\colon\moo^{\lceil\sqrt n\rceil}\to\moo$ 
be the majority function on $\lceil\sqrt n\rceil$ bits.
Then
\begin{align}
\degthr(f\wedge\, \g\,)        &= \Theta(\sqrt n).
\label{eqn:f-and-g}
\end{align}
\end{restatetheorem}

\begin{proof}
We prove the lower bound first.
Let $\epsilon>0$ be a suitably small constant.
By Theorem~\ref{thm:main-halfspace}, we have $R^+(f,\epsilon \sqrt n)
\geq 1-2^{-\sqrt n}.$
By Theorems~\ref{thm:R+} and~\ref{thm:maj-vs-sign}, we have
$R^+(\g,\epsilon\sqrt n)\geq 2^{-\sqrt n}.$ In view of
Theorem~\ref{thm:main-finite}, these two facts imply that
$\degthr(f\wedge \g\,)=\Omega(\sqrt n).$

We now turn to the upper bound. It is clear that $R^+(\g,\lceil\sqrt
n\rceil)=0$ and $R^+(f,1)<1.$ It follows by
Theorem~\ref{thm:rational-is-possible} that $\degthr(f\wedge
\g)=O(\sqrt n).$
\end{proof}

\subsection{Lower bounds on the threshold density}

In addition to threshold degree, several other complexity
measures are of interest when sign-representing Boolean functions
by real polynomials.  One such complexity measure is
\emph{density,} i.e., the number of distinct monomials in any
polynomial that sign-represents a given function.  Formally, for
a given function $f\colon\moon\to\moo,$ the \emph{threshold
density} $\dns(f)$ is the minimum $k$ such that
\begin{align*}
f(x) \equiv \sign\PARENS{\sum_{i=1}^k\lambda_i\prod_{j\in S_i}
x_j}
\end{align*}
for some sets $S_1,\dots,S_k\subseteq\oneton$ and some reals
$\lambda_1,\dots,\lambda_k.$ We will show that intersections of
two halfspaces not only have high threshold degree but also high
threshold density.

We start with the conjunction of two majority functions.
Constructions in~\cite{beigel91rational} show
that the function $f(x,y)=\MAJ_n(x)\wedge \MAJ_n(y)$ can be
sign-represented by a linear combination of $n^{O(\log n)}$ monomials,
namely, the monomials of degree up to $O(\log n).$
Klivans and Sherstov~\cite[Thm.~1.2]{mlj07sq} complement
this with a lower bound of $n^{\Omega(\log
n/\log \log n)}$ on the number of distinct monomials needed. Our
next result improves this lower bound to a tight $n^{\Theta(\log n)}.$

\begin{theorem}
\label{thm:dns-majmaj}
Let $f\colon\moon\times\moon\to\moo$ be given by $f(x,y)=\MAJ_n(x_1,\dots,x_n)
\wedge \MAJ_n(y_1,\dots,y_n).$ Then
\begin{align*}
\dns(f) = n^{\Omega(\log n)}.
\end{align*}
\end{theorem}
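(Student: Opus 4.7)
My plan is to derive the density lower bound from a density analog of the rational-approximation characterization of threshold degree in Theorem~\ref{thm:main-finite}, combined with the tight univariate lower bound for the rational approximation of majority from Theorem~\ref{thm:R+}.

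The first step is to establish the following density-version of Theorem~\ref{thm:main-finite}: if $\dns(f\wedge g)\leq s$, then there exist rational functions $p_1/q_1$ approximating $f$ and $p_2/q_2$ approximating $g$, where each of $p_1,q_1,p_2,q_2$ is a linear combination of at most $s$ characters, and the combined approximation error $\|f-p_1/q_1\|_\infty+\|g-p_2/q_2\|_\infty$ is strictly less than $1$, with a quantitative margin that I will track throughout. The proof follows the Farkas-lemma template of Theorem~\ref{thm:main-finite} almost verbatim: given a sign-representation $\phi(x,y)=\sum_{i=1}^s\lambda_i\chi_{S_i}(x)\chi_{T_i}(y)$ with $s$ monomials, I apply Theorem~\ref{thm:zero-sum-game} to the $x$- and $y$-parts separately. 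The resulting approximants have numerators and denominators supported on the character sets $\{\chi_{S_i}\}_i$ and $\{\chi_{T_i}\}_i$, each of cardinality at most $s$.

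Next, for $f=g=\MAJ_n$, I apply the symmetrization argument of Proposition~\ref{prop:symm-rational} to the multivariate sparse rational approximants, producing univariate rational approximants of the sign function on $\{\pm1,\ldots,\pm n\}$. Under symmetrization, a character $\chi_S$ collapses into a Krawtchouk polynomial of degree $|S|$, so the resulting univariate approximants are supported on at most $s$ Krawtchouk modes. Translating this Krawtchouk-sparsity constraint into a degree bound and invoking the sharp lower bound of Theorem~\ref{thm:R+}, namely $R^+(d,\{\pm1,\ldots,\pm n\})\geq\exp\{-\Theta(d/\log(2n/d))\}$ for $\log n<d<n$, yields the desired inequality $\log_2 s\geq\Omega(\log^2 n)$, i.e., $\dns(\MAJ_n\wedge\MAJ_n)\geq n^{\Omega(\log n)}$.

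The main obstacle is the quantitative form of the first step. A naive version of the density analog of Theorem~\ref{thm:main-finite}, yielding combined error just below $1$, would only deliver the polynomial lower bound $\dns\geq n^{\Omega(1)}$ by plugging into Theorem~\ref{thm:intersections}. To improve this to $n^{\Omega(\log n)}$ and thereby sharpen the previous bound $n^{\Omega(\log n/\log\log n)}$ of Klivans and Sherstov by a factor of $\log\log n$ in the exponent, one must exploit both the full margin $1-\Omega(1)$ (or even $1-n^{-\Theta(1)}$) available from the Farkas argument and the tight error-versus-degree trade-off in Theorem~\ref{thm:R+}. Balancing these two quantitative ingredients precisely is the core technical challenge.
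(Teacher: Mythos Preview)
Your proposal takes a route that is quite different from the paper's, and it contains a genuine gap at the step you describe as ``translating this Krawtchouk-sparsity constraint into a degree bound.''

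The first step---a density analogue of Theorem~\ref{thm:main-finite}---is plausible: the Farkas argument in Theorem~\ref{thm:main-finite-odd} only uses that, for each fixed $y$, the function $\phi(\cdot,y)$ lies in a fixed linear span, so if $\phi(x,y)=\sum_{i=1}^{s}\lambda_i\chi_{S_i}(x)\chi_{T_i}(y)$ one indeed obtains rational approximants for $\MAJ_n$ whose numerator and denominator lie in $\Span\{\chi_{S_1},\dots,\chi_{S_s}\}$. After symmetrization, each $\chi_{S_i}$ averages to the Krawtchouk polynomial $K_{|S_i|}$, so the univariate numerator and denominator are supported on at most $s$ Krawtchouk modes. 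So far so good. The problem is that the degrees $|S_1|,\dots,|S_s|$ are completely unconstrained by $s$: a sign-representation with two monomials can involve $\chi_{\varnothing}$ and $\chi_{\{1,\dots,n\}}$, producing after symmetrization a combination of $K_0$ and $K_n$, a polynomial of full degree $n$. Hence ``Krawtchouk-sparse'' does not imply ``low-degree,'' and Theorem~\ref{thm:R+}, which is a statement about degree, gives you nothing. To make your approach work you would need an entirely new lower bound: that no rational function whose numerator and denominator are each supported on $s$ Krawtchouk modes can approximate $\sign t$ on $\{\pm1,\dots,\pm n\}$ to within $1/2$ unless $s\geq n^{\Omega(\log n)}$. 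That is a nontrivial statement with no obvious proof, and you have not proposed one. The difficulty you flag in your last paragraph (tracking the quantitative margin) is real but secondary; the primary obstruction is earlier.

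For comparison, the paper's proof avoids this issue entirely. It does not attempt a density analogue of the rational-approximation characterization. Instead it reuses the Klivans--Sherstov argument verbatim, which is a random-restriction argument in the spirit of Krause--Pudl\'ak (Theorem~\ref{thm:degree-length}): a sparse sign-representation of $\MAJ_n\wedge\MAJ_n$ restricts, with positive probability, to a \emph{low-degree} sign-representation of $\MAJ_m\wedge\MAJ_m$ for a suitable $m$, and then the threshold-degree lower bound applies. The only new ingredient is replacing the old bound $\degthr(\MAJ_m\wedge\MAJ_m)=\Omega(\log m/\log\log m)$ by the tight $\Omega(\log m)$ of Theorem~\ref{thm:main-sign-maj}, which upgrades the exponent from $\Omega(\log n/\log\log n)$ to $\Omega(\log n)$.
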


\begin{proof}
Identical to the proof of Klivans and Sherstov~\cite[\S3.3,
Thm.~1.2]{mlj07sq}, with the only difference that
Theorem~\ref{thm:main-sign-maj} should be invoked in place of O'Donnell
and Servedio's earlier result~\cite{odonnell03degree} that 
$\degthr(f)=\Omega(\log n/\log\log n).$
\end{proof}

We will now derive an exponential lower bound on the threshold
density of the intersection of two halfspaces. For this, we
recall an elegant procedure for converting Boolean functions with
high threshold degree into Boolean functions with high threshold
density, discovered by Krause and
Pudl{\'a}k~\cite{krause94depth2mod}.
Their construction maps a given function
$f\colon\moon\to\moo$ to the function
$f^\op\colon(\moon)^3\to\moo$ given by 
\begin{align*}
f^\op(x,y,z) =
f(\dots, (\overline{z_i}\wedge x_i)\vee(z_i\wedge y_i), \dots).
\end{align*}
We have:

\begin{theorem}[{Krause and
Pudl{\'a}k~\cite[Prop.~2.1]{krause94depth2mod}}]
\label{thm:degree-length}
For every function $f\colon\moon\to\moo,$
\[ \dns(f^\op) \geq 2^{\degthr(f)}. \]
\end{theorem}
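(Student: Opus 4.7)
The plan is to prove the contrapositive: if $P(x,y,z)=\sum_{i=1}^{k} c_i M_i(x,y,z)$ sign-represents $f^\op$ with $k<2^d$ monomials, where $d=\degthr(f)$, I will extract from $P$ a polynomial in $w\in\moon$ of degree $<d$ that sign-represents $f$, contradicting the definition of $\degthr(f)$.

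For each candidate $z^*\in\moon$, I would specialize $P$ to the ``selected'' slice: introduce a fresh vector $w$, replace at each coordinate $j$ the bit of $(x,y)$ that $z^*_j$ selects (either $x_j$ or $y_j$) by $w_j$, and average the complementary (unselected) bit uniformly over $\moo$. Call the resulting polynomial $R(w;z^*)$. Since $f^\op(x,y,z^*)=f(w)$ regardless of the values of the unselected bits, $P(x,y,z^*)$ has the same sign $f(w)$ at every choice of those bits, and the average therefore inherits that sign: $R(w;z^*)$ has the sign of $f(w)$ at every $w\in\moon$, so $R(\,\cdot\,;z^*)$ sign-represents $f$.

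Next, I would identify the degree of $R(\,\cdot\,;z^*)$ in terms of the $M_i$. Writing $M_i=\prod_{j\in A_i}x_j\prod_{j\in B_i}y_j\prod_{j\in C_i}z_j$, the average over the unselected bits annihilates $M_i$'s contribution unless $x_j$ is selected for every $j\in A_i$ and $y_j$ is selected for every $j\in B_i$. This forces $A_i\cap B_i=\varnothing$ and pins down $z^*_j$ for each $j\in A_i\cup B_i$; when $M_i$ does survive, it contributes a $w$-monomial of degree $|A_i|+|B_i|$. Thus, for uniform random $z^*\in\moon$, the probability that $M_i$ survives is at most $2^{-(|A_i|+|B_i|)}$.

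A first-moment calculation closes the argument: the expected number of surviving monomials with $|A_i|+|B_i|\geq d$ is at most $k\cdot 2^{-d}<1$, so some $z^*$ admits no such survivors. For that $z^*$ the polynomial $R(\,\cdot\,;z^*)$ has degree at most $d-1$ and still sign-represents $f$ (it is not identically zero because its sign is $f(w)\ne 0$ at every $w$). This yields a sign-representation of $f$ of degree $<d$, the desired contradiction. The main conceptual obstacle is the specialize-and-average step itself: it must simultaneously preserve strict sign-representation of $f$ and, on a large enough fraction of $z^*$'s, wipe out every monomial with $|A_i|+|B_i|\ge d$.
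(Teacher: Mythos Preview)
Your argument is correct. The paper does not give its own proof of this theorem; it simply cites the result from Krause and Pudl\'ak~\cite[Prop.~2.1]{krause94depth2mod}. Your random-restriction argument---fix $z^*$, substitute $w$ for the selected variable at each coordinate, average out the unselected variable, and use a first-moment bound over uniform $z^*$ to kill all high-degree monomials---is precisely the original Krause--Pudl\'ak proof, and every step you describe goes through as stated.
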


Another ingredient in our analysis is the following observation.

\begin{lemma}[Klivans and Sherstov~\cite{mlj07sq}]
\label{lem:ptf-reduction}
Let $f\colon\moon\to\moo$ be a given function.  Consider any
function $F\colon\moo^m\to\moo$ given by
$F(x) = f(\chi_1(x),\dots,\chi_n(x)),$ where each
$\chi_i$ is a parity function $\moom\to\moo$ or the negation of a
parity function. Then 
\[ \dns(f)\geq \dns(F). \]
\end{lemma}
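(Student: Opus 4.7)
My plan is to prove the lemma by direct substitution, showing that any sign-representation of $f$ yields a sign-representation of $F$ with no more monomials. Let $k = \dns(f)$ and fix an optimal sign-representation
\[ f(y_1,\dots,y_n) \equiv \sign\PARENS{\sum_{i=1}^k \lambda_i \prod_{j \in S_i} y_j} \]
for some sets $S_1,\dots,S_k\subseteq\oneton$ and reals $\lambda_1,\dots,\lambda_k.$ Since $F(x)=f(\chi_1(x),\dots,\chi_n(x))$ for each $x\in\moo^m,$ substituting gives
\[ F(x) \equiv \sign\PARENS{\sum_{i=1}^k \lambda_i \prod_{j \in S_i} \chi_j(x)}. \]

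The next step is to verify that each product $\prod_{j\in S_i} \chi_j(x)$ is, up to a sign, a monomial over $x_1,\dots,x_m.$ Indeed, each $\chi_j$ is either $\prod_{l\in A_j} x_l$ or its negation for some $A_j\subseteq\{1,\dots,m\}.$ Using the identity $x_l^2=1$ valid on $\moo^m,$ we get
\[ \prod_{j\in S_i} \chi_j(x) = \epsilon_i \prod_{l\in T_i} x_l, \]
where $T_i$ is the symmetric difference of the sets $A_j$ for $j\in S_i,$ and $\epsilon_i\in\{-1,+1\}$ records the parity of the number of negated $\chi_j$'s in the product. Therefore
\[ F(x) \equiv \sign\PARENS{\sum_{i=1}^k \lambda_i \epsilon_i \prod_{l\in T_i} x_l}, \]
which is a sign-representation of $F$ by a linear combination of at most $k$ monomials (after combining any repeated $T_i$'s, which can only decrease the count). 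Hence $\dns(F)\leq k = \dns(f).$

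There is essentially no hard step here: the entire argument is a single substitution plus the parity-of-parities calculation. The only minor subtlety is allowing for (i) sign flips arising from negated $\chi_j$'s and (ii) collisions where distinct $S_i$'s give rise to the same symmetric-difference set $T_i,$ both of which are handled without difficulty and can only reduce the monomial count.
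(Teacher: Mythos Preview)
Your proof is correct and is exactly the argument the paper has in mind: it records the proof as ``immediate from the definition of threshold density and the fact that the product of parity functions is another parity function,'' and you have simply spelled this out in detail.
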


\begin{proof}[Proof \textup{(Klivans and
Sherstov~\cite{mlj07sq})}.]
Immediate from the definition of threshold density and the fact
that the product of parity functions is another parity function. 
\end{proof}

We are now in a position to prove the desired result for
halfspaces.

\begin{theorem}
\label{thm:dns-hshs}
Let $f_n\colon\moo^{n^2}\to\moo$ be given by
\begin{align*}
f_n(x) = \sign\PARENS{1 +
\sum_{i=1}^n\sum_{j=1}^{\phantom{A}n\phantom{A}} 2^i x_{ij}}.
\end{align*}
Then
\begin{align}
\dns(f_n\wedge f_n)     
    &= \exp\{\Omega(n)\},  \label{eqn:dns-hs-hs}   \\
\dns(f_n\wedge \MAJ_{\lceil \sqrt n\rceil})  
    &= \exp\{\Omega(\sqrt n)\}.
	\label{eqn:dns-hs-maj}
\end{align}
\end{theorem}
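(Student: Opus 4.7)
The plan is to apply the Krause--Pudl\'ak hardening operator $\op$ to each intersection and then transfer the resulting density lower bound back to the target composition via a parity substitution. By Theorem~\ref{thm:degree-length} together with Theorems~\ref{thm:main-sign-hs} and~\ref{thm:main-sign-mixed},
\[
\dns\!\big((f_n \wedge f_n)^{\op}\big) \geq 2^{\Omega(n)},
\qquad
\dns\!\big((f_n \wedge \MAJ_{\lceil\sqrt n\rceil})^{\op}\big) \geq 2^{\Omega(\sqrt n)}.
\]
Since the selector acts coordinatewise, $\op$ distributes across conjunction: $(f\wedge g)^{\op} = f^{\op} \wedge g^{\op}.$ In view of Lemma~\ref{lem:ptf-reduction}, it will suffice to exhibit $f_n^{\op}$ and $\MAJ_k^{\op}$ as parity-substituted instances of $f_m$ and $\MAJ_{\lceil\sqrt m\rceil}$ for suitable $m.$

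The central computation is for $f_n^{\op}.$ The identity $\text{sel}(x,y,z) = \tfrac12(x+y+zx-zy)$ on $\moo^3$ yields
\[
f_n^{\op}(x,y,z) = \sign\!\PARENS{\,2 + \sum_{i=1}^{n}2^i \sum_{j=1}^{n}(x_{ij} + y_{ij} + z_{ij}x_{ij} - z_{ij}y_{ij})\,}.
\]
Each summand equals $2\,\text{sel}(x_{ij},y_{ij},z_{ij})\in\{\pm 2\},$ so the inner double sum is always divisible by $4,$ and therefore $\sign(2+S)=\sign(1+S)$ on the relevant support (the sign boundary $S=-1$ is unreachable). Consequently $f_n^{\op}$ coincides with $f_m(u)$ for every $m\geq 4n$ under the substitution that, at each level $i\in\{1,\dots,n\},$ fills $4n$ of the slots $u_{ij}$ with the parities $x_{ij'},\,y_{ij'},\,z_{ij'}x_{ij'},\,-z_{ij'}y_{ij'}$ for $j'=1,\dots,n,$ and fills every remaining slot with equal numbers of $+x_{11}$ and $-x_{11},$ whose contributions cancel. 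An analogous calculation shows that $\MAJ_k^{\op}(x,y,z) = \MAJ_{4k}(x_1,y_1,z_1 x_1,-z_1 y_1,\dots,x_k,y_k,z_k x_k,-z_k y_k),$ again a parity substitution.

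Taking $m=4n$ for the first bound and $m=16\lceil\sqrt n\rceil^2$ for the second (so that $\lceil\sqrt m\rceil = 4\lceil\sqrt n\rceil$), Lemma~\ref{lem:ptf-reduction} applied to $f_m\wedge f_m$ and $f_m\wedge \MAJ_{\lceil\sqrt m\rceil}$ respectively yields
\[
\dns(f_m\wedge f_m) \geq 2^{\Omega(n)} = 2^{\Omega(m)},
\qquad
\dns(f_m\wedge \MAJ_{\lceil\sqrt m\rceil}) \geq 2^{\Omega(\sqrt n)} = 2^{\Omega(\sqrt m)},
\]
which are precisely (\ref{eqn:dns-hs-hs}) and (\ref{eqn:dns-hs-maj}).

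The main technical obstacle is the constant-term mismatch when embedding $f_n^{\op}$ into $f_m$: the polynomial expansion of the selector produces a halfspace whose constant term is $2,$ rather than the constant term $1$ built into $f_m.$ The divisibility-by-$4$ observation dispatches this obstacle cleanly, and the remaining bookkeeping---balancing the padding parities level by level and reconciling $\lceil\sqrt m\rceil$ with $4\lceil\sqrt n\rceil$ for the mixed statement---is routine.
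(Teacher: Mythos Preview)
Your argument follows the paper's essentially verbatim: distribute the Krause--Pudl\'ak operator across the conjunction, realize each selector bit via the $\pm$-parity expansion $\tfrac12(x+y+zx-zy)$, and invoke Lemma~\ref{lem:ptf-reduction} together with Theorem~\ref{thm:degree-length} and the threshold-degree lower bounds. Your divisibility-by-$4$ remark is exactly the (unstated) reason the paper can write the constant term as $1$ rather than $2$.

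There is, however, one step you skip that the paper carries out. Your substitution establishes the bound only for the particular parameter values $m=4n$ and $m=16\lceil\sqrt n\rceil^2$; the balanced $\pm x_{11}$ padding requires an even number of spare slots at every level, so the sweeping claim ``for every $m\geq 4n$'' fails whenever the slot count is odd, and your final sentence simply asserts that the special-$m$ bounds are ``precisely'' \eqref{eqn:dns-hs-hs} and \eqref{eqn:dns-hs-maj} for arbitrary $n$. The paper closes this gap by observing (as in the proof of Theorem~\ref{thm:main-halfspace}) that $f_{4\lfloor n/4\rfloor}$ is a subfunction of $f_n(x)$ or of $-f_n(-x)$, depending on parity, and then dispatching the latter case via the identity $f'(x)\wedge g'(y)\equiv -(f(-x)\wedge g(-y))\,f(-x)\,g(-y)$ of Remark~\ref{rem:f-g-ff-gg}. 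This last step is routine but not vacuous, and you should include it.
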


\begin{remark}
\label{rem:f-g-ff-gg}
In the proof below, it will be useful to keep in mind the
following straightforward observation. Fix functions
$f,g\colon\mook\to\moo$ and define functions
$f',g'\colon\mook\to\moo$ by $f'(x)=-f(-x)$ and $g'(y)=-g(-y).$
Then we have $f'(x)\wedge g'(y)\equiv -(f(-x)\wedge g(-y)) f(-x)
g(-y),$ whence $\dns(f'\wedge g') \leq \dns(f\wedge
g)\dns(f)\dns(g)$ and 
thus
\begin{align}
\dns(f\wedge g) \geq \frac{\dns(f'\wedge g')}{\dns(f)\dns(g)}.
\label{eqn:f-g-ff-gg}
\end{align}
Similarly, we have
$f(x)\wedge g'(y)\equiv (f(x)\wedge g(-y))f(x),$ whence
\begin{align}
\dns(f\wedge g) \geq \frac{\dns(f\wedge g')}{\dns(f)}.
\label{eqn:f-g-f-gg}
\end{align}
To summarize, (\ref{eqn:f-g-ff-gg}) and (\ref{eqn:f-g-f-gg})
allow one to analyze the threshold density of $f\wedge g$ by
analyzing the threshold density of $f'\wedge g'$ or $f'\wedge g$
instead.  Such a transition will be helpful in our case.
\end{remark}

\begin{proof}
[Proof of Theorem~\textup{\ref{thm:dns-hshs}}.]
Put $m=\lfloor n/4\rfloor.$
The function
${f_m}^\op\colon(\moo^{m^2})^3\to\moo$ has the
representation
\begin{align*}
{f_m}^\op(x,y,z) = \sign\PARENS{1 +
\sum_{i=1}^m\sum_{j=1}^{\phantom{A}m\phantom{A}} 2^i
(x_{ij}+y_{ij}+x_{ij}z_{ij}-y_{ij}z_{ij})}.
\end{align*}
As a result, 
\begin{align*}
\dns(f_{4m}\wedge f_{4m})
  &\geq\dns({f_m}^\op\wedge{f_m}^\op) 
      &&\text{by Lemma~\ref{lem:ptf-reduction}}\\
  &=\dns((f_m\wedge f_m)^\op) \\
  &\geq 2^{\degthr(f_m\wedge f_m)}  
      && \text{by Theorem~\ref{thm:degree-length}} \\
  &\geq\exp\{\Omega(m)\}
      && \text{by Theorem~\ref{thm:intersections}.}
\end{align*}
By the same argument as in Theorem~\ref{thm:main-halfspace}, the
function $f_{4m}$ is a subfunction of $f_{n}(x)$ or $-f_n(-x).$
In the former case, (\ref{eqn:dns-hs-hs}) is immediate from the
lower bound on $\dns(f_{4m}\wedge f_{4m}).$ In the
latter case, (\ref{eqn:dns-hs-hs}) follows from the 
lower bound on $\dns(f_{4m}\wedge f_{4m})$ and
Remark~\ref{rem:f-g-ff-gg}.

The proof of (\ref{eqn:dns-hs-maj}) is entirely analogous.
\end{proof}

Krause and Pudl{\'a}k's method in Theorem~\ref{thm:degree-length}
naturally generalizes to linear combinations of
conjunctions rather than parity functions. In
other words, if a function $f\colon\moon\to\moo$ has threshold
degree $d$ and $f^\op(x,y,z)\equiv \sign(\sum_{i=1}^N\lambda_i
T_i(x,y,z))$ for some conjunctions $T_1,\dots,T_N$ of
the literals $x_1,y_1,z_1,\dots,x_n,y_n,z_n, \neg x_1,\neg
y_1,\neg z_1,\dots,\neg x_n,\neg y_n,\neg z_n,$ then $N\geq
2^{\Omega(d)}.$ With this remark in mind,
Theorems~\ref{thm:dns-majmaj} and~\ref{thm:dns-hshs} and their
proofs adapt easily to this alternate definition of density.

\addcontentsline{toc}{section}{Acknowledgments}
\section*{Acknowledgments}

I would like to thank Dima Gavinsky, 
Adam Klivans, Ryan O'Donnell, Ronald de Wolf,
and the anonymous reviewers
for their very helpful comments on an earlier version of this manuscript.  
I am also thankful to Ronald for telling me about applications of 
rational approximation to quantum query complexity.  I gratefully
acknowledge Scott Aaronson's tutorial on the polynomial
method, which motivated me to work on direct product theorems
for real polynomials.  This research was supported by Adam Klivans' NSF
CAREER Award and NSF Grant CCF-0728536.

{\small
\addcontentsline{toc}{section}{References}
\bibliographystyle{abbrv}
\bibliography{%
/Users/sasha/bib/general,%
/Users/sasha/bib/fourier,%
/Users/sasha/bib/cc,%
/Users/sasha/bib/learn,%
/Users/sasha/bib/my}

\begin{thebibliography}{10}

\bibitem{aaronson05postselection}
S.~Aaronson.
\newblock Quantum computing, postselection, and probabilistic polynomial-time.
\newblock {\em Proceedings of the Royal Society A}, 461(2063):3473--3482, 2005.

\bibitem{aaronson08tutorial}
S.~Aaronson.
\newblock The polynomial method in quantum and classical computing.
\newblock In {\em Proc.~of the 49th Symposium on Foundations of Computer
  Science (FOCS)}, page~3, 2008.

\bibitem{ABFKP:04}
M.~Alekhnovich, M.~Braverman, V.~Feldman, A.~R. Klivans, and T.~Pitassi.
\newblock Learnability and automatizability.
\newblock In {\em Proc.~of the 45th Symposium on Foundations of Computer
  Science (FOCS)}, pages 621--630, 2004.

\bibitem{allender89ac0tc0}
E.~Allender.
\newblock A note on the power of threshold circuits.
\newblock In {\em Proc.~of the 30th Symposium on Foundations of Computer
  Science (FOCS)}, pages 580--584, 1989.

\bibitem{ambainis05collision}
A.~Ambainis.
\newblock Polynomial degree and lower bounds in quantum complexity: {C}ollision
  and element distinctness with small range.
\newblock {\em Theory of Computing}, 1(1):37--46, 2005.

\bibitem{aspnes91voting}
J.~Aspnes, R.~Beigel, M.~L. Furst, and S.~Rudich.
\newblock The expressive power of voting polynomials.
\newblock {\em Combinatorica}, 14(2):135--148, 1994.

\bibitem{beigel93polynomial-method}
R.~Beigel.
\newblock The polynomial method in circuit complexity.
\newblock In {\em Proc.~of the Eigth Annual Conference on Structure in
  Complexity Theory}, pages 82--95, 1993.

\bibitem{beigel94perceptrons}
R.~Beigel.
\newblock Perceptrons, {$\mathsf{PP}$}, and the polynomial hierarchy.
\newblock {\em Computational Complexity}, 4:339--349, 1994.

\bibitem{beigel91rational}
R.~Beigel, N.~Reingold, and D.~A. Spielman.
\newblock {$\PP$} is closed under intersection.
\newblock {\em J. Comput. Syst. Sci.}, 50(2):191--202, 1995.

\bibitem{blum92trainingNN}
A.~L. Blum and R.~L. Rivest.
\newblock Training a 3-node neural network is {NP}-complete.
\newblock {\em Neural Networks}, 5:117--127, 1992.

\bibitem{BNRW05robust}
H.~Buhrman, I.~Newman, H.~R{\"o}hrig, and R.~de~Wolf.
\newblock Robust polynomials and quantum algorithms.
\newblock {\em Theory Comput. Syst.}, 40(4):379--395, 2007.

\bibitem{buhrman07pp-upp}
H.~Buhrman, N.~K. Vereshchagin, and R.~de~Wolf.
\newblock On computation and communication with small bias.
\newblock In {\em Proc. of the 22nd Conf. on Computational Complexity (CCC)},
  pages 24--32, 2007.

\bibitem{buhrman-dewolf02DT-survey}
H.~Buhrman and R.~de~Wolf.
\newblock Complexity measures and decision tree complexity: {A} survey.
\newblock {\em Theor. Comput. Sci.}, 288(1):21--43, 2002.

\bibitem{eremenko07sign}
A.~Eremenko and P.~Yuditskii.
\newblock Uniform approximation of $\mathrm{sgn}(x)$ by polynomials and entire
  functions.
\newblock {\em J. d'Analyse Math{\'e}matique}, 101:313--324, 2007.

\bibitem{gordan1873lp}
P.~Gordan.
\newblock {\" U}ber die {A}ufl{\" o}sung linearer {G}leichungen mit reellen
  {C}oefficienten.
\newblock {\em Mathematische Annalen}, 6:23--28, 1873.

\bibitem{hoyer-mosca-dewolf03and-or-tree}
P.~H{\o}yer, M.~Mosca, and R.~de~Wolf.
\newblock Quantum search on bounded-error inputs.
\newblock In {\em Proc. of the 30th International Colloquium on Automata,
  Languages, and Programming (ICALP)}, pages 291--299, 2003.

\bibitem{ioffe-tikhomirov68duality}
A.~D. Ioffe and V.~M. Tikhomirov.
\newblock Duality of convex functions and extremum problems.
\newblock {\em Russ. Math. Surv.}, 23(6):53--124, 1968.

\bibitem{khot-saket08hs-and-hs}
S.~Khot and R.~Saket.
\newblock On hardness of learning intersection of two halfspaces.
\newblock In {\em Proc.~of the 40th Symposium on Theory of Computing (STOC)},
  pages 345--354, 2008.

\bibitem{klivans-thesis}
A.~R. Klivans.
\newblock {\em A Complexity-Theoretic Approach to Learning}.
\newblock PhD thesis, Massachusetts Institute of Technology, 2002.

\bibitem{KOS:02}
A.~R. Klivans, R.~O'Donnell, and R.~A. Servedio.
\newblock Learning intersections and thresholds of halfspaces.
\newblock {\em J. Comput. Syst. Sci.}, 68(4):808--840, 2004.

\bibitem{KS01dnf}
A.~R. Klivans and R.~A. Servedio.
\newblock Learning {DNF} in time {$2^{\tilde O(n^{1/3})}$}.
\newblock {\em J. Comput. Syst. Sci.}, 68(2):303--318, 2004.

\bibitem{klivans-servedio06decision-lists}
A.~R. Klivans and R.~A. Servedio.
\newblock Toward attribute efficient learning of decision lists and parities.
\newblock {\em J.~Machine Learning Research}, 7:587--602, 2006.

\bibitem{KlivansServedio:04coltmargin}
A.~R. Klivans and R.~A. Servedio.
\newblock Learning intersections of halfspaces with a margin.
\newblock {\em J. Comput. Syst. Sci.}, 74(1):35--48, 2008.

\bibitem{mlj07sq}
A.~R. Klivans and A.~A. Sherstov.
\newblock Unconditional lower bounds for learning intersections of halfspaces.
\newblock {\em Machine Learning}, 69(2--3):97--114, 2007.

\bibitem{focs06hardness}
A.~R. Klivans and A.~A. Sherstov.
\newblock Cryptographic hardness for learning intersections of halfspaces.
\newblock {\em J. Comput. Syst. Sci.}, 75(1):2--12, 2009.

\bibitem{krause94depth2mod}
M.~Krause and P.~Pudl{\'a}k.
\newblock On the computational power of depth-$2$ circuits with threshold and
  modulo gates.
\newblock {\em Theor. Comput. Sci.}, 174(1--2):137--156, 1997.

\bibitem{KP98threshold}
M.~Krause and P.~Pudl{\'a}k.
\newblock Computing {B}oolean functions by polynomials and threshold circuits.
\newblock {\em Comput. Complex.}, 7(4):346--370, 1998.

\bibitem{KwekPitt:98}
S.~Kwek and L.~Pitt.
\newblock {PAC} learning intersections of halfspaces with membership queries.
\newblock {\em Algorithmica}, 22(1/2):53--75, 1998.

\bibitem{lee09formulas}
T.~Lee.
\newblock A note on the sign degree of formulas, September 2009.
\newblock Manuscript at arXiv/cc.CS.

\bibitem{minsky88perceptrons}
M.~L. Minsky and S.~A. Papert.
\newblock {\em Perceptrons: {A}n Introduction to Computational Geometry}.
\newblock MIT Press, Cambridge, Mass., 1969.

\bibitem{newman64rational}
D.~J. Newman.
\newblock Rational approximation to $|x|$.
\newblock {\em Michigan Math. J.}, 11(1):11--14, 1964.

\bibitem{nisan-szegedy94degree}
N.~Nisan and M.~Szegedy.
\newblock On the degree of {B}oolean functions as real polynomials.
\newblock {\em Computational Complexity}, 4:301--313, 1994.

\bibitem{odonnell03degree}
R.~O'Donnell and R.~A. Servedio.
\newblock New degree bounds for polynomial threshold functions.
\newblock In {\em Proc.~of the 35th Symposium on Theory of Computing (STOC)},
  pages 325--334, 2003.

\bibitem{OS-extremal-ptf}
R.~O'Donnell and R.~A. Servedio.
\newblock Extremal properties of polynomial threshold functions.
\newblock {\em J. Comput. Syst. Sci.}, 74(3):298--312, 2008.

\bibitem{paturi-saks94rational}
R.~Paturi and M.~E. Saks.
\newblock Approximating threshold circuits by rational functions.
\newblock {\em Inf. Comput.}, 112(2):257--272, 1994.

\bibitem{podolskii07perceptrons}
V.~V. Podolskii.
\newblock Perceptrons of large weight.
\newblock In {\em Proc.~of the Second International Computer Science Symposium
  in Russia (CSR)}, pages 328--336, 2007.

\bibitem{podolskii08perceptrons}
V.~V. Podolskii.
\newblock A uniform lower bound on weights of perceptrons.
\newblock In {\em Proc.~of the Third International Computer Science Symposium
  in Russia (CSR)}, pages 261--272, 2008.

\bibitem{RS07dc-dnf}
A.~A. Razborov and A.~A. Sherstov.
\newblock The sign-rank of {$\AC^0$}.
\newblock In {\em Proc. of the 49th Symposium on Foundations of Computer
  Science (FOCS)}, pages 57--66, 2008.

\bibitem{rivlin-book}
T.~J. Rivlin.
\newblock {\em An Introduction to the Approximation of Functions}.
\newblock Dover Publications, New York, 1981.

\bibitem{saks93slicing}
M.~E. Saks.
\newblock Slicing the hypercube.
\newblock {\em Surveys in Combinatorics}, pages 211--255, 1993.

\bibitem{sherstov07ac-majmaj}
A.~A. Sherstov.
\newblock Separating {$\AC^0$} from depth-2 majority circuits.
\newblock {\em SIAM J. Comput.}, 38(6):2113--2129, \setbox0=\hbox{aaa}2009.
\newblock Preliminary version in 39th STOC, 2007.

\bibitem{sherstov07quantum}
A.~A. Sherstov.
\newblock The pattern matrix method for lower bounds on quantum communication.
\newblock In {\em Proc. of the 40th Symposium on Theory of Computing (STOC)},
  pages 85--94, \setbox0=\hbox{bbb}2008.

\bibitem{dual-survey}
A.~A. Sherstov.
\newblock Communication lower bounds using dual polynomials.
\newblock {\em Bulletin of the {EATCS}}, 95:59--93,
  \setbox0=\hbox{blahblah}2008.

\bibitem{sherstov07symm-sign-rank}
A.~A. Sherstov.
\newblock The unbounded-error communication complexity of symmetric functions.
\newblock In {\em Proc. of the 49th Symposium on Foundations of Computer
  Science (FOCS)}, pages 384--393, \setbox0=\hbox{dddddd}2008.

\bibitem{sherstov09opthshs}
A.~A. Sherstov.
\newblock Optimal bounds for sign-representing the intersection of two
  halfspaces by polynomials.
\newblock Manuscript at arxiv/cs.CC, October \setbox0=\hbox{eee}2009.

\bibitem{shi-linear}
Y.~Shi.
\newblock Approximating linear restrictions of {B}oolean functions.
\newblock Manuscript, 2002.

\bibitem{siu-roy-kailath94rational}
K.-Y. Siu, V.~P. Roychowdhury, and T.~Kailath.
\newblock Rational approximation techniques for analysis of neural networks.
\newblock {\em IEEE Transactions on Information Theory}, 40(2):455--466, 1994.

\bibitem{Vempala:97}
S.~Vempala.
\newblock A random sampling based algorithm for learning the intersection of
  halfspaces.
\newblock In {\em Proc. of the 38th Symposium on Foundations of Computer
  Science (FOCS)}, pages 508--513, 1997.

\bibitem{vereshchagin95weight}
N.~K. Vereshchagin.
\newblock Lower bounds for perceptrons solving some separation problems and
  oracle separation of {$\mathsf{AM}$} from {$\mathsf{PP}$}.
\newblock In {\em Proc.~of the Third Israel Symposium on Theory of Computing
  and Systems (ISTCS)}, pages 46--51, 1995.

\end{thebibliography}
}

\end{document}